\providecommand{\leftsquigarrow}{%
  \mathrel{\mathpalette\reflect@squig\relax}%
}
\newcommand{\reflect@squig}[2]{%
  \reflectbox{$\m@th#1\rightsquigarrow$}%
}
\let\c@author\relax
\def\cite{\mynote{Citation problem! Use \textbackslash parencite/\textbackslash textcite only.}}
\newtheorem{thm}{Theorem}[section]
\newtheorem{prop}[thm]{Proposition}
\newtheorem{lem}[thm]{Lemma}
\newtheorem{cor}[thm]{Corollary}
\newtheorem{defi}[thm]{Definition}
\newtheorem{hyp}[thm]{Hypothesis}
\newtheorem{notation}[thm]{Notation}
\theoremstyle{definition}
\newtheorem{ex}[thm]{Example}
\theoremstyle{remark}
\newtheorem{rem}[thm]{Remark}
\let\save@mathaccent\mathaccent
\newcommand*\if@single[3]{%
  \setbox0\hbox{${\mathaccent"0362{#1}}^H$}%
  \setbox2\hbox{${\mathaccent"0362{\kern0pt#1}}^H$}%
  \ifdim\ht0=\ht2 #3\else #2\fi
  }
\newcommand*\rel@kern[1]{\kern#1\dimexpr\macc@kerna}
\newcommand*\widebar[1]{\@ifnextchar^{{\wide@bar{#1}{0}}}{\wide@bar{#1}{1}}}
\newcommand*\wide@bar[2]{\if@single{#1}{\wide@bar@{#1}{#2}{1}}{\wide@bar@{#1}{#2}{2}}}
\newcommand*\wide@bar@[3]{%
  \begingroup
  \def\mathaccent##1##2{%
    \let\mathaccent\save@mathaccent
    \if#32 \let\macc@nucleus\first@char \fi
    \setbox\z@\hbox{$\macc@style{\macc@nucleus}_{}$}%
    \setbox\tw@\hbox{$\macc@style{\macc@nucleus}{}_{}$}%
    \dimen@\wd\tw@
    \advance\dimen@-\wd\z@
    \divide\dimen@ 3
    \@tempdima\wd\tw@
    \advance\@tempdima-\scriptspace
    \divide\@tempdima 10
    \advance\dimen@-\@tempdima
    \ifdim\dimen@>\z@ \dimen@0pt\fi
    \rel@kern{0.6}\kern-\dimen@
    \if#31
      \overline{\rel@kern{-0.6}\kern\dimen@\macc@nucleus\rel@kern{0.4}\kern\dimen@}%
      \advance\dimen@0.4\dimexpr\macc@kerna
      \let\final@kern#2%
      \ifdim\dimen@<\z@ \let\final@kern1\fi
      \if\final@kern1 \kern-\dimen@\fi
    \else
      \overline{\rel@kern{-0.6}\kern\dimen@#1}%
    \fi
  }%
  \macc@depth\@ne
  \let\math@bgroup\@empty \let\math@egroup\macc@set@skewchar
  \mathsurround\z@ \frozen@everymath{\mathgroup\macc@group\relax}%
  \macc@set@skewchar\relax
  \let\mathaccentV\macc@nested@a
  \if#31
    \macc@nested@a\relax111{#1}%
  \else
    \def\gobble@till@marker##1\endmarker{}%
    \futurelet\first@char\gobble@till@marker#1\endmarker
    \ifcat\noexpand\first@char A\else
      \def\first@char{}%
    \fi
    \macc@nested@a\relax111{\first@char}%
  \fi
  \endgroup
}
\newlist{deepenum}{enumerate}{6}
\newlist{algospec}{itemize}{1}
\setlist[algospec]{font=\normalfont\itshape,itemsep=0ex,partopsep=0ex}
\newcommand{\inputs}[1]{%
  \begin{algospec}[nosep,align=right,labelwidth=\widthof{Output:}]
    \item[Input:]
    #1%
  }
\newcommand{\outputs}[1]{%
    \item[Output:] #1
  \end{algospec}%
  \rule[.5\baselineskip]{\textwidth}{.05em}%
  \vskip-5pt%
}
\newenvironment{algo}
{
  \setcounter{totalnumber}{1}
  \setcounter{topnumber}{1}
  \begin{algofloat}\footnotesize
  \begin{center}\begin{minipage}{.9\linewidth}%
  \rule{\textwidth}{.08em}
}
{
  \vskip-5pt%
  \rule{\textwidth}{.08em}%
  \end{minipage}\end{center}
  \end{algofloat}
}
\newenvironment{myequation*}{\begin{minipage}{\linewidth}\begin{equation*}}{\end{equation*}\end{minipage}}
\newcommand{\bC}{\mathbb C}
\newcommand{\bK}{\mathbb K}
\newcommand{\bKbar}{\protect\widebar\bK}
\newcommand{\bL}{\mathbb L}
\newcommand{\bN}{\mathbb N}
\newcommand{\bP}{\mathbb P}
\newcommand{\bQ}{\mathbb Q}
\newcommand{\bQbar}{\protect\widebar\bQ}
\newcommand{\bR}{\mathbb R}
\newcommand{\bZ}{\mathbb Z}
\def\mathcal{\mathscr}
\newcommand{\cA}{\mathcal A}
\newcommand{\cB}{\mathcal B}
\newcommand{\cC}{\mathcal C}
\newcommand{\cD}{\mathcal D}
\newcommand{\cF}{\mathcal F}
\newcommand{\cN}{\mathcal N}
\newcommand{\cR}{\mathcal R}
\newcommand{\cS}{\mathcal S}
\newcommand{\cT}{\mathcal T}
\newcommand{\cU}{\mathcal U}
\newcommand{\cV}{\mathcal V}
\newcommand{\fD}{\mathfrak D}
\newcommand{\fG}{\mathfrak G}
\newcommand{\fH}{\mathfrak H}
\newcommand{\fI}{\mathfrak I}
\newcommand{\fR}{\mathfrak R}
\newcommand{\fp}{\mathfrak p}
\newcommand{\fq}{\mathfrak q}
\newcommand{\vecty}{Y}
\newcommand{\divides}{\mid}
\newcommand{\pol}[1]{#1[x]}
\newcommand{\polm}[1]{#1[a]}  
\newcommand{\rat}[1]{#1(x)}
\newcommand{\fps}[1]{#1[[x]]}
\newcommand{\fls}[1]{#1((x))}
\newcommand{\rrat}[2]{#1(x^{1/#2})}
\newcommand{\rfps}[2]{#1[[x^{1/#2}]]}
\newcommand{\rfls}[2]{#1((x^{1/#2}))}
\newcommand{\ramrat}[1]{\rrat{#1}{*}}
\newcommand{\puiseux}[1]{\rfls{#1}{*}}
\newcommand{\rflsalt}[2]{#1(((t^{-1})^{1/#2}))}
\newcommand{\puiseuxalt}[1]{\rflsalt{#1}{*}}
\newcommand{\omdr}{\fD} 
\newcommand{\ricsol}{\fR}
\newcommand{\dring}{D} 
\newcommand{\leftkern}{\Pi} 
\newcommand{\projzn}[1]{\bP\bigl(#1\bigr)}
\newcommand{\mult}[2]{\val_{#2}{#1}} 
\DeclareMathOperator*{\irred}{irred}
\DeclareMathOperator*{\forbiddenby}{\cF}
\newcommand{\rfact}[2]{#2^{\widebar{#1}}}
\newcommand{\val}{\operatorname{val}}
\DeclareMathOperator*{\lcm}{lcm}
\newcommand{\rk}{\operatorname{rk}}
\newcommand{\ideal}[1]{\langle #1 \rangle}
\newcommand{\intinv}[2]{[#1,#2]}
\newcommand{\nonz}[1]{{#1}_{\neq0}}
\newcommand{\Bnum}{B_{\mathrm{num}}}
\newcommand{\Bden}{B_{\mathrm{den}}}
\def\trsigma{^{[\sigma]}}
\def\trinfty{^{[\infty]}}
\newcommand{\mylogg}[1]{(\ln x)^{#1}}
\newcommand{\mylog}[1]{\mylogg{\log_b#1}}
\newcommand{\myloggalt}[1]{(\ln(t^{-1}))^{#1}}
\newcommand{\mylogalt}[1]{\myloggalt{\log_b#1}}
\newcommand{\zdim}{{N}}
\newcommand{\mif}{monic irreducible\xspace}
\def\timelimitexceeded{\text{>12\,hr}}
\def\memorylimitexceeded{\text{>60\,GB}}
\def\myhash{\text{\tiny\#}}
\renewcommand\part{%
   \if@noskipsec \leavevmode \fi
   \par
   \addvspace{4ex}%
   \@afterindentfalse
   \secdef\@part\@spart}
\def\thepart{\Roman{part}}
\def\@part[#1]#2{%
    \ifnum \c@secnumdepth >\m@ne
      \refstepcounter{part}%
      \addcontentsline{toc}{part}{\thepart\hspace{1em}#1}%
    \else
      \addcontentsline{toc}{part}{#1}%
    \fi
    {\parindent \z@ \raggedright
     \interlinepenalty \@M
     \normalfont
     \ifnum \c@secnumdepth >\m@ne
       \Large\bfseries \partname\nobreakspace\thepart:
     \fi
     \Large \bfseries #2%
     \par}%
    \nobreak
    \vskip 3ex
    \@afterheading}
\def\@spart#1{%
    {\parindent \z@ \raggedright
     \interlinepenalty \@M
     \normalfont
     \huge \bfseries #1\par}%
     \nobreak
     \vskip 3ex
     \@afterheading}
\newcommand{\DeclareMathActive}[2]{%
  \expandafter\edef\csname keep@#1@code\endcsname{\mathchar\the\mathcode`#1 }
  \begingroup\lccode`~=`#1\relax
  \lowercase{\endgroup\def~}{#2}%
  \AtBeginDocument{\mathcode`#1="8000 }%
}
\newcommand{\std}[1]{\csname keep@#1@code\endcsname}
\patchcmd{\newmcodes@}{\mathcode`\-\relax}{\std@minuscode\relax}{}{\ddt}
\begin{document}

\title{First-order factors of linear Mahler operators\tnoteref{dedicatory,cc-by,open-access,dererumnatura}}

\author[inria]{Frédéric Chyzak}
\ead{frederic.chyzak@inria.fr}

\author[imb]{Thomas Dreyfus}
\ead{thomas.dreyfus@cnrs.fr}

\author[inria]{Philippe Dumas}
\ead{philippe.dumas@inria.fr}

\author[lix]{Marc Mezzarobba}
\ead{marc@mezzarobba.net}

\affiliation[inria]{organization={Inria}, country={France}}
\affiliation[imb]{organization={Université Bourgogne Europe, CNRS, IMB UMR 5584}, postcode={F-21000}, city={Dijon}, country={France}}
\affiliation[lix]{organization={LIX, CNRS, École polytechnique, Institut polytechnique de Paris}, postcode={91120}, city={Palaiseau}, country={France}}

\tnotetext[dedicatory]{Dedicated to the memory of Marko Petkovšek.}

\tnotetext[cc-by]{%
\ccby\:
This work is licensed under a Creative Commons Attribution 4.0 International License
(\url{http://creativecommons.org/licenses/by/4.0/}).}

\tnotetext[open-access]{For the purpose of Open Access, a CC-BY public copyright licence
has been applied by the authors to the present document and will
be applied to all subsequent versions up to the Author Accepted
Manuscript arising from this submission.}

\tnotetext[dererumnatura]{Supported in part by the French ANR grant \href{https://mathexp.eu/DeRerumNatura/}{\emph{De rerum natura}} (ANR-19-CE40-0018)
and by the French-Austrian ANR-FWF grant EAGLES (ANR-22-CE91-0007 \& FWF-I-6130-N). This work has been achieved in the frame of the EIPHI Graduate school (contract ANR-17-EURE-0002).
}

\begin{abstract}
We develop and compare two algorithms for computing first-order right-hand factors in the
ring of linear Mahler operators
$\ell_r M^r + \dots + \ell_1 M + \ell_0$
where $\ell_0, \dots, \ell_r$ are polynomials in~$x$ and $Mx = x^b M$ for some
integer~$b \geq 2$.
In other words, we give algorithms for finding all formal infinite product
solutions of linear functional equations
$\ell_r(x) f(x^{b^r}) + \dots + \ell_1(x) f(x^b) + \ell_0(x) f(x) = 0$.

The first of our algorithms is adapted from Petkovšek's classical algorithm for
the analogous problem in the case of linear recurrences.
The second one proceeds
by computing a basis of generalized power series solutions of the functional equation
and by using Hermite--Padé approximants to detect those linear combinations of the solutions
that correspond to first-order factors.

We present implementations of both algorithms and discuss their use
in combination with criteria from the literature
to prove the differential transcendence of power series solutions of Mahler
equations.
\end{abstract}

\begin{keyword}
Mahler operator \sep factorization \sep hypergeometric solution \sep infinite product \sep Petkovšek's algorithm \sep Hermite-Padé approximant
\end{keyword}

\maketitle

\begin{footnotesize}
\setcounter{tocdepth}{2}
\tableofcontents
\end{footnotesize}

\section{Introduction}
\label{sec:intro}

\subsection{Mahler equations}

Mahler equations are a type of functional equations that,
for some fixed integer~$b\geq2$,
relate the iterates $y(x)$, $y(x^b)$, $y(x^{b^2})$, \dots,
of an unknown function~$y$ under the substitution of~$x^b$ for~$x$.
Mahler originally considered nonlinear multivariate equations of this type
in his work in transcendence theory in the 1920s.
The focus later shifted to linear univariate equations,
in relation to the study of automatic sequences.
More recently,
questions in difference Galois theory
related to the existence of nonlinear differential equations
satisfied by solutions of difference equations of various types
led to a revival of the topic.

\begin{ex}\label{ex:classical}
\begin{enumerate}
\item
Let $t_0 t_1 \dots = 01101001\dotsm$ be the Thue--Morse sequence \parencite[A010060]{oeis},
defined as the fixed point with $t_0=0$ of the substitution
$\{ 0 \mapsto 01, 1 \mapsto 10 \}$.
It is classical that the series
$y(x) = \sum_{n=0}^{\infty} (-1)^{t_n} x^n$
can be written as
\[ y(x) = \prod_{j=0}^{\infty} (1 - x^{2^j}). \]
This infinite product obviously satisfies the linear first-order Mahler equation
\[y(x^2) = (1-x) y(x).\]
\item\label{item:baum-sweet}
The Baum--Sweet sequence $(a_n)_{n\in \bN}$ is the automatic sequence defined by $a_n=1$
if the binary representation of $n$ contains no blocks of consecutive~$0$ of odd length,
and $a_n = 0$ otherwise \parencite[A086747]{oeis}.
The generating function
$y(x)=\sum_{n \in \bN}a_{n}x^n$
satisfies the Mahler equation
\begin{equation*}\label{eq:infprod1}
y(x^4)+x y(x^2)-y(x) =0.
\end{equation*}
\item\label{item:stern-brocot}
The Stern--Brocot sequence $(a_n)_{n\in\bN}$ \parencite[A002487]{oeis}
was introduced by Stern to enumerate the nonnegative rational numbers bijectively
by the numbers~$a_n/a_{n+1}$.
\textcite{AlloucheShallit-1992-RKR} showed
that the sequence is completely defined by $a_0 = 0$, $a_1 = 1$,
and for all~$n \in \bN$,
\begin{equation*}
a_{2n} = a_n, \quad a_{4n+1} = a_n + a_{2n+1}, \quad a_{4n+3} = 2 a_{2n+1} - a_n.
\end{equation*}
Starting from this property, the algorithms developed in the present article can be used to rediscover the well-known expression
\begin{equation*}\label{eq:infprod2}\
  y(x) = \sum_{n \geq 0} a_n x^n =  x \prod_{k\geq 0} (1 + x^{2^k} + x^{2^{k+1}})
\end{equation*}
as a consequence of the equation
\[ x y(x) - (1 + x + 2x^2) y(x^2) + (1 + x^2 + x^4) y(x^4) = 0 . \]

\item\label{item:adamczewski-faverjon}
\textcite[Example~8.2]{AdamczewskiFaverjon-2017-MMR} introduce the four generating series
$y_i(x) = \sum_{n\geq0}a_{i,n}x^n$ where $a_{i,n}\in\{0,1\}$ encodes
for each~$i$
a different condition on the parities of the number of occurrences of $1$~and~$2$ in the ternary expansion of~$n$.
After deriving the linear Mahler system
\begin{equation*}
\vecty(x) =
\begin{pmatrix}
1 & x & 0 & x^2 \\
x & 1 & x^2 & 0 \\
0 & x^2 & 1 & x \\
x^2 & 0 & x & 1
\end{pmatrix}
\vecty(x^3)
\qquad\text{for}\qquad
\vecty(x) =
\begin{pmatrix} y_1(x) \\ y_2(x) \\ y_3(x) \\ y_4(x) \end{pmatrix},
\end{equation*}
they prove that the~$y_i(x)$ are linearly independent over~$\rat\bQbar$.
\end{enumerate}
We return to these examples and discuss some additional ones in Section~\ref{sec:examples}.
Example~\ref{ex:classical}(\ref{item:adamczewski-faverjon}) will also be our running example in~\S\ref{sec:syzygies};
see Example~\ref{ex:adamczewski-faverjon}.
\end{ex}%

\subsection{First-order factors and hypergeometric solutions}
\label{sec:intro-first-order}

The present article continues a line of work initiated in
our earlier publication \parencite{ChyzakDreyfusDumasMezzarobba-2018-CSL},
to which we refer for more context.
This work started when, back in December 2015,
the second author asked
about making effective a differential transcendence criterion
introduced in \parencite{DreyfusHardouinRoques-2018-HSM}.
This criterion boils down to determining the rational function solutions
of a nonlinear Mahler equation
analogous to the Riccati equation associated
with a linear differential equation.
Equivalently, this can be viewed as computing
the first-order right-hand factors with rational function coefficients
of the linear Mahler operator underlying the Mahler equation.
We return to this  motivation in~\S\ref{sec:transcendence}
and postpone to that section a more detailed discussion of
differential transcendence and differential algebraic independence
criteria based on Mahler equations.

Another motivation for studying first-order factors is that,
in the differential case,
Beke's method \parencite{Markoff1891,Bendixson1892,Beke-94-IHL}%
\footnote{See \parencite{BostanRivoalSalvy-2024-MDE} for more on the history of this method.}
reduces
the problem of factoring a differential equation into irreducible factors
to that of finding the first-order right-hand factors
of some auxiliary equations.
One may expect that the method adapts to the Mahler case,
so the present work paves the way to future factorization algorithms.

Our algorithms require
solving linear Mahler equations in various domains.
To this end, we build on results from \parencite{ChyzakDreyfusDumasMezzarobba-2018-CSL}.
There, we worked with polynomials, rational functions, and series whose coefficients were in a computable subfield~$\bK$ of~$\bC$,
so we continue with this assumption here.
The following definitions and notation will be used throughout.

\begin{defi}\label{def:L}
For some fixed integer~$b\geq2$, called the \emph{radix},
we denote by~$M$
the \emph{Mahler operator} with regard to~$b$,
which is defined as mapping any function~$y(x)$ to~$y(x^b)$.
For~$r \in \bN$ and polynomials~$\ell_i \in \pol\bK$
with coefficients in a field~$\bK$ to be fixed by the context,
we consider the \emph{linear Mahler equation}
\begin{equation}
\tag{L}\label{eq:linear}
\ell_r M^r y + \dots + \ell_1 M y + \ell_0 y = 0 ,
\end{equation}
as well as the corresponding \emph{linear operator}
\begin{equation*}
  L = \ell_r M^r  + \dots + \ell_1 M  + \ell_0 .
\end{equation*}

We assume~$\ell_r \neq 0$ and call~$r$ the \emph{order} of the equation~\eqref{eq:linear} and of the operator~$L$.
We write~$d$ for the maximal degree $\max_k \deg \ell_k$ of the coefficients.
We call
\begin{equation}
\tag{R}\label{eq:riccati}
\ell_r u \, M u \dotsm M^{r-1} u + \dots + \ell_2 u \, M u + \ell_1 u + \ell_0 = 0 ,
\end{equation}
the \emph{Riccati Mahler equation} associated with~\eqref{eq:linear}.

\end{defi}

\begin{hyp}\label{hyp:l0-neq-0}
Throughout, we further assume
that $\ell_0$~is also nonzero
and that $L$~is primitive, that is, that the family of the~$\ell_i$ has gcd~$1$.
\end{hyp}

\begin{defi}
By a \emph{ramified rational function},
we mean a rational function in some fractional power of~$x$,
so that rational functions are particular ramified rational functions.
\end{defi}

Our goal is to find the right-hand factors $M - u$ of~$L$ where~$u$ is a ramified rational function.
Equivalently, we want to compute the \emph{hypergeometric} solutions of~$L$,
that is, the solutions $y$ of~$L$ that satisfy a first-order linear Mahler
equation with ramified rational coefficients.
Informally, the link between both equations is the relation $u = M y / y$.
However, general solutions of~\eqref{eq:linear} may live in a ring containing zero divisors,
so that quotients are not always well defined
(see the construction of~$\omdr$ in~\S\ref{sec:difference} and Definition~\ref{def:omdr}).

\begin{ex}\label{ex:adamczewski-faverjon-intro}
By searching for a linear dependency among the first coordinates of the iterates
$\vecty, M \vecty, M^2 \vecty, \dots$
of the vector $\vecty$ appearing in Example~\ref{ex:classical}(\ref{item:adamczewski-faverjon}),
one obtains a linear Mahler operator of order~$4$ and degree~$258$ annihilating~$y_1$.
(See Example~\ref{ex:stern-brocot} in §\ref{sec:examples} for details of the method.)
This operator turns out to annihilate all four $y_i$, as one can check by applying the method of Example~\ref{ex:stern-brocot} to each coordinate of the vector~$\vecty$ in turn.
The algorithms developed in the following sections reveal that its hypergeometric solutions are exactly the series
\begin{equation}\label{eq:adamczewski-faverjon-hyp-sols}
\begin{split}
  a \prod_{k=0}^{\infty}(1 - x^{3^k} - x^{2\cdot 3^{k}}), \qquad
  b \prod_{k=0}^{\infty}(1 + x^{3^k} - x^{2\cdot 3^{k}}), \\
  (c_0 + c_1 x) \prod_{k=0}^{\infty}(1 + x^{2\cdot 3^k} + x^{4\cdot 3^k}), \qquad
\end{split}
\end{equation}
for arbitrary constants $a, b, c_0, c_1$.
Comparing with the series expansions of $y_1, \dots y_4$,
we deduce that none of the~$y_i$ is hypergeometric, let alone rational.
By the classical dichotomy for Mahler functions (see~\S\ref{sec:transcendence}),
all the~$y_i$ are therefore transcendental (i.e., not an algebraic function of~$x$).
\end{ex}

\subsection{Contribution}

This article provides algorithms to compute the rational solutions
of Riccati Mahler equations, and more generally their ramified rational solutions.
We develop two approaches.

The first one is adapted from the classical algorithm
by \textcite{Petkovsek-1992-HSL} for finding the hypergeometric solutions of a
linear recurrence equation.
Petkovšek's algorithm searches for first-order factors of difference operators
in a special form called the Gosper--Petkovšek form.
The existence of Gosper--Petkovšek forms relies on the fact that
for any two nonzero polynomials $A$ and~$B$,
the set of integers~$i$ such that $A(x)$ and~$B(x+i)$ have a nontrivial common
divisor is finite.
As this is not true when the shift is replaced by the Mahler operator,
we need to slightly depart from the classical definition of
Gosper--Petkovšek forms.
Doing so, we obtain a first complete algorithm for finding first-order factors
of Mahler equations of arbitrary order.
Note that \textcite{Roques-2018-ARB} recently gave a slightly different adaptation of
Petkovšek's algorithm to the case of Mahler equations of order two.
Like Petkovšek's, these algorithms have to consider an exponential number of
separate subproblems in the worst case.
We discuss several ways of pruning the search space to mitigate this issue in
practice.

Our second algorithm avoids the combinatorial search phase entirely,
at the price of a worst-case exponential behavior of a different nature.
It is based on a relaxation of the problem that can be solved using
Hermite--Padé approximants%
\footnote{We find some pleasant irony
in our application of Hermite--Padé approximants to problems on Mahler operators,
after Mahler himself has introduced similar approximants
in his work \parencite{Mahler-1968-PS}.}.
The idea is to search for series solutions~$y$ of the linear Mahler equation
that make $My/y$ rational.
Roughly speaking,
we first compute a basis $(y_1,\dots,y_\zdim)$ of series solutions,
then search for linear combinations with polynomial coefficients of
$y_1,\dots,y_\zdim,M y_1,\dots,M y_\zdim$
that vanish to a high approximation order~$\sigma$,
and finally isolate, among these relations, those corresponding to
hypergeometric solutions by solving a polynomial system.
Though we are not aware of any exact analogue of our algorithm in the
literature, variants of the same basic idea have been used by several authors in
the differential case (see below for references).

In order to state the algorithms and justify their correctness,
it is useful to work in a ring containing “all” solutions of the linear equation~\eqref{eq:linear},
or at least all solutions needed in the discussion.
Instead of appealing for this to the general Picard--Vessiot theory of linear
difference equations, we introduce suitable conditions (Hypothesis~\ref{def:propertyP}) that
suffice for our purposes and construct a  ring satisfying them, whose
elements (unlike those of Picard--Vessiot rings of Mahler equations) admit
simple representations as formal series.
In passing we define a suitable notion of Mahler-hypergeometric function and
establish its basic properties.

We compare the performance of our two approaches based on an implementation%
\footnote{Authored by Ph.~Dumas. Available at \url{https://mathexp.eu/dumas/dcfun/}.}
in Maple
and observe that the second algorithm turns out to be more efficient
in practice on examples from the mathematical literature.
Finally, we use this implementation in combination with criteria such as the one
mentioned earlier to prove several differential algebraic independence results
between series studied in the literature.

\subsection{Related work}

To the best of our knowledge,
the problem we consider here was first discussed in
the doctoral dissertation of
\textcite[§3.6]{Dumas1993},
which contains an incomplete method for finding hypergeometric solutions of
Mahler equations.
Dumas' method is somewhat reminiscent of Petkovšek's algorithm,
but lacks a proper notion of Gosper--Petkovšek form.
\textcite[§6]{Roques-2018-ARB}, as already mentioned, describes a complete analogue
of Petkovšek's algorithm for Mahler equations,
but restricts himself to equations of order two.
We are not aware of any other reference dealing specifically with the
factorization of Mahler operators.

However, it is natural to try and adapt to Mahler equations algorithms that
apply to differential equations or to difference equations in terms of the
shift operator.
In the differential case, factoring algorithms are a classical topic, going back
at least to Fabry's time \parencite*[§V]{Fabry1885};
we refer to \parencite[§1.4]{BostanRivoalSalvy-2024-MDE}
for a well-documented overview.
The second of our algorithms, using Hermite--Padé approximation, is related to
methods known from the differential case.
Most similar to our approach is maybe an unpublished article by
\textcite[§4]{BronsteinMulders1999}
where they present a heuristic method for solving Riccati differential equations
based on Padé (not Hermite--Padé) approximants of series solutions with
indeterminate coefficients.
The same idea appears in works
by \textcite[§2.5.2]{Pfluegel1997}
and by \textcite[§4.1]{PutSinger-2003-GTL},
though neither of these references discusses in detail how to deal with drops in
the degree of candidate solutions for special values of the parameters.
The core idea of enforcing the vanishing of high-order terms of series solutions
of the Riccati equation so as to reduce to the solution of polynomial equations
in a number of unknowns bounded by the order of the differential equation
already appears in Fabry's 1885 thesis.

In the shift case, the classical algorithm for finding hypergeometric solutions
is that of \textcite{Petkovsek-1992-HSL}.
It is the direct inspiration for our first method.
Petkovsek's algorithm is itself based on Gosper's
hypergeometric summation algorithm \parencite{Gosper1978},
and was previously adapted to $q$-difference equations in
\parencite{AbramovPetkovsek1995,AbramovPaulePetkovsek1998}.
Van Hoeij and Cluzeau \parencite{vanHoeij1999, CluzeauVanHoeij2004} later proposed alternative algorithms that are faster in practice;
it seems likely that the ideas would also be relevant in the Mahler case, but we
do not explore this question~here.

Another line of work aims at unified algorithms for linear functional operators
of various types based on the formalism of Ore polynomials
\parencite[e.g.,][]{BronsteinPetkovsek1993}.
Factoring algorithms, however, remain specific to each individual type of
equation.
In addition, even methods that do apply to almost all types of Ore operators
sometimes fail in the Mahler case because the commutation $M x = x^b M$ does not
preserve degrees with respect to~$x$.

Turning now to the structure and computation of generalized series solutions of
linear Mahler equations, Roques's discussion of the local exponents of Mahler
systems \parencite*[§4]{Roques-2018-ARB} forms the basis for
our §\ref{sec:concrete}.
Further developments (not used here) include recent work by
\textcite{Roques-2022-FMM},
\textcite{FaverjonRoques-2024-HSM},
and
\textcite{FaverjonPoulet2022}.

\subsection{Outline}
\label{sec:outline}

The “structural” results about the solution spaces of Mahler equations come
first in the text.
We first generalize our Mahlerian problem to the context of difference rings,
including the case of nonsurjective morphisms:
in \S\ref{sec:structure},
we derive a parametric description of the right-hand factors
of a difference operator (Theorem~\ref{thm:structure-main}).
To accommodate the Mahler operator in the previous generalized framework,
we then introduce in~\S\ref{sec:concrete}  an explicit difference ring~$\omdr$
containing all the solutions of the linear Mahler equations
that are needed in the parametric descriptions of the solutions~$u$
to Riccati Mahler equations.
In that section, we also define classes of $F$-hypergeometric solutions
for various difference fields~$F$
and, for $F$ the field of Puiseux series,
we partition the  set of Puiseux series solutions of the Riccati equation  according to the coefficient of their valuation term
(Theorem~\ref{thm:structure-puiseux}).
To describe the ramified rational solutions of the Riccati equation,
we then change~$F$ to the field of ramified solutions
in~\S\ref{sec:rational-ramified-sol},
where we obtain a partition (Theorem~\ref{thm:structure-rational-ramified})
that is finer than the partition
induced by the partition in the Puiseux series on ramified solutions.
In preparation for the algorithms,
the technical~\S\ref{sec:bounds-for-rational-solutions} presents
bounds on the degree of numerators and denominators
of rational solutions to a Riccati Mahler equation
(Proposition~\ref{prop:deg-bounds-for-ratfuns}).

We continue with two algorithmic approaches.

We review Petkovšek's classical algorithm for the shift case
and Roques's analogue for order~$2$ in the Mahler case
before developing our generalization in~\S\ref{sec:petkovsek-variant}.
We first propose a brute-force algorithm (Algorithm~\ref{algo:BP}),
which we prove to be correct (Theorem~\ref{thm:BP}).
Next, in~\S\ref{sec:efficiency-improvements}
we propose several pruning criteria to improve its exponential behavior,
leading to an improved algorithm (Algorithm~\ref{algo:IP}).

We then develop our relaxation method based on Hermite--Padé approximations:
after studying in~\S\ref{sec:syzygies} the approximate syzygy module of a basis of truncated solutions of~\eqref{eq:linear} and its limit as precision goes to infinity,
we obtain in~\S\ref{sec:hermite-pade} an algorithm (Algorithm~\ref{algo:HP}) that produces
successive sets of parametrized candidates,
each containing all true solutions,
until it stabilizes on true solutions~only.

To conclude,
we present examples of applications and timings in~\S\ref{sec:impl-and-benchmark},
where we can observe that our relaxation method beats
the other by combinatorial exploration
in a number of natural examples.
We finally apply our implementation in order to prove
properties of differential transcendence on examples
in~\S\ref{sec:transcendence}.

Part~\ref{part:petkovsek} on our generalized Petkovšek algorithm
and Part~\ref{part:hermite-pade} on our ap\-prox\-i\-mants-based algorithm
are completely independent.
They are both based on the structural results of Part~\ref{part:structural}.
Part~\ref{part:implem} is mostly independent from the other parts
if one admits the existence of the various algorithms.
Also, a few sections are very independent from the rest of the text
and might be skipped by readers with a specific interest:
\S\ref{sec:bounds-for-rational-solutions}~provides degree bounds on rational solutions
that are to be used as a black box in Part~\ref{part:hermite-pade};
\S\ref{sec:efficiency-improvements}~presents technical algorithmic improvements to speed up our first plain generalization of Petkovšek's algorithm;
\S\ref{sec:rank-rels}~describes how approximate syzygies corresponding to nonsimilar (Definition~\ref{def:similarity}) solutions interact;
\S\ref{sec:back-to-rat}~sketches how the approach to the rational solving of the Riccati Mahler equation~\eqref{eq:riccati}
adapts to the linear Mahler equation~\eqref{eq:linear};
\S\ref{sec:transcendence}~presents an application
whose background in number theory is a bit away from our main theme,
although it was our original motivation for the work.

\subsection{Notation}

The following is a collection of notation used throughout the text.

\begin{notation}\label{notation:general}
We will denote the field of Puiseux series over a field\/~$\bL$ by
\begin{equation*}
\puiseux\bL = \bigcup_{q\geq1}\rfls\bL{q}
\end{equation*}
and the field
of ramified rational functions
over~$\bL$ by
\begin{equation*}
\ramrat\bL = \bigcup_{q\geq1}\rrat\bL{q} .
\end{equation*}
We use the notation $\bL[x]\langle M\rangle$
to denote the algebra generated by~$M$ over~$\bL[x]$
and satisfying the relation $M f(x) = f(x^b) M$
for any~$f \in \bL[x]$.
Similar definitions apply for other coefficient domains.
All of the operators and functions $M$, $\val$, $\ln$, $\log$
take precedence over additions and products,
which means $My/y = (My)/y$, $\val a \, b = (\val a)b$, etc.
We write $Ly$ for the application of an operator~$L$ to a function, or $L(x,M)\,y(x)$ if needed.
We write~$\nonz{S}$ for a given set~$S$ containing~$0$ to denote~$S \setminus \{0\}$.
The set~$S$ will be the set~$\bN$ of natural numbers, a field, a vector space, a cone (see Definition~\ref{def:F-cone}), a Cartesian product, etc.
A transpose is denoted by an exponent~$T$.
A tuple~$z$ is identified with a row vector, its transpose~$z^T$ with a column vector.
\end{notation}

\part{Structural results}
\label{part:structural}

\section{Structure of the solution set of the Riccati equation}
\label{sec:structure}

The notions introduced in this section will be used when solving Riccati Mahler equations.
As they do not depend on the specific choice of the Mahler operator,
we write them in the broader generality of difference rings.

\subsection{Basic notions of difference algebra}

A difference ring is commonly defined as a pair $(\dring,\sigma)$ formed
by a commutative ring with identity~$\dring$ and an automorphism~$\sigma$ of~$\dring$;
see, e.g., \parencite{Cohn-1965-DA} or \parencite{PutSinger-1997-GTD},
and a difference field is a difference ring that is a field.
An example is the field~$\puiseux\bK$ of Puiseux series,
equipped with the Mahler operator~$M$.
However,
we relax the definition to accept a map~$\sigma$ that is only an injective endomorphism
\parencite[\emph{cf.}][]{Wibmer2013},
since we intend to consider  the restriction of the Mahler operator from~$\rat\bK$ to itself,
which is not an automorphism of the field of rational functions.
When the context is clear,
we write~$\dring$ instead of $(\dring,\sigma)$.

Given a difference ring $(\dring,\sigma)$,
a difference ring extension is a difference ring
$(\dring',\sigma')$ such that $\dring'$~is an extension ring of~$\dring$ and
$\sigma'$ restricted to~$\dring$ is equal to~$\sigma$.
In practice, we will always write again~$\sigma$ for the extended~$\sigma'$.
The ring of constants of a difference ring $\dring$,
denoted~$\dring^\sigma$,
is the subring of elements in~$\dring$ fixed by the endomorphism~$\sigma$.
If~$\dring$ is a difference field, the ring of constants is a field.

In what follows, every difference ring will be
a difference ring extension of the field $\rat\bK$ of rational functions.
For the needed level of generality, let us introduce
\begin{align}
\tag{L${}_\sigma$}\label{eq:linear-sigma}
\ell_r \sigma^r y + \dots + \ell_1 \sigma y + \ell_0 y &= 0 , \\
\tag{R${}_\sigma$}\label{eq:riccati-sigma}
\ell_r u \, \sigma u \dotsm \sigma^{r-1} u + \dots + \ell_2 u \, \sigma u + \ell_1 u + \ell_0 &= 0 ,
\end{align}
where $\ell_0\ell_r \neq 0$ and each~$\ell_i$ is in~$\rat\bK$.
We will refer to these equations as, respectively,
the linear difference equation~\eqref{eq:linear-sigma}
and the Riccati difference equation~\eqref{eq:riccati-sigma}.

\subsection{First-order factors and their solutions}
\label{sec:first-order}

We now explain the link between the linear and the Riccati equations:
the solutions to the Riccati equation are the coefficients~$u$
of the monic first-order right-hand factors $\sigma-u$ of the linear difference operator~$L$.

\begin{lem}\label{lem:linear-riccati-equiv}
Given an element~$u$ of a difference field extension~$F$ of\/~$\rat\bK$,
the operator
\begin{equation*}
L = \ell_r(x) \sigma^r + \dots + \ell_0(x) \in \rat\bK\langle \sigma\rangle
\end{equation*}
associated with the linear difference equation~\eqref{eq:linear-sigma}
admits~$\sigma - u$ as a first-order right-hand factor in~$F\langle \sigma\rangle$
if and only if $u$~satisfies the Riccati difference equation~\eqref{eq:riccati-sigma}.
\end{lem}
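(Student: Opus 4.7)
The plan is to perform right-hand Euclidean division of $L$ by the first-order operator $\sigma - u$ in $F\langle\sigma\rangle$, and to identify the resulting remainder with the left-hand side of the Riccati equation~\eqref{eq:riccati-sigma}. Since $\sigma - u$ is monic of order one in~$\sigma$, right division produces a unique pair $(Q, \rho) \in F\langle\sigma\rangle \times F$ such that
\[
    L = Q \cdot (\sigma - u) + \rho.
\]
By definition, $\sigma - u$ is a right-hand factor of $L$ if and only if $\rho = 0$, so the lemma reduces to an explicit computation of~$\rho$.

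For this I would set $u_0 = 1$ and $u_k = u \cdot \sigma(u) \cdots \sigma^{k-1}(u)$ for $k \geq 1$, and prove by induction on~$k$ that, for every $k \geq 0$,
\[
    \sigma^k = Q_k \cdot (\sigma - u) + u_k
\]
for some $Q_k \in F\langle\sigma\rangle$. The cases $k=0$ and $k=1$ are immediate with $Q_0 = 0$ and $Q_1 = 1$. For the inductive step, I multiply the relation on the left by~$\sigma$ and apply the commutation rule $\sigma \cdot a = \sigma(a) \cdot \sigma$, valid for every $a \in F$ because $F$ is a difference-ring extension of $\bK(x)$. Expanding $\sigma(u_k) \cdot \sigma = \sigma(u_k) \cdot [(\sigma - u) + u]$ and invoking $u \cdot \sigma(u_k) = u_{k+1}$ (commutativity of $F$) gives the next step, with $Q_{k+1} = \sigma Q_k + \sigma(u_k)$.

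Taking the $\ell_k$-linear combination of these relations for $k = 0, \dots, r$ yields
\[
    L = \Bigl(\sum_{k=0}^{r} \ell_k Q_k\Bigr) \cdot (\sigma - u) + \sum_{k=0}^{r} \ell_k u_k,
\]
so that $\rho = \sum_{k=0}^{r} \ell_k u_k$, which is precisely the left-hand side of~\eqref{eq:riccati-sigma}. The equivalence between factorization and satisfying the Riccati equation follows.

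There is no substantial obstacle here: the argument is a routine calculation in the Ore ring $F\langle\sigma\rangle$. The one point calling for care is to keep track of the non-commutativity and apply the commutation rule $\sigma \cdot a = \sigma(a) \cdot \sigma$ in the correct direction throughout, especially in the inductive step, so as not to confuse $\sigma$ as an element of the Ore ring with $\sigma$ as the endomorphism of the coefficient field.
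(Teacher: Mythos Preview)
Your proof is correct and follows the same approach as the paper: right Euclidean division of $L$ by $\sigma - u$ in the skew ring $F\langle\sigma\rangle$, with the remainder identified as the left-hand side of~\eqref{eq:riccati-sigma}. The paper simply asserts this division and the form of the remainder, whereas you have supplied the explicit inductive computation of the quotient and remainder.
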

\begin{proof}
The ring $F\langle \sigma\rangle$ is a skew Euclidean ring \parencite{BronsteinPetkovsek1993}.
By Euclidean division of~$L$ on the right by~$\sigma - u$ in the ring~$F\langle \sigma\rangle$,
we obtain $L = \tilde L (\sigma - u) + R$ where
$R$~is exactly the left-hand side of~\eqref{eq:riccati-sigma}.
\end{proof}

We define hypergeometric elements in analogy with the classical difference case
for shift operators.
\begin{defi}\label{def:F-hyperg}
Given a difference field extension~$F$ of\/~$\rat\bK$
and a difference ring extension~$\dring$ of~$F$,
an element~$y$ of~$\dring$ is $F$-hypergeometric
if there exists $u \in F$ satisfying $\sigma y = u y$.
\end{defi}
Note that $y$ may be zero in the definition,
but that we will focus on nonzero hypergeometric~$y$ throughout.

The set of $F$-hypergeometric elements is generally not stable under addition,
but it enjoys the structure of an $F$-cone, a notion that we introduce now.

\begin{defi}\label{def:F-cone}
Given a field~$F$,
a~nonempty set stable under multiplication by elements of~$F$
will be called an \emph{$F$-cone}.
\end{defi}

\begin{figure}
\centerline{%
  \begin{tikzpicture}[xscale = 2.0, yscale = 0.9]
    \node (Kx) at (0,1) {$\rat{\bK}$} ;
    \node (F)  at (0,3) {$F$} ;
    \node (R)  at (0,5) {$\dring$} ;

    \node (K)  at (1,0) {$\bK$} ;
    \node (L)  at (1,2) {$\bL = F^\sigma  = \dring^\sigma$} ;
    \node (sol)at (1,4) {$\ker_\dring(L)$} ;

    \draw (Kx) -- (F) ;
    \draw (F)  -- (R) ;

    \draw (K)  -- (Kx);
    \draw (K)  -- (L) ;
    \draw (L)  -- (F) ;
    \draw[dashed] (L)  -- (sol);
    \draw[dashed] (sol)-- (R) ;

  \end{tikzpicture}%
}
  \caption{\label{fig:unsurprizing-ring}The inclusion relations for a difference field extension~$F$ of\/~$\rat\bK$ and a  ring extension~$\dring$ of it satisfying Hypothesis~\ref{def:propertyP}. The two rings $F$ and~$\dring$ share the same field of constants~$\bL$.
  All inclusion but the dashed ones are difference ring inclusions;
  the dashed ones are only inclusions of $\bL$-vector spaces.
  The space $\ker_\dring(L)$ is the space of the solutions of the operator~$L$ underlying the equation~\eqref{eq:linear-sigma}.}
\end{figure}
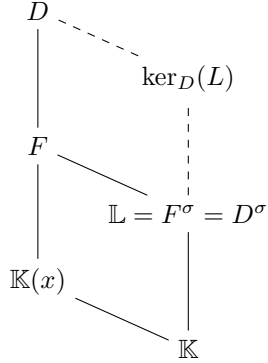

The following hypothesis captures the notion of an extension of some difference field~$F$
that contains “enough” $F$-hypergeometric solutions of the linear equation~\eqref{eq:linear-sigma}
to suitably describe the “full” solution set of the Riccati equation~\eqref{eq:riccati-sigma} in~$F$.
The hypothesis is illustrated in Figure~\ref{fig:unsurprizing-ring}.

\begin{hyp}\label{def:propertyP}
Let $F$ be a difference field extension of\/~$\rat\bK$,
with constant field\/~$\bL$ containing~$\bK$.
A difference ring extension~$\dring$ of~$F$ is
said to satisfy Hypothesis~\ref{def:propertyP}
if
\begin{enumerate}
\item \label{it:universal-constant}
  the constant ring of~$\dring$ is the field\/~$\bL$,
\item \label{it:universal-1}
  for each nonzero~$u \in F$, the equation $\sigma y = uy$ has nonzero solutions in~$\dring$,
\item \label{it:universal-simplicity}
  $\dring$~is simple, meaning that $\sigma$~is an automorphism
  and that the only ideals of~$\dring$ stable under~$\sigma$ are $(0)$ and~$\dring$.

\end{enumerate}

\end{hyp}

In contrast with the classical theory,
we do not consider any universal Picard--Vessiot algebra
\parencites[Prop.~1.33]{PutSinger-1997-GTD}[Theorem~35]{Roques-2018-ARB}:
by point~\ref{it:universal-1},
only equations of order~$1$ are known to have solutions in~$\dring$.
On the other hand, Hypothesis~\ref{def:propertyP} ensures the following natural bound on the dimension of solution space.

\begin{lem}\label{lem:has-Property-U}
Let $\dring \supseteq F \supseteq \rat\bK$ be difference rings satisfying
of Hypothesis~\ref{def:propertyP}
for a suitable constant field\/~$\bL$ containing~$\bK$.

Then, the extension~$\dring$ is such that
  for any linear difference equation~\eqref{eq:linear-sigma} of order~$r$ with coefficients in\/~$\rat\bK$,
  the vector space of solutions in~$\dring$ of this equation
  has dimension at most~$r$ over~$\bL$.

\end{lem}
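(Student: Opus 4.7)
The plan is to apply a variant of the classical Wronskian (Casoratian) argument from difference Galois theory (cf.\ \parencite{PutSinger-1997-GTD}), leveraging that $\dring$ is a simple difference ring with field of constants~$\bL$.

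The key preliminary observation, which uses only Hypothesis~\ref{def:propertyP}(\ref{it:universal-simplicity}), is that the space of $u$-hypergeometric elements (i.e., those $y \in \dring$ satisfying $\sigma y = uy$) has $\bL$-dimension at most~$1$ for each nonzero $u \in F$. Indeed, such a $u$ is a unit of~$\dring$ as a nonzero element of the field~$F$; for a nonzero such~$y$, the principal ideal $\dring y$ is $\sigma$-stable, since $\sigma(\dring y) = \dring \cdot uy = \dring y$. By simplicity, $\dring y = \dring$, so~$y$ is a unit; any other solution~$y'$ then yields $y'/y \in \dring$ satisfying $\sigma(y'/y) = y'/y$, hence $y'/y \in \bL$ by Hypothesis~\ref{def:propertyP}(\ref{it:universal-constant}).

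For the main statement, given $y_0, \dots, y_r \in \ker_\dring(L)$, I would form the $(r+1) \times (r+1)$ Casoratian $C = (\sigma^i y_j)_{0 \le i, j \le r}$. Since~$\ell_r$ is a unit of~$\dring$, the relation $\sum_i \ell_i \sigma^i y_j = 0$ writes the last row of~$C$ as an $F$-linear combination of the others, yielding $\det C = 0$. A direct row-operations calculation shows that each complementary $r$-Casoratian $W_j = W(y_0, \dots, \widehat{y_j}, \dots, y_r)$ satisfies $\sigma W_j = (-1)^r (\ell_0/\ell_r) W_j$, making all $W_j$ hypergeometric with the common multiplier $u = (-1)^r \ell_0/\ell_r \in F$. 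By the preliminary, $W_j = \lambda_j W_{j_0}$ with $\lambda_j \in \bL$, for any chosen index $j_0$. If some $W_{j_0}$ is nonzero, Laplace expansion of $\det C = 0$ along the first row (using $M_{0,j} = \sigma W_j$) yields $u W_{j_0} \sum_j (-1)^j \lambda_j y_j = 0$, and dividing by the unit $u W_{j_0}$ produces the $\bL$-linear dependence $\sum_j (-1)^j \lambda_j y_j = 0$---nontrivial since $\lambda_{j_0} = 1$.

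The remaining case, where every $W_j$ vanishes, I would handle by invoking the general Casoratian lemma for simple difference rings (cf.\ \parencite{PutSinger-1997-GTD}): vanishing of the Casoratian of a family of elements of~$\dring$ forces their $\bL$-linear dependence. Applied to any $r$-subset of the~$y_j$'s, this gives a nontrivial $\bL$-linear relation, contradicting the assumed independence. The main technical obstacle---and the most delicate part of the proof---is that~$\dring$ may have zero-divisors. Its simplicity is invoked repeatedly (in the preliminary argument above, and throughout the inductive proof of the general Casoratian lemma) to show that various annihilator ideals are $\sigma$-stable and hence trivial, so that all the ``divisions'' appearing in the argument make sense.
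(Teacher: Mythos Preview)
Your proof is correct and rests on the same core idea as the paper: the Casoratian lemma for simple difference rings. The paper's proof is more direct, however. It immediately passes to the companion system $\sigma Y = AY$ (with $A$ invertible since $\ell_0\ell_r\neq0$) and cites \parencite[Lemma~4.8]{Singer-2016-AAA} to conclude that $\bL$-independent solution vectors $Y_i\in\dring^r$ are $\dring$-independent, hence number at most~$r$.

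Your explicit work---showing nonzero hypergeometric elements are units, computing $\sigma W_j = (-1)^r(\ell_0/\ell_r)W_j$, and handling the nondegenerate case $W_{j_0}\neq0$ by hand---is all correct, but redundant: the external Casoratian lemma you invoke in the degenerate case already covers everything. Indeed, applied directly to the $(r+1)$ vectors $Y_j=(y_j,\sigma y_j,\dots,\sigma^{r-1}y_j)^T$ in $\dring^r$, that lemma says $\bL$-independence forces $\dring$-independence in a free module of rank~$r$, an immediate contradiction---no case split needed. So your route is a detour through the same destination; the preliminary observation on hypergeometric elements is pleasant but not needed for this lemma (it does reappear later in the paper, as Lemma~\ref{lem:invertible-hypergeom}).
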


\begin{proof}
By the assumption of simplicity
\parencites[Def.~1.1]{PutSinger-1997-GTD}[Def.\ 4.1 and~4.4]{Singer-2016-AAA},
the ring~$\dring^\sigma$ of constants
is a field \parencite[Lemma~1.7]{PutSinger-1997-GTD}.
By our assumption $\ell_0\ell_r \neq 0$ made in the introduction (Hypothesis~\ref{hyp:l0-neq-0}),
the companion matrix~$A$ of~$L$ is invertible over~$\rat\bK$
and hence nonsingular over~$\dring$.
Suppose $y_1$, \dots, $y_m$ are $\bL$-linearly independent solutions of~$Ly = 0$,
and, for $1\leq i\leq m$, introduce $Y_i = (y_i,\sigma y_i,\dots,\sigma^{r-1}y_i)^T$,
which satisfies $\sigma Y_i = A Y_i$.
By point~\ref{it:universal-constant} and \parencite[Lemma~4.8]{Singer-2016-AAA},
this family~$\{Y_i\}_{i=1}^m$ of solutions of $\sigma Y = AY$,
which is linearly independent over~$\bL$,
is linearly independent over~$\dring$ in the free $\dring$-module~$\dring^r$.
By the commutativity of~$\dring$, the relation~$m \leq r$ holds,
thus proving the lemma.
\end{proof}

\subsection{Parametrization of the solution set}
\label{sec:parametrization}
Let $F$ be a fixed difference field extension of~$\rat\bK$.
Note that an extension~$\dring$ of~$F$ satisfying Hypothesis~\ref{def:propertyP} is only a ring,
so $\sigma y/y$~is not defined for all~$y\in \dring$.
However, restricting to nonzero $F$-hypergeometric~$y$ makes $1/y$ well defined,
as stated in the following lemma.

\begin{lem}\label{lem:invertible-hypergeom}
Any nonzero $F$-hypergeometric element~$y$
of some  difference ring extension~$\dring$ satisfying Hypothesis~\ref{def:propertyP} over~$F$
has an $F$-hypergeometric inverse in~$\dring$.
\end{lem}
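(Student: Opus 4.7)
The plan is to leverage the simplicity of $\dring$ together with the fact that the multiplier~$u$ arising from $\sigma y = u y$ lies in the field~$F$, hence is a unit in~$\dring$. The result will follow from showing that the principal ideal generated by~$y$ is $\sigma$-stable, and then reading off the inverse's multiplier by applying~$\sigma$ to the identity $yz = 1$.

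First I would write $\sigma y = u y$ for some $u \in F$, as furnished by the definition of $F$-hypergeometricity. Since $\sigma$ is injective on~$\dring$ (as one half of being an automorphism, per Hypothesis~\ref{def:propertyP}\ref{it:universal-simplicity}) and $y \neq 0$, the element $\sigma y$ is nonzero, forcing $u \neq 0$; thus $u$ is a unit in the field~$F$ and \emph{a fortiori} in~$\dring$. Next I would consider the principal ideal $I = y\dring$. Because $\sigma$ is a ring endomorphism and $u$ is a unit of~$\dring$, one computes $\sigma(I) = (\sigma y)\,\sigma(\dring) = u y\,\dring = y\,\dring = I$, so $I$ is a nonzero $\sigma$-stable ideal. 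By the simplicity clause of Hypothesis~\ref{def:propertyP}\ref{it:universal-simplicity}, $I = \dring$, which means there exists $z \in \dring$ with $yz = 1$; in particular, $y$ is invertible in~$\dring$.

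To finish, I would apply~$\sigma$ to the identity $yz = 1$, obtaining $(\sigma y)(\sigma z) = 1$, i.e.\ $u y \,\sigma z = 1$. Multiplying by the inverse~$z$ of~$y$ and by~$u^{-1} \in F$ yields $\sigma z = u^{-1} z$, which exhibits $z = y^{-1}$ as an $F$-hypergeometric element with multiplier $u^{-1} \in F$, completing the proof.

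I do not anticipate any real obstacle. The only delicate point is the $\sigma$-stability of~$y\dring$, which could fail for a general element~$y$ of~$\dring$ but works here precisely because the multiplier~$u$ is a unit of~$\dring$; this is what distinguishes hypergeometric elements from arbitrary elements and is what makes the lemma true. The argument uses only the simplicity part of Hypothesis~\ref{def:propertyP} together with the fact that $F$ is a field, not point~\ref{it:universal-constant} on constants nor point~\ref{it:universal-1} on existence of hypergeometric solutions.
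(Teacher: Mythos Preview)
Your proof is correct and follows essentially the same approach as the paper's: invoke simplicity of~$\dring$ to deduce that $y$ is a unit, then compute $\sigma z = u^{-1} z$ from $yz = 1$. You are in fact slightly more careful than the paper in explicitly verifying that the principal ideal $y\dring$ is $\sigma$-stable (using that $u$ is a unit), a step the paper's proof leaves implicit.
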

\begin{proof}
For any nonzero $F$-hypergeometric~$y \in \dring$,
introduce~$u\in F$ satisfying $\sigma y = uy$.
The ideal generated by~$y$ in~$D$ contains~$1$ by simplicity,
so $y$~has an inverse~$z$.
Consequently, $\sigma z = 1/(\sigma y) = 1/(uy) = (1/u) \, z$,
proving that $z$~is $F$-hypergeometric.
\end{proof}

We now define similarity, a key concept in what follows.
\begin{defi}\label{def:similarity}
Given a difference field extension $F$ of\/ $\rat\bK$
and a difference ring extension~$\dring$ of $F$,
two elements $y_1$ and~$y_2$ of~$\dring$ are $F$-similar
if there exists a nonzero~$q \in F$ satisfying $y_1 = q y_2$.
\end{defi}
Similarity is an equivalence relation,
and $y_1$ and~$y_2$ in the definition
are either both nonzero or both zero.
In what follows, we focus on nontrivial equivalence classes,
that is, other than~$\{0\}$, which therefore do not contain~$0$.
We will denote
such nontrivial similarity classes by~$\nonz{\fH}$,
reserving for the augmentation by~$0$ of these similarity classes
the notation $\fH = \nonz{\fH} \cup \{0\}$.
(Cf.\ Notation~\ref{notation:general} in the introduction.)

\begin{lem}\label{thm:F-sim-F-hyp-is-L-vec}
For any $F$-similarity class~$\nonz{\fH}$ of $F$-hypergeometric elements,
the set~$\fH$ is an~$\bL$-vector space.
\end{lem}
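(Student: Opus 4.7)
The plan is to verify directly the axioms characterizing an $\bL$-subspace of~$\dring$: that $\fH$ is nonempty, closed under addition, and closed under multiplication by scalars in~$\bL$. Nonemptiness is immediate since $0 \in \fH$ by the very definition $\fH = \nonz{\fH} \cup \{0\}$. The two closure conditions reduce, after dispatching the cases involving~$0$, to showing that operations performed on elements of~$\nonz{\fH}$ land back in~$\nonz{\fH}$, unless they produce~$0$.

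For scalar multiplication, I would fix $y \in \nonz{\fH}$ with $\sigma y = u y$ for some $u \in \nonz F$ (such a $u$ exists by definition of $F$-hypergeometricity), and take $\lambda \in \bL$. If $\lambda = 0$, the product is $0 \in \fH$. Otherwise $\lambda \in \nonz{\bL} \subseteq \nonz F$, so $\lambda y$ is $F$-similar to~$y$; and since $\bL = F^\sigma$, the element $\lambda$ is $\sigma$-fixed, giving $\sigma(\lambda y) = \lambda \sigma y = u(\lambda y)$, so $\lambda y$ remains $F$-hypergeometric with multiplier~$u$. Hence $\lambda y \in \nonz{\fH}$.

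For addition, I would take $y_1, y_2 \in \fH$. The case where one of them is $0$ is trivial, so assume $y_1, y_2 \in \nonz{\fH}$. By the definition of similarity, write $y_1 = q\, y_2$ for some $q \in \nonz F$, so that $y_1 + y_2 = (q+1)\, y_2$. If $q = -1$, the sum is~$0 \in \fH$. Otherwise $q + 1 \in \nonz F$, so $(q+1) y_2$ is $F$-similar to~$y_2$; setting $\sigma y_2 = u_2 y_2$, one computes
\[
\sigma\bigl((q+1)\, y_2\bigr) = \sigma(q+1) \, u_2\, y_2 = \frac{\sigma(q+1)\, u_2}{q+1} \cdot (q+1)\, y_2,
\]
and the multiplier $\sigma(q+1)\,u_2/(q+1)$ belongs to~$F$ since $F$ is a difference field. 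Hence $(q+1) y_2$ is $F$-hypergeometric and lies in~$\nonz{\fH}$.

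I do not anticipate any serious obstacle: the result is essentially a bookkeeping exercise unwinding the definitions. The only delicate point is the possibility $q = -1$ in the addition step, which is exactly why the similarity class had to be augmented by~$0$ in order to obtain a vector space; this also clarifies why $\nonz{\fH}$ on its own fails to be closed under addition.
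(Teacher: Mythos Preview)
Your proposal is correct and takes essentially the same approach as the paper: both arguments reduce closure under addition to writing $y_1 + y_2 = (q+1)\,y_2$ for $q \in \nonz F$ and computing the new multiplier $\sigma(q+1)\,u_2/(q+1) \in F$, while noting that $q = -1$ gives~$0 \in \fH$. Your treatment of scalar multiplication is slightly more explicit than the paper's (which simply observes that $\bL \subseteq F$ makes $\fH$ an $\bL$-cone), but the substance is the same.
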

\begin{proof}
Because $\bL$ is a subfield of~$F$, $\fH$~is an $\bL$-cone.
We show that it is also stable under addition.
Consider two elements $y_1$ and~$y_2$ in~$\nonz{\fH}$ such that $y_1+y_2 \neq 0$,
and $u$ and~$q$ in~$\nonz{F}$ such that $\sigma y_1 = u y_1$ and~$y_2 = q y_1$.
Then, as~$q \neq -1 \neq \sigma q$,
$\sigma(y_1+y_2) = (1+\sigma q) u y_1 = (1+\sigma q)(1+q)^{-1} u (y_1+y_2)$,
from which follows that $y_1+y_2$~is $F$-hypergeometric and $F$-similar to~$y_1$,
thus in~$\nonz{\fH}$.
\end{proof}

We use the notation~$\projzn{V}$
to denote the projectivization of a vector space~$V$,
leaving implicit the field over which $V$~is defined.
We also write~$\bP^n(K)$
for the $K$-projective space $\projzn{K^{n+1}}$ of dimension~$n$ over a field~$K$.

\begin{thm}\label{thm:structure-main}
Let $\dring$~be a  difference ring extension of~$F$ satisfying Hypothesis~\ref{def:propertyP}.
For a fixed~\eqref{eq:linear-sigma}, the following holds:
\begin{enumerate}
\item\label{it:vect-H}
The $\bL$-cone of $F$-hypergeometric solutions of~\eqref{eq:linear-sigma} in~$\dring$
naturally partitions into the class~$\{0\}$
and nontrivial $F$-similarity classes~$\nonz{\fH}$.
Moreover, the $\bL$-cones~$\fH$ are $\bL$-vector spaces in direct sum, and
their dimensions add up to at most~$r$.
\item\label{it:param-H}
The set of solutions in~$F$ of~\eqref{eq:riccati-sigma} partitions
into the images under the map $y \mapsto \sigma y/y$
of all nontrivial\/ $F$-similarity classes
of the set of\/ $F$-hy\-per\-geo\-met\-ric solutions of~\eqref{eq:linear-sigma}.
\item\label{it:param-U}
Each nontrivial\/ $F$-similarity class~$\nonz{\fH} \subseteq \dring$
of the set of\/ $F$-hy\-per\-geo\-met\-ric solutions of~\eqref{eq:linear-sigma}
induces a one-to-one parametrization of its image under $y \mapsto \sigma y/y$
by the $\bL$-projective space~$\projzn{\fH}$.
\item\label{it:disjoint-U}
The sum of the dimensions of the parametrizing $\bL$-projective spaces~$\projzn{\fH}$
does not exceed the order~$r$ of the equation~\eqref{eq:linear-sigma}
minus the number of nontrivial similarity classes~$\nonz{\fH}$.
\end{enumerate}
\end{thm}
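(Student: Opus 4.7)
The plan is to combine Lemmas~\ref{lem:linear-riccati-equiv}, \ref{lem:has-Property-U}, \ref{lem:invertible-hypergeom}, and~\ref{thm:F-sim-F-hyp-is-L-vec} with the three clauses of Hypothesis~\ref{def:propertyP}. The crucial technical fact on which almost everything will hinge is that for two nonzero $F$-hypergeometric elements $y_1, y_2 \in \dring$ with certificates $\sigma y_i = u_i y_i$, the ratio $y_1/y_2$ is well defined in~$\dring$ by Lemma~\ref{lem:invertible-hypergeom} and, whenever $u_1 = u_2$, is $\sigma$-invariant, hence lies in $\bL \subseteq F$ by Hypothesis~\ref{def:propertyP}(\ref{it:universal-constant}). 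This immediately yields that two distinct nontrivial similarity classes carry disjoint sets of hypergeometric certificates.

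For part~\ref{it:vect-H}, the partition is immediate from $F$-similarity being an equivalence relation, and each~$\fH$ is an $\bL$-vector space by Lemma~\ref{thm:F-sim-F-hyp-is-L-vec}. To establish the direct sum, I would take a nontrivial relation $y_1 + \dots + y_n = 0$ of minimal length~$n \geq 1$ with each $y_i$ in a distinct nontrivial class $\nonz{\fH_i}$; applying $\sigma$ gives $\sum u_i y_i = 0$, and subtracting $u_1$ times the original relation yields $\sum_{i=2}^n (u_i - u_1) y_i = 0$. Each summand $(u_i - u_1) y_i$ stays in $\nonz{\fH_i}$, since $u_i - u_1 \in \nonz F$ by the disjointness of certificate sets and since $\fH_i$ is $F$-stable; this is a strictly shorter nontrivial relation in pairwise distinct classes (for $n=1$ we would directly get $y_1=0$), contradicting minimality. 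The bound on the sum of dimensions then follows because $\bigoplus_{\nonz{\fH}} \fH$ embeds in the $\bL$-vector space of all solutions of~\eqref{eq:linear-sigma} in~$\dring$, whose dimension is at most~$r$ by Lemma~\ref{lem:has-Property-U}.

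For part~\ref{it:param-H}, given a nonzero $F$-hypergeometric solution~$y$ of~\eqref{eq:linear-sigma}, its certificate $u = \sigma y/y \in F$ solves~\eqref{eq:riccati-sigma} by Lemma~\ref{lem:linear-riccati-equiv}. Conversely, given $u \in F$ solving~\eqref{eq:riccati-sigma}, Hypothesis~\ref{def:propertyP}(\ref{it:universal-1}) supplies a nonzero $y \in \dring$ with $\sigma y = u y$, and Lemma~\ref{lem:linear-riccati-equiv} forces $Ly = 0$; thus $y$ lies in some nontrivial similarity class whose image contains~$u$. Disjointness of the images across distinct classes is precisely the ratio statement of the first paragraph.

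Parts~\ref{it:param-U} and~\ref{it:disjoint-U} are then short. The map $y \mapsto \sigma y/y$ is constant on the $\nonz\bL$-orbits within~$\nonz{\fH}$ because $\sigma$ fixes~$\bL$, so it descends to~$\projzn{\fH}$; the same ratio argument shows that two elements of~$\nonz{\fH}$ with the same image differ by a factor in~$\nonz\bL$, giving the one-to-one parametrization. Summing $\dim \projzn{\fH} = \dim_\bL \fH - 1$ over the $N$ nontrivial similarity classes then yields $\sum \dim \projzn{\fH} = \bigl(\sum \dim_\bL \fH\bigr) - N \leq r - N$ by part~\ref{it:vect-H}. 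The only place requiring a bit of care is the shortest-dependence argument of part~\ref{it:vect-H}, where one must check that multiplying by the nonzero scalars $u_i - u_1 \in F$ preserves membership in each class; everything else reduces to one or two applications of the ratio trick together with the supplied lemmas.
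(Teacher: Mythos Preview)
Your overall strategy matches the paper's, and parts~\ref{it:param-H}--\ref{it:disjoint-U} are correct as written. However, the direct-sum argument in part~\ref{it:vect-H} contains a genuine error at precisely the point you flagged as requiring care.

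You assert that each $\fH_i$ is ``$F$-stable'', so that $(u_i-u_1)y_i$ remains in $\nonz{\fH_i}$. This is false: the classes~$\nonz{\fH_i}$ consist of $F$-hypergeometric \emph{solutions of~\eqref{eq:linear-sigma}}, and multiplying a solution~$y_i$ by a nonconstant element $u_i-u_1\in F$ generally destroys the property $Ly=0$ (the operator~$L$ does not commute with multiplication by~$F$). So $(u_i-u_1)y_i$ is $F$-similar to~$y_i$ and $F$-hypergeometric, but typically not a solution, hence not in~$\fH_i$; your minimality hypothesis therefore does not apply to the shortened relation.

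The fix is exactly what the paper does: keep the elements~$y_i$ fixed and let the \emph{coefficients} move from~$\nonz{\bL}$ into~$\nonz{F}$. Starting from $\sum_k \lambda_k y_k=0$ with $\lambda_k\in\nonz{\bL}$, applying $\sigma-u_0$ gives $\sum_{k\geq 1}\lambda_k(u_k-u_0)\,y_k=0$, a shorter nontrivial $F$-linear relation among the \emph{same} $y_k$. Iterating collapses to $c\,y_N=0$ with $c\in\nonz{F}$, whence $y_N=0$, a contradiction. Equivalently, you could prove the stronger statement that nonzero $F$-hypergeometric elements of~$\dring$ lying in pairwise distinct $F$-similarity classes (with no reference to~\eqref{eq:linear-sigma}) are $F$-linearly independent; this induction goes through cleanly and specializes to what you need.
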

\begin{proof}
\begin{trivlist}

\item \ \ \ref{it:vect-H}.
For each $F$-similarity class~$\nonz{\fH}$ of $F$-hypergeometric solutions of~$L$,
Lemma~\ref{thm:F-sim-F-hyp-is-L-vec} implies that $\fH$~is an $\bL$-vector space.
We now prove that all the~$\fH$ are in direct sum.
Given a relation $\lambda_0y_0 + \dots + \lambda_Ny_N = 0$
with coefficients~$\lambda_k \in \nonz{\bL}$
between nonzero $F$-hypergeometric solutions~$y_k$
in distinct $F$-similarity classes,
introduce $u_k\in F$ such that $\sigma y_k = u_ky_k$ for~$0\leq k\leq N$.
Applying~$\sigma - u_0$ yields
\begin{equation}\label{eq:shorter-rel}
\lambda_1(u_1-u_0)y_1 + \dots + \lambda_N(u_N-u_0)y_N = 0 .
\end{equation}
Because the~$y_k$ were taken from distinct classes,
each of the~$y_0/y_k$, which exists by Lemma~\ref{lem:invertible-hypergeom},
cannot be in~$F$, let alone be a constant.
So, $\sigma\bigl(y_0/y_k\bigr) = \bigl(u_0/u_k\bigr) \times \bigl(y_0/y_k\bigr)$ is not~$y_0/y_k$,
and $u_0/u_k$~is not~$1$.
Therefore,
none of the coefficients $\lambda_k(u_k-u_0)$ for $1\leq k\leq N$ is zero,
and \eqref{eq:shorter-rel}~is a shorter nontrivial relation.
Iterating the process, we obtain that some~$y_k$ is zero, a contradiction.
As a consequence, the $\bL$-vector spaces~$\fH$
are in direct sum.
By the conclusion of Lemma~\ref{lem:has-Property-U},
this direct sum is included in an $\bL$-space of dimension at most~$r$,
hence the dimension bound.

\item \ \ \ref{it:param-H}.
Given a solution~$u \in F$ of~\eqref{eq:riccati-sigma},
Lemma~\ref{lem:linear-riccati-equiv} implies
that $\sigma-u$~is a right-hand factor of~$L$.
Because of Hypothesis~\ref{def:propertyP}, there exists~$y\neq0$ such that $\sigma y = uy$,
and $y$~is invertible by Lemma~\ref{lem:invertible-hypergeom}.
As a consequence, $y$~is also a solution of~\eqref{eq:linear-sigma}
and $u$~equals~$\sigma y/y$.
So the map $y \mapsto \sigma y/y$ is a (well-defined) surjection
from the set of nonzero $F$-hypergeometric solutions of~\eqref{eq:linear-sigma}
to the set of solutions of~\eqref{eq:riccati-sigma} in~$F$.
If $u$~can be obtained by two solutions $y_1$ and~$y_2$,
that is, if $\sigma y_1/y_1 = \sigma y_2/y_2$,
then $\sigma(y_1/y_2) = y_1/y_2$ is a constant in~$\dring$, thus in~$\bL$,
so that, $y_1$ and~$y_2$ are $F$-similar.
As a consequence, the partitioning
of the set of nonzero $F$-hypergeometric solutions of~\eqref{eq:linear-sigma}
translates into a partitioning
of the set of solutions of~\eqref{eq:riccati-sigma} in~$F$.

\item \ \ \ref{it:param-U}.
By the proof of point~\ref{it:param-H},
a nontrivial\/ $F$-similarity class~$\nonz{\fH}$
of the set of\/ $F$-hy\-per\-geo\-met\-ric solutions of~\eqref{eq:linear-sigma}
maps under $y \mapsto \sigma y/y$ to a subset~$U$
of the set of solutions of~\eqref{eq:riccati-sigma} in~$F$.
Given $y_1$ and~$y_2$ in the same class~$\nonz{\fH}$,
thus satisfying $y_1 = q y_2$
for some $q \in F$ (which needs to be nonzero),
if both share the same image in~$U$,
that is, $\sigma y_1/y_1 = \sigma y_2/y_2$ is the same element of~$U$,
then the proof of~\ref{it:param-H} has shown that $q$~is a constant from~$\bL$.
This implies that $y_1$ and~$y_2$ are in the same $\bL$-projective class,
thus proving the result.

\item \ \ \ref{it:disjoint-U}.
Point~\ref{it:vect-H} implies that each similarity class~$\fH$ is finite-dimensional.
If it has $\bL$-dimension~$s$,
its associated $\bL$-projective space~$\projzn{\fH}$ has dimension~$s-1$.
The statement is next a direct consequence
of points \ref{it:vect-H} and~\ref{it:param-U}.
\qedhere

\end{trivlist}
\end{proof}

\begin{rem}\label{rem:rational-param}
Point~\ref{it:param-U} of Theorem~\ref{thm:structure-main}
leads to a more explicit parametrization,
owing to the finite $\bL$-dimension of the space~$\fH$.
Fix~$y_0$ in~$\nonz{\fH}$ and let $u_0$ satisfy $\sigma y_0 = u_0 y_0$.
Let $(q_1y_0, \dots, q_s y_0)$ be some basis of~$\fH$.
With coordinates, the bijection of the theorem translates into the map
\begin{equation*}
(c_1:\ldots:c_s) \mapsto
  \frac{\sigma(c_1 q_1 + \dots + c_s q_s)}{c_1 q_1 + \dots + c_s q_s} u_0 ,
\end{equation*}
now a rational bijection from~$\bP^{s-1}(\bL)$ to its image.
\end{rem}

\begin{rem}
The partition obtained by point~\ref{it:param-H} of the theorem can be interpreted
as the partition induced on the solution set of~\eqref{eq:riccati-sigma}
by the equivalence relation on~$\nonz{F}$ defined by:
$u$~and~$u'$ are $F$-equivalent
if there exists~$f \in \nonz{F}$ such that $u/u' = \sigma f/f$.
\end{rem}

\section{Generalized series solutions of the linear equation}
\label{sec:concrete}

In this section, we define a Mahler difference ring, denoted~$\omdr$ (Definition~\ref{def:omdr}), of explicit formal expressions
that is an extension satisfying Hypothesis~\ref{def:propertyP}
of both $\ramrat\bKbar$ and $\puiseux\bKbar$.
This will be a crucial ingredient to describe
the Puiseux series solutions of a Riccati Mahler equation (Theorem~\ref{thm:structure-puiseux}).
To this end,
we describe $\puiseux\bL$-hypergeometric elements
(Proposition~\ref{prop:puiseux-hypergeometric-element})
so as to apply Theorem~\ref{thm:structure-main} for a given field~$\bL$.
The properties of~$\omdr$ are again used in~\S\ref{sec:rational-ramified-sol}
to describe the ramified rational solutions of the Riccati equation
(Theorem~\ref{thm:structure-rational-ramified}).

\subsection{The difference ring \texorpdfstring{$\omdr$}{D}}
\label{sec:difference}

Slightly adapting \parencite[\S5.2]{Roques-2018-ARB},
we introduce the $\puiseux\bKbar$-algebra with basis~$(e_\lambda)_{\lambda \in \nonz{\bKbar}}$ for elements~$e_\lambda$
satisfying the algebraic relations
$e_\lambda e_{\lambda'} = e_{\lambda\lambda'}$ and $e_1 = 1$.
This ring is not a domain, as shown by the product
$(e_\lambda + e_{-\lambda}) \times (e_\lambda - e_{-\lambda}) = 0$.
Next, we equip it with a structure of difference ring
by enforcing $M e_\lambda = \lambda e_\lambda$.
To support the intuition,
we henceforth denote $e_\lambda$ by $\mylog{\lambda}$
for $\lambda \in \nonz{\bKbar}$.
This mere notation bears no analytic meaning,
but it is reminiscent of the relation $\ln(x^b) = b \ln(x)$
in analysis,
as well as of the linear independence over
the field of meromorphic functions
of the family of functions $\left(\mylog{\lambda}\right)_{\lambda\in\nonz{\bC}}$.

\begin{defi}\label{def:omdr}
We define a difference ring by
\begin{equation}\label{eq:def-R}
  \omdr = \bigoplus_{\lambda \in \nonz{\bKbar}}\, \mylog{\lambda}\puiseux\bKbar ,
\end{equation}
where $\mylog{\lambda}$~is an eigenvector of~$M$ with respect to the eigenvalue~$\lambda$.

Given a subfield~$\bL$ of~$\bKbar$, we also define a difference subring~$\omdr_\bL$ of~$\omdr$ by
\begin{equation}\label{eq:def-R_L}
  \omdr_\bL = \bigoplus_{\lambda \in \nonz{\bL}} \, \mylog{\lambda}\puiseux\bL ,
\end{equation}
so that $\omdr$~is the special case $\omdr_{\bKbar}$.

\end{defi}

For an operator of~$\rat{\bKbar}\langle M\rangle$,
\begin{equation}\label{eq:L-in-R}
L = L(x,M) = \sum_{k=0}^r \ell_k(x) M^k ,
\end{equation}
and an element of~$\omdr$,
\begin{equation}\label{eq:y-in-D}
y = \sum_{\lambda\in S} \mylog{\lambda} \, p_\lambda(x) ,
\end{equation}
where the set~$S$ is a finite subset of~$\bKbar$
and the~$p_\lambda(x)$ are Puiseux series,
the action of~$L$ on~$y$ has a special structure given by
\begin{equation}\label{eq:action-L-y}
L(x,M) \, y = \sum_{\lambda\in S} \mylog{\lambda} L(x,\lambda M) \, p_\lambda(x) .
\end{equation}
This formula decomposes~\eqref{eq:linear} into equations on the~$p_\lambda(x)$ as follows.
\begin{lem}\label{lem:split-eqn-to-equiv}
For an operator~$L \in \rat{\bKbar}\langle M\rangle$ as in~\eqref{eq:L-in-R}
and an element~$y$ of~$\omdr$ as in~\eqref{eq:y-in-D},
the equality $L(x,M)\, y = 0$ is equivalent to the finite set of equalities
\begin{equation}\label{eq:indep-eqns}
L(x,\lambda M) \, p_\lambda(x) = 0,\qquad \lambda\in S .
\end{equation}
\end{lem}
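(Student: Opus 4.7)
The plan is to deduce the lemma directly from the action formula~\eqref{eq:action-L-y}, which has already been displayed just above, together with the direct-sum structure built into the definition of~$\omdr$.

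First, I would verify the action formula itself, which is the real computational content. By $\puiseux{\bKbar}$-linearity of each $M^k$, it suffices to compute $L(x,M)$ applied to a single term $\mylog{\lambda}\,p_\lambda(x)$. Using the defining relation $M\mylog{\lambda}=\lambda\mylog{\lambda}$ and the fact that $M$ is a ring endomorphism of~$\omdr$, a one-line induction gives $M^k(\mylog{\lambda}\,p_\lambda(x)) = \lambda^k \mylog{\lambda}\cdot M^k p_\lambda(x)$ for every $k\ge 0$. Multiplying by~$\ell_k(x)$, summing over~$k$, and factoring $\mylog{\lambda}$ out yields
\begin{equation*}
L(x,M)\bigl(\mylog{\lambda}\,p_\lambda(x)\bigr) \;=\; \mylog{\lambda}\sum_{k=0}^r \ell_k(x)\,\lambda^k\, M^k p_\lambda(x) \;=\; \mylog{\lambda}\, L(x,\lambda M)\,p_\lambda(x),
\end{equation*}
and summing over $\lambda\in S$ establishes~\eqref{eq:action-L-y}.

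Second, I would invoke the direct-sum decomposition. For each $\lambda\in S$, the element $L(x,\lambda M)\,p_\lambda(x)$ belongs to $\puiseux{\bKbar}$, since $L(x,\lambda M)$ is a polynomial in~$M$ with coefficients $\lambda^k\ell_k(x)\in\rat{\bKbar}$. Therefore the right-hand side of~\eqref{eq:action-L-y} is the honest expansion of $L(x,M)\,y$ in the direct sum $\omdr=\bigoplus_{\lambda\in\nonz{\bKbar}}\mylog{\lambda}\puiseux{\bKbar}$. By the defining property of a direct sum, this element is zero if and only if each component $\mylog{\lambda}\,L(x,\lambda M)\,p_\lambda(x)$ vanishes, equivalently if and only if $L(x,\lambda M)\,p_\lambda(x)=0$ for every~$\lambda\in S$. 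This gives both implications of~\eqref{eq:indep-eqns} at once.

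There is no genuine obstacle here: once~\eqref{eq:action-L-y} is in hand, the lemma is just the formal statement that coordinates in a direct sum vanish independently. The only point requiring some care is to note that $L(x,\lambda M)$ must be read as a formal substitution inside the polynomial expression of~$L$ (producing a new operator with scalar-rescaled coefficients), so that it genuinely acts on the Puiseux-series factor of the $\lambda$-component of~$\omdr$; this is exactly what the derivation of~\eqref{eq:action-L-y} shows.
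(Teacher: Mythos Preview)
Your proof is correct and follows essentially the same approach as the paper: the paper's proof is the one-liner ``This follows directly from~\eqref{eq:action-L-y}, by the uniqueness of coefficients in~$\omdr$,'' and you have simply unpacked both ingredients---the derivation of~\eqref{eq:action-L-y} and the direct-sum argument---in full detail.
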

\begin{proof}
This follows directly from \eqref{eq:action-L-y},
by the uniqueness of coefficients in~$\omdr$.
\end{proof}

The following lemma can be deduced from
\textcite[Theorem~35]{Roques-2018-ARB} in the case $\bL=\bQbar$.
\begin{lem}\label{lem:R-contants-field}
The field of constants of the ring~$\omdr_\bL$ is~$\bL$.
\end{lem}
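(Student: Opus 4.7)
\medskip

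The plan is to reduce to a one-variable problem using the direct-sum decomposition of $\omdr_\bL$, then to analyze the resulting scalar functional equation via Puiseux-series valuations.

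First, I would write any candidate constant as a finite sum $y = \sum_{\lambda \in S} \mylog{\lambda}\, p_\lambda(x)$ with $S \subset \nonz{\bL}$ and $p_\lambda \in \puiseux{\bL}$. Applying formula~\eqref{eq:action-L-y} to the operator $L = M - 1$ (equivalently, using Lemma~\ref{lem:split-eqn-to-equiv}), the equation $My = y$ decouples, by the uniqueness of the direct-sum decomposition~\eqref{eq:def-R_L}, into the scalar equations
\[
  \lambda\, p_\lambda(x^b) = p_\lambda(x), \qquad \lambda \in S.
\]
Hence it suffices to show that any nonzero $p \in \puiseux{\bL}$ satisfying $\lambda p(x^b) = p(x)$ for some $\lambda \in \nonz{\bL}$ forces $\lambda = 1$ and $p \in \bL$: this then implies that only the summand indexed by $\lambda = 1$ contributes, and that this summand is a scalar in $\bL$.

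For the scalar problem, assume $p \neq 0$ and let $v = \val(p) \in \bQ$. Since $M$ multiplies Puiseux valuations by $b$, the equation $\lambda p(x^b) = p(x)$ forces $bv = v$, and $b \geq 2$ yields $v = 0$. Thus $p$ has a nonzero constant term $c_0 \in \nonz{\bL}$, and comparing constant coefficients in $\lambda p(x^b) = p(x)$ gives $\lambda c_0 = c_0$, hence $\lambda = 1$. With $\lambda = 1$, write $p(x) = \sum_{k \geq 0} c_k x^{k/q}$ for some common denominator $q$. Comparing the coefficient of $x^{j/q}$ on both sides of $p(x^b) = p(x)$ yields $c_j = c_{j/b}$ when $b \mid j$, and $c_j = 0$ otherwise. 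An iteration of this recurrence on any positive $j$ reaches an index not divisible by $b$ after finitely many steps, forcing $c_j = 0$ for every $j \geq 1$. Consequently, $p = c_0 \in \bL$, which completes the proof.

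The argument is essentially routine once the direct-sum decomposition of~$\omdr_\bL$ is exploited to split the equation; the only mild subtlety is handling the Puiseux ramification by passing to a common denominator~$q$, but this does not change the valuation-based argument in any essential way.
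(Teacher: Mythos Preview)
Your proposal is correct and follows essentially the same approach as the paper: both use Lemma~\ref{lem:split-eqn-to-equiv} to decouple $My=y$ into the scalar equations $\lambda\,p_\lambda(x^b)=p_\lambda(x)$, then use a valuation argument to force $\val p_\lambda=0$ and $\lambda=1$. The only difference is that the paper's proof is terser---it jumps directly from $\lambda=1$ to ``$y\in\bL$'' without spelling out why $p_1(x^b)=p_1(x)$ forces $p_1$ to be constant---whereas you make that step explicit via the coefficient recursion $c_j=c_{j/b}$, which is a welcome addition.
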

\begin{proof}
By Lemma~\ref{lem:split-eqn-to-equiv},
the equation $My = y$ is equivalent to
$\lambda M p_\lambda = p_\lambda$ for $\lambda\in\bKbar$.
For each~$\lambda$, if $p_\lambda$~is nonzero,
its valuation must be~$0$
and its leading coefficient~$c$ must satisfy~$\lambda c = c$,
implying~$\lambda = 1$.
This means that $y$~is in~$\bL$.
We conclude that the
constant field of~$\omdr_\bL$ is $\bL$~itself.
\end{proof}

\subsection{Bounds}
\label{sec:bounds}

The goal of this section is to bound the valuation and ramification order of
the Puiseux series solutions of~\eqref{eq:linear} in~$\omdr$.
We need to introduce notions related to a \emph{lower} Newton polygon attached to~$L$,
which the reader should not confuse with parallel notions attached to the \emph{upper} Newton polygon
that will be defined in~\S\ref{sec:bgpf} (Definition~\ref{def:upper-newton}).

\begin{defi}\label{def:lower-newton}
To the polynomial
\[
  L = \sum_{k=0}^r \ell_k(x) M^k = \sum_{k,j} \ell_{k,j} x^j M^k
\]
we associate
the set of points $(b^k,j) \in \bR^2$ such that $\ell_{k,j}\neq0$.
The lower convex hull of this set is called the \emph{lower Newton polygon} (of~$L$).
\end{defi}

The following lemma is borrowed from
\parencite[Lemma~2.2]{ChyzakDreyfusDumasMezzarobba-2018-CSL}.

\begin{lem}\label{lem:admissibility}
The valuation of any Puiseux series solution of~\eqref{eq:linear}
is the opposite of the slope of an edge of the lower
Newton polygon of\/~$L$.
Moreover the coefficients~$\ell_{k,j}$
corresponding to all points $(b^k,j)$ lying on the edge
must add up to zero.
\end{lem}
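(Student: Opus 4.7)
The plan is to examine what $L$ does to the leading term of a Puiseux series, read off a linear programming condition from the requirement that the image vanishes, and recognize that condition as prescribing an edge of the lower Newton polygon.

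First I would write any nonzero Puiseux series solution as $y = c x^v + \tilde y$ with $c \in \nonz{\bKbar}$ and $\val \tilde y > v$. A direct computation gives $M^k y = c x^{b^k v} + M^k \tilde y$, with $\val M^k \tilde y > b^k v$, so that for each $(k,j)$ with $\ell_{k,j} \neq 0$ the monomial $\ell_{k,j} x^j M^k y$ has valuation $j + b^k v$ and leading coefficient $c \, \ell_{k,j}$. Summing over~$(k,j)$, the valuation of $Ly$ is at least
\begin{equation*}
m := \min \{\, j + b^k v : \ell_{k,j} \neq 0 \,\},
\end{equation*}
and the coefficient of $x^m$ in $Ly$ equals $c \sum_{(k,j)\in E}\ell_{k,j}$, where
\begin{equation*}
E := \{(k,j) : \ell_{k,j}\neq 0 \text{ and } j + b^k v = m\}.
\end{equation*}

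Next I would read this geometrically. The line $Y + vX = m$ in $\bR^2$ passes through every point $(b^k,j)$ with $(k,j)\in E$, and by definition of~$m$ all other points $(b^k,j)$ attached to~$L$ lie strictly above it. Hence this line is a supporting line of the set of points attached to~$L$ from below, and therefore it supports the lower convex hull, i.e.\ the lower Newton polygon of~$L$, from below as well; its slope is~$-v$. The intersection of the polygon with this supporting line is exactly the convex hull of~$E$, which is either a single vertex or a whole edge of slope~$-v$.

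Finally, from $Ly = 0$ I would conclude $c\sum_{(k,j)\in E}\ell_{k,j} = 0$, and since $c\neq 0$ this forces $\sum_{(k,j)\in E}\ell_{k,j} = 0$. Because each $\ell_{k,j}$ with $(k,j)\in E$ is nonzero by construction, a one-element~$E$ is ruled out, so $|E|\geq 2$ and the supporting line carries a genuine edge of the lower Newton polygon, with slope~$-v$ and the prescribed vanishing of the sum of the $\ell_{k,j}$ on it. There is no serious obstacle in this argument; the only delicate point is correctly identifying the leading term of $M^k y$ under the Mahler substitution and keeping track that the common ramification denominator of the Puiseux series~$y$ passes through the whole calculation unchanged.
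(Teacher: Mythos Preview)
Your argument is correct and complete: expanding the leading term, identifying the minimum $m = \min\{j + b^k v\}$ with a supporting line of slope~$-v$, and ruling out the single-vertex case via the nonvanishing of the individual~$\ell_{k,j}$ is exactly the right chain of reasoning.

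As for comparison, the paper does not actually prove this lemma; it is simply quoted from \parencite[Lemma~2.2]{ChyzakDreyfusDumasMezzarobba-2018-CSL}. Your write-up is the standard proof one would expect to find in that reference, so there is no meaningful methodological difference to discuss.
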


For a solution~$y$ of~\eqref{eq:linear}, that is, satisfying $L(x,M)\, y = 0$,
to be of the form $\mylog{\lambda} p(x)$ with $p(x)$ a Puiseux series,
the series~$p(x)$ must by the formula~\eqref{eq:action-L-y}
be a solution of $L(x,\lambda M)\, p = 0$.
The change from~$L(x,M)$ to~$L(x,\lambda M)$ does not modify the
Newton polygon, but the coefficients at abscissa~$b^k$
are multiplied by~$\lambda^k$.
So, by Lemma~\ref{lem:admissibility},
the only~$\lambda$ for which a nonzero solution of the form $\mylog{\lambda} p(x)$ may exist
are  the (necessarily nonzero) roots of the characteristic polynomials defined by~\eqref{eqn:lower-char-poly} in the following definition.

\begin{defi}\label{def:lower-char-poly}
The edges of the lower Newton polygon of~$L$ are indexed so that their slopes form an increasing sequence.
Given the abscissae $b^{k_1}$ and~$b^{k_2}$, with~$k_1\leq k_2$, of the $j$th edge,
the polynomial~$\chi_j$ defined by
\begin{equation}\label{eqn:lower-char-poly}
\chi_j(X) := \sum_{k=k_1}^{k_2}\ell_{k,j}X^{k-k_1}
\end{equation}
is called the \emph{characteristic polynomial} of the $j$th edge.
We define~$\Lambda$ as the union of the roots of these polynomials:
\begin{equation}\label{eq:def-Lambda}
\Lambda = \{ \lambda \in \bKbar : \exists j, \ \chi_j(\lambda) = 0 \} .
\end{equation}
Given~$\lambda$ in~$\Lambda$, the $j$th edge will be called \emph{$\lambda$-admissible} (for~$L$)
if $\chi_j(\lambda) = 0$.
\end{defi}

The present $1$-admissible edges were simply called admissible in \parencite{ChyzakDreyfusDumasMezzarobba-2018-CSL}.
Lemma~\ref{lem:admissibility} extends slightly
to accommodate a logarithmic part,
leading to the following result.
The omitted proof parallels that of \parencite[Lemma~2.2]{ChyzakDreyfusDumasMezzarobba-2018-CSL}.

\begin{lem}\label{lem:extd-admissibility}
The valuation of the Puiseux series~$p(x)$ in any nonzero solution of~\eqref{eq:linear}
of the form $\mylog\lambda p(x)$
is the opposite of the slope of a $\lambda$-admissible edge of the lower
Newton polygon of\/ $L$.
\end{lem}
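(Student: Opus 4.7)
The plan is to reduce the statement to the already-proved Lemma~\ref{lem:admissibility} by transferring the equation satisfied by $y = \mylog{\lambda}\,p(x)$ into a pure equation on the Puiseux series~$p(x)$, and then to track how the admissibility condition gets refined by the presence of~$\lambda$.

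First, I would apply formula~\eqref{eq:action-L-y}, which yields
\[ L(x,M)\,y \;=\; \mylog{\lambda}\,L(x,\lambda M)\,p(x). \]
Since the family $\bigl(\mylog{\mu}\bigr)_{\mu \in \nonz{\bKbar}}$ is by Definition~\ref{def:omdr} a $\puiseux{\bKbar}$-basis of~$\omdr$, and in particular linearly independent, the vanishing of the left-hand side forces $L(x,\lambda M)\,p(x) = 0$. (This is exactly Lemma~\ref{lem:split-eqn-to-equiv} specialized to a one-term expansion.) The problem is thus reduced to a Puiseux-series problem for the twisted operator $L(x,\lambda M) = \sum_{k} \lambda^k \ell_k(x)\,M^k$.

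Next, $\lambda$ is nonzero by the very definition of~$\mylog{\lambda}$, so multiplying the coefficient of~$M^k$ by~$\lambda^k$ neither creates nor destroys nonzero monomials $x^j M^k$. Hence $L(x,M)$ and $L(x,\lambda M)$ share the same lower Newton polygon in the sense of Definition~\ref{def:lower-newton}. Applying Lemma~\ref{lem:admissibility} to $L(x,\lambda M)\,p(x) = 0$ then gives that the valuation of~$p(x)$ is the opposite of the slope of some edge of that common polygon.

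Finally, the vanishing-sum condition provided by Lemma~\ref{lem:admissibility} for the twisted operator reads
\[ \sum_{(b^k,j) \text{ on the edge}} \ell_{k,j}\,\lambda^k \;=\; 0, \]
the sum ranging over lattice points of the chosen edge. Factoring out the nonzero quantity~$\lambda^{k_1}$, where $k_1$ is the smallest abscissa exponent on the edge, and comparing with~\eqref{eqn:lower-char-poly}, this amounts exactly to $\chi_j(\lambda) = 0$, i.e., to the edge being $\lambda$-admissible in the sense of Definition~\ref{def:lower-char-poly}. The whole argument is essentially a translation exercise; the only point that deserves any care is to note that the twist $M \mapsto \lambda M$ leaves the polygon invariant geometrically while modulating the characteristic polynomial of each edge from $\chi_j(1)$ to~$\chi_j(\lambda)$.
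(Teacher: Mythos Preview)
Your proof is correct and follows exactly the approach the paper itself sketches in the paragraph preceding Definition~\ref{def:lower-char-poly}: reduce to Lemma~\ref{lem:admissibility} via the twist $M\mapsto\lambda M$, observe that the polygon is unchanged because $\lambda\neq0$, and identify the twisted admissibility condition with $\chi_j(\lambda)=0$. The paper omits the proof with the remark that it parallels that of Lemma~\ref{lem:admissibility}, which is precisely what your reduction accomplishes.
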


As already noted for~$\lambda = 1$ in \parencite{ChyzakDreyfusDumasMezzarobba-2018-CSL},
$\lambda$-admissibility is only a necessary condition
for the existence of a series solution with logarithmic part~$\mylog{\lambda}$.

\begin{prop}\label{lem:single-lambda-equiv}
For an operator~$L \in \rat{\bKbar}\langle M\rangle$ as in~\eqref{eq:L-in-R}
and an element $y = \mylog{\lambda} \, p(x)$ of~$\omdr$ given by $\lambda \in \nonz\bKbar$ and~$p \in \puiseux\bKbar$,
the equality $L(x,M)\, y = 0$ is equivalent to the equality
\begin{equation}\label{eq:single-eqn}
L(x,\lambda M) \, p(x) = 0 .
\end{equation}
Moreover, \eqref{eq:single-eqn}~can have a nonzero solution
only if $\lambda$~is the root of some characteristic polynomial of~$L$.
\end{prop}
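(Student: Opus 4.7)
The first assertion is essentially a one-line consequence of the material just established. The plan is to invoke Lemma~\ref{lem:split-eqn-to-equiv} in the degenerate case where the indexing set~$S$ reduces to the singleton~$\{\lambda\}$: formula~\eqref{eq:action-L-y} gives
\[
  L(x,M)\bigl(\mylog{\lambda} p(x)\bigr) = \mylog{\lambda}\, L(x,\lambda M)\, p(x),
\]
and since $\mylog{\lambda}$ is a basis element of the direct-sum decomposition~\eqref{eq:def-R} of~$\omdr$, the product on the right vanishes in~$\omdr$ if and only if $L(x,\lambda M)\, p(x) = 0$ in~$\puiseux{\bKbar}$. This gives the claimed equivalence.

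For the second assertion, I would rewrite the equation $L(x,\lambda M)\,p(x) = 0$ as a linear Mahler equation in its own right, with operator $L_\lambda(x,M) := L(x,\lambda M) = \sum_{k,j} \lambda^k \ell_{k,j}\, x^j M^k$. The key observation is that passing from~$L$ to~$L_\lambda$ only rescales the coefficient at each point~$(b^k,j)$ by~$\lambda^k$, so the \emph{support} in~$\bR^2$ is unchanged (since $\lambda \neq 0$) and the lower Newton polygon of~$L_\lambda$ coincides with that of~$L$. Applying Lemma~\ref{lem:admissibility} to~$L_\lambda$, any nonzero Puiseux series solution~$p$ has valuation equal to the opposite of the slope of some edge of that polygon, and the coefficients~$\lambda^k \ell_{k,j}$ indexed by the points $(b^k,j)$ lying on the edge must sum to zero. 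That sum is exactly $\chi_j(\lambda)$ up to a nonzero factor~$\lambda^{k_1}$, by the definition~\eqref{eqn:lower-char-poly} of the characteristic polynomial. Hence $\chi_j(\lambda) = 0$ for some edge index~$j$, which is precisely the condition that $\lambda$ is a root of a characteristic polynomial of~$L$.

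I do not anticipate a serious obstacle: both assertions are direct unpackings of already-proven lemmas. The only point demanding minor care is bookkeeping the factor~$\lambda^{k_1}$ when matching the vanishing sum on the edge against the definition of~$\chi_j$; this is harmless because $\lambda \in \nonz{\bKbar}$ ensures $\lambda^{k_1} \neq 0$. One might alternatively state the second part as a consequence of Lemma~\ref{lem:extd-admissibility} applied to $y = \mylog{\lambda}\,p(x)$, which already packages the $\lambda$-admissibility condition; this would shorten the argument to a single sentence but offers no additional content.
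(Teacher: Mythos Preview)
Your proposal is correct and follows essentially the same approach as the paper: the equivalence is obtained by specializing Lemma~\ref{lem:split-eqn-to-equiv} to the singleton $S=\{\lambda\}$, and the necessary condition on~$\lambda$ is derived by applying Lemma~\ref{lem:admissibility} to $L(x,\lambda M)$ after observing that the Newton polygon is unchanged while the coefficients are rescaled by~$\lambda^k$. Your treatment is slightly more explicit (in particular the bookkeeping of the factor~$\lambda^{k_1}$), but the argument is the same.
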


\begin{proof}
The stated equivalence is a mere specialization of Lemma~\ref{lem:split-eqn-to-equiv}.
The necessary condition on~$\lambda$ has been proved
above when introducing characteristic polynomials.
\end{proof}

In order to discuss ramifications, we import
\parencite[Prop.~2.19]{ChyzakDreyfusDumasMezzarobba-2018-CSL},
which deals with $1$-admissible edges.
In doing so, we slightly reformulate the result
so as to describe solutions over an algebraic extension~$\bL$ of~$\bK$.

\begin{lem}\label{lem:puiseux-denominator-bound}
Any Puiseux series solution of~\eqref{eq:linear} is an element of\/ $\rfls\bKbar{q_1}$,
where $q_1$~is the lcm of the denominators of the slopes of those $1$-admissible edges
such that said denominators are coprime with~$b$.
Furthermore, a $\bKbar$-vector basis of the solutions can be written
in~$\rfls\bK{q_1}$.
Given an intermediate field\/~$\bL$,
such a $\bKbar$-vector basis is also
an $\bL$-vector basis for the $\bL$-space of solutions in~$\puiseux{\bL}$.
\end{lem}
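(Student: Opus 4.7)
The plan is to derive the first sentence by direct citation of the authors' earlier work, and then to upgrade the conclusion to the statements about $\bK$- and $\bL$-rational bases by a Galois descent argument. For the first sentence, note that the notion of 1-admissible edge depends only on the coefficients $\ell_{k,j}$ of~$L$, which lie in $\bK \subseteq \bKbar$ by hypothesis, so that the set of 1-admissible edges and hence the integer~$q_1$ are intrinsic to~$L$. The bound on the ramification of any Puiseux series solution in $\puiseux\bKbar$ is then exactly the content of \parencite[Prop.~2.19]{ChyzakDreyfusDumasMezzarobba-2018-CSL}.

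For the second sentence, let $V \subseteq \rfls\bKbar{q_1}$ denote the $\bKbar$-vector space of solutions of~\eqref{eq:linear} in $\rfls\bKbar{q_1}$. Embedding $\rfls\bKbar{q_1}$ into the difference ring~$\omdr$ of Definition~\ref{def:omdr} and applying Lemma~\ref{lem:has-Property-U} (whose constant field is $\bKbar$ by Lemma~\ref{lem:R-contants-field}) yields $\dim_\bKbar V \leq r$. Since $L$ has coefficients in $\bK$ and $M$ commutes with the coefficient-wise action of $\text{Gal}(\bKbar/\bK)$ on $\rfls\bKbar{q_1}$, the space $V$ is stable under this Galois action: for $\sigma \in \text{Gal}(\bKbar/\bK)$ and $p \in V$, one has $L(\sigma p) = \sigma(L p) = 0$. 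Standard Galois descent for semilinear representations of $\text{Gal}(\bKbar/\bK)$ on finite-dimensional $\bKbar$-vector spaces---a form of Hilbert's Theorem~90 for $\text{GL}_N$---then shows that $V \cap \rfls\bK{q_1}$ is a $\bK$-form of~$V$, so that any $\bK$-basis of the former is a $\bKbar$-basis of the latter.

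For the third sentence, let $(e_1, \dots, e_N)$ denote such a basis, with each $e_i \in \rfls\bK{q_1} \subseteq \puiseux\bL$. Being $\bKbar$-linearly independent, the $e_i$ remain in particular $\bL$-linearly independent. Conversely, any $v \in \puiseux\bL$ satisfying $L v = 0$ lies in $V$ by the first part, so decomposes uniquely as $v = \sum_i c_i e_i$ with $c_i \in \bKbar$. Applying any $\sigma \in \text{Gal}(\bKbar/\bL)$ coefficient-wise to this expansion fixes both $v$ and each $e_i$ (whose coefficients lie in $\bL$ and $\bK$ respectively), yielding $v = \sum_i \sigma(c_i) e_i$; uniqueness of the decomposition then forces $\sigma(c_i) = c_i$ for every such $\sigma$, so that each $c_i$ lies in the fixed field $\bL$. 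The main conceptual step in the plan is the invocation of Galois descent, whose applicability hinges precisely on $V$ being finite-dimensional so that the semilinear Galois representation is of the standard type covered by Hilbert~90; the commutation between $M$ and the coefficient-wise Galois action is the feature that makes this descent available here.
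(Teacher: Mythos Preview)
Your first paragraph matches the paper's treatment. For the remaining two assertions the paper takes a much shorter path: it simply refers back to \parencite{ChyzakDreyfusDumasMezzarobba-2018-CSL}, noting that the algorithm there computes solutions by solving a $\bK$-linear system on finitely many initial coefficients (the rest being determined recursively), so that a basis with coefficients in~$\bK$ is obtained constructively and is automatically a basis over any extension field. Your Galois-descent route is more abstract and, as written, has a gap.

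The gap lies in the appeal to Hilbert~90 over the infinite extension $\bKbar/\bK$. Descent for semilinear representations over an infinite Galois extension requires the action to be continuous, i.e., every vector must be fixed by an open subgroup; equivalently every $v \in V$ must have all its coefficients in a single finite extension of~$\bK$. You do not verify this, and it is not automatic for an arbitrary element of $\rfls\bKbar{q_1}$. It does hold for elements of~$V$: since $V$ is finite-dimensional, the valuations of its nonzero elements lie in a finite set (put a basis in echelon form by leading term), so each $v \in V$ is determined by finitely many coefficients, which generate a finite extension. But this verification already contains the essence of the paper's argument---solutions are determined by finitely many coefficients subject to a $\bK$-linear system---after which the Galois machinery is no longer doing essential work. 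A minor additional point: invoking Lemma~\ref{lem:has-Property-U} requires all of Hypothesis~\ref{def:propertyP} for~$\omdr$, so you should also cite Lemma~\ref{lem:D-is-simple} for simplicity, not only Lemma~\ref{lem:R-contants-field}; there is no circularity, as neither depends on the present lemma.
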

\begin{proof}
The first assertion of the statement is exactly the proposition we referred to,
using $\bKbar$~as a base field.
The second and third assertions are implicit in
\parencite{ChyzakDreyfusDumasMezzarobba-2018-CSL},
which finds solutions by solving linear systems over~$\bK$.
\end{proof}

\begin{defi}
More generally, we denote by $q_\lambda$ the lcm of the denominators coprime with~$b$
of the slopes of the $\lambda$-admissible edges of the lower Newton polygon
of~$L$, that is
\begin{multline}\label{eq:q-lambda}
q_\lambda =
\operatorname{lcm} \, \{q \in \bN : \exists j,p, \ \gcd(b,q) = \gcd(p,q) = 1, \ \chi_j(\lambda) = 0, \\
\text{and the $j$th edge has slope~$p/q$}\} .
\end{multline}
We call~$q_\lambda$ the \emph{ramification bound} with respect to~$\lambda$.
\end{defi}

We can then prove the following result on solutions in~$\omdr_\bL$ (Definition~\ref{def:omdr}).

\begin{prop}\label{prop:structure-sol-space-lin-eq}
Given an intermediate field~$\bL$,
the solution space of~\eqref{eq:linear} in~$\omdr_\bL$ is included in
\[
  \bigoplus_{\lambda\in\Lambda\cap\bL}
  \mylog{\lambda} \, \rfls{\bL}{q_\lambda}.
\]
Furthermore, an $\bL$-basis of this space can be written in
\[
  \coprod_{\lambda\in\Lambda\cap\bL}
  \mylog{\lambda} \, \rfls{\bK[\lambda]}{q_\lambda}.
\]
\end{prop}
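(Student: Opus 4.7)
The plan is to reduce the question to the single-eigenvalue case via the decomposition of $\omdr_\bL$ and then apply the ramification bound of Lemma~\ref{lem:puiseux-denominator-bound} component by component. Let $y\in\omdr_\bL$ be a solution of~\eqref{eq:linear}. By the definition of $\omdr_\bL$ in \eqref{eq:def-R_L}, write
\begin{equation*}
y = \sum_{\lambda\in S} \mylog{\lambda}\, p_\lambda(x),
\qquad S\subseteq\nonz{\bL}\ \text{finite},\qquad p_\lambda\in\puiseux{\bL}.
\end{equation*}
By Lemma~\ref{lem:split-eqn-to-equiv}, the equality $Ly=0$ is equivalent to $L(x,\lambda M)\,p_\lambda(x)=0$ for every $\lambda\in S$, and by the second part of Proposition~\ref{lem:single-lambda-equiv} a nonzero $p_\lambda$ forces $\lambda\in\Lambda$; hence the set of indices with $p_\lambda\neq0$ is contained in $\Lambda\cap\bL$.

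The key observation is then that substituting $\lambda M$ for $M$ in $L$ preserves the lower Newton polygon and translates $\lambda$-admissibility of an edge of $L$ into $1$-admissibility of the corresponding edge of $L(x,\lambda M)$. Indeed, since $\lambda\neq 0$, the support $\{(b^k,j):\ell_{k,j}\neq 0\}$ is the same for both operators, so the Newton polygons (hence also the edges and their slopes) coincide. If the $j$th edge has extreme abscissae $b^{k_1}$ and $b^{k_2}$, its characteristic polynomial for $L(x,\lambda M)$ is $\sum_{k=k_1}^{k_2} \ell_{k,j}\lambda^k X^{k-k_1}$, whose value at $X=1$ equals $\lambda^{k_1}\chi_j(\lambda)$; nonzeroness of $\lambda$ then makes $1$-admissibility for $L(x,\lambda M)$ equivalent to $\chi_j(\lambda)=0$, i.e., $\lambda$-admissibility for $L$.

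I would then apply Lemma~\ref{lem:puiseux-denominator-bound} to the operator $L(x,\lambda M)$, regarded as having coefficients in $\bK[\lambda][x]\subseteq\bL[x]$. The lcm $q_1$ appearing in that lemma for $L(x,\lambda M)$ is, by the correspondence above, exactly the ramification bound~$q_\lambda$ defined in~\eqref{eq:q-lambda}. The first assertion of the lemma then places $p_\lambda$ in $\rfls{\bL}{q_\lambda}$, which gives the inclusion
\begin{equation*}
\ker_{\omdr_\bL}(L) \;\subseteq\; \bigoplus_{\lambda\in\Lambda\cap\bL} \mylog{\lambda}\,\rfls{\bL}{q_\lambda}.
\end{equation*}

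For the basis statement, I would invoke the second and third parts of Lemma~\ref{lem:puiseux-denominator-bound}: applied to $L(x,\lambda M)$ over the base field $\bK[\lambda]$, they yield an $\bL$-basis of the space of solutions of $L(x,\lambda M)\,p=0$ in $\puiseux{\bL}$ whose elements all lie in $\rfls{\bK[\lambda]}{q_\lambda}$. Since the decomposition of $\omdr_\bL$ in \eqref{eq:def-R_L} is an internal direct sum and Lemma~\ref{lem:split-eqn-to-equiv} decouples the equations, the disjoint union over $\lambda\in\Lambda\cap\bL$ of the translates $\mylog{\lambda}\cdot(\text{basis for }\lambda)$ is an $\bL$-basis of $\ker_{\omdr_\bL}(L)$ lying in $\coprod_{\lambda\in\Lambda\cap\bL} \mylog{\lambda}\,\rfls{\bK[\lambda]}{q_\lambda}$, as required. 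The only nonroutine step is the translation between $\lambda$-admissibility for $L$ and $1$-admissibility for $L(x,\lambda M)$; everything else is bookkeeping against the earlier lemmas.
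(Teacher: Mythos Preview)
Your proof is correct and follows essentially the same approach as the paper's: decompose via Lemma~\ref{lem:split-eqn-to-equiv}, restrict the support to $\Lambda\cap\bL$, and apply Lemma~\ref{lem:puiseux-denominator-bound} to each $L(x,\lambda M)$ over the base field $\bK[\lambda]$. Your explicit verification that $\lambda$-admissibility for $L$ corresponds to $1$-admissibility for $L(x,\lambda M)$ is a welcome addition; the paper relies on the discussion preceding Definition~\ref{def:lower-char-poly} for this point and is otherwise more terse.
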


\begin{proof}
For a nonzero solution~$y$ of~$L$ in~$\omdr_\bL$,
Lemma~\ref{lem:split-eqn-to-equiv} splits the relation $L y = 0$
into the equivalent finite set of equations~\eqref{eq:indep-eqns}.
This proves the first part,
where the direct sum of the solution space is induced
by the direct sum in the expression~\eqref{eq:def-R_L} of~$\omdr_\bL$.
Next, according to Lemma~\ref{lem:admissibility},
the equation~\eqref{eq:indep-eqns} for a given~$\lambda\in\bL$ has a nonzero solution
only if there exists a $\lambda$-admissible edge, that is, only if
$\lambda$~is an element of~$\Lambda$.
Moreover, the change from $L(x,M)$ to $L(x,\lambda M)$ modifies the linear system
that is used in the proof of Lemma~\ref{lem:puiseux-denominator-bound} to determine the solutions
by moving its coefficients to the algebraic extension~$\bK[\lambda]$.
As a result,
a basis of the Puiseux-series solutions~$p_\lambda$
can be found with coefficients in~$\bK[\lambda]$
and ramification order (bounded by)~$q_\lambda$.
The result is proved.
\end{proof}

The following corollary is already proved as Lemma~\ref{lem:has-Property-U}:
we obtain here an alternative proof
by considering properties of the lower Newton polygon.

\begin{cor}\label{cor:dimension-vs-order}
Given an intermediate field~\/$\bL$,
the solution set in~$\omdr_\bL$ of the linear equation~\eqref{eq:linear}
is an $\bL$-vector space
whose dimension over the field\/~$\bL$
is at most the order~$r$ of the equation.
\end{cor}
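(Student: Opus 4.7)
The plan is to leverage the direct-sum decomposition established in Proposition~\ref{prop:structure-sol-space-lin-eq} and then count dimensions edge by edge along the lower Newton polygon. By that proposition, the set~$V$ of solutions of~\eqref{eq:linear} in~$\omdr_\bL$ decomposes as the internal direct sum
\[
V = \bigoplus_{\lambda \in \Lambda \cap \bL} \mylog{\lambda} \, V_\lambda,
\]
where $V_\lambda \subseteq \rfls\bL{q_\lambda}$ is the $\bL$-space of Puiseux-series solutions of $L(x, \lambda M) \, p = 0$. Since by Lemma~\ref{lem:R-contants-field} the constant field of~$\omdr_\bL$ is~$\bL$, $V$~is naturally an $\bL$-vector space and its dimension is the sum of the dimensions of the~$V_\lambda$. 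It therefore suffices to bound $\sum_\lambda \dim_\bL V_\lambda$ by~$r$.

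For each~$\lambda \in \nonz\bKbar$, I would observe that $L(x, \lambda M) = \sum_{k,j} \ell_{k,j} \lambda^k x^j M^k$ has the same lower Newton polygon as~$L$ (since $\lambda \neq 0$), and that the characteristic polynomial attached to the $j$th edge of the former is $\lambda^{k_1} \chi_j(\lambda X)$, whose root at~$1$ has multiplicity equal to $\operatorname{mult}_\lambda(\chi_j)$. The slope-by-slope construction of Puiseux solutions invoked in the proof of Lemma~\ref{lem:puiseux-denominator-bound} and developed in \parencite[\S2]{ChyzakDreyfusDumasMezzarobba-2018-CSL} parametrizes, for each admissible edge, a space of solutions whose dimension is bounded by the multiplicity of the root~$1$ in the corresponding characteristic polynomial. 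Applied to $L(x,\lambda M)$, this yields $\dim_\bL V_\lambda \leq \sum_j \operatorname{mult}_\lambda(\chi_j)$.

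Summing over~$\lambda$ and exchanging the order of summation gives
\[
\dim_\bL V \,\leq\, \sum_j \sum_\lambda \operatorname{mult}_\lambda(\chi_j) \,\leq\, \sum_j \deg \chi_j .
\]
Because consecutive edges of the lower Newton polygon share abscissas, the degrees telescope to $\sum_j \deg \chi_j = k_{\max} - k_{\min}$, where $k_{\min}$ and $k_{\max}$ are respectively the least and greatest~$k$ for which $\ell_k \neq 0$. By Hypothesis~\ref{hyp:l0-neq-0} we have $\ell_0 \neq 0$, and by Definition~\ref{def:L} we have $\ell_r \neq 0$, so $k_{\min} = 0$ and $k_{\max} = r$, closing the bound at~$r$.

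The main obstacle lies in step two: the inequality $\dim_\bL V_\lambda \leq \sum_j \operatorname{mult}_\lambda(\chi_j)$ does not follow from $\lambda$-admissibility (Lemma~\ref{lem:extd-admissibility}) alone, which only constrains the possible valuations. Obtaining the multiplicity bound requires examining the rank drop of the linear system that determines the leading terms of Puiseux solutions with a prescribed valuation; once this bound is accepted from \parencite{ChyzakDreyfusDumasMezzarobba-2018-CSL} in the case $\lambda = 1$, the generalization to arbitrary~$\lambda$ is a mere change of variable, thanks to the transformation law $\chi_j \mapsto \lambda^{k_1}\chi_j(\lambda X)$ recorded above.
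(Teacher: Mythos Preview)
Your overall structure---decompose by~$\lambda$ via Proposition~\ref{prop:structure-sol-space-lin-eq}, bound each~$V_\lambda$ using the edges of the Newton polygon, then sum $\deg\chi_j$ and telescope to~$r$---is exactly the paper's. The detour through multiplicities, however, is unnecessary, and the ``main obstacle'' you identify dissolves with a simpler observation. The paper notes that a basis of~$V_\lambda$ can be chosen so that the valuations of its elements are pairwise distinct (Gaussian elimination on leading terms); since by Lemma~\ref{lem:extd-admissibility} each such valuation is the opposite of the slope of some $\lambda$-admissible edge, this already gives $\dim_\bL V_\lambda \le \#\{j : \chi_j(\lambda)=0\}$, which is tighter than your $\sum_j \operatorname{mult}_\lambda(\chi_j)$ and needs no external input. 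Summing over~$\lambda$ then yields $\sum_j \#\{\lambda:\chi_j(\lambda)=0\} \le \sum_j \deg\chi_j \le r$. So Lemma~\ref{lem:extd-admissibility} \emph{is} enough: constraining the possible valuations suffices once you arrange for basis elements to realize distinct ones.
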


\begin{proof}
Proposition~\ref{prop:structure-sol-space-lin-eq} provides us
with an $\bL$-basis of the solutions of~\eqref{eq:linear} in~$\omdr_\bL$,
which can be chosen such that for each~$\lambda$
the basis elements of the form $\mylog{\lambda} \, p(x)$ involve series~$p(x)$ with distinct valuations.
By Lemma~\ref{lem:extd-admissibility}, these valuations are
opposites of slopes of distinct edges of the Newton polygon
whose characteristic polynomials~$\chi_j$ all have~$\lambda$ as a root.
Therefore, the total number of elements of the basis is bounded by
$\sum_j \deg \chi_j \leq r$.
\end{proof}

The order~$r$ is not always reached in Corollary~\ref{cor:dimension-vs-order},
as shown by the following two examples.

\begin{ex}
For any radix~$b\geq2$, the operator $L = M^2-2M+1$ admits the series solution~$1$,
and no nonconstant solution.
This reflects the fact that the nonhomogeneous equation $My-y = 1$
has no solution in~$\omdr$,
while any solution of this equation needs to be killed by~$L$.
\end{ex}

\begin{ex}
Similarly, when~$b\geq2$, $x^{b-1}M^2 - (1+x^{b-1})M + 1$ admits the series solution~$1$,
and otherwise would require a Hahn series $x^{-1/b}+x^{-1/b^2}+x^{-1/b^3}+\dotsb$
to get a second dimension of solutions,
the latter solving $My-y = 1/x$.
See \parencite{FaverjonRoques-2024-HSM} for a definition and developments.
\end{ex}

\subsection{Puiseux-hypergeometric solutions}
\label{sec:puiseux}

In the present section,
we focus on the field $F = \puiseux\bL$ to obtain the Puiseux series
with coefficients in~$\bL$ that are solutions of the Riccati equation.
We describe the $\puiseux\bL$-hypergeometric elements in~$\omdr_\bL$
and deduce that $\omdr_\bL$~satisfies Hypothesis~\ref{def:propertyP} over~$\puiseux\bL$.

\begin{lem}\label{lem:unique-monic-series}
Fix $q\in \nonz{\bN}$ and $u\in \rfls\bL{q}$
of valuation~$\alpha$ and with leading coefficient\/~$\lambda$.
If $\lambda \neq 1$, then the equation $M v = u v$ admits no nonzero solution in~$\puiseux\bL$.
Otherwise,
it
admits a unique series solution
$v$ in $x^{\alpha/(b-1)} \rfps\bL{q}$
of valuation $\alpha/(b-1)$ and with leading coefficient\/~$1$, and the solution space of this equation in~$\puiseux\bL$
is exactly the one-dimensional space\/~$\bL v$.
\end{lem}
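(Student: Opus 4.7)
The plan is to extract necessary conditions by matching valuations and leading coefficients, then solve a reduced equation coefficient-by-coefficient, and finally deduce one-dimensionality of the solution space.

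\textbf{Step 1 (necessary conditions).} For any nonzero solution $v\in\puiseux\bL$ of valuation $\beta$ and leading coefficient $c$, one has $\val(Mv)=b\beta$ with the same leading coefficient $c$, whereas $\val(uv)=\alpha+\beta$ with leading coefficient $\lambda c$. Matching both sides of $Mv=uv$ forces $\beta=\alpha/(b-1)$ and $\lambda=1$, which already proves the first assertion of the lemma.

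\textbf{Step 2 (existence and uniqueness when $\lambda=1$).} Setting $v=x^{\alpha/(b-1)}w$ and using $M(x^{\alpha/(b-1)})=x^{\alpha}\cdot x^{\alpha/(b-1)}$ reduces the equation to $Mw=\tilde u\,w$, where $\tilde u:=u/x^{\alpha}\in\rfps\bL{q}$ has constant term~$1$. Introducing $t=x^{1/q}$ and writing $\tilde u=1+\sum_{j\geq 1}u_j t^j$, I look for $w=1+\sum_{k\geq 1}w_k t^k\in 1+t\bL[[t]]$. Comparing coefficients of $t^k$ in $w(t^b)=\tilde u(t)\,w(t)$ for each $k\geq 1$ yields
\begin{equation*}
[b\mid k]\,w_{k/b}=w_k+u_k+\sum_{j=1}^{k-1}u_j w_{k-j},
\end{equation*}
which uniquely determines $w_k$ from $u_1,\dots,u_k$ and $w_1,\dots,w_{k-1}$ (observing $k/b<k$ when $b\mid k$). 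By induction on~$k$ this both constructs the desired $w$ and proves its uniqueness; since the whole recursion lives in the variable~$t$, no finer ramification is introduced and $w$ indeed lies in $\rfps\bL{q}$.

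\textbf{Step 3 (one-dimensionality of the solution space).} The subring of constants of $\puiseux\bL$ is exactly~$\bL$ (a fixed Puiseux series under $M$ can only have its zero-exponent term nonzero), so the solutions of $Mv=uv$ in $\puiseux\bL$ form an $\bL$-vector space containing the $v$ just constructed. For any nonzero solution $v'$ with leading coefficient $c'$, Step~1 forces $\val v'=\alpha/(b-1)$, so $v'-c'v$ is a solution of strictly greater valuation; were it nonzero, Step~1 would in turn force its valuation to equal $\alpha/(b-1)$, a contradiction. Hence $v'=c'v$ and the solution space is exactly $\bL v$. I foresee no conceptual obstacle: the only delicate point is bookkeeping the rational exponent $\alpha/(b-1)$ and confirming that the ramification of the power-series factor stays at~$q$, which is built into framing the recursion in the variable $t=x^{1/q}$.
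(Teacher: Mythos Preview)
Your proof is correct and follows essentially the same approach as the paper: matching leading terms to force $\lambda=1$ and $\beta=\alpha/(b-1)$, then reducing to a recursion on coefficients in the variable $t=x^{1/q}$ to get existence and uniqueness of the normalized solution. The only cosmetic difference is in the one-dimensionality argument: the paper divides two solutions and observes that the quotient is $M$-invariant, hence in~$\bL$, whereas you subtract $c'v$ from $v'$ and invoke the valuation constraint from Step~1; both arguments are equally short and valid.
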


\begin{proof}
The Puiseux series~$u$ can be written in the form $u = \lambda x^{\alpha} \hat u$
for some
$\hat u = \sum_{k\in\bN} \hat u_k x^{k/q} \in \rfps\bL{q}$
such that $\hat u_0 = 1$.
Assume that the equation $M v = u v$ admits a nonzero solution $v \in \puiseux\bL$ with valuation~$\beta$ and leading coefficient~$v_0 \neq 0$.
Taking leading terms on both sides yields $v_0 x^{b\beta} = \lambda v_0 x^{\alpha+\beta}$,
which implies~$\lambda = 1$.
This proves the absence of nonzero solutions if~$\lambda \neq 1$
and the announced valuation of~$v$ if~$\lambda = 1$.

We continue with the latter case.
Write $v$ in the form $x^{\alpha / (b-1)}\hat v$ for a series
$\hat v = \sum_ {k\in\bN}\hat v_k x^{k/q}$ and $\hat v_0 = v_0 \neq 0$,
the equation is equivalent to the relation
\begin{equation*}
\sum_{k\in\bN} \hat v_k x^{bk/q} =
\sum_{n\in\bN} \sum_{i\in\bN} \hat u_{n - i} \hat v_i \,  x^{n/q} ,
\end{equation*}
which in turn is equivalent to the recurrence formula
\[ \hat v_n = \hat v_{n / b} - \sum_{i=0}^{n-1} \hat u_{n - i} \hat v_i \]
for $n\in\nonz{\bN}$,
where $\hat v_{n/b}$ is understood to be zero if $n/b$~is not in~$\bN$.
The sequence $(\hat v_n)_{n\in\bN}$, and therefore the solution~$v$,
is fully determined by the choice of $\hat v_0 = 1$.

Next, for any Puiseux series solution~$w$ of the equation $Mw = uw$,
the quotient $w/v$ is in the constant field of~$\puiseux\bL$,
that is, it lies in~$\bL$.
\end{proof}

The following result is to be found
as \parencite[Theorem~35, top of page~342]{Roques-2018-ARB},
where~$\omdr$ is named~$B$ and is defined in \parencite[Lemma~36]{Roques-2018-ARB}.
We reproduce it here for the readers' convenience,
while removing the unnecessary hypothesis
of an algebraically closed field~$\bL$.
\begin{lem}\label{lem:D-is-simple}
$\omdr_\bL$~is a simple difference ring.
\end{lem}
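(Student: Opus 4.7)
The plan is to establish both parts of the paper's notion of simplicity: first that $M$ restricts to an automorphism of $\omdr_\bL$, then that every nonzero $M$-stable ideal is all of $\omdr_\bL$. For the first part, I would note that $M$ acts on each summand $\mylog\lambda\puiseux\bL$ by $f(x)\mapsto \lambda f(x^b)$; injectivity is clear component-wise, and surjectivity reduces to surjectivity on $\puiseux\bL$, where the Puiseux series $\sum a_n x^{n/q}$ has preimage $\sum a_n x^{n/(bq)} \in \rfls\bL{bq}$, together with $\mylog\lambda = M(\lambda^{-1}\mylog\lambda)$ to hit the other summands.

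For the ideal part, let $I\neq(0)$ be an $M$-stable ideal and pick a nonzero $y\in I$ minimizing the size $|S|$ of its support $S\subseteq\nonz\bL$ in the decomposition $y = \sum_{\lambda\in S}\mylog\lambda p_\lambda$ with all $p_\lambda\neq 0$. The element $\mylog{\mu^{-1}}$ is a unit of $\omdr_\bL$ (inverse $\mylog\mu$), so, after multiplying by it for some $\mu\in S$, I may assume $1\in S$ without changing $|S|$. Once the support is reduced to a singleton $\{\lambda\}$, the element $y=\mylog\lambda p$ is a product of two units of $\omdr_\bL$ (since $p\in\nonz{\puiseux\bL}$ is invertible in the field $\puiseux\bL$), whence $1\in I$.

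The core step is thus to forbid $|S|\geq 2$. The trick I would use is to form
\[
  z \;=\; My - \frac{Mp_1}{p_1}\,y,
\]
which lies in $I$ because $Mp_1/p_1\in\puiseux\bL\subseteq\omdr_\bL$ and $I$ is an $M$-stable ideal. By construction the $\mylog1$-component of $z$ vanishes, so $z$ has strictly smaller support than $y$; minimality of $|S|$ forces $z=0$, which component-wise is the relation $\lambda Mp_\lambda=(Mp_1/p_1)p_\lambda$ for every $\lambda\in S$. Setting $q_\lambda = p_\lambda/p_1 \in \puiseux\bL$, this rewrites as $\lambda M q_\lambda = q_\lambda$, i.e., $M q_\lambda = u q_\lambda$ with the constant $u = 1/\lambda$ of valuation $0$ and leading coefficient $1/\lambda$. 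Invoking Lemma~\ref{lem:unique-monic-series}, a nonzero Puiseux solution exists only when the leading coefficient equals~$1$, i.e., only for $\lambda=1$; this contradicts the existence of some $\mu\in S$ with $\mu\neq 1$.

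The step I expect to be most delicate is ensuring that the support-reduction $y\mapsto z$ really stays inside~$I$: this is why it is essential that $p_1$ has been chosen in the Puiseux-series subfield, so that $Mp_1/p_1$ is available as a legitimate scalar multiplier in $\omdr_\bL$. The rest is packaging: the preliminary relabeling by $\mylog{\mu^{-1}}$ to place $1$ in $S$ is harmless because $\mylog{\mu^{-1}}$ is a unit, and the final passage from a singleton-support element to $1\in I$ uses only that the two factors $\mylog\lambda$ and $p$ are each invertible in $\omdr_\bL$.
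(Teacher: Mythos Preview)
Your proof is correct and follows essentially the same minimal-support argument as the paper: normalize a nonzero element of $I$ of minimal support, kill one component by combining $y$ with $My$, deduce from minimality that the combination vanishes, and invoke Lemma~\ref{lem:unique-monic-series} to force the support to be $\{1\}$. You are slightly more explicit than the paper on two points---you verify that $M$ is an automorphism of $\omdr_\bL$ (which the paper leaves tacit in this proof), and you multiply by the unit $\mylog{\mu^{-1}}$ to place $1$ in the support before forming $z = My - (Mp_1/p_1)\,y$, whereas the paper normalizes only the Puiseux coefficient $p_{\lambda_0}$ to $1$ and then uses $y - My$.
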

\begin{proof}
Let $J$ be a nonzero difference ideal of~$\omdr_\bL$.
Let $y = \sum_{\lambda \in \nonz{\bL}} \mylog{\lambda} p_\lambda$
for $p_\lambda \in \puiseux\bL$
be a nonzero element of~$J$
such that the cardinality of the support of~$(p_\lambda)_{\lambda \in \nonz{\bL}}$ is minimal.
Let $\lambda_0$~be such that
$p_{\lambda_0} \neq 0$
and replace~$y$ with~$y/p_{\lambda_0}$,
so that we assume that~$p_{\lambda_0} = 1$.
Because $p_{\lambda_0}$~appears with zero coefficient in~$y - M y \in J$
and by the minimality assumption,
$y - M y = 0$.
Consequently, for all~$\lambda \in \nonz{\bL}$,
$p_\lambda = \lambda M p_\lambda$.
For~$\lambda \neq 1$, we conclude~$p_\lambda = 0$
by Lemma~\ref{lem:unique-monic-series}.
Consequently, $y$~reduces to~$p_1$, which is invertible,
thus proving~$J = \omdr_\bL$.
\end{proof}

\begin{cor}\label{cor:DL-1-univ-U}
$\omdr_\bL$ is an extension satisfying Hypothesis~\ref{def:propertyP} over
any intermediate difference field~$F$ between $\rat\bL$ and~$\puiseux\bL$,
so that Theorem~\ref{thm:structure-main} applies to~$\omdr_\bL$, viewed as an extension of~$F$.
\end{cor}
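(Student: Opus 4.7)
The plan is to verify in turn each of the three points of Hypothesis~\ref{def:propertyP} for the pair $(\omdr_\bL, F)$, exploiting the structural results already established in this section. Throughout, I will freely use that $F$ lies between $\rat\bL$ and~$\puiseux\bL$, so that $F \subseteq \puiseux\bL \subseteq \omdr_\bL$ as difference rings.

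\emph{Condition~\ref{it:universal-constant}} (constant field). By Lemma~\ref{lem:R-contants-field}, the constant ring of $\omdr_\bL$ is the field $\bL$. I therefore need only check that $\bL$ is also the constant field of~$F$. Since $F \subseteq \puiseux\bL \subseteq \omdr_\bL$, the constants of $F$ form a subfield of the constants of $\omdr_\bL$, hence of~$\bL$; and $\bL \subseteq \rat\bL \subseteq F$, giving the reverse inclusion.

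\emph{Condition~\ref{it:universal-1}} (hypergeometric solutions over~$F$). Given a nonzero $u \in F \subseteq \puiseux\bL$, write $u = \lambda\, x^{\alpha}\, \hat u$ with $\lambda \in \nonz\bL$ its leading coefficient, $\alpha \in \bQ$ its valuation, and $\hat u \in \rfps\bL{q}$ of leading coefficient~$1$ for some~$q \geq 1$. The element $u/\lambda \in \puiseux\bL$ then has leading coefficient~$1$, so Lemma~\ref{lem:unique-monic-series} (applied to $u/\lambda$) produces a nonzero Puiseux series $v \in \puiseux\bL$ satisfying $M v = (u/\lambda) v$. Setting $y = \mylog{\lambda}\, v \in \omdr_\bL$, one computes $M y = \lambda \mylog{\lambda} \cdot M v = \lambda \mylog{\lambda} \cdot (u/\lambda) v = u y$, as required.

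\emph{Condition~\ref{it:universal-simplicity}} (simplicity and automorphism). Simplicity of $\omdr_\bL$ is exactly Lemma~\ref{lem:D-is-simple}. It remains to argue that $M$ is an automorphism of~$\omdr_\bL$. Injectivity is immediate: if $M\bigl(\sum_\lambda \mylog\lambda\, p_\lambda\bigr) = \sum_\lambda \lambda \mylog\lambda\, p_\lambda(x^b) = 0$, the direct-sum decomposition~\eqref{eq:def-R_L} forces each $p_\lambda(x^b) = 0$, hence $p_\lambda = 0$. For surjectivity, it suffices to show that the substitution $x \mapsto x^b$ is surjective on~$\puiseux\bL$, as one can then rescale by~$1/\mu$ for each summand $\mylog{\mu}\, q_\mu$. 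But any $q \in \rfls\bL{q'}$ written as $\sum_n c_n x^{n/q'}$ equals $p(x^b)$ for the Puiseux series $p(y) = \sum_n c_n\, y^{n/(bq')} \in \rfls\bL{bq'}$, settling the point.

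I expect the verification of the three conditions to be essentially straightforward bookkeeping; the one place where some care is required is the automorphism part of Condition~\ref{it:universal-simplicity}, where one must check surjectivity of~$M$ on $\omdr_\bL$ through the direct-sum decomposition and not forget the extra scalar~$\lambda$ picked up by $M\mylog{\lambda} = \lambda \mylog{\lambda}$. Once Hypothesis~\ref{def:propertyP} is established, the applicability of Theorem~\ref{thm:structure-main} to $\omdr_\bL$ as an extension of any such~$F$ follows by that theorem's statement.
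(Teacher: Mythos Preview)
Your proof is correct and follows the same approach as the paper, which simply cites Lemmas~\ref{lem:R-contants-field}, \ref{lem:unique-monic-series}, and~\ref{lem:D-is-simple} without further elaboration. You spell out the details—including the verification that $F^\sigma=\bL$ and the surjectivity of~$M$ on~$\omdr_\bL$, which the paper leaves implicit—but the structure is identical.
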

\begin{proof}
This follows directly from
Lemma~\ref{lem:R-contants-field},
 Lemma~\ref{lem:unique-monic-series},
and Lemma~\ref{lem:D-is-simple}.
\end{proof}

We will use
$F = \puiseux\bL$ in Theorem~\ref{thm:structure-puiseux} below,
and $\ramrat\bL$ in Theorem~\ref{thm:structure-rational-ramified}.

\begin{prop}\label{prop:puiseux-hypergeometric-element}
The $\puiseux\bL$-hypergeometric elements in~$\omdr_\bL$ are
its elements $y = \mylog{\lambda} p(x)$
where $\lambda$~is in~$\nonz{\bL}$
and $p(x)$~is a Puiseux series in\/~$\puiseux\bL$.
Two such nonzero elements are $\puiseux\bL$-similar if and only if they share the same~$\lambda$.
\end{prop}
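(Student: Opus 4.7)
The plan is to exploit the direct-sum decomposition~\eqref{eq:def-R_L} of $\omdr_\bL$ over $\puiseux\bL$ together with the action $M \mylog{\mu} = \mu \mylog{\mu}$, reducing the problem to a uniqueness statement for first-order equations already settled by Lemma~\ref{lem:unique-monic-series}.

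For the easy direction, I would fix $\lambda \in \nonz{\bL}$ and $p \in \puiseux\bL$ and set $y = \mylog{\lambda} p$. If $y=0$ there is nothing to show; otherwise a direct computation gives $My = \lambda \mylog{\lambda} Mp = (\lambda\, Mp/p)\, y$, exhibiting $u = \lambda\, Mp/p \in \puiseux\bL$ and showing $y$ is $\puiseux\bL$-hypergeometric.

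For the converse, let $y \in \omdr_\bL$ be nonzero and $\puiseux\bL$-hypergeometric, so $My = uy$ for some $u \in \puiseux\bL$, and write $y = \sum_{\mu \in S} \mylog{\mu} p_\mu$ for a finite $S \subseteq \nonz{\bL}$ with all $p_\mu$ nonzero. The equation $My = uy$ then reads
\[
\sum_{\mu \in S} \mylog{\mu}\,\bigl(\mu\, Mp_\mu - u\, p_\mu\bigr) = 0 ,
\]
and the uniqueness of coefficients in the direct-sum decomposition gives $Mp_\mu = (u/\mu)\, p_\mu$ in $\puiseux\bL$ for each $\mu \in S$. The key step is to show $|S|=1$: if $\mu_1 \neq \mu_2$ both lay in $S$, then $q := p_{\mu_1}/p_{\mu_2} \in \nonz{\puiseux\bL}$ would satisfy $Mq = (\mu_2/\mu_1)\, q$; since $\mu_2/\mu_1 \in \nonz\bL$ is a Puiseux series of valuation $0$ and leading coefficient different from $1$, Lemma~\ref{lem:unique-monic-series} forbids any nonzero $q$, a contradiction. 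Hence $y$ has the stated form.

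For the similarity statement, suppose $y_i = \mylog{\lambda_i} p_i$ for $i=1,2$ are nonzero. If $\lambda_1 = \lambda_2$, then $y_1 = (p_1/p_2)\, y_2$ with $p_1/p_2 \in \nonz{\puiseux\bL}$, so the $y_i$ are $\puiseux\bL$-similar. Conversely, if $y_1 = q\, y_2$ for some $q \in \nonz{\puiseux\bL}$ but $\lambda_1 \neq \lambda_2$, then the identity $\mylog{\lambda_1} p_1 - \mylog{\lambda_2}(q p_2) = 0$ contradicts the direct-sum decomposition since $p_1 \neq 0 \neq q p_2$. The main obstacle is really the first step, where one must invoke Lemma~\ref{lem:unique-monic-series} with the constant Puiseux series $\mu_2/\mu_1$ to rule out mixing of distinct exponents; the rest is essentially bookkeeping in the direct sum.
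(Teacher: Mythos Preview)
Your proof is correct and follows essentially the same approach as the paper: both use the direct-sum decomposition~\eqref{eq:def-R_L} to reduce $My=uy$ to the componentwise equations $Mp_\mu=(u/\mu)p_\mu$, and both invoke Lemma~\ref{lem:unique-monic-series} to force a single~$\mu$. The only cosmetic difference is that the paper reads off the unique admissible $\lambda_0$ directly as the leading coefficient of~$u$ (so that $u/\mu$ has leading coefficient~$1$ iff $\mu=\lambda_0$), whereas you argue by contradiction via the ratio $p_{\mu_1}/p_{\mu_2}$ and apply the lemma to the constant $\mu_2/\mu_1$; both are equally valid uses of the same lemma.
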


\begin{proof}
Let $y$ be a nonzero $\puiseux\bL$-hypergeometric element in~$\omdr_\bL$,
together with the corresponding Puiseux series~$u$ satisfying the equation~$My = uy$.
The series~$u$ lies in a subfield~$\rfls\bL{q}$ with $q$ a positive integer.
Let its leading term be denoted~$\lambda_0 x^{m/q}$,
for suitable $\lambda_0 \in \bL$ and~$m \in \bZ$.
By Lemma~\ref{lem:split-eqn-to-equiv},
writing~$y$ as in~\eqref{eq:y-in-D}
for a finite subset~$S$ of~$\nonz{\bL}$ and some~$p_\lambda(x)$ in~$\puiseux\bL$
leads to the equations
$\lambda M p_\lambda = u p_\lambda$ for $\lambda\in S$.
Any such equation can have a nonzero solution
only if $\lambda$~is equal to~$\lambda_0$,
forcing~$y$ to be of the form
$y = \mylog{\lambda_0} p_{\lambda_0}(x)$
where~$p_{\lambda_0}(x) \in \puiseux\bL$ is nonzero.
Conversely, any~$y$ of that form
is $\puiseux\bL$-hypergeometric in~$\omdr_\bL$.

Now suppose that
$y_1 = \mylog{\lambda_1} p_1(x)$ and $y_2 = \mylog{\lambda_2} p_2(x)$
are $\puiseux\bL$-similar and nonzero.
The similarity implies the linear dependence of $\mylog{\lambda_1}$ and~$\mylog{\lambda_2}$ over $\puiseux\bL$,
and therefore the equality~$\lambda_1 = \lambda_2$.
Conversely two nonzero hypergeometric solutions of~\eqref{eq:linear} in~$\omdr_\bL$
with the same $\lambda\in\bL$ are similar by definition.
\end{proof}

\begin{rem}\label{rem:puiseux-hypergeometric-element-with-q}
The same approach,
supplemented with Lemma~\ref{lem:unique-monic-series},
shows more precisely
that the  $\rfls\bL{q}$-hypergeometric elements in~$\omdr_\bL$ are of the form
\begin{equation}\label{eq:puiseux-hypergeometric-element}
  y = \mylog{\lambda} x^{m /(q(b-1))} s(x^{1/q})
\end{equation}
with $\lambda$ in~$\nonz{\bL}$,
$q$~a positive integer, $m$~an integer,
and $s(x)$~a series of valuation\/~$0$ in\/~$\fps\bL$.
Observe (e.g., by making~$m=1$) that $My/y$ may have a lower ramification order than~$y$.
Consequently,
obtaining \emph{all} Laurent power series solutions to~\eqref{eq:riccati} in the form~$u = My/y$
generally requires to consider
\emph{ramified} hypergeometric solutions~$y$ of~\eqref{eq:linear}.
A simple example is provided by $L = M - u$ for~$b = 5$ and $u = x  (1+x^3)$,
with solution $y = x^{1/4} \prod_{k\geq0}(1+x^{3\cdot 5^k})^{-1}$.
\end{rem}

\begin{rem}
We could have written the series~$s$ in Proposition~\ref{prop:puiseux-hypergeometric-element}
as an infinite product.
Indeed, this representation will prove useful later:
in Proposition~\ref{prop:ramified-rational-hypergeometric-element}
where we obtain an infinite product of rational functions;
in Proposition~\ref{prop:rational-ramified-hypergeometric}
where we provide a characterization of the similarity classes.
\end{rem}

\begin{defi}\label{def:Lambda'}
Let $\Lambda'$ denote the finite set of~$\lambda \in \Lambda$
(see~\eqref{eq:def-Lambda})
such that
there exists a nonzero solution of~\eqref{eq:linear} in~$\mylog{\lambda} \puiseux\bKbar$.
For an intermediate field\/~$\bL$ and $\lambda \in \Lambda' \cap \bL$, define~$\fH_{\bL,\lambda}$ to be
the set of hypergeometric solutions in~$\mylog{\lambda} \puiseux{\bL}$.
Let\/~$\ricsol_{\bL,\lambda} \subseteq \puiseux\bL$ denote
the image of~$\nonz{(\fH_{\bL,\lambda})}$ under the map $y\mapsto My/y$.

\end{defi}

The elements of~$\fH_{\bL,\lambda}$
are necessarily in~$\mylog{\lambda} \rfls{\bL}{q_\lambda}$
by Proposition~\ref{prop:structure-sol-space-lin-eq}.
With this notation, Theorem~\ref{thm:structure-main} specializes as follows.

\begin{thm}\label{thm:structure-puiseux}
The solution set\/~$\ricsol_{\puiseux\bL}$ of the Riccati equation~\eqref{eq:riccati}
in the ramified rational function field\/~$\puiseux\bL$ is the disjoint union
\begin{equation}\label{eq:R-disj-union}
  \ricsol_{\puiseux\bL} = \coprod_{\lambda\in\bL\cap\Lambda'} \ricsol_{\bL,\lambda}
\end{equation}
indexed by the finite set\/~$\bL\cap\Lambda'$.
Each\/~$\ricsol_{\bL,\lambda}$ is a set of series from\/~$\rfls\bL{q_\lambda}$ with leading coefficient~$\lambda$
and is one-to-one rationally parametrized by the\/ $\bL$-projective space\/ $\projzn{\fH_{\bL,\lambda} }$.
The corresponding parametrization is obtained
by restricting the map $y\mapsto My/y$ from~$\nonz{(\fH_{\bL,\lambda})} $ to its image in\/~$\puiseux\bL$
and projectivizing its source.
Moreover the dimensions of the\/ $\bL$-projective spaces add up to a number
that is at most the order~$r$ of the linear equation~\eqref{eq:linear}
minus the cardinality of\/~$\bL\cap\Lambda'$.
\end{thm}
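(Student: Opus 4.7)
The plan is to apply Theorem~\ref{thm:structure-main} to the difference ring extension $\omdr_\bL$ of the field $F = \puiseux\bL$. Corollary~\ref{cor:DL-1-univ-U} ensures that $\omdr_\bL$ satisfies Hypothesis~\ref{def:propertyP} over $\puiseux\bL$, so Theorem~\ref{thm:structure-main} does apply and reduces the description of $\ricsol_{\puiseux\bL}$ to three tasks: enumerating the nontrivial $\puiseux\bL$-similarity classes among the $\puiseux\bL$-hypergeometric solutions of~\eqref{eq:linear} in $\omdr_\bL$, taking their images under $y \mapsto My/y$, and projectivizing each such class.

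To carry out the enumeration, I would invoke Proposition~\ref{prop:puiseux-hypergeometric-element}, which identifies the nonzero $\puiseux\bL$-hypergeometric elements of $\omdr_\bL$ as the $\mylog{\lambda} p(x)$ with $\lambda \in \nonz\bL$ and $p \in \nonz{\puiseux\bL}$, and states that two such elements are $\puiseux\bL$-similar exactly when their $\lambda$'s agree. Among these, Proposition~\ref{lem:single-lambda-equiv} tells us that $\mylog{\lambda}p$ is a solution of~\eqref{eq:linear} if and only if $L(x,\lambda M)\,p(x) = 0$ has a nonzero Puiseux solution; by the very definition of $\Lambda'$ (Definition~\ref{def:Lambda'}), this occurs exactly when $\lambda \in \bL \cap \Lambda'$, and the augmentation by $0$ of the corresponding similarity class is $\fH_{\bL,\lambda}$. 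Parts~(\ref{it:param-H}) and~(\ref{it:param-U}) of Theorem~\ref{thm:structure-main} then deliver the disjoint union~\eqref{eq:R-disj-union} and the one-to-one parametrizations by $\projzn{\fH_{\bL,\lambda}}$ via $y \mapsto My/y$, while Remark~\ref{rem:rational-param} provides the rationality of each parametrization. The dimension estimate follows at once from part~(\ref{it:disjoint-U}): the sum of the dimensions of the $\projzn{\fH_{\bL,\lambda}}$ is bounded by $r$ minus the number of nontrivial similarity classes, namely $|\bL \cap \Lambda'|$.

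It only remains to pin down where the elements of $\ricsol_{\bL,\lambda}$ live. Proposition~\ref{prop:structure-sol-space-lin-eq} places every nonzero $y \in \fH_{\bL,\lambda}$ in $\mylog{\lambda} \rfls{\bL}{q_\lambda}$, so that $My/y = \lambda \cdot Mp/p$ lies in $\rfls{\bL}{q_\lambda}$; writing $p(x) = c\,x^{\alpha} + \cdots$ with $c \in \nonz\bL$ gives $Mp = c\,x^{b\alpha} + \cdots$, hence $Mp/p = x^{(b-1)\alpha} + \cdots$ has leading coefficient $1$, and $My/y$ therefore has leading coefficient exactly $\lambda$. The only genuinely delicate point in this proof is the bookkeeping in the similarity-class enumeration, where one must combine the decomposition of $\omdr_\bL$ as a direct sum over $\lambda$ with the constraint that $\lambda$ be a root of one of the characteristic polynomials of~$L$; everything else is a direct specialization of Theorem~\ref{thm:structure-main} together with elementary leading-term computations in $\puiseux\bL$.
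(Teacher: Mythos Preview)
Your proof is correct and follows essentially the same route as the paper: invoke Corollary~\ref{cor:DL-1-univ-U} to apply Theorem~\ref{thm:structure-main} with $\dring=\omdr_\bL$ and $F=\puiseux\bL$, identify the similarity classes via Proposition~\ref{prop:puiseux-hypergeometric-element} and Definition~\ref{def:Lambda'}, and read off the ramification bound from Proposition~\ref{prop:structure-sol-space-lin-eq}. You simply spell out more of the details (the explicit leading-coefficient computation, the role of Proposition~\ref{lem:single-lambda-equiv}, and the rationality via Remark~\ref{rem:rational-param}) than the paper's terse version does.
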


\begin{proof}
By Corollary~\ref{cor:DL-1-univ-U},
the ring~$\omdr_\bL$ satisfies Hypothesis~\ref{def:propertyP} as an extension of~$\puiseux\bL$,
so Theorem~\ref{thm:structure-main} applies
to $\dring = \omdr_\bL$ and~$F = \puiseux\bL$.
Points \ref{it:vect-H} to~\ref{it:param-U} of that theorem justify
the partitioning according to~\eqref{eq:R-disj-union}
as well as the form of the parametrization.
Moreover,
from the inclusion $\fH_{\bL,\lambda} \subseteq \mylog{\lambda} \rfls{\bL}{q_\lambda}$
follows the inclusion
$\ricsol_{\bL,\lambda} \subseteq \rfls{\bL}{q_\lambda}$.
Point~\ref{it:disjoint-U} proves the bound on dimensions.
\end{proof}

\section{Ramified rational solutions to the Riccati equation}
\label{sec:rational-ramified-sol}

We now study the $F$-hypergeometric solutions of~\eqref{eq:linear}
for $F = \ramrat\bL$, or, equivalently, the ramified rational solutions of
\eqref{eq:riccati} with coefficients in some intermediate field~$\bL$.

\subsection{Hypergeometric elements}
\label{sec:hypergeom}

We begin by characterizing the $\ramrat\bL$-hypergeo\-metric elements
of~$\omdr_\bL$ (Definitions \ref{def:F-hyperg} and~\ref{def:omdr}).

\begin{prop}\label{prop:ramified-rational-hypergeometric-element}
The nonzero\/ $\ramrat\bL$-hypergeometric elements in $\omdr_\bL$
are the elements
\begin{equation}\label{eq:ramified-rational-hypergeometric-element}
  y = c \, \mylog{\lambda} x^{m/q} f(x^{1/q})
\end{equation}
with $c$~and~$\lambda$~in\/~$\nonz{\bL}$, $m$~an integer,
$q$~a positive integer,
and $f(x)$~a formal power series that is expressible as an infinite product
\begin{equation}\label{eq:inf-product}
  f(x) = \prod_{k\geq 0} 1/g(x^{b^k}) \in \fps\bL
\end{equation}
for some $g(x) \in \rat\bL$ satisfying $g(0) = 1$.
\end{prop}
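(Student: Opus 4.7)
The plan is to prove the two inclusions separately. The $(\Leftarrow)$ direction is a direct computation verifying that $My/y$ is ramified rational. The $(\Rightarrow)$ direction uses Proposition~\ref{prop:puiseux-hypergeometric-element} to reduce to the case $y = \mylog\lambda p$ for a Puiseux series $p$, then exploits the rational structure of $u = My/y$ to realize the resulting power-series factor as an infinite product via iteration of a functional equation.

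For $(\Leftarrow)$, setting $t = x^{1/q}$, the key identity is the telescoping $f(t^b) = \prod_{k \geq 1} 1/g(t^{b^k}) = g(t)\,f(t)$, whence $Mf(x^{1/q})/f(x^{1/q}) = g(x^{1/q})$. Combined with $M\mylog\lambda = \lambda\mylog\lambda$ and $Mx^{m/q} = x^{bm/q}$, this yields $My/y = \lambda\,x^{(b-1)m/q}\,g(x^{1/q}) \in \ramrat\bL$, as required.

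For $(\Rightarrow)$, let $y \in \omdr_\bL$ be nonzero and $\ramrat\bL$-hypergeometric, say $My = uy$ with $u \in \ramrat\bL$, and pick $q \in \nonz\bN$ with $u \in \rrat\bL{q}$. Proposition~\ref{prop:puiseux-hypergeometric-element} gives $y = \mylog\lambda p$ for some nonzero $p \in \puiseux\bL$, with $\lambda$ the leading coefficient of $u$. The factorization $u = \lambda\,x^{n/q}\,g(x^{1/q})$ for some integer $n$ and $g \in \rat\bL$ with $g(0) = 1$ is always available: writing $u = U(x^{1/q})$ and letting $n$ be the $s$-adic valuation of $U(s)$, one just takes $g(s) = U(s)/(\lambda s^n)$. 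Normalizing $p = c\,x^{n/(q(b-1))}\,\hat p$ with $c \in \nonz\bL$ and $\hat p$ of valuation $0$ and leading coefficient $1$, the equation $My = uy$ reduces (by the same algebra as in the forward direction, run backwards) to $M\hat p = g(x^{1/q})\,\hat p$.

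Setting $t = x^{1/q}$ and $F(t) = \hat p$, this becomes $F(t^b) = g(t)\,F(t)$, and iterating gives $F(t) = F(t^{b^{k+1}}) \big/ \prod_{j=0}^{k} g(t^{b^j})$ for each $k \geq 0$. Since $g(0) = 1$, each factor $g(t^{b^j})$ has the form $1 + O(t^{b^j})$, so the partial products converge $t$-adically to an element of $\fps\bL$ with constant term $1$, while $F(t^{b^{k+1}}) \to F(0) = 1$ for the same valuation reason (the argument actually works in $\puiseux\bL$, not just $\fps\bL$, so no prior knowledge of the ramification of $\hat p$ is needed; uniqueness in Lemma~\ref{lem:unique-monic-series} then forces $\hat p$ to lie in $\rfps\bL{q}$ \emph{a~posteriori}). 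The $t$-adic limit yields $F(t) = \prod_{j\geq 0} 1/g(t^{b^j}) \in \fps\bL$, hence $y = c\,\mylog\lambda\,x^{n/(q(b-1))}\,F(x^{1/q})$. To reach the stated exponent shape $x^{m/q}$ with $m \in \bZ$, I would replace $q$ by $q(b-1)$ and $g(s)$ by $\tilde g(s) := g(s^{b-1}) \in \rat\bL$ (still satisfying $\tilde g(0) = 1$); the announced form then holds with $m = n$. The one genuine subtlety is this reindexing to convert the rational exponent $n/(q(b-1))$ into an integer $m$ divided by a common $q$; everything else amounts to the mechanical iteration above.
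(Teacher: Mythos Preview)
Your proof is correct and follows essentially the same route as the paper: reduce via Proposition~\ref{prop:puiseux-hypergeometric-element} to $y=\mylog\lambda\,p$, derive a functional equation $F(t^b)=g(t)F(t)$, and solve it by the convergent infinite product. The only cosmetic difference is that the paper takes $q$ to be the ramification order of the Puiseux series~$p$, writes $p=x^{m/q}s(x^{1/q})$, and sets $g(t):=s(t^b)/s(t)$ (then argues this is rational because $My/y$ is), whereas you take $q$ from~$u$ and read $g$ directly off the factorization $u=\lambda x^{n/q}g(x^{1/q})$, so rationality of~$g$ is free but you need the final reindexing $q\to q(b-1)$ to match the exponent shape~$x^{m/q}$.
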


\begin{proof}
Any nonzero $\ramrat\bL$-hypergeometric element~$y$ in~$\omdr_\bL$
is $\puiseux\bL$-hyper\-geo\-met\-ric.
By Proposition~\ref{prop:puiseux-hypergeometric-element},
it can be written $y = \mylog{\lambda} x^{m/q} s(x^{1/q})$
with $\lambda$ in~$\nonz{\bL}$, $q$ a positive integer, $m$ an integer,
and $s(x)$~in~$\fps\bL$ of valuation~$0$.
One then has
$My/y = \lambda x^{(b-1)m/q} s(x^{b/q})/s(x^{1/q}) \in \rfls\bL{q} \cap \ramrat\bL$,
so $g(x) := s(x^b)/s(x)$ is a rational function.
The infinite product~\eqref{eq:inf-product} for this value of~$g$ converges
to~$s$,
hence $y$~is of the form~\eqref{eq:ramified-rational-hypergeometric-element}
with~$f = s$.
\end{proof}

\begin{rem}
If the hypergeometric element~$y$ is such that $My/y$~is in~$\rfls{\bL}{q'}$,
then $q = q'(b-1)$ can be used
in formula~\eqref{eq:ramified-rational-hypergeometric-element},
as a consequence
of Remark~\ref{rem:puiseux-hypergeometric-element-with-q}.
\end{rem}

\subsection{Similarity classes}
\label{sec:similarity-classes}

The set of $\ramrat\bL$-hypergeometric solutions of~\eqref{eq:linear}
decomposes into $\ramrat\bL$-similarity classes that refine the
$\puiseux\bL$-similarity classes~$\fH_{\bL,\lambda} $ described in~\S\ref{sec:concrete} (Definition~\ref{def:Lambda'}).
The following proposition gives a normal form for these new similarity classes.
Recall from Proposition~\ref{prop:structure-sol-space-lin-eq} that the ramification bound~$q_\lambda$
defined by~\eqref{eq:q-lambda} is a bound on the ramification orders
of solutions of~\eqref{eq:linear} in~$\mylog\lambda \, \puiseux\bL$,
and recall as well the definition of~$\Lambda'$ in Definition~\ref{def:Lambda'}.

\begin{prop}\label{prop:rational-ramified-hypergeometric}
Any nontrivial\/ $\ramrat\bL$-similarity class~$\nonz{\fH}$
of\/ $\ramrat\bL$-hyper\-geo\-met\-ric solutions of~\eqref{eq:linear} uniquely determines
a tuple $(\lambda, g, (p_j)_{1\leq j\leq s})$ \hbox{made of}:
\begin{itemize}
  \item $\lambda$ in\/ $\bL\cap\Lambda'$,
  \item an integer~$s>0$ and Laurent polynomials $p_1$, \dots, $p_s$ of\/~$\bL[x,x^{-1}]$,
  \item a rational function~$g$ in\/~$\rat\bL$ satisfying $g(0) = 1$,
\end{itemize}
and such that:
\begin{enumerate}
\item\label{it:basis-using-p} the $s$ generalized series
\begin{equation}\label{eq:rational-ramified-hypergeometric}
  \mylog{\lambda}
    \times p_j(x^{1/q_\lambda}) \times
      \prod_{k\geq 0} 1/g(x^{b^k/q_\lambda}) ,
  \qquad 1 \leq j \leq s,
\end{equation}
form a basis of the $\bL$-vector space~$\fH$,
  \item\label{it:family-of-p} the family $(p_1, \dots, p_s)$ is in
    reduced echelon form w.r.t.\ ascending degree
    (meaning in particular that the coefficient of minimal degree of~$p_j$ is\/~$1$),
  \item\label{it:coprime-p} the elements $p_1$, \dots, $p_s$ are coprime in\/~$\bL[x,x^{-1}]$
    (in other words, the $p_j/x^{\val p_j}$ are coprime in\/~$\pol\bL$).
\end{enumerate}
\end{prop}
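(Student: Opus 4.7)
The plan is to extract $\lambda$ from the class, then construct $g$ and the $(p_j)$ by absorbing rational denominators and common polynomial factors into a single infinite product via a telescoping identity, and finally derive uniqueness from the coprimality constraint combined with the normalization $g(0)=1$. For the preliminary setup, I would note that $\ramrat\bL$-similarity refines $\puiseux\bL$-similarity, so Proposition~\ref{prop:puiseux-hypergeometric-element} attaches a single $\lambda$ to the class~$\nonz\fH$; this $\lambda$ lies in $\Lambda'$ because $\nonz\fH$ contains solutions of~\eqref{eq:linear}. By Lemma~\ref{thm:F-sim-F-hyp-is-L-vec} and Theorem~\ref{thm:structure-main}(\ref{it:vect-H}), $\fH$ is a finite-dimensional $\bL$-vector space of some dimension $s\geq 1$, and by Proposition~\ref{prop:structure-sol-space-lin-eq} every element of $\fH$ lies in $\mylog\lambda \rfls\bL{q_\lambda}$.

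For existence, I would fix a basis $(y_1,\ldots,y_s)$ of~$\fH$, represent $y_1$ in the infinite-product form given by Proposition~\ref{prop:ramified-rational-hypergeometric-element}, and write each $y_j = \tilde r_j(x^{1/q_\lambda})\, y_1$ with $\tilde r_j \in \nonz{\rat\bL}$, using that $y_j/y_1$ lies in $\ramrat\bL \cap \rfls\bL{q_\lambda} = \rrat\bL{q_\lambda}$. The key mechanism is then the telescoping identity
\begin{equation*}
  \prod_{k\geq 0} \frac{h(x^{b^{k+1}})}{h(x^{b^k})} = \frac{1}{h(x)} \quad \text{in } \fps\bL,
\end{equation*}
valid for $h \in \rat\bL$ with $h(0)=1$: it shows that substituting $g \cdot d/(Md)$ for $g$ multiplies the infinite product by $1/d(x^{1/q_\lambda})$. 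Choosing $d \in \pol\bL$ with $d(0)=1$ as a common denominator of the $\tilde r_j$'s therefore absorbs all denominators simultaneously and rewrites each $y_j$ in the form $\mylog\lambda \, p_j(x^{1/q_\lambda}) \prod_k 1/g(x^{b^k/q_\lambda})$ with $p_j \in \bL[x,x^{-1}]$. Absorbing the gcd in $\pol\bL$ of the normalized $(p_j/x^{\val p_j})$ into $g$ by the same trick restores coprimality, and Gaussian elimination on the resulting $\bL$-basis yields the reduced echelon form.

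For uniqueness, suppose two tuples $(\lambda, g, (p_j))$ and $(\lambda, g', (p_j'))$ both satisfy the conditions. Taking the ratio of the two representations for each~$j$ shows that $p_j/p_j'$ is independent of~$j$ and equals $\prod_k (g/g')(x^{b^k/q_\lambda})$; this forces $g/g' = h/(Mh)$ for some $h\in\rat\bL$ with $h(0)=1$ and $p_j' = h(x)\, p_j$ in $\bL[x,x^{-1}]$. Writing $h = n/d$ in lowest terms normalized by $n(0)=d(0)=1$, the membership of $p_j'$ in $\bL[x,x^{-1}]$ forces $d \mid p_j$ in $\pol\bL$ for each~$j$; the coprimality of the $(p_j/x^{\val p_j})$'s then gives $d=1$, and the symmetric argument with the roles of the two tuples reversed yields $n=1$, whence $h=1$, $g=g'$, and $(p_j)=(p_j')$ by the reduced echelon form. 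The main obstacle is precisely this last uniqueness step: one must rule out the apparent ambiguity $g\leftrightarrow g\cdot h/(Mh)$ in the infinite-product representation by carefully exploiting the interplay between the coprimality condition, the structure of units in $\bL[x,x^{-1}]$, and the normalization $g(0)=1$.
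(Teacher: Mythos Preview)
Your plan follows the paper's approach, but two steps are glossed over.

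For existence, Proposition~\ref{prop:ramified-rational-hypergeometric-element} yields a representation of~$y_1$ with some ramification index~$q$, not~$q_\lambda$, yet your telescoping step is already written in the variable~$x^{1/q_\lambda}$. You must first argue that the representation can be taken with $q=q_\lambda$. The paper does this: since $y_1\in\mylog\lambda\rfls\bL{q_\lambda}$, the series $h(x):=f(x^{q_\lambda/q})$ lies in~$\fls\bL$, so $g(x^{q_\lambda/q})=Mh/h$ lies in $\fls\bL\cap\bL(x^{q_\lambda/q})$; writing $q_\lambda/q=n/d$ in lowest terms, this forces $g\in\bL(x^d)$, whence $\tilde g(x):=g(x^{n/d})$ is a genuine element of~$\rat\bL$ with $\tilde g(0)=1$ and the infinite product can be rewritten with~$q_\lambda$ throughout.

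For uniqueness, the assertion that $p_j/p_j'$ is independent of~$j$ amounts to claiming $p_j' f'=p_j f$ for each~$j$, which is not justified: although $(p_j)$ and $(p_j')$ are each in reduced echelon form as Laurent polynomials, multiplying by the series $f,f'$ destroys that property, so the two bases of~$\fH$ need not match index by index. What does hold is that $hV=V'$ where $h=f/f'$, $V=\sum_j\bL\, p_j$ and $V'=\sum_j\bL\, p_j'$; hence each $h\,p_j$ lies in~$V'$ and is therefore a Laurent polynomial. This already shows $h$ is rational (as $(h p_j)/p_j$), after which your $n/d$ argument goes through: coprimality of the~$p_j$ forces $d=1$, symmetry gives $n=1$, so $h=1$, $g=g'$, and only then does uniqueness of the reduced echelon form give $(p_j)=(p_j')$. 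The paper argues equivalently via a B\'ezout relation on the~$p_j'$ to exhibit $f'/f$ as a Laurent polynomial directly.
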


\begin{proof}
The $\ramrat\bL$-similarity class~$\nonz{\fH}$
of $\ramrat\bL$-hypergeometric solutions of~\eqref{eq:linear}
is contained in an $\puiseux\bL$-similarity class of
$\puiseux\bL$-hypergeometric elements of~$\omdr_\bL$, which, by
Proposition~\ref{prop:puiseux-hypergeometric-element},
is of the form
$\mylog\lambda \, \nonz{\puiseux\bL}$ for some~$\lambda \in \nonz{\bL}$.
Since, additionally, $\fH$~is contained in the solution space
of~\eqref{eq:linear},
Proposition~\ref{prop:structure-sol-space-lin-eq} implies
$\fH \subseteq \mylog\lambda \, \rfls\bL{q_\lambda}$.
Additionally, $\fH$~has finite dimension over~$\bL$
by Theorem~\ref{thm:structure-main}.

For any element $y \in \nonz{\fH}$, use
Proposition~\ref{prop:ramified-rational-hypergeometric-element}
to write
\begin{equation}\label{eq:ramified-factorization}
  y = c \, \mylog{\lambda} x^{m/q} f(x^{1/q}), \qquad
  f(x) = \prod_{k\geq 0} 1/g(x^{b^k}),
\end{equation}
with $c \in \bL$, $g(x) \in \bL(x)$, and $g(0) = 1$.
We contend that we can without loss of generality
make $q$ equal to the ramification bound~$q_\lambda$ in~\eqref{eq:ramified-factorization}.
First, the valuation~$m/q$ of~$y/\mylog{\lambda}$
is in~$q_\lambda^{-1}\bZ$,
and can thus be written $m/q = \tilde m/q_\lambda$ for~$\tilde m \in \bZ$.
Second, $f(x^{1/q}) = y / (\mylog{\lambda} x^{\tilde m/q_\lambda})$
is in~$\rfls\bL{q_\lambda}$,
so that $h(x) := f(x^{q_\lambda/q})$~is in~$\fls\bL$.
From $Mf/f = g \in \rat\bL$ follows the membership of
$g(x^{q_\lambda/q}) = Mh/h$ in $\bL(x^{q_\lambda/q}) \cap \fls\bL$.
Write $q_\lambda/q = n/d$ in lowest terms
and observe that $g$~is a series in~$x^d$.
Define $\tilde g(x) := g(x^{n/d})$, a rational series satisfying $\tilde g(0) = 1$.
Define $\tilde f(x) := \prod_{k\geq 0} 1/\tilde g(x^{b^k})$ and observe
$f(x^{1/q}) = \tilde f(x^{1/q_\lambda})$.
We have obtained a new expression of~$y$ of the form~\eqref{eq:ramified-factorization}
with $(q, m, g)$ replaced with~$(q_\lambda, \tilde m, \tilde g)$.
In particular, we have made~$q$ independent of~$y$.

Pick an element $y_0 \in \nonz{\fH}$ and obtain
its decomposition of the form~\eqref{eq:ramified-factorization},
thus fixing $(q, m, g)$ to some~$(q_\lambda, m_0, g_0)$.
Replace~$y_0$ with~$y_0/c$ so as to set~$c$ to~$1$.
Since $\nonz{\fH}$ is an $\ramrat\bL$-similarity class,
the set $V := \{ y/y_0 : y \in \fH \}$
is an $\bL$-subspace of $\ramrat\bL$,
and thus, by the previous paragraph,
a finite-dimensional $\bL$-subspace of $\rrat\bL{q_\lambda}$.
Let $a(x^{1/q_\lambda})$
be the least common denominator of the elements of~$V$,
with valuation coefficient normalized to one.
Write
$a(x^{1/q_\lambda}) = x^{v/q_\lambda} \tilde a(x^{1/q_\lambda})$
for $\tilde a$ of valuation zero.
Then $a(x^{1/q_\lambda}) V$ is a finite-dimensional subspace of
$\bL[x^{1/q_\lambda}]$;
as such, it admits a basis
$(\tilde p_1(x^{1/q_\lambda}), \dots, \tilde p_s(x^{1/q_\lambda}))$
in reduced echelon form w.r.t.\ ascending degree.
The polynomials $\tilde p_1, \dots, \tilde p_s$ are coprime by the minimality
of~$a$.
Setting $g = g_0 a/Ma$ and $p_j = x^{m_0-v} \tilde p_j$, the
expressions~\eqref{eq:rational-ramified-hypergeometric}
form a basis of~$\fH$ that satisfies the conditions
\ref{it:family-of-p} and~\ref{it:coprime-p}.

We already have noticed that $\lambda$~is uniquely determined by~$\fH$.
Suppose that $\fH$~admits a second basis
of the form~\eqref{eq:rational-ramified-hypergeometric},
with parameters $g', p'_1, \dots, p'_s$ also satisfying
\ref{it:family-of-p} and~\ref{it:coprime-p}.
Let $f' = \prod_{k\geq 0} 1/g'(x^{b^k})$.
Then $(p_1, \dots, p_s)$ and $(p'_1 f'/f, \dots, p'_s f'/f)$
are two bases of the same $\bL$-vector space.
In particular, the $p'_j f'/f$ are Laurent polynomials.
A Bézout relation $\sum_j u_j p'_j = 1$ for Laurent polynomials~$u_j$
then implies that $f'/f = \sum_j u_j (p'_j f'/f)$ is a Laurent polynomial.
By symmetry, $f/f'$~is also Laurent,
and in fact equal to~$1$ because $g(0) = g'(0) = 1$.
Then the uniqueness of the reduced echelon form of a given vector space implies
$(p_1, \dots, p_s) = (p'_1, \dots, p'_s)$.
We have thus proved the uniqueness of the tuple $(\lambda, g, p_1, \dots, p_s)$
under conditions
\ref{it:basis-using-p},
\ref{it:family-of-p}, and~\ref{it:coprime-p}.
\end{proof}

A direct consequence of Proposition~\ref{prop:rational-ramified-hypergeometric}
is that for a given~\eqref{eq:linear},
$\ramrat\bL$-similarity classes are uniquely identified by the pairs~$(\lambda, g)$
extracted from the tuples~$(\lambda, g, p)$ of the proposition.
Indeed,
if two classes share the same~$(\lambda, g)$, then,
in view of the form~\eqref{eq:rational-ramified-hypergeometric} of the bases associated with the two classes,
any element of one class must be $\ramrat\bL$-similar to any element of the other class.
These two classes must therefore be equal.
This leads us to the following definition that the reader will compare with Definition~\ref{def:Lambda'}.

\begin{defi}\label{def:classes-L-lambda-g}
Given a nontrivial\/ $\ramrat\bL$-similarity class
of\/ $\ramrat\bL$-hyp\-er\-geo\-met\-ric solutions of~\eqref{eq:linear},
let $(\lambda, g, (p_j)_{1\leq j\leq s})$ be
the uniquely determined tuple of Proposition~\ref{prop:rational-ramified-hypergeometric}.
Then the similarity class will be denoted~$\fH_{\bL,\lambda,g}$.
We will also write~$\ricsol_{\bL,\lambda,g}$ for the image of~$\nonz{\fH_{\bL,\lambda,g}}$ under $y \mapsto My/y$.
\end{defi}

Summing up the previous discussion, the full solution set of the Riccati
equation can be described as follows.

\begin{thm}\label{thm:structure-rational-ramified}
The solution set\/~$\ricsol_{\ramrat\bL}$ of the Riccati
equation~\eqref{eq:riccati}
in~$\ramrat\bL$ is a disjoint union
\begin{equation}\label{eq:structure-rational-ramified}
  \ricsol_{\ramrat\bL} = \coprod_{(\lambda,g)} \ricsol_{\bL,\lambda,g}
\end{equation}
indexed by a finite set of
pairs $(\lambda,g)$, where~$\lambda$ is in~$\bL\cap\Lambda'$
and~$g$ is a rational function satisfying $g(0) = 1$,
such that the~$\fH_{\bL,\lambda,g}$ partition~$\fH_{\bL,\lambda}$.
Each\/~$\ricsol_{\bL,\lambda,g}$ can be parametrized bijectively
by
\begin{equation*}
  u(x) = \lambda\,
        \frac{\sum_{j=1}^s c_j p_j(x^{b/q_\lambda}) }
        {\sum_{j=1}^s c_j p_j(x^{1/q_\lambda}) }
        g(x^{1/q_\lambda})
  \quad\text{for}\quad
  (c_1:\ldots:c_s) \in \bP^{s-1}(\bL)
\end{equation*}
where the~$p_j$ are those in the uniquely determined tuple of Proposition~\ref{prop:rational-ramified-hypergeometric}.
The dimensions~$s-1$ of the projective spaces add up to a number
that is at most the order~$r$ of the linear equation~\eqref{eq:linear}
minus the cardinality of the set of pairs~$(\lambda,g)$ indexing the disjoint union~\eqref{eq:structure-rational-ramified}.
\end{thm}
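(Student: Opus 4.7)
The plan is to assemble the theorem from the abstract parametrization Theorem~\ref{thm:structure-main} combined with the concrete normal form of Proposition~\ref{prop:rational-ramified-hypergeometric}. First I would apply Theorem~\ref{thm:structure-main} to the pair $(\dring, F) = (\omdr_\bL, \ramrat\bL)$, noting that Hypothesis~\ref{def:propertyP} holds here by Corollary~\ref{cor:DL-1-univ-U}. This step immediately yields a disjoint-union decomposition of the Riccati solutions in~$F$ according to the $\ramrat\bL$-similarity classes~$\nonz{\fH}$ of $\ramrat\bL$-hypergeometric solutions of~\eqref{eq:linear}, an $\bL$-projective parametrization of each piece, and a global bound on the sum of projective dimensions by $r$ minus the number of classes. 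All that then remains is to render the indexing and the parametrization explicit.

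Next I would index the classes. Proposition~\ref{prop:rational-ramified-hypergeometric} attaches a unique tuple $(\lambda, g, (p_j)_{1 \leq j \leq s})$ to each nontrivial $\ramrat\bL$-similarity class, and the discussion preceding Definition~\ref{def:classes-L-lambda-g} shows that the pair $(\lambda, g)$ alone determines the class. Hence the disjoint union is indexed by a set of pairs $(\lambda, g)$ with $\lambda \in \bL \cap \Lambda'$ and $g(0) = 1$. For the refinement statement, I would observe that since $\ramrat\bL \subseteq \puiseux\bL$, every $\ramrat\bL$-similarity class is contained in a unique $\puiseux\bL$-similarity class of $\puiseux\bL$-hypergeometric solutions of~\eqref{eq:linear}, and by Proposition~\ref{prop:puiseux-hypergeometric-element} the latter is precisely $\fH_{\bL,\lambda}$ for the $\lambda$ appearing in the tuple; this gives the partition of~$\fH_{\bL,\lambda}$ by the~$\fH_{\bL,\lambda,g}$.

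For the explicit rational parametrization of~$\ricsol_{\bL,\lambda,g}$, I would apply Remark~\ref{rem:rational-param} to the basis of~$\fH_{\bL,\lambda,g}$ provided by Proposition~\ref{prop:rational-ramified-hypergeometric}. Take $y_0$ to be the first basis element of the form~\eqref{eq:rational-ramified-hypergeometric} and compute $u_0 = My_0/y_0$ using the telescoping identity
\begin{equation*}
M \prod_{k\geq 0} 1/g(x^{b^k/q_\lambda}) = g(x^{1/q_\lambda}) \prod_{k\geq 0} 1/g(x^{b^k/q_\lambda}),
\end{equation*}
which yields $u_0 = \lambda \, p_1(x^{b/q_\lambda}) / p_1(x^{1/q_\lambda}) \cdot g(x^{1/q_\lambda})$. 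With $q_j := p_j/p_1$ in the notation of Remark~\ref{rem:rational-param}, the rational parametrization announced in the theorem is obtained after multiplying numerator and denominator by $p_1(x^{1/q_\lambda})$ and by $p_1(x^{b/q_\lambda})$ respectively. The dimension bound and the finiteness of the indexing set are inherited directly from point~\ref{it:disjoint-U} of Theorem~\ref{thm:structure-main}.

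The main technical obstacle is not in any single step but in aligning the abstract combinatorial objects of~\S\ref{sec:structure} (similarity classes, the projective parametrization of Remark~\ref{rem:rational-param}) with the concrete tuples $(\lambda, g, (p_j))$ of~\S\ref{sec:similarity-classes}; once this correspondence is made, the verification of the explicit formula is a short computation using $Mf/f = g$ for $f = \prod_{k\geq 0} 1/g(x^{b^k})$.
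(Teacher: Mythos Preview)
Your proposal is correct and matches the paper's approach: invoke Theorem~\ref{thm:structure-main} with $F = \ramrat\bL$, identify the abstract similarity classes with the $\fH_{\bL,\lambda,g}$ of Definition~\ref{def:classes-L-lambda-g}, and make the parametrization explicit via Remark~\ref{rem:rational-param} together with the basis~\eqref{eq:rational-ramified-hypergeometric}. (The paper's terse proof writes $\dring = \puiseux\bL$, but your choice $\dring = \omdr_\bL$ is the correct one, since the classes $\fH_{\bL,\lambda,g}$ for $\lambda \neq 1$ involve the factor $\mylog\lambda$ and therefore do not sit inside~$\puiseux\bL$.)
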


\begin{proof}
The classes~$\nonz{\fH}$ of Theorem~\ref{thm:structure-main},
applied to $\dring = \puiseux\bL$ and $F = \ramrat\bL$,
are the~$\fH_{\bL,\lambda,g}$ of Definition~\ref{def:classes-L-lambda-g}.
The result then follows, using Remark~\ref{rem:rational-param}
and the explicit basis~\eqref{eq:rational-ramified-hypergeometric} of~$\fH_{\bL,\lambda,g}$
provided by Proposition~\ref{prop:rational-ramified-hypergeometric}.
\end{proof}

\section{Degree bounds for rational solutions}
\label{sec:bounds-for-rational-solutions}

In the present section, we derive
degree bounds for rational solutions of the Riccati equation
in terms of its order~$r$ and coefficient degree~$d$.
Such bounds will be useful
in~\S\ref{sec:approx-syz}
to select relevant Hermite--Padé approximants.

The following easy consequence of Lemma~3.1(c) in
\parencite{ChyzakDreyfusDumasMezzarobba-2018-CSL}
will be used throughout;
we prove it for self-containedness.
\begin{lem}\label{lem:M-Bezout}
Given a field~$K$, $P$ and~$Q$ in~$K[x]$, and any~$j \in \bN$,
$P$ and~$Q$ are coprime if and only if so are $M^j P$ and~$M^j Q$.
\end{lem}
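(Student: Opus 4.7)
The plan is to use the fact that for $j \in \bN$, the operator $M^j$ acts on~$\pol{K}$ as the ring endomorphism $P(x) \mapsto P(x^{b^j})$, which fixes~$1$ and multiplies degrees by~$b^j$. Both directions of the equivalence then follow from elementary manipulations.

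For the forward implication, I would appeal to Bézout's identity in the principal ideal domain~$\pol{K}$: assuming $P$ and~$Q$ coprime, pick $U, V \in \pol{K}$ with $UP + VQ = 1$, and apply~$M^j$ to obtain $(M^j U)(M^j P) + (M^j V)(M^j Q) = 1$. Any common divisor of~$M^j P$ and~$M^j Q$ in~$\pol{K}$ then divides~$1$, forcing coprimality.

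For the reverse implication, I would argue by contraposition. If $P$ and~$Q$ share a nonconstant common divisor~$D \in \pol{K}$, then $M^j D$ divides both $M^j P$ and~$M^j Q$, and since $\deg M^j D = b^j \deg D \geq b^j \geq 1$, the polynomial $M^j D$ is nonconstant, contradicting coprimality of $M^j P$ and~$M^j Q$.

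There is essentially no obstacle: the only point to keep in mind is that $M^j$ is a ring endomorphism of~$\pol{K}$, so that Bézout identities are transported faithfully and divisibility is preserved. No hypothesis on the characteristic of~$K$ or on~$b$ beyond $b \geq 2$ is needed.
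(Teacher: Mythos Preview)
Your proof is correct and essentially matches the paper's: both directions use that $M^j$ is a ring endomorphism, with the forward implication transporting a Bézout identity through~$M^j$. For the converse, the paper also writes a Bézout identity for $M^jP$ and~$M^jQ$ and observes that any common divisor~$G$ of $P$ and~$Q$ yields $M^jG \mid 1$, while you argue the contrapositive directly via degrees; the two arguments are equivalent in spirit and equally short.
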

\begin{proof}
If $P$ and~$Q$ are coprime, there exist polynomials $A$ and~$B$
such that the relation $AP + BQ = 1$ holds.
Therefore, $M^jA \, M^jP + M^jB \, M^jQ = 1$,
so that $M^jP$ and~$M^jQ$ are coprime.
Conversely, if $M^jP$ and~$M^jQ$ are coprime,
then there exist polynomials $A$ and~$B$
such that $A \, M^jP + B \, M^jQ = 1$.
If some $G \in K[x]$ divides both $P$ and~$Q$,
then $M^jG$~divides~1, so $G$~must be a constant.
Therefore, $P$ and~$Q$ are coprime.
\end{proof}

\begin{prop}\label{prop:deg-bounds-for-ratfuns}
  Assume that $P / Q \in \rat\bKbar$, with $\gcd(P, Q) = 1$, is a solution
  of~\eqref{eq:riccati}. Then, the following degree bounds hold:
    \begin{equation}\label{eq:bound-P}
     \deg P \leq \Bnum :=
       \begin{cases}
         2d & (b=2) , \\
         {4d}/{b^{r-1}} & (b\ge3) ,
       \end{cases}
    \end{equation}
    \begin{equation}\label{eq:bound-Q}
    \deg Q \leq \Bden :=
      \begin{cases}
        2(1-1/2^{r})d & (b=2) , \\
        {3d}/{b^{r-1}} & (b\ge3) .
      \end{cases}
    \end{equation}
\end{prop}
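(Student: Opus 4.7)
The plan is to substitute $u = P/Q$ into the Riccati equation~\eqref{eq:riccati} and clear denominators by multiplying through by $\prod_{j=0}^{r-1} M^j Q$, which produces the polynomial identity
\[
  \sum_{k=0}^r \ell_k A_k = 0 ,
  \qquad A_k := \Bigl(\prod_{j=0}^{k-1} M^j P\Bigr) \Bigl(\prod_{j=k}^{r-1} M^j Q\Bigr) .
\]
From this identity I plan to extract two complementary inequalities on $p := \deg P$ and $q := \deg Q$, one by a divisibility argument and one by a degree-counting argument; combining them yields the announced bounds.

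For the divisibility inequality, isolating the $k = r$ term gives $\ell_r A_r = -\sum_{k<r} \ell_k A_k$. Each summand on the right-hand side is divisible by $M^{r-1} Q$ (which appears in every $A_k$ with $k \le r-1$), so $M^{r-1} Q$ divides $\ell_r \prod_{j=0}^{r-1} M^j P$. Lemma~\ref{lem:M-Bezout} combined with $\gcd(P,Q) = 1$ yields $\gcd(M^{r-1} P, M^{r-1} Q) = 1$, so the factor $M^{r-1} P$ may be discarded to obtain $M^{r-1} Q \mid \ell_r \prod_{j=0}^{r-2} M^j P$. Comparing degrees produces
\begin{equation*}
  b^{r-1} q \le d + \tfrac{b^{r-1} - 1}{b-1} p . \tag{A}
\end{equation*}

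For the degree-counting inequality, since the Mahler substitution preserves leading coefficients, each $\ell_k A_k$ has degree exactly $D_k := \deg \ell_k + \tfrac{b^k-1}{b-1} p + \tfrac{b^r - b^k}{b-1} q$ with nonzero leading coefficient (a product of nonzero leading coefficients of $\ell_k$, $P$, and $Q$). For $\sum_k \ell_k A_k$ to vanish, $\max_k D_k$ must therefore be attained by at least two values of~$k$. Suppose for contradiction that $p - q > d/b^{r-1}$; then, for every $k \le r-1$, using $\deg \ell_r - \deg \ell_k \ge -d$ and $(b^r - b^k)/(b-1) \ge b^{r-1}$, one gets
\[
  D_r - D_k \;\ge\; -d + (p - q)\, b^{r-1} \;>\; -d + d \;=\; 0 ,
\]
so $k = r$ uniquely maximizes~$D_k$, a contradiction. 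Hence
\begin{equation*}
  p - q \le d / b^{r-1} . \tag{B}
\end{equation*}

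Finally, substituting~(B) into~(A) and solving for $q$ yields, after elementary simplification, $q \le \tfrac{d(b^r - 1)}{((b-2)b^{r-1} + 1)\, b^{r-1}}$. For $b = 2$ this specializes exactly to $\Bden = 2(1 - 2^{-r})d$, while for $b \ge 3$ the elementary inequality $2\, b^{r-1}(b-3) \ge -4$ weakens it to the cleaner bound $3d/b^{r-1} = \Bden$. Then~(B) supplies $p \le q + d/b^{r-1} \le \Bden + d/b^{r-1} = \Bnum$ in both cases. I expect the main subtlety to be step~(B): one must verify carefully that no internal leading-term cancellation sabotages the degree counting, which relies crucially on each $\ell_k$ having a nonzero leading coefficient and on $M^j$ preserving leading coefficients, and then propagate the strict inequality to reach the contradiction. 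The rest is routine algebra.
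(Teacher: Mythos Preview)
Your proposal is correct and follows essentially the same approach as the paper: clear denominators, extract inequality~(A) by the divisibility of $M^{r-1}Q$ into the $k=r$ term (using Lemma~\ref{lem:M-Bezout}), extract inequality~(B) by a degree-dominance argument on the $k=r$ term, then combine and specialize to $b=2$ and $b\ge 3$. The only cosmetic difference is that you frame~(B) as a proof by contradiction, whereas the paper phrases it directly (``at least one term of index $i<r$ must have degree $\ge$ that of the term of index~$r$''); your concern about intermediate $\ell_k$ possibly vanishing is harmless, since such terms simply drop from the sum and the contradiction still relies only on $\ell_r\neq 0$.
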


\begin{proof}
By Lemma~\ref{lem:M-Bezout} over~$K=\bKbar$,
the assumption $\gcd(P, Q) = 1$ implies $\gcd(M^{r-1} P, M^{r-1} Q) = 1$,
that is, $M^{r-1} P$ and~$M^{r-1} Q$ are coprime.
For $n\in \bN$, let $\rfact{n}{y}$ denote the ``rising factorial''
$\prod_{i=0}^{n-1} M^i y$.
In particular, $\rfact{0}{y}=1$.
Multiplying~\eqref{eq:riccati} by $\rfact{r}{Q}$ yields
\begin{equation}\label{eq:cleared-riccati}
\sum_{i=0}^r \ell_i \rfact{i}{P} \rfact{r-i}{(M^i Q)} = 0 .
\end{equation}
Since the factor $M^{r - 1} Q$ appears in all terms on the left-hand side
of~\eqref{eq:cleared-riccati} but the term for~$i = r$,
and since $M^{r - 1} Q $ and $M^{r - 1} P$ are coprime,
$M^{r - 1} Q$ must divide $\ell_r \, \rfact{r-1}{P}$.
Hence, since $\deg \ell_r \leq d$,
the degrees of $Q$ and~$P$ are related by the inequality
\begin{equation}\label{eq:first-bound-PQ}
   b^{r-1} \deg Q \leq d + \frac{b^{r-1} - 1}{b - 1} \deg P .
\end{equation}
Furthermore,
for~\eqref{eq:cleared-riccati} to hold,
at least one term of index~$i<r$
must have a degree greater than or equal to
the degree of the term of index~$r$.
Simplifying by the common factor $\rfact{i}{P}$
yields
\begin{equation*}
\deg\left(\ell_r \, \rfact{r-i}{(M^i P)}\right) \leq
\deg\left(\ell_i \, \rfact{r-i}{(M^i Q)}\right) ,
\end{equation*}
from which follows
\begin{equation*}
   \deg P - \deg Q \leq
   \frac{\deg \ell_i - \deg \ell_r}{(b^{r} - b^{i})/(b-1)} \leq
   \frac{d}{b^{r-1}} .
\end{equation*}
The above inequality yields an upper bound on~$\deg P$,
\begin{equation}\label{eq:second-bound-PQ}
\deg P \leq \frac{d}{b^{r-1}} + \deg Q .
\end{equation}
Substituting it into~\eqref{eq:first-bound-PQ},
we find
\begin{equation*}
b^{r-1} \deg Q \leq d + \frac{b^{r-1} - 1}{b - 1} \left(\frac{d}{b^{r-1}}+\deg Q\right)
= d\frac{b^r-1}{(b - 1)b^{r-1}} +\frac{b^{r-1} - 1}{b - 1}\deg Q .
\end{equation*}
Multiplying by~$b-1$ yields
\begin{equation}\label{eq:third-bound-PQ}
(b^r-2b^{r-1}+1) \deg Q \leq \frac{d}{b^{r-1}} (b^r-1) \leq db .
\end{equation}
For the case~$b\geq3$,
using the last term,~$db$, in~\eqref{eq:third-bound-PQ},
and the inequality $b^r \leq 3(b^r-2b^{r-1}+1)$
yields~\eqref{eq:bound-Q}.
For the case~$b=2$,
specializing the left inequality in~\eqref{eq:third-bound-PQ} at~$b=2$
yields precisely~\eqref{eq:bound-Q}.
In both cases, \eqref{eq:bound-P}~is a direct consequence
of \eqref{eq:second-bound-PQ} and~\eqref{eq:bound-Q}.
\end{proof}

\begin{rem}
The elementary result of Proposition~\ref{prop:deg-bounds-for-ratfuns} is remarkable
as it provides a uniform polynomial bound
in terms of scalar parameters describing the size of the equation,
namely its order~$r$ and degree~$d$.
This is unreachable in the shift operator case, where the degrees of rational
solutions of Riccati equations may be exponential in the bit size of the
equation.
For example, the Charlier polynomials
\[
  C_n(x,a) =
  {}_2F_0\left(
    \begin{array}{c}
      -n,-x\\ -
    \end{array}; -a^{-1}
  \right)
\]
viewed as functions of~$x$ with parameters $n\in\bN$ and $a > 0$ satisfy the recurrence equation \parencite[\S18.22]{DLMF}
\[
  a C_n(x+1,a)
  -(a + x) C_n(x,a)
  +x C_n(x-1,a)
  + n C_n(x,a)
  = 0
\]
whose degrees of coefficients are bounded by~$d = 1$
and whose last coefficient~$n$ has bit size~$\log n$,
while $C_n(x,a)$ has degree~$n$.
As a result the associated Riccati equation has solutions $C_n(x+1,a)/C_n(x,a)$ whose numerator and denominator cannot be bounded solely as a function of~$d$.
The fact that the Mahler operator drastically increases the degree is sometimes a difficulty,
but here it allows us to obtain the bounds in Lemma~\ref{prop:deg-bounds-for-ratfuns}.
\end{rem}

\begin{rem}
The exponents of $b$ and~$d$ in the bounds \eqref{eq:bound-P} and~\eqref{eq:bound-Q} are tight,
as we show now.
For a given nonzero rational function~$u = P/Q$ in its lowest terms with both $P$ and~$Q$ monic,
let us consider any operator of order~$r\geq 2$
\begin{multline*}
  L = \biggl(\sum_{k=0}^{r-1} c_k M^k\biggr) (QM - P) \\
  = c_{r-1}(M^{r-1}Q) M^r + \sum_{k=1}^{r-1} \left( c_{k-1} M^{k-1}Q - c_k M^k P  \right) M^k - c_0 P,
\end{multline*}
with constant coefficients~$c_0$, \dots, $c_{r-1}$ in~$\bKbar$
satisfying $c_{r-1} \neq 0$.
The operator admits $M - u$ as a right-hand factor, hence $u$ is a solution of the Riccati equation according to Lemma~\ref{lem:linear-riccati-equiv}.
The leading coefficient of~$L$ has degree $b^{r-1}\deg Q$.
In case $\deg Q \geq b \deg P$, all other coefficients have lower degree.
In case $\deg Q < b \deg P$, the coefficient of $M^{r-1}$ has degree $b^{r-1}\deg P$ and the remaining ones have lower degree.
Hence, in both cases
the maximum degree of the coefficients of~$L$ is equal to $d = b^{r-1} \max(\deg P,\deg Q)$.
Upon choosing $b \geq 3$ and~$\deg P = \deg Q$,
the bounds \eqref{eq:bound-P} and~\eqref{eq:bound-Q} become $\Bnum = 4\deg P$ and $\Bden = 3\deg Q$,
showing that they overshoot by no more than a constant factor~$4$.
\end{rem}

\part{Algorithm by exploration of the possible singularities}
\label{part:petkovsek}

\section{Mahlerian variant of Petkovšek's algorithm}
\label{sec:petkovsek-variant}

In this section, we present an algorithm for computing hypergeometric solutions
of linear Mahler equations adapted from Petkovšek's algorithm for difference
equations in the usual shift operator.
In doing so, we generalize to arbitrary order a previous adaptation of
Petkovšek's algorithm to Mahler equations of order~2 due to Roques.

Equations are given with polynomial coefficients over a field~$\bK$,
and we solve them for solutions
with coefficients in a field~$\bL$ satisfying $\bKbar \supseteq \bL \supseteq \bK$.
As a consequence of~\S\ref{sec:bounds}, particularly the definition~\eqref{eq:q-lambda} for the ramification bound~$q_\lambda$,
we have a bound
\begin{equation}\label{eq:q-L}
  q_\bL = \lcm_{\lambda\in\bL\cap\Lambda} q_\lambda
\end{equation}
on the ramification orders of solutions in~$\ramrat\bK$ of the Riccati equation,
so that computing all ramified rational solutions reduces to computing the plain
rational solutions of a modified equation.
Additionally, the method to be used in \S\ref{sec:roques} and~\S\ref{sec:bgpf}
requires a supplementary ramification in its intermediate calculations:
whatever the target ramification order,
our method requires
the working ramification order  to be multiplied by a factor~$b^{r-1}$.

\subsection{Petkovšek's classical algorithm}
\label{sec:petkovsek}

In the case of the linear classical difference equation
\begin{equation}\label{eq:shift-linear}
\ell_r(x) y(x+r) + \dots + \ell_0(x) y(x) = \sum_{i=0}^r \ell_i(x) y(x+i) = 0
\end{equation}
and the corresponding Riccati equation
\begin{equation}\label{eq:shift-riccati}
\ell_r(x) u(x) \dotsm u(x+r-1) + \dots + \ell_1(x) u(x) + \ell_0(x) = \sum_{i=0}^r \ell_i(x) \prod_{j=0}^{i-1} u(x+j) = 0 ,
\end{equation}
an algorithm due to \textcite{Petkovsek-1992-HSL} is known
to solve~\eqref{eq:shift-riccati} for all its rational function solutions,
or equivalently to find
all first-order right-hand factors of~\eqref{eq:shift-linear}.

The algorithm is based on the concept of a \emph{Gosper--Petkovšek form}:
for any rational function $u(x) \in \nonz{\rat\bL}$,
there exist
a constant~$\zeta \in \nonz{\bL} $
and monic polynomials $A(x),B(x),C(x)$ in~$\pol\bL$
satisfying:
\begin{enumerate}
\item $u(x) = \zeta \frac{C(x+1)}{C(x)} \frac{A(x)}{B(x)}$,
\item $A(x)$ and~$C(x)$ are coprime,
\item $B(x)$ and~$C(x+1)$ are coprime,
\item\label{it:petkovsek-AB}
  $A(x)$ and~$B(x+i)$ are coprime for all $i \geq 0$.
\end{enumerate}
Now, any potential nonzero rational solution~$u$
of the Riccati equation~\eqref{eq:shift-riccati}
leads to the necessary relation
\begin{equation}\label{eq:denom-free-shift-riccati}
\sum_{i=0}^r \ell_i(x) \, \zeta^i \, C(x+i) \, \biggl(\prod_{j=0}^{i-1} A(x+j)\biggr) \biggl(\prod_{j=i}^{r-1} B(x+j)\biggr) = 0 .
\end{equation}
Here, $A(x)$~appears in all the terms of the sum but the one for~$i=0$.
As it is coprime to all forward shifts of~$B(x)$ and to~$C(x)$,
it must divide~$\ell_0(x)$.
Similarly, $B(x+r-1)$~appears in every term but the one for~$i=r$,
and so must divide~$\ell_r(x)$.
This motivates iterating on all pairs of monic divisors of $\ell_0$ and~$\ell_r$.
Given monic $A(x) \divides \ell_0(x)$ and $B(x) \divides \ell_r(x-r+1)$,
where divisibility is meant in~$\pol\bL$,
the leading coefficient of the left-hand side of~\eqref{eq:denom-free-shift-riccati}
is independent of the choice of a monic~$C(x)$.
So this leading coefficient yields
an algebraic equation in~$\zeta$.
Each choice of a solution~$\zeta \in \bL$
turns~\eqref{eq:denom-free-shift-riccati} into an equation
that can be solved for polynomial solutions $C(x)\in\pol\bL$.
The algorithm then gathers and returns all found $(\zeta, A(x), B(x), C(x))$.

Remark that the classical definition of the literature,
quantified over all integers~$i \in \bN$
and reproduced as point~\ref{it:petkovsek-AB} above,
is stronger than needed:
the proof has used the property for $0\leq i \leq r-1$ only.
In the Mahler case,
it will prove important to use the weaker constraint
to get an algorithm.
The reader should also compare point~\ref{it:petkovsek-AB} of the present section
with its analogues in \S\ref{sec:roques} and~\S\ref{sec:bgpf}.

\subsection{Roques's algorithm for order~2}
\label{sec:roques}

Inspired by Hendriks' works
for usual difference equations \parencite{Hendriks-1998-ADD}     
and for $q$-difference equations \parencite{Hendriks-1997-ACS},  
\textcite[\S6.2]{Roques-2018-ARB} recently presented
an analogue of Petkovšek's algorithm for Mahler equations of order~$2$,
that is, equations of the form
\begin{equation}\label{eq:mahler-riccati-2}
\ell_2(x) u(x) u(x^b) + \ell_1(x) u(x) + \ell_0(x) = 0 .
\end{equation}
Roques's original presentation
determines along his algorithm
a suitable extension of~$\bK$        
sufficient to obtain all solutions from~$\ramrat\bKbar$.
Here, we present a variant that computes with an input field~$\bL$,
so as to be consistent with the rest of our text.

After finding a bound~$q$ such that all solutions~$u \in \ramrat\bL$
are in fact in~$\rrat\bL{q}$,
let us introduce a new indeterminate~$t$ for which $x = t^{qb}$,
so as to prove, for any given~$u \in \ramrat\bL$,
the existence
of a nonzero constant~$\zeta \in \nonz{\bL}$
and monic polynomials $A,B,C \in \bL[t]$
satisfying
\begin{enumerate}
\item\label{it:u-of-ABC-in-BGPF}
  $u(x) = \zeta \frac{C(x^{1/q})}{C(x^{1/qb})} \frac{A(x^{1/qb})}{B(x^{1/qb})}$,
  that is, $u(t^{qb}) = \zeta \frac{C(t^b)}{C(t)} \frac{A(t)}{B(t)}$,
\item $A$ and~$C$ are coprime,
\item $B$ and~$M C$ are coprime,
\item\label{it:roques-AB-bounded}
  $A$ and~$M^i B$ are coprime for all $i \in \{0,1\}$,
\item\label{it:roques-C-MC} $C$ and~$M C$ are coprime,
\end{enumerate}
where $M$~acts on~$\bL[t]$ by substituting $t^b$ for~$t$.

Now, any nonzero ramified rational solution~$u$
of the Riccati equation~\eqref{eq:mahler-riccati-2}
leads to the necessary relation
\begin{multline}\label{eq:roques-necessary}
\ell_2(t^{qb}) \zeta^2 C(t^{b^2}) A(t) A(t^b)
\\ \hbox{} + \ell_1(t^{qb}) \zeta C(t^b) A(t) B(t^b)
+ \ell_0(t^{qb}) C(t) B(t) B(t^b) = 0 .
\end{multline}
This time, one finds that
$A(t)$~must divide~$\ell_0(t^{qb})$,
while $B(t^b)$~must divide~$\ell_2(t^{qb})$,
implying that $B(t)$~must divide~$\ell_2(t^q)$.

Roques's approach continues in a way similar in spirit to Petkovšek's,
although technical reasons require to exchange the order of steps.
Roques indeed finds a linear constraint on the degree of~$C$ by putting apart the term with~$\zeta^2$ in~\eqref{eq:roques-necessary}.
Finding the coefficients of~$C$ then amounts to solving a finite linear system.
As a side remark, in Roques's text valuations and degrees are treated interchangeably, and the calculation of the coefficient~$\zeta$ is obscured.
We will clarify this in the next section.

\subsection{A new algorithm for higher-order Mahler equations}
\label{sec:bgpf}

In this section,
we deal with Riccati Mahler equations~\eqref{eq:riccati}
of general order~${r \geq 2}$.
The core of the section describes an algorithm for solving
for (plain) rational functions from~$\rat\bL$.

In order to go beyond order~2,
we observe that
point~\ref{it:roques-AB-bounded} in Roques's definition
requires a coprimality only for~$0 \leq i \leq  r-1 = 1$ (when~$r = 2$),
whereas point~\ref{it:petkovsek-AB} in the shift situation
requires a coprimality for~$0 \leq i$,
although the proof in~\S\ref{sec:petkovsek} for an equation of order~$r$
only uses the cases~$0 \leq i \leq r-1$.
This motivates us
to introduce the following notion of a bounded Gosper--Petkovšek form.
\begin{defi}\label{def:BGP}
Given a rational function $u(x) \in \nonz{\rat\bL}$ and an integer~$r \geq 2$,
a \emph{bounded Gosper--Petkovšek form}
of order~$r$ for~$u(x)$ is a tuple
$(\zeta, A, B, C) \in \nonz{\bL} \times \bL[t]^3$,
with $A$, $B$, $C$ monic polynomials in a new indeterminate~$t$,
\hbox{such that}:
\begin{enumerate}
\item\label{it:gen-def-u}
  $u(t^{b^{r-1}}) = \zeta \frac{C(t^b)}{C(t)} \frac{A(t^{b^{r-1}})}{B(t)}$,
\item\label{it:coprimality-Mr-1A-C}
  $M^{r-1} A$ and~$C$ are coprime,
\item\label{it:coprimality-B-MC}
  $B$ and~$M C$ are coprime,
\item\label{it:bounded-AB}
  $M^i A$ and~$B$ are coprime for all $i \in \{0,\dots,r-1\}$,
\end{enumerate}
where $M$~acts on~$\bL[t]$ by substituting $t^b$ for~$t$.
\end{defi}
By Lemma~\ref{lem:M-Bezout} over~$K=\bL$,
point~\ref{it:bounded-AB} above can be restated equivalently into:
$M^{r-1} A$ and~$M^i B$ are coprime for all $i \in \{0,\dots,r-1\}$.
Also remark that
for a polynomial triple $(A, B, C)$ satisfying our definition for~$r = 2$,
the polynomial triple $(MA, B, C)$ satisfies Roques's definition (with~$q = 1$),
at least provided $\gcd(C, MC) = 1$.
A~constructive proof of the existence
of bounded Gosper--Petkovšek forms
will be provided in~\S\ref{sec:bounded-gpf-computation}.

Now, consider any potential nonzero rational solution~$u$
of the Riccati equation~\eqref{eq:riccati},
represented by one of its bounded Gosper--Petkovšek forms.
Substituting the Gosper--Petkovšek form for~$u$
and $t^{b^{r-1}}$ for~$x$ in~\eqref{eq:riccati} leads
after canceling denominators
to the necessary relation
\begin{equation}\label{eq:mahler-necessary-1}
  \tilde L(t, \zeta M)\, C = 0 ,
\end{equation}
where $\tilde L$~is the operator in~$\bL[t]\langle M \rangle$ defined by
\begin{equation}\label{eq:mahler-necessary-2}
  \tilde L (t, M) =
\sum_{k=0}^r \biggl( M^{r-1}\ell_k(t) \times \prod_{j=0}^{k-1} M^{r-1+j}A(t) \times \prod_{j=k}^{r-1} M^jB(t) \biggr) M^k .
\end{equation}
In this formula, note that beside the polynomials $A$ and~$B$ that are by definition in the ring~$\bL[t]$,
we have noted $\ell_k(t)$ for the result of the substitution
of~$t$ for~$x$ in~$\ell_k$.
The factor $M^{r-1} A$~appears in all the terms
of the expansion of~\eqref{eq:mahler-necessary-1}, except for the one corresponding to~$k=0$,
\[ M^{r-1}\ell_0(t) \times C(t) \times \prod_{j=0}^{r-1} M^jB(t) ; \]
using points \ref{it:coprimality-Mr-1A-C} and~\ref{it:bounded-AB} in Definition~\ref{def:BGP},
$M^{r-1} A$~must divide~$M^{r-1} \ell_0$,
and Lemma~\ref{lem:M-Bezout} implies that $A$~must divide~$\ell_0$.
Similarly, $M^{r-1} B$~appears in all the terms of~\eqref{eq:mahler-necessary-1} , except for the one corresponding to~$k=r$,
\[ M^{r-1}\ell_r(t) \times \zeta^r M^rC(t) \times \prod_{j=0}^{r-1} M^{r-1+j}A(t) ; \]
thus, $M^{r-1} B$~must divide~$M^{r-1} \ell_r$,
and Lemma~\ref{lem:M-Bezout} implies that $B$~must divide~$\ell_r$.

To complete the pairs~$(A,B)$ into candidate tuples $(\zeta, A, B, C)$
delivering rational functions in~$\bL(t)$,
we interpret~\eqref{eq:mahler-necessary-1} as an analogue of~\eqref{eq:single-eqn} in Proposition~\ref{lem:single-lambda-equiv}.
To this end,
we start by introducing
in analogy with the construction of~$\omdr$ in~\S\ref{sec:difference}
a ring whose coefficients are series in~$t^{-1}$, in the form
\begin{equation*}
  \omdr' = \bigoplus_{\lambda \in \nonz{\bKbar}}\, \mylogalt{\lambda}\puiseuxalt\bKbar .
\end{equation*}
Here, the~$\mylogalt{\lambda}$ denote elements of some new family~$(e'_\lambda)_{\lambda\in\nonz\bKbar})$ of generators
satisfying the algebraic relations
$e'_\lambda e'_{\lambda'} = e'_{\lambda\lambda'}$ and $e'_1 = 1$.
The theory of~$\omdr'$ parallels that of~$\omdr$,
including the results developed
in~\S\ref{sec:bounds}
to find the possible
logarithmic parts~$\mylog\lambda$ and valuations of the corresponding Puiseux series in a
solution of~\eqref{eq:linear}
by
using the lower Newton polygon.

\begin{defi}\label{def:upper-newton}
Let $L$ be a general linear Mahler operator.
The upper convex hull of the set of points $(b^k,j) \in \bR^2$ of Definition~\ref{def:lower-newton} is called the \emph{upper Newton polygon} (of~$L$).
In analogy with Definition~\ref{def:lower-char-poly},
we index its edges by decreasing slopes and
denote by~$\xi_j(X)$ the \emph{characteristic polynomial} of its $j$th edge.
Define~$Z(L)$ to be the union of the sets of roots in~$\bKbar$ of the~$\xi_j$,
\begin{equation}\label{eq:def-Zeta}
Z(L) = \{ \zeta \in \bKbar : \exists j, \ \xi_j(\zeta) = 0 \} .
\end{equation}
The $j$th edge will be called \emph{$\zeta$-admissible} (for~$L$) if~$\xi_j(\zeta) = 0$.
\end{defi}

Now, an analogue of Proposition~\ref{lem:single-lambda-equiv} holds,
with the difference that the parameter~$\lambda$ of the new proposition
needs to be the root of some characteristic polynomial
associated with an edge of an upper Newton polygon,
rather than a lower Newton polygon.
Consequently,
the constant~$\zeta$ in~\eqref{eq:mahler-necessary-1}
must be an element of~$Z(\tilde L)$.
In this situation, the degree of~$C$ in~$t$
must be the opposite of the slope of some edge
associated with the upper Newton polygon:
the degree of a solution~$C$ of~\eqref{eq:mahler-necessary-1} in~$t$ is the opposite of the valuation with respect to~$t^{-1}$ of~$C$,
which is given by the lower Newton polygon of~$\tilde L(t^{-1},M)$;
the upper Newton polygon of~$\tilde L(t,M)$ is the symmetric with respect to a horizontal axis of the lower Newton polygon of~$\tilde L(t^{-1},M)$;
edges associated by this symmetry have opposite slopes;
the result follows.

We summarize the previous discussion in the following result.

\begin{prop}\label{prop:bGP-necessary-conds}
Let $(\zeta, A, B, C)$ be a bounded Gosper--Petkovšek form of order~$r$
of any rational solution~$u$ of~\eqref{eq:riccati},
and let $\tilde L$ be defined by~\eqref{eq:mahler-necessary-2}.
Then, necessarily:
\begin{itemize}
\item $\zeta \in Z(\tilde L)$,
\item $A$~divides~$\ell_0(t)$,
\item $B$~divides~$\ell_r(t)$,
\item $C$~is a nonzero polynomial solution of~\eqref{eq:mahler-necessary-1},
  implying $\deg C$~is the opposite of a nonpositive integer slope of an edge
  associated with the upper Newton polygon of~$\tilde L$.
\end{itemize}
\end{prop}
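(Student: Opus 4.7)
The plan is to verify each of the four items in turn, largely by formalizing the computation already sketched in the paragraph that introduces \eqref{eq:mahler-necessary-1} and \eqref{eq:mahler-necessary-2}.

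First I would substitute the bounded Gosper--Petkovšek representation of~$u$ (item~\ref{it:gen-def-u} of Definition~\ref{def:BGP}) into the Riccati equation~\eqref{eq:riccati} after performing the change of variable $x = t^{b^{r-1}}$. Clearing the common denominator, which on the left-hand side at index~$k$ is $C(t)\,\prod_{j=0}^{r-1} M^j B(t)$ times $\prod_{j=0}^{k-1} M^{r-1+j} A(t)$ up to the appropriate numerator factor, and carefully tracking how the Mahler shifts compose ($u\,Mu\cdots M^{k-1}u$ telescopes the factors of~$C$), produces precisely the identity $\tilde L(t,\zeta M)\,C = 0$ with $\tilde L$ as in~\eqref{eq:mahler-necessary-2}. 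This shows that $C$ is a nonzero polynomial solution of~\eqref{eq:mahler-necessary-1}.

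Next, for the divisibility statements I would expand the relation $\tilde L(t,\zeta M)\,C = 0$ term by term. The factor $M^{r-1}A$ appears in every summand except the one of index $k=0$, namely $M^{r-1}\ell_0(t)\,C(t)\,\prod_{j=0}^{r-1} M^j B(t)$. Hence $M^{r-1}A$ divides this remaining term. Using the coprimality of $M^{r-1}A$ with~$C$ (point~\ref{it:coprimality-Mr-1A-C} of Definition~\ref{def:BGP}) and with each $M^j B$ for $0 \leq j \leq r-1$ (point~\ref{it:bounded-AB} combined with Lemma~\ref{lem:M-Bezout}), one concludes that $M^{r-1}A$ divides $M^{r-1}\ell_0$, and a final application of Lemma~\ref{lem:M-Bezout} yields $A \mid \ell_0$. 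The argument for $B \mid \ell_r$ is symmetric: isolate the summand at $k=r$, use points \ref{it:coprimality-B-MC} and~\ref{it:bounded-AB} together with Lemma~\ref{lem:M-Bezout} to deduce that $M^{r-1}B \mid M^{r-1}\ell_r$, and pull back.

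Finally, to handle $\zeta$ and $\deg C$, I would transport the analysis of~\S\ref{sec:bounds} from $\omdr$ to the ring $\omdr'$ of series in $t^{-1}$. Viewing the polynomial solution~$C$ of $\tilde L(t,\zeta M)\,C = 0$ as a Puiseux series in~$t^{-1}$ and writing it as $e'_\zeta \cdot p(t)$ with $p \in \puiseuxalt\bKbar$ just as in Proposition~\ref{lem:single-lambda-equiv} (but with the change of variable $t \mapsto t^{-1}$ which flips the lower Newton polygon of $\tilde L(t^{-1},M)$ into the upper Newton polygon of $\tilde L(t,M)$), the analogue of that proposition forces $\zeta$ to be a root of some characteristic polynomial~$\xi_j$ attached to the upper Newton polygon, that is, $\zeta \in Z(\tilde L)$. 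The analogue of Lemma~\ref{lem:extd-admissibility} then identifies $-\deg C$ with the $t^{-1}$-valuation of~$p$, which must be the slope of a $\zeta$-admissible edge of the upper Newton polygon; since $\deg C \geq 0$, this slope is nonpositive, giving the last claim. The only real subtlety is bookkeeping the variable change $x = t^{b^{r-1}}$ and the resulting Mahler shifts in the telescoping step of the first paragraph; everything afterwards is a direct translation of the structural results already established.
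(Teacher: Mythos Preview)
Your proposal is correct and follows essentially the same route as the paper, which derives the proposition as a summary of the computation preceding it in \S\ref{sec:bgpf}: substitute the bounded Gosper--Petkovšek form into~\eqref{eq:riccati}, clear denominators to obtain~\eqref{eq:mahler-necessary-1}, peel off the divisibility of $A$ and~$B$ via the coprimality conditions and Lemma~\ref{lem:M-Bezout}, and then invoke the $\omdr'$/upper-Newton-polygon analogue of Proposition~\ref{lem:single-lambda-equiv} and Lemma~\ref{lem:extd-admissibility} for $\zeta$ and~$\deg C$. One small slip in your write-up: in the last paragraph you write $C$ as $e'_\zeta\cdot p(t)$, but the correspondence of Proposition~\ref{lem:single-lambda-equiv} goes the other way---it is $y := e'_\zeta\cdot C$ that solves $\tilde L(t,M)\,y=0$ in~$\omdr'$, with $C$ itself playing the role of the Puiseux-series factor~$p$; this is purely a notational inversion and does not affect the argument.
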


\begin{algo}
\inputs{
  A Riccati Mahler equation~\eqref{eq:riccati} with coefficients $\ell_k(x) \in \pol\bK$.
  Some intermediate field~$\bL$, that is, a field satisfying $\bKbar \supseteq \bL \supseteq \bK$.
}
\outputs{
  The set of rational functions $u \in \rat\bL$ that solve~\eqref{eq:riccati}.
}
\caption{\label{algo:BP}%
Rational solutions to a Riccati Mahler equation.
Compare with the efficiency improvements in Algorithm~\ref{algo:IP}.}
\begin{deepenum}[label=(\Alph*)]
\item Set $\cU := \varnothing$.
\item \label{algo:BP:main-loop}
      For each monic factor $A(t) \in \bL[t]$ of~$\ell_0(t)$, for each monic factor $B(t) \in \bL[t]$ of~$\ell_r(t)$ such that $M^iA$ and~$B$ are coprime for $0 \leq i < r$:
      \begin{deepenum}[label=(\arabic*)]
      \item \label{it:compute-L-tilde}
            compute $\tilde L(t,M)$ by~\eqref{eq:mahler-necessary-2},
      \item \label{it:upper-newton-polygon}
            compute the upper Newton polygon of~$\tilde L$, the set~$Z(\tilde L) \cap \bL$ of roots~$\zeta$ in~$\bL$ of the associated characteristic polynomials,
      \item \label{it:loop-over-zeta}
            for each~$\zeta$ in~$Z(\tilde L) \cap \bL$:
            \begin{deepenum}[label=(\alph*)]
            \item \label{it:candidate-degree-C}
                  compute the maximum $\Delta_\zeta$ of the integer values of the opposites of the slopes of the $\zeta$-admissible edges (for~$\tilde L$),
            \item \label{it:unchanged-steps}
                  if $\Delta_\zeta \geq 0$:
                  \begin{deepenum}[label=(\roman*)]
                  \item \label{it:solving-for-C}
                        compute a basis $(C_i)_{1\leq i \leq s}$ of solutions in $\bL[t]_{\leq \Delta_\zeta}$ of the equation
                        $\tilde L(t, \zeta M)\, C = 0$,
                  \item \label{it:parametrizing-for-C}   
                        if $s>0$:
                        \begin{deepenum}[label=(\greek*)]
                        \item \label{it:param-C}
                              set~$C := \sum_{i=1}^s c_i C_i$ for formal parameters $c_i$,
                        \item \label{it:identify-rf}
                              normalize the rational function \\
                              \begin{myequation*}
                              \tilde u(t) := \zeta \frac{C(t^b)}{C(t)} \frac{A(t^{b^{r-1}})}{B(t)} ,
                              \end{myequation*}\\[1ex]
                              which is an element of $\bL(c_1,\dots,c_s)(t^{b^{r-1}})$,
                              so as to identify $u(x) \in \bL(c_1,\dots,c_s)(x)$ such that $u(t^{b^{r-1}}) = \tilde u(t)$,
                        \item augment~$\cU$ with~$u$.
                        \end{deepenum} 
                  \end{deepenum} 
            \end{deepenum} 
      \end{deepenum} 
\item \label{it:algo-return}
      Remove redundant elements from~$\cU$ by the method described before Proposition~\ref{prop:BP-no-loss} and return the resulting~$\cU$.
\end{deepenum} 
\end{algo}

The previous considerations lead to Algorithm~\ref{algo:BP},
whose general structure is the following:
\begin{itemize}
\item a loop over candidates~$(A,B)$ is set up at step~\ref{algo:BP:main-loop} from the \emph{lower} Newton polygon of the input~$L$;
\item candidates~$\zeta$, then degrees for candidates~$C$ at step~\ref{algo:BP:main-loop}\ref{it:loop-over-zeta}\ref{it:candidate-degree-C}, are obtained from the \emph{upper} Newton polygon of the auxiliary operator~$\tilde L$;
\item solving for~$C$ at step~\ref{algo:BP:main-loop}\ref{it:loop-over-zeta}\ref{it:unchanged-steps}\ref{it:solving-for-C} is done by appealing
to our algorithm for polynomial solutions of bounded degree
in \parencite[\S2.6, Algorithm~5]{ChyzakDreyfusDumasMezzarobba-2018-CSL};
\item to avoid redundancy in the output,
  the cleaning step~\ref{it:algo-return} makes sure to return a partition
  by enforcing that no parametrization is included in another.
\end{itemize}

Regarding the last item,
the cleaning step~\ref{it:algo-return} removes redundant elements of~$\cU$ by a simple loop:
while $\cU$~contains two distinct elements $u^{(1)}$ and~$u^{(2)}$
with inclusion of~$u^{(1)}$ in~$u^{(2)}$,
it removes $u^{(1)}$ from~$\cU$.
This is effective provided we have an algorithmic inclusion test available.
To describe one, observe that,
given two parametrized rational functions,
$u^{(1)}$~with parameters $c^{(1)} = (c^{(1)}_1,\dots,c^{(1)}_{s_1})$
and $u^{(2)}$~with parameters $c^{(2)} = (c^{(2)}_1,\dots,c^{(2)}_{s_2})$,
inclusion of~$u^{(1)}$ in~$u^{(2)}$ is only possible if~$s_1 \leq s_2$,
in which case getting rid of denominators in the equation~$u^{(1)} = u^{(2)}$
and equating like powers of~$x$
results in a linear system in~$c^{(2)}$ linearly parametrized by~$c^{(1)}$.
Either this system has no nonzero solution, proving noninclusion,
or it provides a parametrization proving inclusion.
We thus have an algorithm by linear algebra over~$\bL$.

We now prove a first part of the correctness of the algorithm:
no rational solution is lost.

\begin{prop}\label{prop:BP-no-loss}
Algorithm~\ref{algo:BP} computes the set of rational solutions of the Riccati equation~\eqref{eq:riccati}
as a union of sets parametrized by finite-dimensional\/ $\bL$-projective spaces.
\end{prop}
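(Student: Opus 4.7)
The plan is to start from an arbitrary rational solution $u \in \rat\bL$ of~\eqref{eq:riccati} and trace it through the enumeration so as to exhibit the triple $(A,B,\zeta)$ and the coefficient vector $(c_1,\dots,c_s)$ that make the algorithm produce~$u$. The key tools are the bounded Gosper--Petkov\v{s}ek form of Definition~\ref{def:BGP} together with the necessary conditions assembled in Proposition~\ref{prop:bGP-necessary-conds}.

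First I would invoke the existence of a bounded Gosper--Petkov\v{s}ek form of order~$r$ for~$u$, a result I take as a black box here but which will be proved constructively in~\S\ref{sec:bounded-gpf-computation}. This yields a tuple $(\zeta,A,B,C)$ with $\zeta \in \nonz\bL$ and $A$, $B$, $C$ monic in~$\bL[t]$ satisfying the four properties of Definition~\ref{def:BGP}. By Proposition~\ref{prop:bGP-necessary-conds}, $A$~divides~$\ell_0$ and $B$~divides~$\ell_r$, while point~(\ref{it:bounded-AB}) of Definition~\ref{def:BGP} provides the coprimality $\gcd(M^iA,B)=1$ for $0\le i<r$; hence $(A,B)$ is enumerated by the outer loop at step~\ref{algo:BP:main-loop}. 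The auxiliary operator built in step~\ref{it:compute-L-tilde} coincides with the $\tilde L$ of~\eqref{eq:mahler-necessary-2}, and Proposition~\ref{prop:bGP-necessary-conds} again gives $\zeta\in Z(\tilde L)\cap\bL$, so $\zeta$ is visited in the loop of step~\ref{it:loop-over-zeta}.

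Next I would verify that $\deg C$ does not exceed the bound~$\Delta_\zeta$ fixed in step~\ref{it:candidate-degree-C}. The discussion preceding Proposition~\ref{prop:bGP-necessary-conds}, which transfers the Newton polygon analysis of~\S\ref{sec:bounds} to the ring~$\omdr'$ of series in~$t^{-1}$ via the symmetry between the lower Newton polygon of $\tilde L(t^{-1},M)$ and the upper Newton polygon of $\tilde L(t,M)$, ensures that the degree of any nonzero polynomial solution of $\tilde L(t,\zeta M)\,C=0$ is the opposite of the slope of some $\zeta$-admissible edge; in particular $\deg C\le\Delta_\zeta$. Thus $C$ lies in the $\bL$-span of the basis $(C_i)_{1\le i\le s}$ produced at step~\ref{it:solving-for-C}, and writing $C=\sum_i c_i C_i$ yields a point $(c_1{:}\cdots{:}c_s)\in\bP^{s-1}(\bL)$ whose image under the rational parametrization of step~\ref{it:identify-rf} is exactly~$u$, so that $u$ is captured inside the family added to~$\cU$. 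The projective-space structure of that family is immediate, since $u$ depends on~$C$ only through the ratio $C(t^b)/C(t)$ and is therefore invariant under scaling~$C$ by a nonzero element of~$\bL$; the cleaning step~\ref{it:algo-return} only discards families strictly contained in others, so completeness is preserved.

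The main obstacle I anticipate is precisely the existence of a bounded Gosper--Petkov\v{s}ek form in the Mahler setting. In contrast with the classical shift case recalled in~\S\ref{sec:petkovsek}, the infinite-ramp coprimality of $A$ with every forward Mahler shift $M^i B$ cannot be enforced here, and one must instead build a form satisfying only the weakened condition~(\ref{it:bounded-AB}) of Definition~\ref{def:BGP}. Treating this existence result separately in~\S\ref{sec:bounded-gpf-computation} is essential for the present argument, but once granted, the remainder of the proof reduces to the bookkeeping sketched above.
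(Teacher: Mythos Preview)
Your argument for completeness---that every rational solution of~\eqref{eq:riccati} is captured by some triple $(A,B,\zeta)$ and some point of~$\bP^{s-1}(\bL)$---follows the paper's approach exactly, resting on the existence of bounded Gosper--Petkov\v{s}ek forms (to be proved in~\S\ref{sec:bounded-gpf-computation}) and on Proposition~\ref{prop:bGP-necessary-conds}.

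However, you have omitted the converse direction: the statement asserts that the algorithm computes \emph{the} set of rational solutions, not merely a superset, so you must also check that every parametrized $u$ actually added to~$\cU$ solves~\eqref{eq:riccati}. The paper handles this in one sentence: for any $(\zeta,A,B,C)$ with $C$ a nonzero polynomial solution of $\tilde L(t,\zeta M)\,C=0$, expanding~\eqref{eq:mahler-necessary-1} via~\eqref{eq:mahler-necessary-2} and dividing by $C\cdot\prod_{j=0}^{r-1}M^jB$ recovers exactly the left-hand side of~\eqref{eq:riccati} evaluated at $u(t^{b^{r-1}})=\zeta\,C(t^b)A(t^{b^{r-1}})/(C(t)B(t))$, so $u$ is a genuine solution. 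Without this check, your argument does not exclude the possibility that the algorithm outputs spurious candidates for triples $(A,B,\zeta)$ that happen to satisfy the enumeration constraints but are not the Gosper--Petkov\v{s}ek form of any solution.
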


\begin{proof}

Observe that Algorithm~\ref{algo:BP}
iterates on all tuples $(\zeta, A, B, \Gamma(c))$
in $\nonz{\bL} \times \bL[t]^2 \times \bL(c_1,c_2,\dots)[t]$ such that,
for any values of the parameters~$c_i$ in~$\bL$,
the tuple $(\zeta, A, B, C)$ where $C = \Gamma(c)$ satisfies:
the necessary conditions (i), (ii), and~(iii) of Proposition~\ref{prop:bGP-necessary-conds},
point~\ref{it:bounded-AB} of Definition~\ref{def:BGP},
a degree bound on~$C$ that is necessarily satisfied by the polynomial kernel of~$\tilde L(t,\zeta M)$, cf. the necessary condition~(iv) in the same proposition.
The algorithm therefore represents a set that is less constrained that
the set of bounded Gosper--Petkovšek forms of order~$r$ of~$u$
for any rational solution~$u$ of~\eqref{eq:riccati};
in particular, it represents them all, and,
by the existence of bounded Gosper--Petkovšek forms (see~\S\ref{sec:bounded-gpf-computation}),
the algorithm must find any solution~$u$ of~\eqref{eq:riccati}.
Conversely, for any~$u(x)$ element of the output~$\cU$,
obtained by the algorithm from a tuple $(\zeta, A, B, C)$,
expanding~\eqref{eq:mahler-necessary-1} by using~\eqref{eq:mahler-necessary-2},
then dividing  by the relevant product
shows that $u(x)$~solves~\eqref{eq:riccati}.
We have thus proved that Algorithm~\ref{algo:BP} computes all rational solutions.
\end{proof}

For the end of the present section, we call a \emph{block}
the image of a similarity class of nonzero hypergeometric elements
under $y \mapsto M y/y$.
The following theorem states the correctness of Algorithm~\ref{algo:BP}.

\begin{thm}\label{thm:BP}
Algorithm~\ref{algo:BP} computes the set of rational solutions of the Riccati equation~\eqref{eq:riccati}
as a disjoint union of sets bijectively parameterized by finite-dimensional\/ $\bL$-projective spaces.
Equivalently, the output~$\cU$ of Algorithm~\ref{algo:BP} consists
of formal parametrizations of blocks, with exactly one parametrization for each block,
and each such parametrization is a bijection.
\end{thm}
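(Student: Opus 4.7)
The plan is to strengthen Proposition~\ref{prop:BP-no-loss} using the block structure of Theorem~\ref{thm:structure-rational-ramified} and a careful analysis of the cleaning step. Proposition~\ref{prop:BP-no-loss} already provides covering: every rational solution of~\eqref{eq:riccati} appears in some raw parametrized family produced by the main loop of Algorithm~\ref{algo:BP}. What remains is to establish three points: (a)~each raw parametrized family is entirely contained in a single block; (b)~each such family is injectively parametrized by its $\bL$-projective space of parameters; (c)~after the cleaning step~\ref{it:algo-return}, exactly one parametrization survives per block.

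For~(a), I would fix a tuple $(A, B, \zeta)$ for which the main loop produces a nonempty basis $(C_1, \dots, C_s)$, and invoke Hypothesis~\ref{def:propertyP} via Corollary~\ref{cor:DL-1-univ-U} to obtain a hypergeometric element $z \in \omdr_\bL$ satisfying $Mz/z = \zeta\,A(t^{b^{r-1}})/B(t)$. Setting $C_c := \sum_i c_i C_i$ and $y_c := C_c \cdot z$, a direct computation yields $My_c/y_c = u_c$. As $c$ and~$c'$ vary in the parameter space, the quotients $y_c/y_{c'} = C_c/C_{c'}$ lie in $\rat\bL$, so all $y_c$ lie in a single $\rat\bL$-similarity class of hypergeometric solutions of~\eqref{eq:linear}; by definition the images $u_c$ therefore lie in the same block.

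For~(b), an equality $u_c = u_{c'}$ rewrites as $M(C_c/C_{c'}) = C_c/C_{c'}$ in $\bL(t)$; since the fixed field of $M$ acting on $\bL(t)$ (substitution $t \mapsto t^b$) is $\bL$, the ratio $C_c/C_{c'}$ is a scalar, and the linear independence of $(C_1, \dots, C_s)$ then forces proportionality of the coefficient tuples, i.e., $(c_1:\cdots:c_s) = (c'_1:\cdots:c'_s)$ in $\bP^{s-1}(\bL)$.

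The main obstacle is~(c). For each block $\ricsol_{\bL,\lambda,g}$ of Theorem~\ref{thm:structure-rational-ramified}, I would show that every solution in the block admits a bounded Gosper--Petkovšek form sharing a single triple $(A, B, \zeta)$, with only the polynomial $C$ varying through an $s$-dimensional $\bL$-subspace mirroring the basis $(p_1, \dots, p_s)$ of Proposition~\ref{prop:rational-ramified-hypergeometric}. This would ensure that, at the iteration for this triple in the main loop, the basis $(C_1, \dots, C_s)$ spans exactly the $C$-data for the block, so that the resulting raw parametrization's image is the \emph{entire} block. By part~(a), any other raw parametrization meeting this block is entirely contained in it, and by part~(b) together with the bijective block parametrization of Theorem~\ref{thm:structure-rational-ramified} it is then strictly contained in the full-block parametrization, hence removed by the cleaning loop. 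The crux of the argument is thus the ``essential uniqueness'' of the data $(A, B, \zeta)$ of a bounded Gosper--Petkovšek form across a whole block, which parallels the uniqueness of the tuple $(\lambda, g, (p_j))$ in Proposition~\ref{prop:rational-ramified-hypergeometric} and relies on the existence and normalization properties of bounded Gosper--Petkovšek forms developed in~\S\ref{sec:bounded-gpf-computation}.
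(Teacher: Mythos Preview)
Your parts~(a) and~(b) are correct and essentially coincide with the paper's argument: the paper fixes a nonzero solution~$y_0$ of $My_0 = \zeta\,M^{r-1}A/B\cdot y_0$ and observes that the parametrized solutions of $My = U(\gamma)y$ form the $\bL$-vector space $\fG = C(\bL^s)\,y_0$, which is contained in a single similarity class since $C$~is a polynomial; injectivity on~$\bP^{s-1}(\bL)$ follows because the~$C_j$ are $\bL$-linearly independent.

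The gap is in~(c). Your plan hinges on the claim that every element of a block admits a bounded Gosper--Petkov\v{s}ek form with a \emph{common} triple $(A,B,\zeta)$, but you do not prove this, and it does not follow from anything in~\S\ref{sec:bounded-gpf-computation}. Concretely, if $u_0$ in the block has GP form $(\zeta,A,B,C_0)$ and $u_j = (Mr_j/r_j)\,u_0$ with $r_j = y_j/y_0 \in \rat\bL$, then matching the GP shape forces $C_j = r_j(t^{b^{r-1}})\,C_0(t)$, which is a polynomial only if $r_j$ is a polynomial. This would require some $y_0 \in \fH$ with $y_j/y_0 \in \pol\bL$ for \emph{all} basis elements~$y_j$, i.e., a linear combination $\sum c_i p_i$ (in the notation of Proposition~\ref{prop:rational-ramified-hypergeometric}) that is a monomial---and such a combination need not exist (try $p_1 = 1+x$, $p_2 = 1+x^2$). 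Remark~\ref{rem:GPforms-not-unique} already warns that GP forms are not unique even for a single~$u$, so ``essential uniqueness across a block'' is a genuinely nontrivial assertion, not a normalization.

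The paper circumvents this entirely. By Proposition~\ref{prop:BP-no-loss} and your part~(a), each finite-dimensional space~$\fH$ is covered by finitely many subspaces~$\fG$ coming from the raw parametrizations. The paper then uses the elementary fact that a vector space over an infinite field cannot be a finite union of proper subspaces, proved via the rational curve $t \mapsto (1,t,\dots,t^{d-1})$: each proper~$\fG$ lies in a hyperplane, which the curve meets in only finitely many points, contradicting the covering. Hence some~$\fG$ equals~$\fH$, and the cleaning step then leaves exactly one such~$\fG$ per block. This argument is short and requires no structural control over GP forms; your constructive route, even if it can be completed, would amount to reproving this covering fact by hand for the particular subspaces produced by the algorithm.
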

\begin{proof}
Each element~$U(c) = (c_1,\dots,c_s)$ of~$\cU$ can be interpreted as a parametrization
\begin{equation}\label{eq:BP}
\gamma = (\gamma_1:\ldots:\gamma_s) \in \bP^{s-1}(\bL)
\mapsto
U(\gamma) = \zeta \frac{MC}{C} \frac{M^{r-1}A}{B} \in \rat\bL .
\end{equation}
Fix $y_0$, a nonzero solution of~$My_0 = \zeta \frac{M^{r-1}A}{B} y_0$.
The solutions of $My = U(\gamma)y$ when $\gamma$~ranges in~$\nonz{(\bL^s)}$
form a vector space~$\fG$ that is exactly $C(\bL^s)y_0$.
Because $C$~is a polynomial in~$x$,
the set~$\nonz{\fG}$ is included in a similarity class.

For each similarity class~$\nonz{\fH}$,
the finite-dimensional space~$\fH$
is by Proposition~\ref{prop:BP-no-loss}
covered as a finite union of vector spaces~$\fG$ obtained from elements~$U\in\cU$.
Let $d$~denote the dimension of~$\fH$.
Assume that no~$\fG$ has dimension~$d$.
Then, each~$\fG$ is in some hyperplane defined by some nonzero linear form~$\phi$.
Fix a coordinate system of~$\fH$ and
let $S$ be the parametrized curve $t \mapsto (1,t,\dots,t^{d-1})$,
whose image is included in~$\fH$.
The curve can only meet a given~$\fG$ at finitely many intersections,
provided by the zeros of the polynomial $\phi(S(t))$.
Because the curve has infinitely many points,
this contradicts that $\fH$~is covered.
So at least one of the~$\fG$ is equal to~$\fH$.

By the absence of redundancy enforced at step~\ref{it:algo-return} in the algorithm,
no two distinct elements of~$\cU$ can produce vector spaces $\fG_1$ and~$\fG_2$
with $\fG_1 \subseteq \fG_2$.
So any vector space~$\fH$ of a similarity class~$\nonz{\fH}$
is obtained exactly once as a~$\fG$,
and only such~$\fH$ are obtained.
The result for blocks then follows.

Because the algorithm constructs~$C$ in~\eqref{eq:BP}
as a formal sum $C = \sum_{j=1}^s c_j C_j$
for $\bL$-linearly independent polynomials~$C_j \in \pol\bL$,
the~$C_jy_0$ form an $\bL$-basis of~$C(\bL^s)y_0$.
The implied bijection from~$\bL^s$ to the suitable~$\fH$
induces a bijection from~$\bP^{s-1}(\bL)$ to the block associated with~$\nonz{\fH}$.
\end{proof}

\begin{cor}\label{cor:riccati-to-hypergeom}
The set of\/ $\rat\bL$-hypergeometric solutions of the Mahler equation~\eqref{eq:linear}
can be described as a union of\/ $\bL$-vector spaces~$\fH$ in direct sum
pa\-ram\-e\-triz\-ed by the output~$\cU$ as follows:
for given~$U \in \cU$, the vector space~$\fH$ is spanned by a basis $(y_1,\dots,y_s)$
where each~$y_i$ is any nonzero solution of the equation $My_i = U(e_i) y_i$
for the $i$th element of the canonical basis.
\end{cor}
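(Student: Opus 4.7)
The plan is to combine Theorem~\ref{thm:BP} with the general structural result of Theorem~\ref{thm:structure-main}. First, work inside $\dring = \omdr_\bL$, which by Corollary~\ref{cor:DL-1-univ-U} satisfies Hypothesis~\ref{def:propertyP} over $F = \rat\bL$. Theorem~\ref{thm:structure-main}(\ref{it:vect-H}) then partitions the set of nonzero $\rat\bL$-hypergeometric solutions of~\eqref{eq:linear} in~$\dring$ into $\rat\bL$-similarity classes~$\nonz{\fH}$, with each~$\fH$ an $\bL$-vector space and all the~$\fH$'s in direct sum. This already delivers the claimed direct-sum description of the solution set.

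Next, Theorem~\ref{thm:BP} gives a bijection between the elements of~$\cU$ and the set of blocks $\{My/y : y \in \nonz{\fH}\}$, which by Theorem~\ref{thm:structure-main}(\ref{it:param-H}) corresponds bijectively to the set of similarity classes themselves. Fix $U \in \cU$ and write it as $U(c) = \zeta\,\frac{MC}{C}\,\frac{M^{r-1}A}{B}$ with $C = \sum_{i=1}^s c_i C_i$ and with the polynomials~$C_1, \dots, C_s$ linearly independent over~$\bL$, as produced by Algorithm~\ref{algo:BP}. By Hypothesis~\ref{def:propertyP}(\ref{it:universal-1}), pick a nonzero $y_0 \in \dring$ satisfying $My_0 = \zeta\,\frac{M^{r-1}A}{B}\,y_0$; this $y_0$ is invertible in~$\dring$ by Lemma~\ref{lem:invertible-hypergeom}. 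The proof of Theorem~\ref{thm:BP} already shows that the similarity class~$\fH$ corresponding to~$U$ equals $C(\bL^s)\,y_0$, whence $(C_i\,y_0)_{1 \leq i \leq s}$ is an $\bL$-basis of~$\fH$.

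A direct computation yields $M(C_i\,y_0) = (MC_i)(My_0) = \zeta\,\frac{MC_i}{C_i}\,\frac{M^{r-1}A}{B}\cdot C_i\,y_0 = U(e_i)\,(C_i\,y_0)$, so each~$C_i\,y_0$ is a nonzero solution of $My = U(e_i)\,y$. Conversely, for any nonzero $y_i \in \dring$ with $My_i = U(e_i)\,y_i$, the ratio $y_i/(C_i\,y_0)$ is $M$-invariant in~$\dring$, and Hypothesis~\ref{def:propertyP}(\ref{it:universal-constant}) forces it into~$\bL^*$. Therefore each~$y_i$ is an $\bL^*$-multiple of~$C_i\,y_0$, and $(y_1, \dots, y_s)$ is again an $\bL$-basis of~$\fH$.

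The main subtlety is the uniqueness-up-to-scalar of solutions of the first-order hypergeometric equations $My = U(e_i)\,y$, which is what licenses the freedom to pick \emph{any} nonzero~$y_i$; this crucially relies on the constant-field property in Hypothesis~\ref{def:propertyP}(\ref{it:universal-constant}). The remainder of the argument is direct bookkeeping built from Theorem~\ref{thm:BP} and Theorem~\ref{thm:structure-main}.
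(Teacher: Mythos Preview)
Your proof is correct and follows the same approach as the paper, which simply refers back to the last paragraph of the proof of Theorem~\ref{thm:BP} (where the basis $(C_jy_0)$ of~$\fH$ is exhibited). Your version is more explicit than the paper's in justifying the ``any nonzero solution'' clause via the constant-field property of Hypothesis~\ref{def:propertyP}(\ref{it:universal-constant}), which the paper leaves implicit.
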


\begin{proof}
The statement is a direct consequence of the last paragraph
in the proof of Theorem~\ref{thm:BP}.
\end{proof}

\begin{rem}[efficiency of Algorithm~\ref{algo:BP}]
The double loop over $A$ and~$B$
induces an exponential behavior,
making this basic algorithm inefficient.
We will discuss several pruning strategies in~\S\ref{sec:efficiency-improvements}.
Furthermore, the algorithm considers tuples~$(\zeta,A,B,C)$ that provide solutions
but are not bounded Gosper--Petkovšek forms.
These tuples are redundant because the algorithm also produces the same
solutions in bounded Gosper--Petkovšek form.
From the point of view of efficiency,
the problem is not in rejecting tuples that are not bounded Gosper--Petkovšek forms,
but in not computing too many (redundant) candidates in the first place.
The necessity to determine the polynomials~$C$ late in the algorithm
makes it impossible to consider only bounded Gosper--Petkovšek forms,
and is the main cause for getting repeated solutions in the course of the algorithm
(see also the Remark~\ref{rem:GPforms-not-unique} below).
\end{rem}

\begin{rem}[solving large linear systems]
When searching for polynomial solutions of the auxiliary equations
$\tilde L(t,\zeta M)\, C = 0$,
it is crucial for performance to use the fast algorithms
introduced in~\parencite{ChyzakDreyfusDumasMezzarobba-2018-CSL}.
We take the opportunity to clarify a minor point of confusion in that work.
In~\parencite[Algorithm~2]{ChyzakDreyfusDumasMezzarobba-2018-CSL},
we appeal to the linear solving algorithm of~\textcite{IbarraMoranHui-1982-GFL}
for computing the kernel of an $m \times n$ matrix~$A$ in $O(m^{\omega-1}n)$ field
operations.
This algorithm is based on computing an LSP decomposition of~$A$, that is,
a decomposition as a product of a lower square matrix~$L$, a semi-upper
triangular matrix~$S$, and a permutation matrix~$P$.
Strictly speaking, however, this approach is only valid when~$m \leq n$
(fewer rows than columns).
As our need is for an $m\times n$ matrix~$A$ with~$m \geq n$ (more rows than columns),
a workaround is to apply the LSP decomposition algorithm to the transpose~$A^T$,
thus obtaining a PSU decomposition of~$A$,
that is, as a permutation matrix~$P$, a semi-lower triangular matrix~$S$, and an upper square matrix~$U$.
To solve for the (right) kernel, we observe $\ker A = U^{-1} (\ker S)$.
As $S$~reduces to a lower triangular matrix with nonzero diagonal elements
when the zero columns are deleted,
finding its kernel is immediate.
Inverting~$U$ takes~$O(n^\omega)$ operations.
So~$\ker A$ is obtained in $O(m/n)$ square matrix products,
hence in complexity $O(mn^{\omega-1})$.
Taking the rank~$r$ of~$A$ into account and refining the analysis
yields the complexity $O(mn r^{\omega - 2})$ announced
in~\parencite[Prop.~2.14]{ChyzakDreyfusDumasMezzarobba-2018-CSL}.
\end{rem}

To compute all \emph{ramified} rational solutions of~\eqref{eq:riccati},
we now propose Algorithm~\ref{algo:ramP},
which is a simple variant of Algorithm~\ref{algo:BP}.
By the property of~$q_\bL$ to be a uniform bound
on the ramification order of ramified rational solutions,
Algorithm~\ref{algo:ramP} is also correct, in the sense that it satisfies
an analogue of Theorem~\ref{thm:BP},
where rational solutions are replaced by ramified rational solutions.
An obvious adaptation of Corollary~\ref{cor:riccati-to-hypergeom}
to $\ramrat\bL$-hypergeometric solutions also holds.

\begin{algo}
\inputs{
  A Riccati Mahler equation~\eqref{eq:riccati} with coefficients $\ell_k(x) \in \pol\bK$.
  Some intermediate field~$\bL$, that is, a field satisfying $\bKbar \supseteq \bL \supseteq \bK$.
}
\outputs{
  The set of ramified rational functions $u \in \ramrat\bL$ that solve~\eqref{eq:riccati}.
}
\caption{\label{algo:ramP}%
Ramified rational solutions to a Riccati Mahler equation.}
\begin{deepenum}[label=(\Alph*)]
\item Compute the lower Newton polygon of $L := \sum_{k=0}^r \ell_k(x) M^k$, the associated characteristic polynomials, the set $\Lambda\cap\bL$, and the ramification bound~$q_\bL$ by~\eqref{eq:q-L}.
\item Call Algorithm~\ref{algo:BP} for computing the rational solutions of~$L(x^{q_\bL}, M)$ in~$\rat\bL$.
\item Substitute $x^{1/q_\bL}$ for~$x$ in the obtained solutions and return the resulting set.
\end{deepenum}
\end{algo}

\subsection{Existence and computation of bounded Gosper--Petkovšek forms}
\label{sec:bounded-gpf-computation}

The following lemma proves the existence of bounded Gosper--Petkovšek forms at all orders,
and its proof provides an algorithm for putting a rational function
in bounded Gosper--Petkovšek form of a given order.

\begin{lem}\label{lem:exists-bgpf}
Given any two coprime monic polynomials $P$ and~$Q$ in~$\pol\bK$,
define
a sequence of triples of polynomials given for all~$k \in \nonz{\bN}$ by
\begin{align}
\label{eq:def-ABC-1}
(A_1, B_1, C_1) &= (P, Q, 1) \\
\label{eq:def-ABC-k}
(A_{k+1}, B_{k+1}, C_{k+1}) &= \left(\frac{A_k}{G_k}, \frac{M B_k}{G_k}, M C_k \times (M^0 G_k \dotsm M^{k-1} G_k)\right) ,
\end{align}
where $G_k = \gcd(A_k, MB_k)$.
Then, for all~$k \in \nonz{\bN}$,
$(1, A_k, B_k, C_k)$ is a bounded Gosper--Petkovšek form of order~$k$
for the rational function~$P/Q$.
Additionally, there exists $k \in \nonz{\bN}$ satisfying
\begin{equation}\label{eq:stationary-ABC}
A_{k+i} = A_k , \quad B_{k+i} = M^i B_k , \quad C_{k+i} = M^i C_k ,
\qquad\text{for all~$i \in \bN$.}
\end{equation}
\end{lem}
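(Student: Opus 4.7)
The plan is to establish the first assertion by induction on $k$, and then derive the stationarity relations~(\ref{eq:stationary-ABC}) from a simple degree argument.

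For $k=1$, the tuple $(1, P, Q, 1)$ satisfies the defining identity trivially and the three coprimality conditions reduce to $\gcd(P, Q) = 1$. For the inductive step, I apply $M$ (substitution $t \leftarrow t^b$) to the order-$k$ defining identity to obtain
\[
  u(t^{b^k}) = \frac{M^2 C_k}{M C_k} \cdot \frac{A_k(t^{b^k})}{M B_k},
\]
and then split $A_k = G_k A_{k+1}$ and $M B_k = G_k B_{k+1}$. Substituting $t^{b^k}$ in $A_k$ produces a factor of $M^k G_k$; collecting terms, the product $\frac{M^2 C_k}{M C_k} \cdot \frac{M^k G_k}{G_k}$ simplifies to $\frac{M C_{k+1}}{C_{k+1}}$ by the very definition of $C_{k+1}$, giving the order-$(k+1)$ defining identity.

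The coprimality conditions of order $k+1$ are then checked in turn, the common tool being Lemma~\ref{lem:M-Bezout}. The pair $(A_{k+1}, B_{k+1}) = (A_k/G_k, M B_k/G_k)$ is coprime by the very definition of $G_k$ as the gcd of $A_k$ and $M B_k$. For $\gcd(M^i A_{k+1}, B_{k+1}) = 1$ with $1 \le i \le k$, I use that $M^i A_{k+1}$ divides $M^i A_k$, that $B_{k+1}$ divides $M B_k$, and combine the order-$k$ coprimality $\gcd(M^{i-1} A_k, B_k) = 1$ with its $M$-shifted version $\gcd(M^i A_k, M B_k) = 1$ via Lemma~\ref{lem:M-Bezout}. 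For the coprimalities involving $C_{k+1} = M C_k \cdot \prod_{j=0}^{k-1} M^j G_k$, I decompose over the individual factors, using the order-$k$ conditions on $C_k$ together with the divisibilities $M^j G_k \mid M^j A_k$ and $M^j G_k \mid M^{j+1} B_k$ inherited from $G_k \mid A_k$ and $G_k \mid M B_k$.

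For the stationarity, the sequence $(\deg A_k)_{k \ge 1}$ is nonincreasing (since $A_{k+1} = A_k/G_k$) and bounded below by zero, so it stabilizes from some index $k_0$ onwards. Beyond $k_0$, $\deg G_k = 0$, and the monicity of $G_k$ forces $G_k = 1$. The recurrence then collapses to $A_{k+1} = A_k$, $B_{k+1} = M B_k$, $C_{k+1} = M C_k$ for $k \ge k_0$ (the correcting product $\prod M^j G_k$ being trivial), and iterating yields~(\ref{eq:stationary-ABC}). I expect the main obstacle to be the bookkeeping of coprimalities in the inductive step, particularly those involving $C_{k+1}$, where one must correlate the precise range of $M$-shifts appearing in the factors of $C_{k+1}$ with the range of coprimalities given by the order-$k$ hypothesis, repeatedly invoking Lemma~\ref{lem:M-Bezout} to move shifts from one side of a gcd to the other.
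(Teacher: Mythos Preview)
Your proposal is correct and follows essentially the same approach as the paper: induction on~$k$, applying~$M$ to the defining identity to step from order~$k$ to order~$k+1$, verifying the coprimality conditions factor by factor via Lemma~\ref{lem:M-Bezout}, and concluding stationarity from the nonincreasing sequence $(\deg A_k)$. The paper organizes the coprimality bookkeeping slightly differently (first shifting all order-$k$ conditions by~$M$ at once, then extracting the needed sub-gcds), but the substance is identical.
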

\begin{proof}
The proof is by induction on~$k$.
The case~$k=1$ is immediate by~\eqref{eq:def-ABC-1}.
For some~$k \in \nonz{\bN}$, assume
that $(1, A_k, B_k, C_k)$ is a bounded Gosper--Petkovšek form of order~$k$
for the rational function~$P/Q$
and that \eqref{eq:def-ABC-k}~holds.
This provides the formulas:
\begin{align}
\label{eq:bgpf-ABC}
M^{k-1} \frac{P}{Q} &= \frac{M C_k}{C_k} \frac{M^{k-1} A_k}{B_k} , \\
\label{eq:bgpf-coprime-MkA-C}
\gcd(M^{k-1} A_k, C_k) &= 1 , \\
\label{eq:bgpf-coprime-B-MC}
\gcd(B_k, M C_k) &= 1 , \\
\label{eq:bgpf-coprime-MiA-B}
\gcd(M^i A_k, B_k) &= 1 \quad\text{for all}\quad 0 \leq i < k .
\end{align}
Then, first observe
\begin{equation*}
\frac{M C_{k+1}}{C_{k+1}} = \frac{M^2 C_k}{M C_k} \frac{M^k G_k}{G_k}
\end{equation*}
and apply~$M$ to~\eqref{eq:bgpf-ABC}, so that
\begin{equation*}
M^k \frac{P}{Q} = \frac{M^2 C_k}{M C_k} \frac{M^k (A_{k+1} G_k)}{B_{k+1} G_k}
  = \frac{M C_{k+1}}{C_{k+1}} \frac{M^k A_{k+1}}{B_{k+1}} .
\end{equation*}
Applying Lemma~\ref{lem:M-Bezout} with $K=\bK$ and~$j=1$ to~(\ref{eq:bgpf-coprime-MkA-C}--\ref{eq:bgpf-coprime-MiA-B}) implies
\begin{gather*}
\gcd(M^{i+1} (A_{k+1} G_k), B_{k+1} G_k) = 1 \quad\text{for all}\quad 0 \leq i < k , \\
\gcd(M^k (A_{k+1} G_k), M C_k) = \gcd(B_{k+1} G_k, M^2 C_k) = 1 ,
\end{gather*}
so that in particular
\begin{gather}
\label{eq:several-gcd}
\gcd(M^{i+1} A_{k+1}, B_{k+1}) = \gcd(M^{i+1} A_{k+1}, G_k) = \gcd(M^{i+1} G_k, B_{k+1}) = 1 , \\
\label{eq:several-more-gcd}
\gcd(M^k A_{k+1}, M C_k) = \gcd(B_{k+1}, M^2 C_k) = 1 .
\end{gather}
The construction~\eqref{eq:def-ABC-k} ensures $\gcd(M^0 A_{k+1}, B_{k+1}) = 1$,
so combining with the left gcd in~\eqref{eq:several-gcd}
shows~\eqref{eq:bgpf-coprime-MiA-B} at~$k+1$.
Applying Lemma~\ref{lem:M-Bezout} with~$j=k-i-1$ to the middle gcd in~\eqref{eq:several-gcd} next yields
\begin{equation*}
\gcd(M^k A_{k+1}, M^{k-i-1} G_k) = 1 .
\end{equation*}
Considering those gcds for~$0 \leq i < k$ as well as the left gcd in~\eqref{eq:several-more-gcd}
proves~\eqref{eq:bgpf-coprime-MkA-C} at~$k+1$.
Combining the right gcd in~\eqref{eq:several-gcd} for $0 \leq i < k$
together with the right gcd in~\eqref{eq:several-more-gcd}
proves~\eqref{eq:bgpf-coprime-B-MC} at~$k+1$.
We have obtained
that $(1, A_{k+1}, B_{k+1}, C_{k+1})$ is a bounded Gosper--Petkovšek form of order~$k+1$
for the rational function~$P/Q$.

Last, in view of~\eqref{eq:def-ABC-k},
the existence of~$k$ satisfying~\eqref{eq:stationary-ABC}
is equivalent
to the existence of~$k$ such that $G_{k+i} = 1$ for all~$i \in \bN$.
This~$k$ exists because $\deg A_k$~cannot decrease indefinitely.
\end{proof}

\begin{rem}\label{rem:GPforms-not-unique}
Bounded Gosper--Petkovšek forms are not unique,
even for a given order.
An example with~$b = 2$ is given for order~$r = 3$
by $(\zeta, A, B, C) = (1, x+1, 1, 1)$
and $(\zeta', A', B', C') = (1, 1, 1, x^4-1)$,
where the first form is computed
by the direct use of the recurrence in Lemma~\ref{lem:exists-bgpf}.
\end{rem}

\begin{rem}
Point~\ref{it:roques-C-MC} in our presentation of Roques's definition
(see~\S\ref{sec:roques})
is not enforced in our definition of bounded Gosper--Petkovšek forms.
(Also compare with the constraints on $r$, $s$, and~$t$ in \parencite[\S6.2, top of page~345]{Roques-2018-ARB}.)
The mere construction of~$C_{k+1}$ in~\eqref{eq:def-ABC-k} makes
the coprimality between $C$ and~$MC$ generally impossible,
at least for~$r\geq3$.
\end{rem}

\section{A more practical algorithm}
\label{sec:efficiency-improvements}

As experiments show
(see~\S\ref{sec:discussion} and especially the timings in Table~\ref{tab:Basic-vs-Improved-GP}),
a direct implementation of Algorithm~\ref{algo:BP}
is inefficient because of the large number of pairs of divisors~$(A,B)$
it has to consider.
In the present section, we describe several ways to mitigate this issue.
The results are rather technical.
They are not used in the rest of the article,
apart from the fact that the resulting improved variant algorithm
is tested beside our other algorithms in~\S\ref{sec:impl-and-benchmark}.

\subsection{Ensuring coprimality}
\label{sec:cartesian-prod}

Given a polynomial~$f \in \bL[t]$,
let $\irred(f)$ denote the set of its \mif factors
and $\mult f p$ denote the multiplicity of a \mif polynomial~$p$ in~$f$.
By representing a monic divisor of~$\ell_0$ by the family $(\alpha_p)_{p\in\irred (\ell_0)}$ of exponents
in its  factorization $\prod_{p\in\irred (\ell_0)} p^{\alpha_p}$ into \mif factors,
the set of such divisors can be viewed as the Cartesian product
\[
  \cA = \prod_{p\in\irred (\ell_0)} \intinv{0}{\mult{\ell_0}p} ,
\]
where $\intinv{a}{b}$~denotes the integer interval~$\{a,\dots,b\}$.
In the same way, the set of monic divisors of~$\ell_r$ is represented by
\[
  \cB = \prod_{q\in\irred (\ell_r)} \intinv{0}{\mult{\ell_r}q} .
\]
So, the main loop~\ref{algo:BP:main-loop} in Algorithm~\ref{algo:BP}
can be viewed as parametrized by a pair~$(\alpha,\beta)$ in the product~$\cA\times\cB$.
However, we are  only interested in pairs~$(A,B)$
that satisfy $\gcd(M^s A, B) = 1$ for all $0 \leq s < r$,
and the basic algorithm has to test all those gcds
in the double loop over $A$ and~$B$.

To explain how to prune $\cA\times\cB$,
let us imagine that $\ell_r$ has a \mif factor~$q$ with multiplicity~$b > 0$
and that $\ell_0$ has a \mif factor~$p$ with multiplicity~$a > 0$
such that $q$~is a factor of some~$M^s p$, $0 \leq s < r$.
In such a situation, we say that the factor~$p$ is \emph{forbidden} (in~$A$) by the factor~$q$ (of~$B$).
When a pair~$(\alpha,\beta)$ of tuples is in~$\cA\times\cB$,
the pair~$(\alpha_p,\beta_q)$ of integers lies in the Cartesian product $\intinv{0}{a} \times \intinv{0}{b}$.
Nonetheless,
only the parts $\intinv{0}{a} \times \{0\}$ and $\{0\} \times \intinv{1}{b}$
have to be considered,
as $(\alpha_p,\beta_q)$ outside of these parts violate the coprimality condition.
The previous considerations are independent of the other coordinates $\alpha_{p'}$ and~$\beta_{q'}$,
so only a fraction $(a + b + 1)/((a + 1)(b + 1))$ of the whole product~$\cA\times\cB$ is useful.
Taking several pairs~$(p,q)$ into account, we expect only a small fraction of~$\cA\times\cB$ to remain.

We now make this observation algorithmic.
Given a \mif factor~$q$ of~$\ell_r$,
we denote by~$\forbiddenby(q)$ the set of \mif factors~$p$ of~$\ell_0$ forbidden by~$q$.
Let $\cR$ be the subset of~$\irred(\ell_r)$ consisting
of those factors~$q$ of~$\ell_r$ such that $\forbiddenby(q)$~is nonempty.
Given a~$B$, a \mif factor~$q$ of~$B$
restricts the choice of the useful~$A$ if it is in~$\cR$,
and places no restriction on~$A$ if it is not in~$\cR$.
Next, for a subset~$\pi \subseteq \cR$,
let $\forbiddenby(\pi)$ denote $\bigcup_{q\in\pi}\forbiddenby(q)$.
The set of useful pairs~$(A,B)$ can now be seen to be
the disjoint union over the subsets $\pi \subseteq \cR$ of the Cartesian products
\begin{equation*}
\cC_\pi = \prod_{p\in \irred(\ell_0)} \cA(p,\pi)
   \times \prod_{q\in \irred(\ell_r)} \cB(\pi,q) ,
\end{equation*}
where
\begin{equation*}
\cA(p,\pi) = \begin{cases}
\intinv{0}{\mult{\ell_0}{p}} & \text{if $p\not\in\forbiddenby(\pi)$}, \\
\{0\} & \text{if $p\in\forbiddenby(\pi)$},
\end{cases}
\qquad\!
\cB(\pi,q) = \begin{cases}
\intinv{0}{\mult{\ell_r}{q}} & \text{if $q\not\in\cR$}, \\
\intinv{1}{\mult{\ell_r}{q}} & \text{if $q\in\pi$}, \\
\{0\} & \text{if $q\in\cR\setminus\pi$}.
\end{cases}
\end{equation*}
Rather than precomputing the $\cA(p,\pi)$ and~$\cB(\pi,q)$ explicitly,
in our algorithm
we discard useless pairs on the fly.

In the following, we call \emph{factored representation} of a polynomial
its representation as a power product of \mif factors.
An optimization of Algorithm~\ref{algo:BP}
is obtained by just changing the iteration of loop~\ref{algo:BP:main-loop}
into
\begin{itemize}
\item[\ref{algo:IP-B-loop}]
      For all $B := \prod_{q \in \irred(\ell_r)} q^{\beta_q}$ such that $0 \leq \beta_q \leq \mult{\ell_r}{q}$ for all~$q \in \irred(\ell_r)$:
      \begin{enumerate}[label=(\alph*), start=21]
      \item \label{algo:prose-IP-forbidden-by-actual-factors}
            let $F$ be the union of the~$\forbiddenby(q)$ over all~$q \in \irred(\ell_r)$ such that~$\beta_q>0$,
      \item \label{algo:prose-IP-prohibit-forbidden-factors}
            for $p \in \irred(\ell_0)$, set~$a_p$ to~$0$ if~$p\in F$, to~$\mult{\ell_0}{p}$ otherwise,
      \item for all $A := \prod_{p \in \irred(\ell_0)} p^{\alpha_p}$ such that $0 \leq \alpha_q \leq a_p$ for all~$p \in \irred(\ell_0)$,
            \begin{enumerate}[label=(\arabic*)]
            \item compute $\tilde L(t,M)$ by~\eqref{eq:mahler-necessary-2}, \\ \dots
            \end{enumerate}
      \end{enumerate}
\end{itemize}
\noindent
Here,
by the definition of~$\cB(\pi,q)$,
those~$q$ for which~$\beta_q>0$ are either in~$\pi$ or in~$\irred(\ell_r) \setminus \cR$.
Therefore, the set~$F$ computed at step~\ref{algo:prose-IP-forbidden-by-actual-factors}
is~$\forbiddenby(\pi)$,
because $\forbiddenby(q) = \varnothing$ if~$q \not\in \cR$.
The integer~$a_p$ defined at step~\ref{algo:prose-IP-prohibit-forbidden-factors}
reflects the definition of~$\cA(p,\pi)$.

Additionally, the set~$\forbiddenby(q)$ for a given~$q$ can be computed efficiently
by using the Gräffe operator~$G$ defined by
$Gf = \operatorname{Res}_y (y^b - x, f(y))$
\parencite[\emph{cf.}][\S3.2, especially Lemma~3.1]{ChyzakDreyfusDumasMezzarobba-2018-CSL},
as expressed in the following lemma.
The point of this approach is to avoid carrying out divisions
of high-degree polynomials~$M^s p$ by the polynomials~$q$.

\begin{lem}
  For any \mif factor~$q$ of~$\ell_r$, one has
  \[ \forbiddenby(q) = \irred(\ell_0) \cap \{ q, \sqrt Gq, \dots, \sqrt G^{r-1}q \} , \]
  where $\sqrt G f$ denotes the squarefree part of~$Gf$,
  computed by forcing exponents to~$1$ in the factored representation of~$Gf$.
\end{lem}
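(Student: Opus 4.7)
The plan is to work with roots in $\bKbar$. The first step will be to recall the root-theoretic description of the Gräffe operator: for any polynomial $f$ with roots $\rho_1, \dots, \rho_n$ in $\bKbar$ (counted with multiplicity), the transform $Gf(x) = \operatorname{Res}_y(y^b - x, f(y))$ equals, up to a nonzero constant factor, $\prod_i (x - \rho_i^b)$. Iterating, $G^s f$ has the $\rho_i^{b^s}$ as roots (with multiplicities), and so $\sqrt{G}^s f$ has the distinct values among them as simple roots.

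Next, I will verify that whenever $q$ is monic irreducible over $\bK$, so is $\sqrt{G}^s q$. The roots of $q$ form a single orbit in $\bKbar$ under the Galois group $\operatorname{Gal}(\bKbar/\bK)$; since the map $x \mapsto x^{b^s}$ commutes with this Galois action, their $b^s$-th powers again form a single orbit. The squarefree polynomial $\sqrt{G}^s q$ is then the minimal polynomial of any element of that orbit, hence irreducible.

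Finally, I will prove both inclusions by fixing one root $\beta \in \bKbar$ of $q$. For $p \in \irred(\ell_0)$ and $0 \leq s < r$, the divisibility $q \mid M^s p = p(x^{b^s})$ is equivalent to $p(\beta^{b^s}) = 0$, since $q$ is the minimal polynomial of $\beta$ over $\bK$. This in turn says that $\beta^{b^s}$ is a common root of the two monic irreducible polynomials $p$ and $\sqrt{G}^s q$ in $\pol\bK$, which forces $p = \sqrt{G}^s q$. Putting together both directions yields exactly the stated equality. No serious obstacle is expected: the only point requiring a bit of care is checking that passing from $Gq$ to its squarefree part preserves irreducibility, which is settled by the Galois-orbit argument of the second step.
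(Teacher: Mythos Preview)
Your proof is correct. The only cosmetic slip is that the ambient field in this section is~$\bL$ (an intermediate field $\bK \subseteq \bL \subseteq \bKbar$), not~$\bK$; the Galois group you want is $\operatorname{Gal}(\bKbar/\bL)$. Since $\bL$ has characteristic~$0$, your separability and Galois-orbit arguments go through unchanged.

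Your route is genuinely different from the paper's. You pass to roots in~$\bKbar$: you identify $G$ with the map $\rho \mapsto \rho^b$ on roots, use a Galois-orbit argument to see that $\sqrt G^s q$ stays monic irreducible, and then reduce the divisibility $q \mid M^s p$ to the root condition $p(\beta^{b^s}) = 0$, which forces $p = \sqrt G^s q$ since two monic irreducibles sharing a root coincide. The paper instead stays entirely at the level of polynomial operators: it records the identities $G^i M^i u = u^{b^i}$ and $u \mid M^i G^i u$ (hence $\sqrt G^i M^i u = u$ and $u \mid M^i \sqrt G^i u$ for $u$ monic irreducible), observes that $\sqrt G$ preserves divisibility, and then applies $\sqrt G^s$ to $q \mid M^s p$ to obtain $\sqrt G^s q \mid p$, whence equality by irreducibility of~$p$. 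Your approach is arguably more transparent about \emph{why} the formula holds (Gräffe is just ``raise roots to the $b$-th power''), and the irreducibility of $\sqrt G^s q$ drops out naturally from the orbit picture. The paper's approach has the advantage of never leaving the polynomial world: it uses $\sqrt G$ purely as a computable operator satisfying algebraic identities, which matches how it is actually used in the algorithm, and it does not need to invoke Galois theory explicitly.
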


\begin{proof}
From the definition of~$G$ follow the multiplicativity formula $G(fg) = Gf\,Gg$
and the degree-preservation formula $\deg Gp = \deg p$.
In particular, $G$~preserves divisibility.
One has the crucial relations
$u\divides M^iG^iu$ and ${G^iM^iu = u^{b^i}}$ for any \mif polynomial~$u$ and~$i\in\bN$.
In terms of~$\sqrt G$, the formulas become
$u\divides M^i\sqrt G^iu$ and~$\sqrt G^iM^iu = u$ for~$i\in\bN$.

Fix an integer~$s$ such that~$0\leq s < r$.
If $q\divides M^s p$, then upon applying~$\sqrt G^s$ we find $\sqrt G^sq\divides \sqrt G^sM^s p = p$, so~$\sqrt G^sq = p$.
Conversely, if $\sqrt G^sq = p$, then upon applying~$M^s$ we find $q\divides M^s\sqrt G^sq = M^s p$, so~$q\divides M^s p$.
Now, a \mif factor~$p$ of~$\ell_0$ is forbidden by a \mif factor~$q$ of~$\ell_r$
if and only if there exists an integer~$s$ such that $0\leq s < r$ and $q\divides M^s p$,
that is, if and only if there exists an integer~$s$ such that $0\leq s < r$ and $p = \sqrt G^s q$.
\end{proof}

\subsection{Avoiding redundant pairs}
\label{sec:avoid-redundant-pairs}

Because some of the coprimality conditions defining bounded Gosper--Pet\-kov\-šek forms
cannot be taken into account in Algorithm~\ref{algo:BP}
until parametrizations of solutions~$C$ have been obtained,
solutions are naturally found with repetitions.
Here, we alleviate this
by predicting some of the repetitions from divisibility conditions on $A$ and~$B$.
By the design of Algorithm~\ref{algo:BP},
a rational solution~$u(x)$ will be generated multiple times
(before the final step that removes redundancy)
if and only if two triples $(A,B,C)$ and~$(A',B',C')$ of monic polynomials satisfying
\begin{equation}\label{eq:ABC-is-A'B'C'}
\frac{MC}{C} \frac{M^{r-1}A}{B} = \frac{MC'}{C'} \frac{M^{r-1}A'}{B'}
\end{equation}
are considered for the same~$\zeta$ during the run of the algorithm.

Recall that in Algorithm~\ref{algo:BP},
ramification is taken care of by introducing a new indeterminate~$t$
that plays the role of some appropriate root of~$x$.
The results described in~\S\ref{sec:cartesian-prod} for polynomials in~$\pol\bL$
apply with trivial modifications for polynomials of~$\bL[t]$,
and for the end of~\S\ref{sec:efficiency-improvements},
we continue with the indeterminate~$t$.
In particular, we let $\ell_0$ and~$\ell_r$ denote $\ell_0(t)$ and~$\ell_r(t)$, respectively.

We will study several scenarios
in which, if the pair $(A,B)$ is considered in the run of the algorithm
and, for a certain~$\zeta$,
leads to a polynomial~$C$ and a rational solution~$u$,
then another triple $(A',B',C')$ leads to the same~$u$ for the same~$\zeta$.
We will thus obtain rules that attempt to:
minimize the multiplicity of~$t$ in~$A$,
maximize the multiplicity of~$t$ in~$B$,
remove factors of the form~$Mp/p$ from~$A$,
introduce factors of the form~$Mp/p$ in~$B$,
replace a divisor~$Mp$ with the factor~$p$ in~$A$,
replace the factor~$p$ with a divisor~$Mp$ in~$B$.
In all cases, the polynomial~$C$ will be adjusted
to compensate for the change.
This compensation will operate by multiplications/divisions by elements of~$\bL[t]$
obtained independently of~$\zeta$.
This makes the actual value of~$\zeta$ irrelevant
when considering equalities of the form~\eqref{eq:ABC-is-A'B'C'},
which justifies our pruning independently of~$\zeta$.

\paragraph{Pruning rules}

The following four situations provide instances of~\eqref{eq:ABC-is-A'B'C'}
that will permit us to discard pairs~$(A,B)$.
Note that in each case, monic $A$, $B$, $C$, and~$p$ induce
monic $A'$, $B'$, and~$C'$:
\begin{itemize}
\item If $B = p\tilde B$, define $A'= A$, $B' = Mp \, \tilde B$, $C' = pC$ to get:
\begin{equation*}
    \frac{MC}{C} \frac{M^{r-1}A}{B} =
    \frac{Mp \, MC}{Mp \, C}\frac{M^{r-1}A'}{p\tilde B} =
    \frac{MC'}{C'} \frac{M^{r-1}A'}{B'} .
\end{equation*}
\item If $A = Mp \, \tilde A$, define $A'= p \tilde A$, $B' = B$, $C'= M^{r-1}p \, C$ to get:
\begin{multline*}
    \frac{MC}{C} \frac{M^{r-1}A}{B} =
    \frac{MC}{C} \frac{M^{r-1}(Mp \, \tilde A)}{B} = \\
    \frac{M(M^{r-1}p \, C)}{M^{r-1}p \, C} \frac{M^{r-1}(p\tilde A)}{B} =
    \frac{MC'}{C'} \frac{M^{r-1}A'}{B'} .
\end{multline*}
\item If $p \divides Mp$, define $A'= A$, $B' = (Mp/p) B$, $C' = pC$ to get:
\begin{equation*}
    \frac{MC}{C} \frac{M^{r-1}A}{B} =
    \frac{Mp \, MC}{pC} \frac{M^{r-1}A}{(Mp/p)B}
     = \frac{MC'}{C'} \frac{M^{r-1}A'}{B'} .
\end{equation*}
\item If $p \divides Mp$ and $A = (Mp/p) \, \tilde A$, define $A'= \tilde A$, $B' = B$, $C'= M^{r-1}p \, C$ to get:
\begin{multline*}
    \frac{MC}{C} \frac{M^{r-1}A}{B} =
    \frac{MC}{C} \frac{M^{r-1}((Mp/p)\tilde A)}{B} = \\
    \frac{M(M^{r-1}p \, C)}{M^{r-1}p \, C} \frac{M^{r-1}A'}{B} =
    \frac{MC'}{C'} \frac{M^{r-1}A'}{B'} .
\end{multline*}
\end{itemize}
For any monic divisor~$A$ of~$\ell_0$, any monic divisor~$B$ of~$\ell_r$,
these guarded formulas provide a rule to discard the pair~$(A,B)$
if there exists a \mif polynomial~$p$ satisfying any of the predicates:
\begin{enumerate}[label=(P\arabic*), ref=(P\arabic*)]
  \item\label{pred:alpha-case-B}
    $B$~is of the form~$p\tilde B$ and $Mp \, \tilde B$ divides~$\ell_r$.
  \item\label{pred:alpha-case-A}
    $A$~is of the form~$Mp \, \tilde A$ and $p\tilde A$ divides~$\ell_0$.
  \item\label{pred:beta-case-B}
    $p$~divides~$Mp$ and $(Mp/p) B$ divides~$\ell_r$.
  \item\label{pred:beta-case-A}
    $p$~divides~$Mp$ and $A$~is of the form~$(Mp/p) \tilde A$.
\end{enumerate}
As an example,
we legitimate the use of~\ref{pred:beta-case-A} after assuming the last situation listed above.
Because $A' \divides A \divides \ell_0$ and $B' = B \divides \ell_r$,
if $(A,B,C)$ is considered in the algorithm for a certain~$\zeta$ and leads to a solution~$u$,
then the pair~$(A', B')$ is also considered,
and the solution~$u$ found with~$(A,B)$ after the polynomial solving step gets~$C$
will also be found with~$(A',B')$ and the same~$\zeta$ after the polynomial solving step gets~$C'$.
Using the pair~$(A',B')$ instead of~$(A,B)$ is better,
because it potentially leads to more solutions,
including polynomials~$C'$ not divisible by~$M^{r-1} p$.
The use of the other predicates is justified by a similar reasoning.

It is clear that the monic nature of~$p$ is no restriction:
it is needed to relate a monic~$B$ and a monic~$\tilde B$ in~\ref{pred:alpha-case-B},
and a monic~$A$ and a monic~$\tilde A$ in~\ref{pred:alpha-case-A};
only~$Mp/p$ plays a role in \ref{pred:beta-case-B} and~\ref{pred:beta-case-A}
and it depends on~$\bL p$ only.

Furthermore, the reasoning on the predicates does not rely on the irreducibility of~$p$:
one easily proves that there exists a (monic) irreducible~$p$ making a rule apply
if and only if there exists a (monic) not necessarily irreducible~$p$
making the same (generalized) rule apply.
So it is algorithmically permissible to restrict to (monic) irreducible~$p$,
and doing so avoids having to consider exponentially more factors~$p$.

Finally, making~$p = t$ in \ref{pred:beta-case-B} and~\ref{pred:beta-case-A}
results in slightly more explicit formulations, respectively:
\begin{enumerate}[label=(P\arabic*), ref=(P\arabic*), resume]
  \item\label{pred:maximize-t-in-B}
    $\mult{B}{t} \leq \mult{\ell_r}{t} - b + 1$.
  \item\label{pred:minimize-t-in-A}
    $\mult{A}{t} \geq b - 1$.
\end{enumerate}
Note that making~$p = t$ in \ref{pred:alpha-case-B} results in special cases of~\ref{pred:maximize-t-in-B},
and doing so in \ref{pred:alpha-case-A} results in special cases of~\ref{pred:minimize-t-in-A}.

\paragraph{Iteration of the rules}

An optimization for factors is obtained
from two distinct \mif polynomials $p$ and~$q$ satisfying~$Mp = pq$:
considering the final result of a repeated use of predicates
makes it possible to fix certain multiplicities
before entering the loops over $A$ and~$B$.
If for~$k\geq1$, $q^kB$~divides~$\ell_r$,
an iterated use of~\ref{pred:beta-case-B} shows that all of $B$, \dots, $q^{k-1}B$ can be skipped,
and we get the predicate:
\begin{enumerate}[label=(P\arabic*), ref=(P\arabic*), resume]
  \item\label{pred:iterated-beta-case-B}
    $Mp = pq$ for~$p\neq q$ and $\mult{B}{q} < \mult{\ell_r}{q}$.
\end{enumerate}
Similarly, if for~$k\geq1$, $A$~is of the~form~$q^k\tilde A$ and divides~$\ell_0$,
an iterated use of~\ref{pred:beta-case-A} shows that all of $q^k\tilde A$, \dots, $q\tilde A$ can be skipped,
and we get the predicate:
\begin{enumerate}[label=(P\arabic*), ref=(P\arabic*), resume]
  \item\label{pred:iterated-beta-case-A}
    $Mp = pq$ for~$p\neq q$ and $\mult{A}{q} > 0$.
\end{enumerate}
Note that $p\neq q$ requires $p\neq t$ and~$q\neq t$.

\begin{footnotesize}
\begin{algo}
\inputs{
  A Riccati Mahler equation~\eqref{eq:riccati} with coefficients $\ell_k(x) \in \pol\bK$.
  Some intermediate field~$\bL$, that is, a field satisfying $\bKbar \supseteq \bL \supseteq \bK$.
}
\outputs{
  The set of rational functions $u \in \rat\bL$ that solve~\eqref{eq:riccati}.
}
\caption{\label{algo:IP}%
Improved variant of Algorithm~\ref{algo:BP}.}
\begin{deepenum}[label=(\Alph*), start=23]
\item \label{algo:IP-upper-Newton-polygon}
      Compute the upper Newton polygon of~$L := \sum_{k=0}^r \ell_k(x) M^k$, the associated characteristic polynomials, the set~$Z\cap\bL$ where $Z = Z(L)$, and for each~$\zeta \in Z\cap\bL$, the set of indices~$j$ making the $j$th edge $\zeta$-admissible.
\item From now on, let $\ell_k$ denote~$\ell_k(t)$.
\item \label{algo:IP-memoization}
      Compute factored representations of relevant polynomials and related data:
      \begin{deepenum}[label=(\alph*), start = 10]
      \item for $q \in \irred(\ell_r)$, set $\forbiddenby(q) := \irred(\ell_0) \cap \{ q, \sqrt Gq, \dots, \sqrt G^{r-1}q \}$,
      \item \label{algo:Mp-p-in-ell_r}
            set $\cD_r$ to the set of those $f \in \{ Mp/p : p \in \irred(G\ell_r), \ p\neq t \} \cap \bL[t]$ that divide~$\ell_r$,
            \hfill \emph{\tiny [\ref{pred:beta-case-B}, not \ref{pred:maximize-t-in-B} or \ref{pred:iterated-beta-case-B}]}
      \item \label{algo:Mp-p-in-ell_0}
            set $\cD_0$ to the set of those $f \in \{ Mp/p : p \in \irred(G\ell_0), \ p\neq t\} \cap \bL[t]$ that divide~$\ell_0$.
            \hfill \emph{\tiny [\ref{pred:beta-case-A}, not \ref{pred:minimize-t-in-A} or \ref{pred:iterated-beta-case-A}]}
      \end{deepenum} 
\item \label{algo:IP-refine-loop-bounds}
      Refine bounds for the loops below:
      \begin{deepenum}[label=(\alph*), start = 10]
      \item if $t\divides\ell_r$, set $\check b_t := \max(0, \mult{\ell_r}{t} - b + 2)$,
            \hfill \emph{\tiny [\ref{pred:maximize-t-in-B}]}
      \item if $t\divides\ell_0$, set $\check a_t := b-2$,
            \hfill \emph{\tiny [\ref{pred:minimize-t-in-A}]}
      \item \label{algo:IP-iterated-beta-case-B-for-Mahler-bifactors}
            for $q \in \irred(\ell_r)$ with~$q \neq t$, set~$\check b_q$ to~$\mult{\ell_r}{q}$ if $Mp = p q$ for $p = \sqrt Gq$, to~$0$ otherwise,
            \hfill \emph{\tiny [\ref{pred:iterated-beta-case-B}]}
      \item \label{algo:IP-iterated-beta-case-A-for-Mahler-bifactors}
            for $q \in \irred(\ell_0)$ with~$q \neq t$, set~$\check a_q$ to~$0$ if $Mp = p q$ for $p = \sqrt Gq$, to~$\mult{\ell_0}{q}$ otherwise.
            \hfill \emph{\tiny [\ref{pred:iterated-beta-case-A}]}
      \end{deepenum} 
\setcounter{deepenumi}{0}
\item Set $\cU := \varnothing$.
\item \label{algo:IP-B-loop}
      For all $B := \prod_{q \in \irred(\ell_r)} q^{\beta_q}$ such that $\check b_q \leq \beta_q \leq \mult{\ell_r}{q}$ for all~$q \in \irred(\ell_r)$:
      \begin{deepenum}[label=(\alph*), start=20]
      \item \label{algo:IP-B-skip}
            continue to the next~$B$ if either of the following conditions holds:
            \begin{deepenum}[label=(\greek*)]
            \item \label{algo:IP-B-skip-beta-case-B}
                  $Mp \, (B/p) \divides \ell_r$ for some $p \in \irred(B)$ with $p \neq t$,
                  \hfill \emph{\tiny [\ref{pred:alpha-case-B}, not \ref{pred:maximize-t-in-B}]}
            \item \label{algo:IP-B-skip-alpha-case-B}
                  $fB \divides \ell_r$ for some $f \in \cD_r$,
                  \hfill \emph{\tiny [\ref{pred:beta-case-B}, not \ref{pred:maximize-t-in-B} or \ref{pred:iterated-beta-case-B}]}
            \end{deepenum}
      \item \label{algo:IP-forbidden-by-actual-factors}
            let $F$ be the union of the~$\forbiddenby(q)$ over all~$q \in \irred(\ell_r)$ such that~$\beta_q>0$,
      \item \label{algo:IP-prohibit-forbidden-factors}
            for $p \in \irred(\ell_0)$, set~$a_p$ to~$0$ if~$p\in F$, to~$\mult{\ell_0}{p}$ otherwise,
      \item \label{algo:IP-A-loop}
            for all $A := \prod_{p \in \irred(\ell_0)} p^{\alpha_p}$ such that $0 \leq \alpha_p \leq \min(\check a_p, a_p)$ for all~$p \in \irred(\ell_0)$,
            \begin{deepenum}[label=(\arabic*), start=0]
            \item \label{algo:IP-A-skip}
                  continue to the next~$A$ if either of the following conditions holds:
                  \begin{enumerate}[label=(\greek*)]
                  \item \label{algo:IP-A-skip-alpha-case-A}
                        for~$p \in \irred(\ell_0)$ with~$p \neq t$,
                        $Mp$~divides~$A$ and $p \, (A/Mp)$~divides~$\ell_0$,
                        \hfill \emph{\tiny [\ref{pred:alpha-case-A}, not \ref{pred:minimize-t-in-A}]}
                  \item \label{algo:IP-A-skip-beta-case-A}
                        $fA \divides \ell_0$ for some $f \in \cD_0$,
                        \hfill \emph{\tiny [\ref{pred:beta-case-A}, not \ref{pred:minimize-t-in-A} or \ref{pred:iterated-beta-case-A}]}
                  \end{enumerate}
            \item \label{algo:IP-Ltilde}
                  compute $\tilde L(t,M)$ by~\eqref{eq:mahler-necessary-2},
            \item \label{algo:IP-zeta-loop}
                  for each~$\zeta$ in~$Z\cap\bL$:
                  \begin{deepenum}[label=(\alph*)]
                  \item \label{algo:IP-degree-bound-for-poly-sol}
                        compute the maximum $\Delta_\zeta$ of the integer values taken by~$\operatorname{sh}(\delta)$ in~\eqref{eq:from-delta-to-Delta} when $\delta$~ranges over the opposites of the slopes of the $\zeta$-admissible edges of~$L$,
                  \item \label{algo:IP-solve-for-polynomials}
                        proceed as in step~\ref{algo:BP:main-loop}\ref{it:loop-over-zeta}\ref{it:unchanged-steps} of Algorithm~\ref{algo:BP}.
                  \end{deepenum} 
            \end{deepenum} 
      \end{deepenum} 
\item \label{algo:IP-return}
      Finish as in step~\ref{it:algo-return} of Algorithm~\ref{algo:BP}.
\end{deepenum} 
\end{algo}
\end{footnotesize}

\paragraph{Implementation of the discarding rule}

The optimization of loop~\ref{algo:BP:main-loop}
that we introduced in~\S\ref{sec:cartesian-prod}
is further refined in Algorithm~\ref{algo:IP}:
before~(u) in the loop over~$B$, we insert a step~(t) to take \ref{pred:alpha-case-B} and~\ref{pred:beta-case-B} into account,
and before~(1) in the loop over~$A$, we insert a step~(0) to take \ref{pred:alpha-case-A} and~\ref{pred:beta-case-A} into account.
Notwithstanding,
the specializations \ref{pred:maximize-t-in-B} to~\ref{pred:iterated-beta-case-A}
are taken care of at step~\ref{algo:IP-refine-loop-bounds},
by restricting the exponents used in $A$ and~$B$
to ranges in which the rules
\ref{pred:alpha-case-B} to~\ref{pred:beta-case-A}
cannot~apply.

Because \ref{pred:beta-case-B}~tests those polynomials~$Mp/p$ that divide~$\ell_r$,
a precomputation at step~\ref{algo:Mp-p-in-ell_r} determines those quotients.
For any \mif factor~$q$ of any such~$Mp/p$,
$\sqrt G q$~is equal to~$p$,
thus restricting the search for~$p$ to~$\irred(G\ell_r)$.
A~similar discussion applies to \ref{pred:beta-case-A} and step~\ref{algo:Mp-p-in-ell_0}.

\subsection{Avoiding redundant computations of Newton polygons}

Given an operator~$P$,
recall that the~$\xi_j$ denote the characteristic polynomials of its upper Newton polygon
(Definition~\ref{def:upper-newton}),
and that $Z(P)$~is the set of all roots~$\zeta$ of these characteristic polynomials, according to~\eqref{eq:def-Zeta}.
Step~\ref{algo:BP:main-loop}\ref{it:compute-L-tilde} of Algorithm~\ref{algo:BP}
computes the characteristic polynomials~$\xi_j(X)$ associated with~$\tilde L$
for every pair~$(A,B)$ of divisors of~$\ell_0$ and~$\ell_r$.
It turns out that the roots of the~$\xi_j(X)$ do not depend on~$(A,B)$.
In this section, the \emph{degree} of a finite Puiseux series is defined
as the maximal (rational) exponent appearing with a nonzero coefficient.

\begin{lem}\label{lem:shearing}
The set~$Z(\tilde L)$ is equal to~$Z(L)$
for every pair~$(A,B)$ of polynomials in~$\bL[t]$.
Moreover, the set~$\Delta(\tilde L)$
of degrees of finite Puiseux series solutions of $\tilde L y = 0$
is obtained from the set~$\Delta(L)$
of degrees of finite Puiseux series solution of $L y = 0$
by an affine transformation:
\begin{equation}\label{eq:from-delta-to-Delta}
  \Delta(\tilde L) = \operatorname{sh}(\Delta(L))
  \quad\text{where}\quad
  \operatorname{sh}(\delta) = - \frac{b^{r-1} \deg A - \deg B}{b - 1} + b^{r - 1}\delta .
\end{equation}
\end{lem}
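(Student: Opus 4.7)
My plan is to show that the upper Newton polygon of $\tilde L$ is the image of that of $L$ under an explicit affine shearing in the ordinate, and to deduce both claims from this single geometric fact.

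First, I compute the degree in $t$ of the coefficient of $M^k$ in $\tilde L$, as given by~\eqref{eq:mahler-necessary-2}. Since $A$ and $B$ are monic, $M^j A$ and $M^j B$ are monic of degrees $b^j \deg A$ and $b^j \deg B$ for every $j \geq 0$, so the degree of the factor $\prod_{j=0}^{k-1} M^{r-1+j}A \cdot \prod_{j=k}^{r-1} M^j B$ equals
$$
b^{r-1}\deg A \cdot \frac{b^k - 1}{b - 1} + \deg B \cdot \frac{b^r - b^k}{b - 1} = \kappa\, b^k + \rho,
$$
where $\kappa = (b^{r-1}\deg A - \deg B)/(b-1)$ and $\rho = (b^r \deg B - b^{r-1}\deg A)/(b - 1)$ are constants independent of $k$. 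Consequently, writing $d_k := \deg \ell_k$, the coefficient of $M^k$ in $\tilde L$ has degree $\tilde d_k = b^{r-1} d_k + \kappa b^k + \rho$.

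Geometrically, the set of points $(b^k, \tilde d_k)$ generating the upper Newton polygon of $\tilde L$ is the image of the set $(b^k, d_k)$ generating that of $L$ under the affine map $T : (X, Y) \mapsto (X, b^{r-1} Y + \kappa X + \rho)$, which is a vertical rescaling by $b^{r-1} > 0$ followed by a vertical shear whose shear term is linear in the abscissa. Such a map preserves upper convex hulls, so it sets up a bijection between edges of the two upper Newton polygons. An edge of $L$ of slope $\mu$ corresponds to an edge of $\tilde L$ of slope $b^{r-1}\mu + \kappa$, so the associated degree $\delta = -\mu$ is transformed into $-b^{r-1}\mu - \kappa = b^{r-1}\delta - \kappa = \operatorname{sh}(\delta)$, which is the desired shearing formula.

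For the equality $Z(\tilde L) = Z(L)$, I argue that corresponding edges carry identical characteristic polynomials. Fix an edge of $L$ between abscissas $b^{k_1}$ and $b^{k_2}$, with ordinate $e(k)$ along the edge at abscissa $b^k$; its characteristic polynomial is $\xi(X) = \sum_{k=k_1}^{k_2} \ell_{k, e(k)} X^{k-k_1}$, where $\ell_{k,e(k)} = 0$ whenever $(b^k, e(k))$ is strictly above $(b^k, d_k)$. The corresponding edge of $\tilde L$ has ordinate $\tilde e(k) = b^{r-1} e(k) + \kappa b^k + \rho$ at abscissa $b^k$. Since $M^{r-1}\ell_k$ has all its exponents of $t$ divisible by $b^{r-1}$ and the factor $\prod_{j=0}^{k-1} M^{r-1+j}A \cdot \prod_{j=k}^{r-1} M^j B$ is monic of degree $\kappa b^k + \rho$, the only pair of exponents in the convolution that adds up to $\tilde e(k)$ is the leading exponent of the product together with the exponent $b^{r-1} e(k)$ of $M^{r-1}\ell_k$; this yields $\tilde \ell_{k, \tilde e(k)} = \ell_{k, e(k)}$ in all cases (both being zero when the point is strictly below the edge). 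Hence $\tilde\xi = \xi$, the edge bijection preserves $\zeta$-admissibility for every $\zeta$, and $Z(\tilde L) = Z(L)$.

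The only real obstacle is recognizing the reduction to the affine shearing $T$; once this is seen, the slope formula is immediate, and the equality of characteristic polynomials follows from the monicity of $A$ and $B$ together with the observation that $M^{r-1}$ scales $t$-exponents by $b^{r-1}$, which prevents any cross-contribution from subleading terms at the level of the edge degree.
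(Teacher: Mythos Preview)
Your proof is correct and follows essentially the same route as the paper's: compute $\tilde d_k = b^{r-1}d_k + \kappa b^k + \rho$, observe that this affine shearing carries the upper Newton polygon of~$L$ to that of~$\tilde L$, read off the slope transformation, and use monicity of $A$ and~$B$ to identify the characteristic polynomials edge by edge. Your treatment of intermediate edge points (where $e(k)>d_k$) is more explicit than the paper's, though the appeal to divisibility of exponents by~$b^{r-1}$ is unnecessary---monicity and the degree count already force the leading coefficient of the $k$th coefficient of~$\tilde L$ to equal~$\ell_{k,d_k}$.
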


\begin{proof}
The upper Newton polygon of~$L$, respectively~$\tilde L$, is the upper part of the convex hull
of the points $(b^k, d_k)$, $0 \leq k \leq r$,
respectively of the points $(b^k, D_k)$, $0 \leq k \leq r$,
with the relation
\begin{equation*}
  D_k = b^{r-1}  d_k
  + b^{r-1} \frac{b^k - 1}{b - 1} \deg A
  + \frac{b^r - b^k}{b - 1} \deg B
  = b^{r-1}  d_k + \alpha b^k + \beta
\end{equation*}
that results directly from~\eqref{eq:mahler-necessary-2} after setting
\begin{equation*}
\alpha = \frac{b^{r-1} \deg A - \deg B}{b - 1}
\quad\text{and}\quad
\beta = \frac{b^r \deg B - b^{r-1} \deg A}{b - 1} .
\end{equation*}
As the coefficient $b^{r-1}$ is positive, the upper convex hull of the points $(b^k,d_k)$, $0 \leq k \leq r$,
is mapped onto the upper convex hull of the points $(b^k,D_k)$, $0 \leq k \leq r$.
As a consequence,
the degrees of the finite Puiseux series solutions of $L$ and~$\tilde L$
are related by~\eqref{eq:from-delta-to-Delta},
since they are the opposite of the slopes of the edges~%
\parencite[Lemma~2.5]{ChyzakDreyfusDumasMezzarobba-2018-CSL}.
Moreover, as the polynomials $A$ and~$B$ are monic, the coefficients of the monomials
$x^{d_k} M^k$ in~$L$ and~$x^{D_k}M^k$ in~$\tilde L$ are the same for $0 \leq k \leq r$.
So the characteristic polynomials are the same for the edges of both upper Newton polygons.
\end{proof}

A first consequence of Lemma~\ref{lem:shearing} is that
we can change the structure of Algorithm~\ref{algo:BP}:
instead of recomputing~$Z(\tilde L)$ at step~\ref{it:upper-newton-polygon}
for each new~$(A,B)$,
a single computation of~$Z(L)$ is done
before entering loop~\ref{algo:BP:main-loop},
and step~\ref{it:loop-over-zeta} becomes a loop over~$\zeta \in Z(L)$.
In the latter loop,
we still need to compute a bound on the degree of a solution~$C$ for each pair~$(A,B)$,
but it is deduced from the lemma
by considering the largest nonnegative integer in~$\Delta(\tilde L)$
as obtained from~$\Delta(L)$
by the parametrization~\eqref{eq:from-delta-to-Delta}.

\part{Algorithm by reconstruction from series}
\label{part:hermite-pade}

\section{Solutions as syzygies}
\label{sec:syzygies}

We develop another approach to finding
all ramified rational function solutions
of the Riccati Mahler equation~\eqref{eq:riccati}.
In comparison to the adaption of Petkovšek's algorithm in~\S\ref{sec:petkovsek-variant},
which focuses on fixing the possible singularities of a solution of~\eqref{eq:riccati},
this second approach will proceed in a guess-and-check manner,
refining calculations on approximate solutions of the linear equation~\eqref{eq:linear}
until a whole set of candidate solutions is proven to be exact solutions.
This approach will be embodied in our Algorithm~\ref{algo:HP}.

Our algorithm will search for the set~$\ricsol_{\ramrat\bKbar}$ of solutions
with coefficients in~$\bKbar$,
but for the first part of the theory,
we more generally consider an intermediate field $\bKbar \supseteq \bL \supseteq \bK$,
as in~\S\ref{sec:petkovsek-variant}.
We compute Hermite--Padé approximants for auxiliary problems,
before recombining them to obtain structured approximants.
These auxiliary problems show no logarithm and ramification,
making it possible to focus on formal power series calculations.
Each auxiliary problem solves a modified Riccati equation
for its  rational solutions (in~$\rat\bL$)
whose series expansions have nonnegative valuation and their leading coefficient is equal to~$1$.

Let us focus on the space of formal power series solutions of the linear equation~\eqref{eq:linear}
and consider an $\bL$-basis $(z_1,\dots,z_\zdim)$,
from which we deduce solutions~$u$ of~\eqref{eq:riccati} in~$\fps\bL$
by the parametrization
\begin{equation}\label{eq:u-of-a}
  u = \frac{M(a_1 z_1 + \dots + a_\zdim z_\zdim)}
              {a_1 z_1 + \dots + a_\zdim z_\zdim},\qquad
  a = (a_1:\ldots:a_\zdim) \in \projzn{\bL^\zdim}.
\end{equation}
We need to determine those~$a$ for which the series~$u$ lies in~$\rat\bL$.
Suppose that $u$~is indeed some $P/Q$ in~$\rat\bL$, with coprime $P$ and~$Q$.
After canceling denominators and using linearity,
we obtain the equation
\begin{equation}\label{eq:HP-structured-rel}
  (- a_1 P) \, z_1 + \dots + (- a_\zdim P) \, z_\zdim +
(a_1 Q) \, M z_1 + \dots + (a_\zdim Q) \, M z_\zdim = 0 .
\end{equation}
This linear relation with polynomial coefficients from~$\pol\bL$
between the series $z_1$,~\dots, $z_\zdim$, $M z_1$,~\dots, $M z_\zdim$ from $\fps\bL$
is a special case of a linear relation
\begin{equation}\label{eq:HP-exact-rel}
  P_1 z_1 + \dots + P_\zdim z_\zdim + Q_1 \, M z_1 + \dots + Q_\zdim \, M z_\zdim  = 0 ,
\end{equation}
for polynomial coefficients $P_i$ and~$Q_i$ from~$\pol\bL$.
A second level of relaxation is obtained, for any~$\sigma \in \bN$,
by Hermite--Padé approximants $(P_1,\dots,P_\zdim,Q_1,\dots,Q_\zdim)$
in~$\pol\bL^{2\zdim}$, to order~$\sigma$,
that is, approximate linear relations of the form
\begin{equation}\label{eq:HP-approx-rel}
  P_1 z_1 + \dots + P_\zdim z_\zdim + Q_1 \, M z_1 + \dots + Q_\zdim \, M z_\zdim  = O(x^\sigma) .
\end{equation}

Recombining exact relations~\eqref{eq:HP-exact-rel}
into \emph{structured} relations of the form~\eqref{eq:HP-structured-rel}
reduces to solving a polynomial system in $a_1, \dots, a_\zdim$
(see Definition~\ref{def:Sigma-pol-sys}), so that we can in principle go back from
\eqref{eq:HP-exact-rel} to~\eqref{eq:HP-structured-rel}.
Unfortunately, given bounds on the degrees of the $P_i$ and~$Q_i$,
it is not clear how to find an accuracy~$\sigma$ such that
\eqref{eq:HP-approx-rel} implies~\eqref{eq:HP-exact-rel}.
Instead, Algorithm~\ref{algo:HP} computes the approximate
relations~\eqref{eq:HP-approx-rel} compatible with our degree bounds on $P$ and~$Q$
and attempts to reconstruct
hypergeometric solutions starting from these relations.
This proceeds by guessing candidate relations for increasing~$\sigma$
and rejecting wrong ones until we get no false solutions.
We will show (Theorem~\ref{thm:HP-correct})
that this process eventually yields all hypergeometric solutions,
and nothing but hypergeometric solutions.

\begin{table}

\begin{tabular}{|l|l|p{80mm}|}
\hline
$\sigma$ & &
  the truncation order of solution series~$z_i$ to~\eqref{eq:linear} that serve to obtain approximate syzygies to the same order;
  increased when the algorithm detects it cannot conclude \\
\hline
$\cS\trsigma$ & Def.~\ref{def:syzygies} &
  $\pol\bL$-module of approximate syzygies (“approximate syzygy module”);
  nonincreasing and ultimately decreasing w.r.t.~$\sigma$; constant rank \\
\hline
$\cT\trsigma$ & Def.~\ref{def:truncated-module} &
  vectorial truncation of the approximate syzygy module to total degree at most~$B_\infty$;
  nonincreasing and ultimately constant w.r.t.~$\sigma$;
  a finite-dimensional $\bL$-vector space \\
\hline
$\rho\trsigma$ & Def.~\ref{def:rank-rho} &
  rank of the module $\pol\bL\cT\trsigma$;
  nonincreasing and ultimately constant w.r.t.~$\sigma$ \\
\hline
$\cA\trsigma$ & \parbox[t]{16mm}{\raggedright Def.~\ref{def:cone-A-sigma}, \\ Prop.~\ref{prop:cone-A-sigma-alt}} &
  $\bL$-cone of the~$a$ for which structured approximate syzygies exist (“characteristic cone of structured approximate syzygies”, “characteristic cone” for short);
  nonincreasing and ultimately constant w.r.t.~$\sigma$;
  converges to a direct sum of\/ $\bL$-vector spaces \\
\hline
$\cV\trsigma$ & Def.~\ref{def:cone-V-sigma} &
  $\bL$-cone that is a relaxation of~$\cA\trsigma$, thus generally providing only candidates~$a$ (“relaxed cone of structured approximate syzygies”, “relaxed cone” for short);
  converges to the same limit as~$\cA\trsigma$ \\
\hline
$W\trsigma$ & Def.~\ref{def:truncated-module} &
  matrix computed as a minimal basis from the~$z_i$ (“generating matrix”);
  its rows generate $\pol\bL\cT\trsigma$\\
\hline
$W_a\trsigma$ & Def.~\ref{def:W-a} &
  augmentation of the matrix~$W\trsigma$ (“augmented generating matrix”);
  singular if and only if~$a \in \cV\trsigma$ \\
\hline
$\Sigma\trsigma$ & Def.~\ref{def:Sigma-pol-sys} &
  polynomial system obtained from $W_a\trsigma$;
  generates an ideal~$\ideal{\Sigma\trsigma}$ whose variety is~$\cV\trsigma$ \\
\hline
$\sqrt{\ideal{\Sigma\trsigma}}$ & Thm.~\ref{thm:radical-as-linear-ideals} &
  radical of the ideal~$\ideal{\Sigma\trsigma}$;
  used to test linearity of the irreducible components of~$\cV\trsigma$ \\
\hline
\end{tabular}

\caption{\label{tab:cast-of-characters}Inventory of the objects to be introduced in \S\ref{sec:syzygies} and~\S\ref{sec:hermite-pade}.
Each of the quantities~$X\trsigma$ above has a limit when $\sigma$~goes to~$\infty$, denoted~$X\trinfty$.}
\end{table}

The remainder of~\S\ref{sec:syzygies} is dedicated to the details of this strategy,
leading to an algorithm presented in~\S\ref{sec:algo-sieve}.
The discussion involves a number of intermediate objects
whose definitions are summarized in Table~\ref{tab:cast-of-characters}.
In terms of the notation in this table,
our approach proceeds by computing an overapproximation~$\cV\trsigma$
of the limit~$\cA\trinfty$ we have to compute.
At the limit, the relaxed cone~$\cV\trinfty$ and the characteristic cone~$\cA\trinfty$ are equal,
and are equal to a direct sum of vector spaces.
Because, by means of Gröbner basis calculations, we are able to detect
nonlinear components of the relaxed cone~$\cV\trsigma$,
we can develop an algorithm by rejection,
which increases~$\sigma$ and refines~$\cV\trsigma$
until we get the exact set of solutions to~\eqref{eq:riccati}.
A plot summary of the analysis to prove our approach is the following:

\begin{itemize}
\item we show that the situation is trivial unless $1 \leq \rho\trsigma \leq 2\zdim-1$;
\item we describe the relaxed cone~$\cV\trsigma$ as the vanishing set of a polynomial system~$\Sigma\trsigma$
  that reflects the degeneracy of the augmented generating matrix~$W_a\trsigma$;
\item some case distinction is then needed:
  \begin{itemize}
  \item if $\rho\trsigma = 2\zdim-1$, all the series solutions of~\eqref{eq:linear} are hypergeometric
    and $\cV\trsigma = \bL^\zdim$;
  \item if $\rho\trsigma < 2\zdim-1$, we isolate the irreducible components
    of~$\cV\trsigma$;
    if one irreducible component is described by a nonlinear system,
    $\sigma$~is too small and the algorithm needs to restart with an increased value;
  \end{itemize}
\item in both cases,
  the relaxed cone~$\cV\trsigma$ is described as a union of linear spaces,
  and there corresponds to each of those spaces a parametrization of a set of candidate solutions to~\eqref{eq:riccati};
\item each linear parametrization is tested by substituting into the left-hand side of~\eqref{eq:riccati}:
  a nonzero evaluation invalidates~$\sigma$, and the algorithm restarts with an increased value;
\item otherwise, the parametrizations thus obtained describe exactly all solutions,
  up to possible inclusions of a parametrization into another.
\end{itemize}

\begin{rem}
The same idea can be adapted to find the rational solutions of a linear Mahler equation.
We further comment on this in~\S\ref{sec:back-to-rat}.
The latter, as well as \S\ref{sec:rank-rels}~on the rank of syzygy modules,
are only incidental to the flow of the text.
\end{rem}

\begin{rem}
The problem of determining those~$a$ in~\eqref{eq:u-of-a} making $u$ rational
simplifies a lot when~$\zdim=1$,
as it  reduces to the question whether
$M z_1/z_1$~is rational.
Using Hermite--Padé approximation together with the bounds
derived in~\S\ref{sec:bounds-for-rational-solutions}
gives an efficient and simple procedure in that context.
\end{rem}

\subsection{Approximate syzygies}
\label{sec:approx-syz}

In order to describe
the linear relations~\eqref{eq:HP-exact-rel}, hereafter called \emph{syzygies},
in connection with the approximate linear relations~\eqref{eq:HP-approx-rel}, hereafter called \emph{approximate syzygies},
we recall and specialize to our setting some useful notions and facts
from the classical theory of Hermite--Padé approximants.

\begin{defi}\label{def:syzygies}
Given an integer~$m \in \bN$ and a tuple $f = (f_1,\dots,f_m) \in \fps\bL^m$,
an element~$w \in \pol\bL^m$ is a \emph{syzygy} of~$f$
when $w\cdot f^T = 0$.
Given an integer $\sigma \in \bN$,
an \emph{approximate syzygy} of~$f$(to order~$\sigma$) is an element~$w \in \pol\bL^m$ satisfying~$w \cdot f^T = O(x^\sigma)$.
The module of syzygies is denoted~$\cS\trinfty$.
The module of approximate syzygies to order~$\sigma$ is denoted~$\cS\trsigma$.
\end{defi}

It is clear that the sequence of the approximate syzygy modules~$\cS\trsigma$ is nonincreasing,
meaning that for any~$\tau \leq \sigma$, $\cS^{[\tau]} \supseteq \cS\trsigma$.
And because a series is~$0$ if and only if it is~$O(x^\sigma)$ for all~$\sigma\in\bN$,
the limit of the sequence is the syzygy module~$\cS\trinfty$.

\begin{rem}
For an approximate syzygy~$w$ of~$f$
there exists a series $q \in \fps\bL$
satisfying $(w,q)\cdot (f,x^\sigma)^T = 0$.
Therefore, the approximate syzygy module is in general larger than
the projection to its first $m$ components of the syzygy module of $(f_1,\dots,f_m,x^\sigma)$,
as the latter would restrict the implied~$q$ to polynomials.
\end{rem}

It is well known \parencite{Derksen-1994-ACG}
that the approximate syzygy module~$\cS\trsigma$ is a free $\pol\bL$-module of rank~$m$,
and that this module admits
a basis of a specific shape, called a \emph{minimal basis} in the recent literature.

\begin{defi}
A basis $(w_1,\dots,w_m)$ of some free $\pol\bL$-module
is called a \emph{minimal basis} if
for each~$i$, the $m$-tuple~$w_i = (p_1,\dots,p_m)$ satisfies
$\deg p_j \leq \deg p_i > \deg p_{j'}$ whenever $j \leq i < j'$.
\end{defi}

Minimal bases are also well known to be Gröbner bases of the approximate syzygy module
for a term-over-position (TOP) ordering \parencite{Neiger2016}.
Fast algorithms for their calculation have been provided,
most notably in \parencite{BeckermannLabahn-1994-UAF};
see also \parencite{BeckermannLabahn-2000-FFC}.

To link the previous definitions
with the notation of the introduction of~\S\ref{sec:syzygies}
and in particular \eqref{eq:HP-structured-rel}, \eqref{eq:HP-exact-rel}, and~\eqref{eq:HP-approx-rel},
we set~$m := 2\zdim$
and, for $1\leq i\leq \zdim$,  $f_i := z_i$ and $f_{i+\zdim} := M z_i$.
If the Riccati equation has nontrivial solutions,
then for a large enough $\sigma$
we expect the existence of an approximate syzygy of the form
\begin{equation}\label{eq:structured-syz}
    w_0 := (-a_1P,\dots,-a_\zdim P,a_1Q,\dots,a_\zdim Q)
\end{equation}
that is in fact an exact syzygy
satisfying $\deg P \leq \Bnum$ and $\deg Q \leq \Bden$
for the bounds introduced in Proposition~\ref{prop:deg-bounds-for-ratfuns}.

Let us consider a minimal basis $(w_1\trsigma,\dots,w_{2\zdim}\trsigma)$ of the approximate syzygy module~$\cS\trsigma$.
Any $w \in \cS\trsigma$ reduces to zero by reduction by the minimal basis
viewed as a Gröbner basis for the TOP order.
Thus, there are polynomials $R_i$ such that
\begin{equation*}
    w = R_1 w_1\trsigma + \dots + R_{2\zdim} w_{2\zdim}\trsigma
\end{equation*}
with $\deg(R_i w_i\trsigma) \leq \deg w$ for $1\leq i \leq 2\zdim$:
for the quotient~$R_i$ to be nonzero, we need $\deg w_i\trsigma \leq \deg w$.
In particular, for any~$d \in \bN$,
any $w$ of degree at most~$d$ reduces to zero by the elements of the minimal basis
themselves of degree at most~$d$.
With our goal to be able to generate structured syzygies~$w_0$
of degree bounded as in~\eqref{eq:structured-syz},
what just precedes justifies retaining only those elements of the minimal basis
whose degrees are at most
\begin{equation}\label{eq:B-infty}
  B_\infty := \max(\Bnum,\Bden) .
\end{equation}

\begin{defi}\label{def:truncated-module}
The\/ \emph{vectorial truncation of the approximate syzygy module}  to degree~$B_\infty$
is the $\bL$-vector space
\begin{equation}\label{eq:approx-syz-mod}
\cT\trsigma :=
\cS\trsigma \cap \pol\bL_{\leq B_\infty}^{2\zdim}
  = \sum_{\deg w_i\trsigma \leq B_\infty} \pol\bL_{\leq B_\infty - \deg w_i\trsigma} w_i\trsigma .
\end{equation}
We write~$W\trsigma$ for the matrix
whose rows are those~$w_i\trsigma$ with degree at most~$B_\infty$,
for $1\leq i\leq 2\zdim$,
and we call it the \emph{generating matrix}.
\end{defi}

Note that the minimal basis remains implicit in the notation of $\cT\trsigma$ and~$W\trsigma$.
This is no problem in practice as, once~$\sigma$ is fixed,
our analysis and algorithm consider a single minimal basis to order~$\sigma$.

\begin{rem}\label{rem:left-deg-right-deg}
Instead of $\cS\trsigma \cap \pol\bL_{\leq B_\infty}^{2\zdim}$ in~\eqref{eq:approx-syz-mod},
one could suggest using
$\cS\trsigma \cap \bigl(\pol\bL_{\leq \Bnum}^\zdim \times \pol\bL_{\leq \Bden}^\zdim\bigr)$,
which fits the degree structure of expected exact syzygies~\eqref{eq:structured-syz} more tightly.
While that should allow for some savings in the initial calculations,
the algorithm would anyway have to deal with much higher degrees
when the kernel computation
(step~\ref{it:loop-over-lambda}\ref{it:HP-iteration}\ref{it:rho<2t-1}\ref{it:loop-over-primes}\ref{it:left-kernel} in Algorithm~\ref{algo:subtask-of-HP})
would result in an output of degree bounded by~$\zdim \times (\Bnum + \Bden) \leq 2\zdim \times B_\infty$.
\end{rem}

We have already said that, for general $f \in \pol\bL^{2\zdim}$,
the sequence of the approximate syzygy modules~$\cS\trsigma$ is nonincreasing
and the rank of~$\cS\trsigma$ remains constant at~$2\zdim$.
We now prove that, for our specific~$f$ defined from~$z_i$,
the sequence is also ultimately decreasing.
To this end, without
loss of generality we can assume that $v := \val z_1 \leq \val z_i$
for~$1\leq i\leq \zdim$.
Thus, for~$\sigma\geq v$, $(x^{\sigma-v},0,\dots,0)$ is in~$\cS\trsigma$ but not in~$\cS^{[\sigma+1]}$.

\begin{lem}
The limit~$\cT\trinfty = \bigcap_{\sigma\in\bN}\cT\trsigma$ is equal to $\cS\trinfty \cap \pol\bL_{\leq B_\infty}^{2\zdim}$,
and the module it spans, $\pol\bL\cT\trinfty$, is an $\pol\bL$-submodule of~$\cS\trinfty$.
\end{lem}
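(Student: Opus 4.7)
The plan is to establish both assertions by direct manipulation of the definitions, without needing any new structural input. For the first identification, Definition~\ref{def:truncated-module} gives $\cT\trsigma = \cS\trsigma \cap V$, where $V := \pol\bL_{\leq B_\infty}^{2\zdim}$ is a fixed $\bL$-subspace independent of~$\sigma$. Intersecting over all $\sigma \in \bN$ distributes over the constant factor~$V$, yielding
\[
  \bigcap_{\sigma\in\bN} \cT\trsigma = \Bigl(\bigcap_{\sigma\in\bN} \cS\trsigma\Bigr) \cap V .
\]
It then remains to identify $\bigcap_{\sigma}\cS\trsigma$ with~$\cS\trinfty$.

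For that identification, the inclusion $\cS\trinfty \subseteq \bigcap_{\sigma}\cS\trsigma$ is immediate from Definition~\ref{def:syzygies}, since an exact syzygy~$w$ satisfies $w \cdot f^T = 0 = O(x^\sigma)$ for every~$\sigma$. For the converse, I will use that a formal power series $g \in \fps\bL$ vanishes if and only if $g = O(x^\sigma)$ for all $\sigma \in \bN$. Applied to $g = w \cdot f^T$, this shows that any $w \in \pol\bL^{2\zdim}$ lying in $\cS\trsigma$ for every~$\sigma$ satisfies $w \cdot f^T = 0$, hence belongs to~$\cS\trinfty$. Combining, $\cT\trinfty = \cS\trinfty \cap V$, as claimed.

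For the second assertion, the first part gives $\cT\trinfty \subseteq \cS\trinfty$. Since $\cS\trinfty$ is obviously closed under multiplication of its elements by polynomials in~$\pol\bL$, it is a $\pol\bL$-submodule of~$\pol\bL^{2\zdim}$, so the $\pol\bL$-submodule $\pol\bL\cT\trinfty$ it generates is automatically contained in~$\cS\trinfty$. There is no real obstacle in this lemma: both statements are purely formal consequences of the definitions of $\cS\trsigma$, $\cS\trinfty$, and~$\cT\trsigma$, and in particular require no termination or finiteness argument beyond the characterization of the zero series.
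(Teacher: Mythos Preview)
Your proof is correct and follows essentially the same approach as the paper: both use that $\cT\trsigma = \cS\trsigma \cap V$ with $V$ fixed, together with the identification $\bigcap_\sigma \cS\trsigma = \cS\trinfty$ (which the paper states just before the lemma), to conclude $\cT\trinfty = \cS\trinfty \cap V$. Your write-up is in fact more explicit than the paper's terse proof, and you also spell out the easy second assertion about $\pol\bL\cT\trinfty \subseteq \cS\trinfty$, which the paper leaves implicit; the only thing the paper adds that you omit is the observation that the $\cT\trsigma$ are ultimately constant (as a nonincreasing sequence in a finite-dimensional space), but that is not needed for the lemma as stated.
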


\begin{proof}
The sequence of the~$\cT\trsigma$ is nonincreasing,
and, as a subspace of a fixed finite-dimensional vector space, is ultimately constant.
The result follows after intersecting with~$\pol\bL_{\leq B_\infty}^{2\zdim}$.
\end{proof}

\begin{rem}
Because the degree bound~$B_\infty$ holds a priori only for relations on~$(z_i,Mz_i)$,
the module~$\pol\bL\cT\trinfty$ can be a strict submodule of the syzygy module~$\cS\trinfty$.
\end{rem}

Altogether, we have just proven the next lemma,
which we state after a useful definition.

\begin{defi}\label{def:rank-rho}
For any~$\sigma$, possibly~$\infty$, let $\rho\trsigma$ denote the rank
of the module~$\pol\bL\cT\trsigma$.
\end{defi}

The rank~$\rho\trsigma$ is also the height of the generating matrix~$W\trsigma$
whose rows in~$\pol\bL^{2\zdim}$ generate the module
(see Definition~\ref{def:truncated-module}).

\begin{lem}\label{lem:nonincr-ultim-constant}
For each $\sigma \in \bN$, the $\pol\bL$-module $\pol\bL\cT\trsigma$ is generated
by the rows of the generating matrix~$W\trsigma$.
The sequence of these modules is nonincreasing
and ultimately constant with limit the submodule
$\pol\bL\cT\trinfty$ of\/~$\cS\trinfty$.
A similar property holds for the sequence of their ranks, which we write
\begin{equation}\label{eq:rank-inequalities}
2\zdim \geq \rho\trsigma  \geq \rho\trinfty  .
\end{equation}
\end{lem}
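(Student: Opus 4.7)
The plan is to handle the three assertions in order, each being a straightforward consequence of what has been set up just before the lemma.

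For the first assertion, I would simply invoke the second equality of~\eqref{eq:approx-syz-mod}, which expresses $\cT\trsigma$ as the $\bL$-span $\sum_{\deg w_i\trsigma\leq B_\infty} \pol\bL_{\leq B_\infty - \deg w_i\trsigma} w_i\trsigma$. Multiplying by $\pol\bL$ on the left kills the degree restrictions on the scalar coefficients, so $\pol\bL\cT\trsigma$ is precisely the $\pol\bL$-module spanned by those $w_i\trsigma$ of degree at most~$B_\infty$, namely the rows of~$W\trsigma$. (The fact that a minimal basis reduces any element of $\cS\trsigma$ of degree at most~$d$ using only basis elements of degree at most~$d$, already recalled right before Definition~\ref{def:truncated-module}, is what underlies the equality in~\eqref{eq:approx-syz-mod}; this is the only real input needed.)

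For the second assertion, I would first observe that the sequence $(\cS\trsigma)_{\sigma\in\bN}$ is trivially nonincreasing (a higher order of vanishing is a stronger constraint). Intersecting with the finite-dimensional $\bL$-vector space $\pol\bL_{\leq B_\infty}^{2\zdim}$ preserves this, so $(\cT\trsigma)_\sigma$ is a nonincreasing sequence of subspaces of a fixed finite-dimensional space; it must therefore stabilize, and the already-identified limit is $\cT\trinfty = \cS\trinfty\cap \pol\bL_{\leq B_\infty}^{2\zdim}$. Multiplying on the left by $\pol\bL$ preserves both inclusions and the eventual stationarity (once $\cT\trsigma = \cT\trinfty$ holds, so does $\pol\bL\cT\trsigma = \pol\bL\cT\trinfty$), which gives the announced limit $\pol\bL\cT\trinfty \subseteq \cS\trinfty$.

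For the rank statement, the upper bound $\rho\trsigma \leq 2\zdim$ is immediate since $\pol\bL\cT\trsigma$ sits inside the free module $\pol\bL^{2\zdim}$. The lower bound $\rho\trsigma \geq \rho\trinfty$ follows from the inclusion $\pol\bL\cT\trinfty\subseteq \pol\bL\cT\trsigma$ of $\pol\bL$-submodules of a free module over a principal ideal domain, for which the rank is monotone under inclusion. The ultimate equality $\rho\trsigma = \rho\trinfty$ for $\sigma$ large enough follows at once from the ultimate equality of the modules themselves.

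I do not foresee any real obstacle here: the bulk of the work is already contained in the preceding discussion (existence of a minimal basis as a TOP Gröbner basis of $\cS\trsigma$, leading to~\eqref{eq:approx-syz-mod}) and in the finite-dimensionality of the ambient space $\pol\bL_{\leq B_\infty}^{2\zdim}$. The only mildly delicate point is to be explicit that stabilization of the $\bL$-vector spaces $\cT\trsigma$ forces stabilization of the generated $\pol\bL$-modules, and that rank monotonicity applies to submodules of a free module over the PID~$\pol\bL$.
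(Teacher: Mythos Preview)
Your proposal is correct and follows exactly the approach of the paper: the paper in fact does not give a separate proof of this lemma but simply notes ``Altogether, we have just proven the next lemma'', referring to the preceding discussion (the identity~\eqref{eq:approx-syz-mod}, the nonincreasing and ultimately constant behaviour of~$\cT\trsigma$ inside the finite-dimensional space~$\pol\bL_{\leq B_\infty}^{2\zdim}$, and the preceding lemma on~$\cT\trinfty$). Your write-up merely makes these implicit steps explicit, including the rank monotonicity over the PID~$\pol\bL$.
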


\begin{rem}\label{rem:syzygies-dont-grow}
Given a tuple $f = (f_1,\dots,f_m) \in \fps\bL^m$,
some precision~$\sigma$,
and a minimal basis $w = (w_1,\dots,w_m)$
of the corresponding $\pol\bL$-module~$\cS\trsigma$ of approximate syzygies,
it is immediate, by reduction,
that the basis~$w$ is also
a minimal basis of its $\pol{\bL'}$-module of approximate syzygies
at precision~$\sigma$
for any superfield~$\bL'$ of~$\bL$.
\end{rem}

\subsection{Structured syzygies}
\label{sec:structured-syz}

Given an operator $L \in \pol{\bK}\langle M\rangle$ of order~$r$
with a space of formal power series solutions assumed of $\bL$-dimension~$\zdim > 0$,
we fix a basis $z := (z_1,\dots,z_\zdim)$ of the space of series solutions.
For any $\sigma \in \bN$,
the $\pol\bL$-module $\pol\bL\cT\trsigma$,
which consists of unstructured (approximate) syzygies,
may contain structured syzygies.
We consider in the next definition the set
of those~$a \in \nonz{(\bL^\zdim)}$ that correspond to a structured syzygy,
which together with~$0$ make an $\bL$-cone (see Definition~\ref{def:F-cone}).

\begin{defi}\label{def:cone-A-sigma}
For each~$\sigma \in \bN$,
the \emph{characteristic cone of structured approximate syzygies},
or \emph{characteristic cone} for short,
is the $\bL$-cone
\begin{multline}\label{eq:def-cone-A-sigma-easy}
\cA\trsigma = \{ 0 \} \cup \{ a \in \nonz{(\bL^\zdim)} \mid
  \exists P \in \nonz{\pol\bL}, \
  \exists Q \in \nonz{\pol\bL}, \\
  (-Pa,Qa) \in \pol\bL \cT\trsigma
\} .
\end{multline}
\end{defi}

As a consequence of Lemma~\ref{lem:nonincr-ultim-constant},
the sequence of the characteristic cones~$\cA\trsigma$ is nonincreasing and ultimately constant,
with limit a cone~$\cA\trinfty$ that we proceed to describe.

\begin{lem}\label{lem:direct-sum}
Let $s_1,s_2,\dots$~denote the dimensions of the nontrivial similarity classes~$\nonz{(\fH_1)},\nonz{(\fH_2)},\dots$
of the hypergeometric series solutions of~\eqref{eq:linear}.
There exist full-rank matrices~$H_1,H_2,\dots$
such that:
\begin{itemize}
\item the $\bL$-vector spaces~$\bL^{s_i} H_i$ are in direct sum in~$\bL^\zdim$;
\item the limiting characteristic cone~$\cA\trinfty$ is the union of the vector spaces~$\bL^{s_i} H_i$.
\end{itemize}
\end{lem}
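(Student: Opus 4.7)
The plan is to first identify $\cA\trinfty$ precisely as the coordinate image of the set of $\rat\bL$-hypergeometric series solutions of~\eqref{eq:linear}, and then to transfer the vector-space structure of their $\rat\bL$-similarity classes from $\omdr_\bL$ down to $\bL^\zdim$.

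First I would establish the following characterization: a vector $a \in \nonz{(\bL^\zdim)}$ lies in $\cA\trinfty$ if and only if $y_a := a_1 z_1 + \dots + a_\zdim z_\zdim$ is a nonzero $\rat\bL$-hypergeometric element of $\fps\bL$. In one direction, if $(-Pa,Qa) \in \pol\bL\,\cT\trinfty$ with $P,Q$ nonzero, then since $\pol\bL\,\cT\trinfty \subseteq \cS\trinfty$ by Lemma~\ref{lem:nonincr-ultim-constant}, we have the exact syzygy $-P y_a + Q M y_a = 0$; linear independence of $z_1,\dots,z_\zdim$ forces $y_a \neq 0$, so $M y_a / y_a = P/Q \in \rat\bL$. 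In the other direction, given a nonzero $\rat\bL$-hypergeometric $y_a \in \fps\bL$ solving~\eqref{eq:linear}, write $M y_a / y_a = P/Q$ in lowest terms; the degree bounds of Proposition~\ref{prop:deg-bounds-for-ratfuns} give $\deg P \leq \Bnum$ and $\deg Q \leq \Bden$, so the exact syzygy $(-Pa,Qa)$ belongs to $\cS\trinfty \cap \pol\bL_{\leq B_\infty}^{2\zdim} = \cT\trinfty$, proving $a \in \cA\trinfty$.

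Next I would apply Theorem~\ref{thm:structure-main} to the admissible pair $F = \rat\bL$ and $\dring = \omdr_\bL$ (admissible by Corollary~\ref{cor:DL-1-univ-U}). This partitions the $\rat\bL$-hypergeometric solutions of~\eqref{eq:linear} in $\omdr_\bL$ into nontrivial similarity classes $\nonz{(\fH_i^{\omdr})}$, and each augmented class $\fH_i^{\omdr}$ is an $\bL$-vector space, with these vector spaces in direct sum. Intersecting each $\fH_i^{\omdr}$ with the $\zdim$-dimensional $\bL$-vector space $\cV = \bL z_1 \oplus \dots \oplus \bL z_\zdim$ of series solutions of~\eqref{eq:linear} yields an $\bL$-subspace $\fH_i := \fH_i^{\omdr} \cap \cV$ of $\cV$; keeping only the indices $i$ for which $\fH_i \neq \{0\}$ gives the nontrivial similarity classes of hypergeometric series solutions, with $s_i := \dim_\bL \fH_i$, and these $\fH_i$ remain in direct sum in $\cV$.

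To finish, I would transfer $\fH_i$ to coordinates. Picking a basis $h_{i,1},\dots,h_{i,s_i}$ of $\fH_i$ and writing $h_{i,j} = \sum_k H_{i,j,k} z_k$ defines a matrix $H_i \in \bL^{s_i \times \zdim}$, which has full row rank by the linear independence of the chosen basis. The linear isomorphism $\bL^\zdim \to \cV$, $a \mapsto y_a$, sends $\bL^{s_i} H_i$ onto $\fH_i$, so the direct-sum property in $\cV$ becomes the direct-sum property of the $\bL^{s_i} H_i$ inside $\bL^\zdim$, and the characterization from the first step reads $\cA\trinfty = \bigcup_i \bL^{s_i} H_i$.

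I expect the main subtlety to lie in the first paragraph, specifically in the converse direction: one must both invoke the degree bounds of Proposition~\ref{prop:deg-bounds-for-ratfuns} to ensure that genuine Riccati solutions survive the truncation $B_\infty$, and exploit the fact—already encoded in Lemma~\ref{lem:nonincr-ultim-constant}—that the limit module $\pol\bL\,\cT\trinfty$ consists of true (not merely approximate) syzygies. Once these ingredients are in place, the rest is essentially a coordinate-change argument, and the direct-sum structure is inherited for free from Theorem~\ref{thm:structure-main}.
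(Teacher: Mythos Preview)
Your proof is correct and follows essentially the same strategy as the paper: identify $\cA\trinfty$ with the coordinate images of the $\rat\bL$-hypergeometric series solutions, invoke Theorem~\ref{thm:structure-main} for the direct-sum structure of similarity classes, and transfer to~$\bL^\zdim$ via the coordinate isomorphism $a \mapsto a z^T$. Your argument is in fact a bit more explicit than the paper's in two respects: you spell out the appeal to Proposition~\ref{prop:deg-bounds-for-ratfuns} needed to place a genuine structured syzygy inside~$\cT\trinfty$ (the paper glosses over this), and you apply Theorem~\ref{thm:structure-main} with $F = \rat\bL$ rather than the paper's $F = \fls\bL$, which is the cleaner choice since the classes in play are $\rat\bL$-similarity classes.
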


\begin{proof}
Consider an $\rat\bL$-similarity class
of hypergeometric series solutions~$\nonz{\fH}$ of~\eqref{eq:linear},
described by a basis $h := (h_1,\dots,h_s)$.
Expressing this family in the basis~$z$ provides us
with a matrix $H \in \bL^{s\times \zdim}$ such that $h^T = Hz^T$.
\hbox{Each $c \in \nonz{(\bL^s)}$} yields a rational-function solution~$M(ch^T)/(ch^T)$ of~\eqref{eq:riccati},
also equal to~$M(cHz^T)/(cHz^T)$.
After writing $P/Q$ for this rational function,
we get that $(-PcH, QcH)$~is a structured approximate syzygy to any order~$\sigma$,
so that $cH$~is in the cone~$\cA\trinfty$.
This cone therefore contains~$\bL^s H$.
Iterating over the nontrivial similarity classes
$\nonz{(\fH_1)}$, $\nonz{(\fH_2)}$,~\dots
of hypergeometric series solutions,
we therefore get dimensions~$s_1,s_2,\dots$ and matrices~$H_1,H_2,\dots$
such that each~$\bL^{s_i} H_i$ is contained in~$\cA\trinfty$.
By point~\ref{it:vect-H} of Theorem~\ref{thm:structure-main} applied
to $\dring = \omdr_\bL$ (see Definition~\ref{def:omdr}) and $F = \fls\bL$,
the~$\fH_i$ are in direct sum in~$\omdr_\bL$,
inducing, by the map $a \mapsto az^T$,
that the~$\bL^{s_i} H_i$ are in direct sum in the $\bL$-space generated by~$\cA\trinfty$.
Conversely, any nonzero~$a \in \cA\trinfty$ yields
a nonzero rational solution~$P/Q$ of the Riccati Mahler equation,
thus, by point~\ref{it:param-H} of Theorem~\ref{thm:structure-main}
must come from some~$\nonz{(\fH_i)}$.
So, the cone~$\cA\trinfty$ consists solely of the union of the $\bL$-vector spaces~$\bL^{s_i} H_i$.
\end{proof}

In view of~\eqref{eq:rank-inequalities},
if some rank~$\rho\trsigma$ is~$0$, then~$\rho\trinfty = 0$,
meaning there are no nontrivial syzygies at all,
and our algorithm will terminate in this case.
So, we continue the analysis by assuming~$\rho\trinfty > 0$.
Additionally,
if the rank~$\rho\trinfty$ were~$2\zdim$,
there would exist some finite~$\sigma$ for which~$\rho\trsigma = 2\zdim$,
which would yield a nonsingular square matrix~$W\trsigma$.
We could then perform row operations over~$\pol\bL$ so as to
produce $2\zdim$ nonzero polynomials~$p_i$ satisfying
$p_i z_i = p_{\zdim+i} M z_i = 0$ for $1 \leq i \leq \zdim$,
but this would be a contradiction to the non-nullity of the~$z_i$.
Thus, we have~$0 < \rho\trinfty < 2\zdim$.
Consequently, we also have~$\rho\trsigma < 2\zdim$ for large enough~$\sigma$. If, conversely, we observe~$\rho\trsigma = 2\zdim$,
then we know that $\sigma$~is too small to provide a description of the wanted solutions,
and our algorithm will continue with increased~$\sigma$.

\begin{ex}\label{ex:adamczewski-faverjon-case-no}
We use as a working example throughout~\S\ref{sec:syzygies} the Mahler operator of order~$r = 4$ and degree~$d = 258$ introduced in Example~\ref{ex:adamczewski-faverjon-intro}.
This operator admits a four-dimensional vector space of formal power series solutions in~$\fps\bQ$.
For any $\sigma \leq 120$,
the computation of the generating matrix~$W\trsigma$ (Definition~\ref{def:truncated-module})
leads to a rank $\rho\trsigma = 8 = 2\zdim$  (Definition~\ref{def:rank-rho}),
proving that such~$\sigma$ are too low to provide candidates.
\end{ex}

\begin{hyp}\label{hyp:rho-infty>0}
For \S\ref{sec:linalg-cond}, \S\ref{sec:cand-sols}, and~\S\ref{sec:validation},
we assume $1 \leq \rho\trinfty$.
\end{hyp}

\subsection{A linear-algebra condition on structured syzygies}
\label{sec:linalg-cond}

From now on, we assume~$1 \leq \rho\trsigma \leq 2\zdim-1$.
The following lemma is a variant of Cramer's rule.

\begin{lem}\label{lem:n+1-times-n}
Let $M$ be a matrix of size $(n+1) \times n$ and rank~$n$,
with coefficients in a field~$F$.
Then, the left kernel of~$M$ has dimension~$1$ over~$F$
and a nonzero kernel element is
$K := \bigl(\Delta_1, \dots, (-1)^{i+1}\Delta_i, \dots, (-1)^n\Delta_{n+1}\bigr)$,
where $\Delta_i$~denotes the determinant
of the square submatrix obtained by removing the $i$th row.
\end{lem}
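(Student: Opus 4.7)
The plan is to prove the two assertions separately: first the dimension count, then the explicit formula for a generator of the left kernel.

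For the dimension, I would simply invoke the rank-nullity theorem: the left kernel of $M$ is the kernel of the transposed map $F^{n+1} \to F^n$ represented by $M^T$, whose kernel has dimension $(n+1) - \operatorname{rank}(M^T) = (n+1) - n = 1$. This gives the first claim immediately.

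For the explicit generator, the key step is to verify that $K \cdot M = 0$, that is, for every column index $j \in \{1,\dots,n\}$,
\begin{equation*}
\sum_{i=1}^{n+1} (-1)^{i+1} \Delta_i M_{i,j} = 0 .
\end{equation*}
The standard trick is to form the auxiliary $(n+1) \times (n+1)$ square matrix $\widetilde M_j$ obtained by prepending to $M$ a copy of its $j$th column. Since $\widetilde M_j$ has two identical columns, $\det \widetilde M_j = 0$. On the other hand, expanding $\det \widetilde M_j$ along the prepended first column reproduces exactly the sum above (the cofactor of the $(i,1)$-entry of $\widetilde M_j$ is, up to the sign $(-1)^{i+1}$, the minor $\Delta_i$ of $M$). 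This yields the vanishing of the sum, hence $K \cdot M = 0$.

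Finally, $K$~is nonzero: because $M$ has rank~$n$, it admits an $n \times n$ submatrix of nonzero determinant, so at least one of the $\Delta_i$ is nonzero. Combined with the one-dimensionality of the left kernel already established, this shows that $K$ spans it. The main (very mild) obstacle is merely keeping sign conventions consistent between the cofactor expansion of $\widetilde M_j$ and the definition of the $\Delta_i$; no serious difficulty arises.
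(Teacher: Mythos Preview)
Your proof is correct and follows essentially the same approach as the paper: augment $M$ on the left by one of its own columns, observe the resulting square matrix is singular, and expand along the first column to obtain $KM=0$; then use $\operatorname{rk} M = n$ to ensure some $\Delta_i \neq 0$. The only minor difference is that you spell out the rank--nullity step for the dimension of the left kernel, which the paper leaves implicit.
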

\begin{proof}
Augmenting~$M$ by any column~$C$ of it on its left
yields a matrix with determinant~$0$.
Expanding the determinant with respect to the first column
shows that $K$~satisfies $KC = 0$.
Combining all columns~$C$ yields ${KM = 0}$.
Because~$\rk M = n$,
the minors~$\Delta_i$ cannot all be~$0$ simultaneously,
which implies~$K\neq0$.
\end{proof}

The characteristic cone~$\cA\trsigma$ was defined by~\eqref{eq:def-cone-A-sigma-easy}.
We proceed to formulate an alternative description of it in
terms of the generators of the module~$\pol\bL\cT\trsigma$
provided by the rows of~$W\trsigma$.

\begin{defi}\label{def:W-a}
Let~$W_a\trsigma$ denote the matrix obtained by augmenting~$W\trsigma$ at its bottom
with the two-row matrix
\begin{equation}\label{eq:two-row-matrix}
\begin{pmatrix}
a & 0 \\
0 & a
\end{pmatrix} =
\begin{pmatrix}
a_1 & \dots & a_\zdim & 0 & \dots & 0 \\
0 & \dots & 0 & a_1 & \dots & a_\zdim
\end{pmatrix} .
\end{equation}
We call it the \emph{augmented generating matrix}.
\end{defi}

\begin{prop}\label{prop:cone-A-sigma-alt}
The characteristic cone~$\cA\trsigma$ (Definition~\ref{def:cone-A-sigma}) satisfies
\begin{multline}\label{eq:def-cone-A-sigma}
\cA\trsigma = \{ 0 \} \cup \{ a \in \nonz{(\bL^\zdim)} \mid
  \exists P \in \nonz{\pol\bL}, \
  \exists Q \in \nonz{\pol\bL}, \
  \exists \leftkern \in \pol\bL^{\rho\trsigma}, \\
  (\leftkern_1,\dots,\leftkern_{\rho\trsigma},P,-Q)W_{a}\trsigma = 0
\} .
\end{multline}

\end{prop}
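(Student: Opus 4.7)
The plan is to reduce the claim to a direct unpacking of the block structure of~$W_a\trsigma$, using Lemma~\ref{lem:nonincr-ultim-constant} as the bridge between ``being in~$\pol\bL\cT\trsigma$'' (which is how $\cA\trsigma$ was first defined in~\eqref{eq:def-cone-A-sigma-easy}) and ``being a polynomial combination of the rows of~$W\trsigma$''. Concretely, by that lemma, a row vector $v\in\pol\bL^{2\zdim}$ lies in~$\pol\bL\cT\trsigma$ if and only if there exists a row $\leftkern\in\pol\bL^{\rho\trsigma}$ satisfying $\leftkern\,W\trsigma = v$. Applying this criterion to $v = (-Pa, Qa)$ translates the right-hand side of~\eqref{eq:def-cone-A-sigma-easy} into an existence statement involving~$\leftkern$, $P$, and~$Q$.

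The second step is to identify this existence statement with the condition on the augmented matrix~$W_a\trsigma$ appearing in~\eqref{eq:def-cone-A-sigma}. Splitting $W_a\trsigma$ into its top block~$W\trsigma$ and its bottom block $\bigl(\begin{smallmatrix}a & 0\\ 0 & a\end{smallmatrix}\bigr)$ from~\eqref{eq:two-row-matrix}, I would expand the product
\[
(\leftkern_1,\dots,\leftkern_{\rho\trsigma},P,-Q)\,W_a\trsigma
= \leftkern\,W\trsigma + P\,(a,0) - Q\,(0,a)
= \leftkern\,W\trsigma + (Pa,-Qa).
\]
Hence the vanishing of this quantity is exactly the equation $\leftkern\,W\trsigma = (-Pa,Qa)$, and the two characterizations of~$\cA\trsigma$ coincide. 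The case $a=0$ is trivially in both sides by definition, so it only needs a one-line mention.

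There is no real obstacle: the proof is a bookkeeping argument that unfolds the definition of~$W_a\trsigma$ and invokes the fact, already recorded in Lemma~\ref{lem:nonincr-ultim-constant}, that the rows of~$W\trsigma$ generate~$\pol\bL\cT\trsigma$. The only point worth double-checking while writing is that the signs and the placement of the zero blocks in~\eqref{eq:two-row-matrix} are consistent with the sign convention $(-Pa,Qa)$ used in Definition~\ref{def:cone-A-sigma}, so that the polynomial row $(\leftkern,P,-Q)$ (and not, say, $(\leftkern,-P,Q)$) is indeed what annihilates~$W_a\trsigma$.
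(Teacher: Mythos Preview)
Your argument is correct and is exactly the unpacking the paper has in mind; in fact, the paper does not even write out a proof for this proposition, treating it as immediate from the definitions and from Lemma~\ref{lem:nonincr-ultim-constant}. Your expansion of the block product and the sign check are precisely what is needed.
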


Note that this characterization
is in fact independent of the  choice of the generating matrix~$W\trsigma$.
We introduce the cone~$\cV\trsigma$ defined analogously
by the looser constraint $(\leftkern,P,-Q) \neq 0$
in place of $P \neq 0$ and~$Q \neq 0$:
this is the cone of values of~$a$
for which $W_{a}\trsigma$ has a nontrivial left kernel.

\begin{defi}\label{def:cone-V-sigma}
For each~$\sigma \in \bN$,
the \emph{relaxed cone of structured approximate syzygies},
or \emph{relaxed cone} for short,
is the $\bL$-cone
\begin{equation}\label{eq:def-cone-V-sigma}
\cV\trsigma = \{ 0 \} \cup \{ a \in \nonz{(\bL^\zdim)} \mid
  \exists K \in \nonz{(\pol\bL^{\rho\trsigma+2})}, \ K W_a\trsigma = 0
\} .
\end{equation}
\end{defi}

Note the inclusion $\cA\trsigma \subseteq \cV\trsigma$
of the characteristic cone into the relaxed cone.
The following lemma describes a case of equality.

\begin{lem}\label{lem:P-Q-nonzero}
For $\sigma$~large enough, meaning $\cT\trsigma = \cT\trinfty$,
assume $1 \leq \rho\trsigma \leq 2\zdim-1$.
Then:
\begin{enumerate}
\item\label{it:P-Q-nonzero}
if, for some nonzero~$a$,
the row $K = (\leftkern_1,\dots,\leftkern_{\rho\trsigma},-P,Q) \neq 0$
satisfies $K W_{a}\trsigma = 0$,
then neither~$P$ nor~$Q$ can be zero;
\item\label{it:sigma-large-cV=cA} $\cA\trsigma = \cV\trsigma = \cA\trinfty$.
\end{enumerate}
\end{lem}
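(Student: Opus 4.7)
The plan is to exploit the hypothesis $\cT\trsigma = \cT\trinfty$, which by Lemma~\ref{lem:nonincr-ultim-constant} guarantees $\pol\bL \cT\trsigma = \pol\bL\cT\trinfty \subseteq \cS\trinfty$. Consequently, for such $\sigma$, any row of the form $\Pi W\trsigma$ is an \emph{exact} syzygy of the tuple $(z_1,\dots,z_\zdim, Mz_1,\dots,Mz_\zdim)$, not merely an approximate one. This promotion from approximate to exact is the key lever of the proof: it converts algebraic constraints into genuine identities in the integral domain $\fps\bL$.

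For point~\ref{it:P-Q-nonzero}, I would unfold $K W_a\trsigma = 0$ as $\Pi W\trsigma = (Pa, -Qa)$ (shorthand for the $2\zdim$-tuple $(Pa_1,\dots,Pa_\zdim,-Qa_1,\dots,-Qa_\zdim)$) and argue by contradiction in three cases. If both $P = 0$ and $Q = 0$, then $\Pi W\trsigma = 0$; since $W\trsigma$ has $\rho\trsigma$ rows and rank $\rho\trsigma$ over $\rat\bL$ (Definition~\ref{def:rank-rho} and the paragraph following Definition~\ref{def:truncated-module}), its rows are $\rat\bL$-linearly independent, so $\Pi = 0$, giving $K = 0$, which contradicts the hypothesis $K \neq 0$. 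If $P = 0$ and $Q \neq 0$, the syzygy identity becomes $\sum_i(-Qa_i)\,Mz_i = 0$, i.e.\ $Q\cdot M(az^T) = 0$ in $\fps\bL$; as $\fps\bL$ is a domain and $Q \neq 0$, this forces $M(az^T) = 0$, hence $az^T = 0$ by injectivity of $M$ on $\fps\bL$, and finally $a = 0$ since the $z_i$ are $\bL$-linearly independent, contradicting $a \neq 0$. The case $Q = 0, P \neq 0$ is symmetric.

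Point~\ref{it:sigma-large-cV=cA} then follows formally. The equality $\cA\trsigma = \cA\trinfty$ is the stationarity of the nonincreasing sequence of characteristic cones, valid once $\cT\trsigma = \cT\trinfty$ (see the discussion after Definition~\ref{def:cone-A-sigma}). The inclusion $\cA\trsigma \subseteq \cV\trsigma$ is immediate from Definitions~\ref{def:cone-A-sigma} and~\ref{def:cone-V-sigma} since requiring $P,Q$ nonzero is stricter than requiring $K$ nonzero. Conversely, any nonzero $a \in \cV\trsigma$ admits a nonzero witness $K = (\Pi,\alpha,\beta)$ with $K W_a\trsigma = 0$; point~\ref{it:P-Q-nonzero} (applied with $-P = \alpha$, $Q = \beta$) ensures that both trailing coordinates are nonzero, so after a routine sign relabeling this witness fits the defining condition~\eqref{eq:def-cone-A-sigma} of $\cA\trsigma$.

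The main obstacle is almost entirely bookkeeping: one must keep straight the several sign conventions (the $(\Pi,-P,Q)$ of the statement, the $(\Pi,P,-Q)$ of~\eqref{eq:def-cone-A-sigma}, and the unadorned $K$ of Definition~\ref{def:cone-V-sigma}) and confirm that the three rank/injectivity arguments in $\fps\bL$ really do fire in each degenerate case. No deeper input is required beyond the domain property of $\fps\bL$, the injectivity of $M$ on it, and the fact that $W\trsigma$ has full row rank by construction.
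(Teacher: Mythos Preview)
Your proof is correct and follows essentially the same approach as the paper: both exploit that under $\cT\trsigma = \cT\trinfty$ the rows of $W\trsigma$ become exact syzygies, derive the identity $P\,az^T = Q\,a(Mz)^T$, and then combine the domain property of $\fps\bL$, the $\bL$-linear independence of the $z_i$ (and of the $Mz_i$), and the full row rank of $W\trsigma$ to rule out $P=0$ or $Q=0$. The only cosmetic difference is that the paper argues $P=0\Rightarrow Q=0$ (and symmetrically) before invoking the rank argument once, whereas you split into three disjoint cases; the ingredients and the deduction of point~\ref{it:sigma-large-cV=cA} are identical.
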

\begin{proof}
Taking~$a$ as in the first point implies
$Pa z^T = Qa Mz^T + \leftkern W\trsigma (z,Mz)^T$,
which reduces to~$Pa z^T = Qa Mz^T$
because $W\trsigma (z,Mz)^T = 0$ by the hypothesis on~$\sigma$.
Now, if $P$~is zero, then $Qa = 0$
because the entries of~$M z$ are independent over~$\bL$,
forcing~$Q = 0$ because~$a\neq0$.
Similarly, $Q = 0$ implies $P = 0$.
So, if $P$ or~$Q$ is zero, then both must be zero,
which contradicts that
the generating matrix~$W\trsigma$ has independent rows over~$\rat\bL$,
thus proving $\cV\trsigma \subseteq \cA\trsigma$,
and therefore $\cA\trsigma = \cV\trsigma$.
Moreover, if two integers $\sigma_1$ and~$\sigma_2$ are
such that $\cT^{[\sigma_1]} = \cT^{[\sigma_2]}$,
then $\cA^{[\sigma_1]} = \cA^{[\sigma_2]}$
by the definition~\eqref{eq:def-cone-A-sigma-easy}.
In particular, for all~$\sigma$ satisfying $\cT\trsigma = \cT\trinfty$,
all the cones~$\cA\trsigma$ are equal and equal to~$\cA\trinfty$.
\end{proof}

\begin{lem}
The limit~$\cV\trinfty$ of relaxed cones exists and is equal to the limit~$\cA\trinfty$ of characteristic cones.
\end{lem}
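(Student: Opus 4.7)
The proof will be a short assembly of material already established. The plan is to observe that, for $\sigma$ large enough, all the hypotheses of Lemma~\ref{lem:P-Q-nonzero}(\ref{it:sigma-large-cV=cA}) are satisfied, so that the sequence $\cV\trsigma$ is eventually constant with value~$\cA\trinfty$.

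First I will verify that, for $\sigma$ sufficiently large, one has $1 \leq \rho\trsigma \leq 2\zdim-1$ and $\cT\trsigma = \cT\trinfty$. The lower bound $\rho\trsigma \geq 1$ is immediate from Hypothesis~\ref{hyp:rho-infty>0} and the inequality $\rho\trsigma \geq \rho\trinfty$ coming from Lemma~\ref{lem:nonincr-ultim-constant}. The upper bound $\rho\trinfty < 2\zdim$ was already argued after Lemma~\ref{lem:direct-sum}: if $\rho\trinfty$ were equal to $2\zdim$, we could extract from $W\trsigma$ a nonsingular square matrix over $\pol\bL$, which by row elimination would produce nonzero polynomial annihilators of each $z_i$ and each $Mz_i$, contradicting that the $z_i$ are nonzero power series. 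Since $\rho\trsigma$ is nonincreasing with limit $\rho\trinfty < 2\zdim$, we obtain $\rho\trsigma \leq 2\zdim-1$ for $\sigma$ large enough. Ultimate constancy of $\cT\trsigma$ was already established in Lemma~\ref{lem:nonincr-ultim-constant}.

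Next, for such~$\sigma$, Lemma~\ref{lem:P-Q-nonzero}(\ref{it:sigma-large-cV=cA}) applies and yields $\cV\trsigma = \cA\trsigma = \cA\trinfty$. Thus the sequence $\cV\trsigma$ is eventually constant; consequently its limit exists in the unambiguous sense of the eventual value (equivalently, as the tail intersection $\bigcap_{\sigma \geq \sigma_0} \cV\trsigma$ for any sufficiently large~$\sigma_0$), and this limit equals $\cA\trinfty$.

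There is no real obstacle here: the work was done in Lemma~\ref{lem:P-Q-nonzero} and in the rank analysis preceding Example~\ref{ex:adamczewski-faverjon-case-no}. The only point that requires a moment of care is the unambiguous interpretation of the limit of $(\cV\trsigma)_\sigma$, which need not be monotone a priori; but eventual constancy makes the notion of limit trivial and justifies the notation $\cV\trinfty$.
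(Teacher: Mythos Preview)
Your proof is correct. Both you and the paper ultimately invoke Lemma~\ref{lem:P-Q-nonzero}(\ref{it:sigma-large-cV=cA}) to pin down the eventual value of~$\cV\trsigma$ as~$\cA\trinfty$, so the core mechanism is the same.

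There is one genuine difference worth noting. You establish only eventual constancy of~$(\cV\trsigma)_\sigma$ and, as you acknowledge, do not claim monotonicity. The paper instead first proves that $\cV\trsigma$~is nonincreasing: it observes that $a \in \cV\trsigma$ is equivalent to the module $\pol\bL\cT\trsigma \cap \pol\bL^2\bigl(\begin{smallmatrix} a & 0 \\ 0 & a \end{smallmatrix}\bigr)$ being nonzero, and since $\pol\bL\cT\trsigma$ is nonincreasing (Lemma~\ref{lem:nonincr-ultim-constant}), so is~$\cV\trsigma$. This makes the limit a bona fide intersection $\bigcap_\sigma \cV\trsigma$ and aligns~$\cV\trsigma$ with the monotone behavior of the other objects in Table~\ref{tab:cast-of-characters}. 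Your route is shorter and sufficient for the stated lemma; the paper's route yields the extra structural fact of monotonicity, which is pleasant but not required here.
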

\begin{proof}
The existence of~$K$ in the definition of~$\cV\trsigma$ is equivalent
to the existence of a nonzero element in
\begin{equation*}
\pol\bL^{\rho\trsigma} W\trsigma \cap
\pol\bL^2 \left(\begin{smallmatrix} a & 0 \\ 0 & a \end{smallmatrix}\right)
= \pol\bL \cT\trsigma \cap
\pol\bL^2 \left(\begin{smallmatrix} a & 0 \\ 0 & a \end{smallmatrix}\right) .
\end{equation*}
By Lemma~\ref{lem:nonincr-ultim-constant},
this intersection is nonincreasing and ultimately constant,
so that $\cV\trsigma$~is nonincreasing and ultimately constant as well.
This shows the existence of~$\cV\trinfty$.
Now, by Hypothesis~\ref{hyp:rho-infty>0},
if $\rho\trsigma < 2\zdim$ for some~$\sigma$,
then for all~$\tau \geq \sigma$, we have $1 \leq \rho^{[\tau]} \leq 2\zdim-1$.
So if some~$\sigma$ satisfies the hypotheses of Lemma~\ref{lem:P-Q-nonzero},
then all~$\tau \geq \sigma$ satisfy them as well.
We deduce~$\cV\trinfty = \cA\trinfty$.
\end{proof}

\begin{rem}
In order to compute~$\cA\trinfty$,
Algorithm~\ref{algo:HP} will not manipulate~$\cA\trsigma$ directly
and will manipulate~$\cV\trsigma$ instead.
It will therefore work by increasing~$\sigma$
until the equalities of Lemma~\ref{lem:P-Q-nonzero}(\ref{it:sigma-large-cV=cA}) are satisfied.
\end{rem}

\subsection{Candidate solutions}
\label{sec:cand-sols}

Two cases need different analyses:
\emph{(i)\/}~$\rho\trsigma = 2\zdim-1$;
\emph{(ii)\/}~$\rho\trsigma \leq 2\zdim-2$.

\subsubsection{Case~$\rho\trsigma = 2\zdim-1$.}\label{sec:case-i}

In the first case,
the $(\rho\trsigma+2) \times (\rho\trsigma+1)$-matrix~$W_{a}\trsigma$ has
a nontrivial left kernel for any value of~$a$:
the relaxed cone~$\cV\trsigma$ is the whole space~$\bL^\zdim$.
Additionally,
the $\rho\trsigma \times (\rho\trsigma+1)$-matrix~$W\trsigma$ has full rank over~$\rat\bL$,
and thus has a $1$-dimensional right kernel.
Applying Lemma~\ref{lem:n+1-times-n} to the transposed matrix
provides a polynomial generator
$K^T = (\Delta_1,\dots,(-1)^{i+1}\Delta_i,\dots,-\Delta_{2\zdim})^T$
of this right kernel,
given by minors~$\Delta_i$ obtained by removing columns.
As ranks are unchanged by extending the base field to~$\fls\bL$,
the generating matrix~$W\trsigma$ also has a $1$-dimensional right kernel over~$\fls\bL$,
and the same~$K^T$ is a generator of this $\fls\bL$-vector space.

\begin{lem}\label{lem:formula-when-rho=2t-1}
For $\sigma$~large enough, meaning $\cT\trsigma = \cT\trinfty$,
assume $\rho\trsigma = 2\zdim-1$.
Then:
\begin{itemize}
\item $\cA\trsigma = \cV\trsigma = \bL^\zdim$,
\item  those rational solutions $u\in \rat\bL$ to the Riccati equation~\eqref{eq:riccati}
that are  quotients of the form $M w / w$
for $w = \sum_{i=1}^\zdim a_i z_i$ and $a \in \nonz{(\bL^\zdim)}$
are given by the parametrization
\begin{equation}\label{eq:cand-from-Delta}
(a_1:\ldots:a_\zdim) \mapsto u =
  \frac{\sum_{i=1}^\zdim a_i (-1)^{i+\zdim+1} \Delta_{i+\zdim}}{\sum_{i=1}^\zdim a_i (-1)^{i+1} \Delta_i} ,
\qquad
\text{with $a \in \bP^{\zdim-1}(\bL)$.}
\end{equation}
\end{itemize}
\end{lem}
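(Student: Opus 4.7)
The plan is to address the two claims separately. For the first, note that the augmented matrix $W_a\trsigma$ has $\rho\trsigma+2 = 2\zdim+1$ rows and $2\zdim$ columns, so it admits a nontrivial left kernel for any choice of $a$; hence $\cV\trsigma = \bL^\zdim$ by Definition~\ref{def:cone-V-sigma}. Since the assumption $\rho\trsigma = 2\zdim-1$ falls within the range $1 \leq \rho\trsigma \leq 2\zdim-1$ and $\sigma$ is large enough that $\cT\trsigma = \cT\trinfty$, point~\ref{it:sigma-large-cV=cA} of Lemma~\ref{lem:P-Q-nonzero} immediately yields $\cA\trsigma = \cV\trsigma$, hence $\cA\trsigma = \bL^\zdim$ as well.

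For the parametrization, the key step is to identify the right kernel of $W\trsigma$. Because $\sigma$ is large, every row of $W\trsigma$ lies in $\cT\trinfty \subseteq \cS\trinfty$ and is therefore an exact syzygy, so $W\trsigma \cdot (z, Mz)^T = 0$. The matrix $W\trsigma$ has size $\rho\trsigma \times (\rho\trsigma+1)$ and rank $\rho\trsigma$ over $\rat\bL$, so its right kernel is one-dimensional over $\rat\bL$; extension of scalars to $\fls\bL$ preserves the rank, so this kernel remains one-dimensional over $\fls\bL$. Applying Lemma~\ref{lem:n+1-times-n} to the transpose of $W\trsigma$, a generator of this kernel is precisely the polynomial vector $K^T = (\Delta_1, -\Delta_2, \dots, (-1)^{2\zdim+1}\Delta_{2\zdim})^T$ of signed maximal minors introduced above the lemma.

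It follows that there exists a nonzero $c \in \fls\bL$ such that $(z, Mz)^T = c \, K^T$; reading coordinates gives $z_i = c(-1)^{i+1}\Delta_i$ and $Mz_i = c(-1)^{i+\zdim+1}\Delta_{i+\zdim}$ for $1 \leq i \leq \zdim$. For any $a \in \nonz{(\bL^\zdim)}$, the series $w = \sum_i a_i z_i$ is nonzero by linear independence of the~$z_i$, which forces the polynomial $\sum_i a_i (-1)^{i+1}\Delta_i$ to be nonzero. The factor $c$ then cancels in the quotient $Mw/w$, yielding exactly the rational expression~\eqref{eq:cand-from-Delta}. I foresee no serious obstacle: everything reduces to the linear-algebraic identification of the (unique up to scalar) right kernel vector of $W\trsigma$ via the determinantal formula provided by Lemma~\ref{lem:n+1-times-n}, together with the observation that at this precision the rows of $W\trsigma$ are true—not merely approximate—syzygies.
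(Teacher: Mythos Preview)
Your proof is correct and follows essentially the same approach as the paper: both identify $(z,Mz)^T$ as lying in the one-dimensional right kernel of $W\trsigma$, apply Lemma~\ref{lem:n+1-times-n} to the transpose to express that kernel via the signed minors~$\Delta_i$, and read off the formula~\eqref{eq:cand-from-Delta} after cancelling the common series factor. The only difference is organizational: for the equality $\cA\trsigma = \cV\trsigma = \bL^\zdim$ you invoke Lemma~\ref{lem:P-Q-nonzero}\ref{it:sigma-large-cV=cA} directly, whereas the paper instead re-derives $\cA\trsigma = \bL^\zdim$ from scratch by exhibiting, for each $a$, the structured syzygy $(-Pa,Qa)$ coming from the rational expression $Mw/w = P/Q$ just obtained, and then deduces $\cV\trsigma = \bL^\zdim$ from the inclusions $\cA\trsigma \subseteq \cV\trsigma \subseteq \bL^\zdim$; your shortcut is legitimate and slightly more economical.
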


\begin{proof}
By the assumption on~$\sigma$,
$(z_1,\dots,z_\zdim,Mz_1,\dots,Mz_\zdim)^T$ must be in the kernel of the generating matrix~$W\trsigma$,
and thus must be a multiple of~$K^T$ by a series~$z_0 \in \nonz{\fls\bL}$.
In particular, $z_i = (-1)^{i+1}\Delta_i z_0$ for~$1\leq i\leq \zdim$,
so all the~$z_i$ are $\rat\bL$-similar.
For any $a \in \nonz{(\bL^\zdim)}$,
the nonzero series $w = \sum_{i=1}^\zdim a_i z_i$ is a solution to~\eqref{eq:linear}
and an immediate calculation shows that the quotient~$M w / w$
is a rational solution from~$\rat\bL$
given by the formula in~\eqref{eq:cand-from-Delta}.
That is, $w$~is hypergeometric and,
by point~\ref{it:param-H} of Theorem~\ref{thm:structure-main},
$u := M w / w$ is a solution to~\eqref{eq:riccati}.
This shows that \eqref{eq:cand-from-Delta}~parametrizes
all solutions~$u$ that are quotients~$M w / w$ and rational.
Any such solution, given by~$a\neq0$, provides
nonzero $P$ and~$Q$ satisfying $Mw/w = P/Q$.
The nullity of
$(P,-Q)
\left(\begin{smallmatrix}
a & 0 \\
0 & a
\end{smallmatrix}\right)
(z, Mz)^T$
implies that
$(P,-Q)
\left(\begin{smallmatrix}
a & 0 \\
0 & a
\end{smallmatrix}\right)$
is a linear combination of the rows of~$W\trsigma$ with rational function coefficients,
that is, there exists~$\leftkern \in \rat\bL^{\rho\trsigma}$ satisfying
$(\leftkern, P, -Q) W_{a}\trsigma = 0$.
After clearing denominators, this shows~$a \in \cA\trsigma$.
So we proved the equality $\cA\trsigma = \bL^\zdim$,
from which derives the equality $\cV\trsigma = \bL^\zdim$
because $\cA\trsigma \subseteq \cV\trsigma \subseteq \bL^\zdim$.
\end{proof}

The algorithm does not decide at first if $\cT\trsigma = \cT\trinfty$.
Instead, the formula~\eqref{eq:cand-from-Delta} of Lemma~\ref{lem:formula-when-rho=2t-1}
introduces
a parametrized candidate for rational solutions.
If this parametrized rational function
is later verified to be an actual parametrized solution,
in other words if the algorithm reaches a~$\sigma$ large enough to ensure
\begin{equation*}
u = \frac{\sum_i a_i z_{i+\zdim}}{\sum_i a_i z_i} ,
\end{equation*}
then the similarity class of hypergeometric solutions of the~$z_i$
is proved to have $\bL$-dimension~$\zdim$.

\begin{ex}[continuing from Example~\ref{ex:adamczewski-faverjon-case-no}]
\label{ex:adamczewski-faverjon-case-i}
For each $\sigma$ between $121$ and~$123$,
the computation of the generating matrix~$W\trsigma$
leads to a rank $\rho\trsigma = 7 = 2\zdim-1$.
In each case,
the rational candidate~\eqref{eq:cand-from-Delta}
has a numerator of degree~$81$ in~$x$
and a denominator of degree~$83$,
both involving linear parameters $a_1,\dots,a_4$.
It will turn out, however, that this candidate need not
parametrize only solutions
(see Example~\ref{ex:adamczewski-faverjon-case-conclusion}).
\end{ex}

\subsubsection{Case~$1\leq \rho\trsigma \leq 2\zdim-2$.}\label{sec:case-ii}

In the second case,
the matrix~$W_{a}\trsigma$ is of size $(\rho\trsigma+2) \times (2\zdim)$
with $\rho\trsigma+2 \leq 2\zdim$.
This matrix is full rank over $\bL(x,a_1,\dots,a_\zdim)$ and
we want to find all choices of~$a = (a_1,\dots,a_\zdim)$ in~$\bL^\zdim$
that make the rank drop.
They are described by canceling all maximal minors.
These minors are polynomials in $a$ and~$x$
that are homogeneous of degree~$2$ in the~$a_i$.
We gather their coefficients with respect to~$x$ into a
system~$\Sigma\trsigma$ of polynomials of~$\bL[a]$.

\begin{defi}\label{def:Sigma-pol-sys}
Let $\Sigma\trsigma \subseteq \polm\bL$ denote the polynomial system
consisting of the coefficients according to the monomial basis~$(x^m)_{m\in\bN}$
of the maximal minors of the augmented generating matrix~$W_a$.
\end{defi}

So, the cone~$\cV\trsigma$ of all~$a$ that make $W_{a}\trsigma$ rank deficient is
the variety of common zeros in~$\bL^\zdim$ of~$\Sigma\trsigma$.
The following lemma makes the description of the equality case $\cA\trsigma = \cV\trsigma$
in Lemma~\ref{lem:P-Q-nonzero} more explicit.

\begin{lem}\label{lem:spec-rk-is-rho+1}
For $\sigma$~large enough, meaning $\cT\trsigma = \cT\trinfty$,
assume $1 \leq \rho\trsigma \leq 2\zdim-2$.
Then,
for any nonzero~$a$ of\/~$\bL^\zdim$ that makes~$W_{a}\trsigma$ rank deficient,
the rank over $\rat\bL$ of~$W_{a}\trsigma$ is $\rho\trsigma + 1$.
\end{lem}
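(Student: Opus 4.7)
The plan is to prove the two bounds on $\rk W_a\trsigma$ separately: rank-deficiency gives the upper bound $\rho\trsigma + 1$ for free, so the work lies in establishing the matching lower bound. The key input will be that for $\sigma$ large enough to satisfy $\cT\trsigma = \cT\trinfty$, the rows of the generating matrix $W\trsigma$ all lie in $\cT\trinfty \subseteq \cS\trinfty$, so they are \emph{exact} syzygies of $(z_1,\dots,z_\zdim,Mz_1,\dots,Mz_\zdim)$.

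First, since $W_a\trsigma$ has $\rho\trsigma + 2$ rows and rank at most $\rho\trsigma + 2$, the hypothesis that $W_a\trsigma$ is rank deficient gives $\rk_{\rat\bL} W_a\trsigma \leq \rho\trsigma + 1$. I would then argue the reverse inequality by contradiction: assume $\rk_{\rat\bL} W_a\trsigma = \rho\trsigma$, that is, both of the adjoined rows $(a,0)$ and $(0,a)$ lie in the $\rat\bL$-row span of the rows of $W\trsigma$. Write
\[
(a,0) = \mu_1 \, W\trsigma,\qquad (0,a) = \mu_2 \, W\trsigma
\]
for some $\mu_1,\mu_2 \in \rat\bL^{\rho\trsigma}$.

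Multiplying each of these identities on the right by the column vector $(z_1,\dots,z_\zdim,Mz_1,\dots,Mz_\zdim)^T$, and using that every row of $W\trsigma$ kills this vector (by the previous paragraph), I get
\[
\sum_{i=1}^{\zdim} a_i z_i = 0 \qquad\text{and}\qquad \sum_{i=1}^{\zdim} a_i M z_i = 0
\]
in $\fls\bL$. Since $(z_1,\dots,z_\zdim)$ was chosen as an $\bL$-basis of the space of formal power series solutions of~\eqref{eq:linear}, the first relation already forces $a_1 = \dots = a_\zdim = 0$, contradicting $a \neq 0$. Hence $\rk_{\rat\bL} W_a\trsigma \geq \rho\trsigma + 1$, and combined with the upper bound this yields the announced equality.

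The only subtlety I expect is to be explicit about why passing from an $\rat\bL$-linear combination of rows of $W\trsigma$ to a relation \emph{among the $z_i$ over~$\bL$} is legitimate: the coefficients of the combination live in $\rat\bL$, but each row of $W\trsigma$ applied to $(z, Mz)^T$ evaluates to the zero series, so the whole combination evaluates to zero independently of the shape of those rational coefficients. After that, $\bL$-linear independence of the $z_i$ (which is part of their being an $\bL$-basis of the solution space) closes the argument without any further work.
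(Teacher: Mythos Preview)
Your proof is correct and follows essentially the same approach as the paper's: both argue by contradiction that the rank cannot drop all the way to~$\rho\trsigma$, by expressing one of the adjoined rows as an $\rat\bL$-combination of the rows of~$W\trsigma$ and then pairing against $(z,Mz)^T$, using that $\cT\trsigma=\cT\trinfty$ makes those rows exact syzygies. The paper phrases the same step via a left-kernel vector $K=(\Pi,-P,0)$ with $P\neq 0$, but this is just the relation $(a,0)=(\Pi/P)\,W\trsigma$ in your notation; both routes land on $az^T=0$ and invoke the $\bL$-linear independence of the~$z_i$.
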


\begin{proof}
Because $\rk W\trsigma = \rho\trsigma$,
the rank~$\rk W_{a}\trsigma$ is either~$\rho\trsigma$ or~$\rho\trsigma + 1$.
In the former case,
the first $\rho\trsigma$ rows of~$W_{a}\trsigma$ generate its image $\rat\bL^{\rho\trsigma+2} W_{a}\trsigma$,
and there exists $K = (\leftkern_1,\dots,\leftkern_{\rho\trsigma},-P,0) \in \pol\bL^{\rho\trsigma + 2}$ with~$P \neq 0$
satisfying $K W_{a}\trsigma = 0$.
Consequently,
$Pa z^T = (\leftkern_1 w_1\trsigma + \dots + \leftkern_{\rho\trsigma} w_{\rho\trsigma}) (z,Mz)^T$,
which is~$0$ because $W\trsigma (z,Mz)^T = 0$ by the hypothesis on~$\sigma$.
It follows that~$az^T=0$, which contradicts that $z$~is an $\bL$-basis.
So the rank must be~$\rho\trsigma + 1$.
\end{proof}

By Lemma~\ref{lem:direct-sum},
the limit cone~$\cA\trinfty$ is a union of linear spaces.
The relaxed cone~$\cV\trsigma$ approximates the characteristic cone~$\cA\trsigma$ by containing it
and is equal to it for sufficiently large~$\sigma$.
Our algorithm therefore computes
a primary decomposition of the radical of the ideal generated by~$\Sigma\trsigma$,
then tests whether each prime ideal thus obtained describes a linear space.

There exist algorithms for computing primary decompositions over an algebraically closed field like~$\bKbar$.
Algorithms to be found in the literature \parencite{GianniTragerZacharias-1988-GBP,DeckerGreuelPfister-1999-PDA} return primary ideals represented by (minimal reduced) Gröbner bases,
and we will prove by Theorem~\ref{thm:radical-as-linear-ideals}
that testing linearity of the irreducible components of the relaxed cone~$\cV\trsigma$
amounts to checking that all returned ideals are generated by linear forms.

So, on the one hand,
if any element of any of the (minimal reduced) Gröbner bases is nonlinear,
then our algorithm restarts with a larger~$\sigma$.

On the other hand,
if all Gröbner bases define linear spaces,
our algorithm extracts a more explicit parametrization of candidates~$u = P/Q$.
To this end, take each Gröbner basis in turn,
solve the system it represents for~$a$ to determine
parameters~$(g_1,\dots,g_v)$ and some matrix~$S$ of size $v\times \zdim$
such that $a = gS$.
The following lemma shows the resulting form of candidate solutions~$P/Q$,
parametrized by~$g$.

\begin{lem}\label{lem:when-rho<2t-1}
Assume $1 \leq \rho\trsigma \leq 2\zdim-2$ and
let $g \in \bL^v$ satisfy $gS \in \nonz{(\cV\trsigma)}$.
Then, the left kernel of\/~$W_{a}\trsigma$
is generated over\/~$\rat\bL$ by
$K = (\leftkern_1,\dots,\leftkern_{\rho\trsigma},-P,Q)$
where $P$ and~$Q$ are nonzero and depend linearly on~$g$,
and the~$\leftkern_i$ are quadratic in~$g$.
\end{lem}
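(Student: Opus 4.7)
The plan is to apply the Cramer-type formula of Lemma~\ref{lem:n+1-times-n} to $W_a\trsigma$ and then read off the degrees of the resulting minors in $g$. First I would invoke Lemma~\ref{lem:spec-rk-is-rho+1} (for $\sigma$ large enough that $\cT\trsigma=\cT\trinfty$, which is the setting in which the lemma will be used) to conclude that, since $a = gS$ is a nonzero element of $\cV\trsigma$, the matrix $W_a\trsigma$ of size $(\rho\trsigma+2)\times(2\zdim)$ has rank exactly $\rho\trsigma+1$ over $\rat\bL$, so its left kernel is one-dimensional. To produce a concrete generator, choose $\rho\trsigma+1$ column indices on which $W_a\trsigma$ still has rank $\rho\trsigma+1$ (such a subset exists by the rank count) and apply Lemma~\ref{lem:n+1-times-n} to the resulting $(\rho\trsigma+2)\times(\rho\trsigma+1)$ submatrix. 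This produces a generator $K=\bigl((-1)^{i+1}\Delta_i\bigr)_{i=1}^{\rho\trsigma+2}$, where $\Delta_i$ is the determinant of the square matrix obtained by deleting row~$i$.

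The degree analysis then follows from the multilinearity of the determinant in the rows together with the fact that the first $\rho\trsigma$ rows (those of $W\trsigma$) do not depend on $g$ while the last two rows $(a,0)$ and $(0,a)$ are linear in $g$ through $a=gS$. For $i\in\{1,\dots,\rho\trsigma\}$, both $g$-dependent rows survive in $\Delta_i$, making it of degree at most~$2$ in $g$; this minor is identified with $\leftkern_i$. For $i=\rho\trsigma+1$ (resp.~$i=\rho\trsigma+2$) only the row $(0,a)$ (resp.~$(a,0)$) survives, so $\Delta_i$ is linear in $g$; these two minors are identified with $-P$ and~$Q$.

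To finish, I would establish $P\neq0$ and $Q\neq0$ by an argument parallel to the proof of Lemma~\ref{lem:P-Q-nonzero}(\ref{it:P-Q-nonzero}). Assume for contradiction that $P=0$. Then the left-kernel relation $KW_a\trsigma=0$ reads $Q\,a\cdot Mz^T=-\sum_i\leftkern_i w_i\trsigma\cdot(z,Mz)^T$; applying both sides to the series vector $(z,Mz)^T$ and using $W\trsigma(z,Mz)^T=0$ (which holds for $\sigma$ large enough that $\cT\trsigma=\cT\trinfty$) gives $Q\,a\,Mz^T=0$, so, by $\bL$-linear independence of the entries of $Mz^T$ and the fact that $Q$ is a nonzero polynomial by the non-vanishing of the kernel generator, we would conclude $a=0$, contradicting $gS\in\nonz{(\cV\trsigma)}$. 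The case $Q=0$ is symmetric.

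The main obstacle is actually a modest one: checking that the subset of $\rho\trsigma+1$ columns can always be chosen so that Cramer's rule produces an honestly nonzero generator (guaranteed by the rank being exactly $\rho\trsigma+1$), and then carrying out the $P,Q\neq 0$ step carefully, since here—unlike in Lemma~\ref{lem:P-Q-nonzero}—the kernel generator is specified in advance by the minor formula rather than chosen freely.
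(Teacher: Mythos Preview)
Your approach is essentially the paper's: invoke Lemma~\ref{lem:spec-rk-is-rho+1} to get rank $\rho\trsigma+1$, select $\rho\trsigma+1$ independent columns, apply Lemma~\ref{lem:n+1-times-n} to the resulting $(\rho\trsigma+2)\times(\rho\trsigma+1)$ submatrix, and read off the degrees in~$g$ from the row structure. The paper simply cites Lemma~\ref{lem:P-Q-nonzero}(\ref{it:P-Q-nonzero}) for $P\neq0$ and $Q\neq0$ rather than re-proving it.

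There is one small gap in your re-derivation of that last step. You write that ``$Q$ is a nonzero polynomial by the non-vanishing of the kernel generator'', but $K\neq0$ together with $P=0$ does not by itself force $Q\neq0$: it could be that $P=Q=0$ while some $\leftkern_i\neq0$. The missing observation is that if $P=Q=0$ then the kernel relation becomes $\leftkern\,W\trsigma=0$, and since the rows of $W\trsigma$ are $\rat\bL$-linearly independent this forces $\leftkern=0$, hence $K=0$, a contradiction. Once you insert this (or simply cite Lemma~\ref{lem:P-Q-nonzero}), your argument is complete and matches the paper's.
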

\begin{proof}
The matrix~$W_{a}\trsigma$
has size $(\rho\trsigma + 2) \times (2\zdim)$
and by Lemma~\ref{lem:spec-rk-is-rho+1} it has $\rat\bL$-rank $\rho\trsigma + 1$.
As a consequence, we can select $\rho\trsigma + 1$ linearly independent columns
to obtain a matrix
$\Omega(g)$ of rank~$\rho\trsigma + 1$,
made of an upper $\rho\trsigma \times (\rho\trsigma + 1)$ block extracted from~$W_{a}\trsigma$ and independent of~$g$,
and of two rows depending linearly on~$g$.
Lemma~\ref{lem:n+1-times-n} provides
a nonzero~$K = (\leftkern_1,\dots,\leftkern_{\rho\trsigma},-P,Q)$
satisfying~$K\Omega(g) = 0$,
with nonzero $P$ and~$Q$ by Lemma~\ref{lem:P-Q-nonzero}.
Since the columns of~$\Omega(g)$ generate those of~$W_{a}\trsigma$,
the product~$K W_{a}\trsigma$ is zero as well.
The definition of~$K$ by minors
and the degree structure of~$W_{a}\trsigma$ with respect to~$a$
provide the degrees in~$g$ of its entries.
\end{proof}

\begin{ex}[continuing from Example~\ref{ex:adamczewski-faverjon-case-i}]
\label{ex:adamczewski-faverjon-case-ii}
For $\sigma = 124$, computing the generating matrix~$W\trsigma$
leads to a rank $\rho\trsigma = 6 = 2\zdim-2$.
Augmenting~$W\trsigma$ by stacking the two-row matrix~\eqref{eq:two-row-matrix}
and taking the coefficients with respect to~$x$ of the single minor to be considered
yields a system~$\Sigma\trsigma$ of $30$~polynomials
of degree~$2$ in $a_1,\dots,a_4$.
The corresponding irredundant prime decomposition is $\fI_1 \cap \fI_2 \cap \fI_3$ for
$\fI_1 = \ideal{a_1 - a_3, a_2 - a_4}$, \
$\fI_2 = \ideal{a_1 - a_2, a_1 + a_3, a_1 + a_4}$, \
$\fI_3 = \ideal{a_1 - a_4, a_1 + a_2, a_1 + a_3}$.
The associated linear subspaces of~$\bL^4$,
predicted to be in direct sum by Lemma~\ref{lem:direct-sum},
have respective dimensions $2$, $1$,~and~$1$.
In the present case, these dimensions add up to~$\zdim = 4$.
The ideals $\fI_2$ and~$\fI_3$ yield isolated candidates~$P/Q$ (projective dimension~$0$),
while the first ideal yields a candidate parametrized by a projective line
(projective dimension~$1$).
Focusing on $\fI_1$, we solve it as
$(a_1, a_2, a_3, a_4) = (g_1, g_2) \begin{pmatrix} 1 & 0 & 1 & 0 \\ 0 & 1 & 0 & 1 \end{pmatrix}$.
The specialized augmented matrix becomes
\[
  W_{a}\trsigma = \begin{pmatrix}
    x & -1 & x & -1 & 0 & 0 & 0 & 0 \\
    p_{2,1}^{(36)} &
    p_{2,2}^{(37)} &
    p_{2,3}^{(0)} &
    p_{2,4}^{(37)} &
    p_{2,5}^{(0)} &
    p_{2,6}^{(1)} &
    p_{2,7}^{(0)} &
    p_{2,8}^{(1)} \\
    0 & 0 & -1 & 0 & x^2 & 0 & 1 & x \\
    0 & 0 & 0 & -1 & 0 & x^2 & x & 1 \\
    -1 & 0 & 0 & 0 & 1 & x & x^2 & 0 \\
    0 & -1 & 0 & 0 & x & 1 & 0 & x^2 \\
    g_1 & g_2 & g_1 & g_2 & 0 & 0 & 0 & 0 \\
    0 & 0 & 0 & 0 & g_1 & g_2 & g_1 & g_2
  \end{pmatrix}
,
\]
where the polynomials~$p_{2,j}^{(d_j)}$ in the second row have degrees~$d_j$
 for $1 \leq j \leq 8$.
This matrix has $\rho\trsigma+2 = 8$~rows and we observe that the first $7$ columns are linearly independent
over $\bL(x,g_1,g_2)$,
leading to an $8\times7$ matrix~$\Omega(g)$
that plays the same role as the matrix with same name in the proof of Lemma~\ref{lem:when-rho<2t-1}.
The determinants of the submatrices of~$\Omega(g)$ obtained by deleting
the $j$th row for $1 \leq j \leq 8$ are all of the form
$\Delta_j = q \delta_j$ with~$q$ a degree~$36$ polynomial and
\begin{multline*}
\delta_1 = -(g_1^2 - g_2^2)x  , \qquad
\delta_2 = 0,\\
\delta_3 = \delta_5 =
- (g_1 - g_2 x + g_1 x^2) (g_1 + g_2 x)  , \qquad
\delta_4 = \delta_6 =
(g_2 - g_1 x + g_2 x^2) (g_1 + g_2 x) ,\\
\delta_7 = -(g_1 + g_2 x^{3})   , \qquad
\delta_8 = -(1 + x^2 + x^4) (g_1 + g_2 x).
\end{multline*}
The left kernel is the $\rat\bL$-line generated by
$K := (\Delta_1, \dots, \Delta_7, -\Delta_8)$.
This provides a rational candidate
\begin{equation}\label{eq:adamczewski-faverjon-PQ-cand}
\frac PQ = \frac{\Delta_7}{\Delta_8} =
\frac{g_1 + g_2 x^3}{g_1 + g_2 x}\frac{1}{1 + x^2 + x^4} .
\end{equation}
\end{ex}

\subsection{Validating candidate solutions}
\label{sec:validation}

Whether obtained when $\rho\trsigma = 2\zdim-1$ (\S\ref{sec:case-i})
or $\rho\trsigma \leq 2\zdim-2$ (\S\ref{sec:case-ii}),
a candidate~$u = P/Q$ has its numerator and denominator
parametrized linearly by some $g = (g_1,\dots,g_v)$ (where $g = a$ and~$v = \zdim$ if in the former case).
So a candidate is really a parametrization by~$g$ in some~$\nonz{\bL^v}$
of a family of rational functions from~$\rat\bL$.
Such a candidate parametrization is obtained algorithmically as a rational function over a transcendental extension of~$\bL$
generated by symbolic generators that represent the parameters~$g$.

Suppose a candidate parametrization could provide genuine solutions for some specializations of~$g$ as well as rational functions that do not solve~\eqref{eq:riccati} for some other specializations of~$g$.
For sure, enlarging~$\sigma$ would ultimately reject the wrong specializations.
However, one could try to avoid these new calculations
and hope to separate the correct parameters from the wrong ones
by using just the data at hand.
For example, one could try to
substitute~$P/Q$ for~$u$ in~\eqref{eq:riccati}
and identify coefficients to~$0$,
but this
would lead to a polynomial system of degree~$r$ in~$g$,
which we would not know how to solve efficiently.

So, verifying that a sufficiently large~$\sigma$ has been used
amounts to verify for each parametrized candidate
that $P/Q$~is truly a solution of~\eqref{eq:riccati}
for \emph{all} choices of~$g$.
As a first criterion,
any fixed parametrized candidate in normal form~$P/Q$ can thus be rejected
if either $\deg P > \Bnum$ or~$\deg Q > \Bden$,
for the bounds defined by \eqref{eq:bound-P} and~\eqref{eq:bound-Q}.
Otherwise,
after substituting~$P/Q$ for~$u$ in the left-hand side of~\eqref{eq:riccati},
either we obtain~$0$ and $P/Q$~is a valid solution,
or $\sigma$ has to be increased.

\begin{ex}[continuing from Example~\ref{ex:adamczewski-faverjon-case-ii}]
\label{ex:adamczewski-faverjon-case-conclusion}
Gathering the results from the previous examples, we see that
the rank~$\rho\trsigma$
of the submodule~$\pol\bL\cT\trsigma$ defined by~\eqref{eq:approx-syz-mod} of the approximate syzygy module~$\cS\trsigma$
is nonincreasing when $\sigma$~increases.
Meanwhile, the cone~$\cA\trsigma$ of parameters
of candidate solutions decreases as well,
until it reaches the limit~$\cA\trinfty$ that
corresponds to true solutions only.
Let us describe this in more detail:
\begin{list}{}{\setlength{\leftmargin}{10mm}\setlength{\labelwidth}{5mm}}  
\item[$\bullet$ \ $\rho\trsigma = 2\zdim = 8$ for $\sigma \leq 120$.]
  We already explained that these~$\sigma$ are too small.
\item[$\bullet$ \ $\rho\trsigma = 2\zdim-1 = 7$ for $121 \leq \sigma \leq 123$.]
  As the degrees in~$x$ of each candidate are
  $81 > 344/9 = \Bnum \simeq 38.2$ for the numerator
  and $83 > \Bden = 86/3 \simeq 28.7$ for the denominator,
  we reject those candidates.
\item[$\bullet$ \ $\rho\trsigma = 6 \leq 2\zdim-2$ for $124 \leq \sigma \leq 126$.]
  For $\sigma = 124$ and the
  parametrized candidate~\eqref{eq:adamczewski-faverjon-PQ-cand},
  substituting into the Riccati equation proves
  that \eqref{eq:adamczewski-faverjon-PQ-cand} defines a family of true rational solutions parametrized by $(g_1:g_2) \in \bP^1(\bQ)$.
  Similarly, the ideals $\fI_2$ and~$\fI_3$ of Example~\ref{ex:adamczewski-faverjon-case-ii} each yields one solution.
  As all candidates are solutions, $\cA\trsigma = \cA\trinfty$ for this~$\sigma$,
  and therefore for all~$\sigma \geq 124$.
\item[$\bullet$ \ $\rho\trsigma = 5 \leq 2\zdim-2$ for $127 \leq \sigma$.]
  As we will see in~\S\ref{sec:rank-rels},
  each similarity class of hypergeometric solutions of dimension~$s$ (over $\bL = \bQ$)
  contributes $2s-1$ linearly independent syzygies
  (of degree at most $B_\infty = \Bnum \simeq 38.2$),
  and in presence of several classes,
  the~$2s-1$ add up to a lower bound
  on the rank~$\rho\trinfty$ of the submodule~$\pol\bL \cT\trinfty$
  of the module of syzygies on hypergeometric solutions.
  In the present example, this lower bound on~$\rho\trinfty$ is
  $(2\times2-1) + (2\times1-1) + (2\times1-1) = 5$.
  A computation for~$\sigma = 127$ finds $\rho\trsigma = 5$,
  proving that $\rho\trsigma = \rho\trinfty = 5$,
  as well as $\cT\trsigma = \cT\trinfty$,
  for all~$\sigma \geq 127$.
\end{list}
The reader might be surprised to note
that the equality~$\cA\trsigma = \cA\trinfty$ is satisfied
before $\cT\trsigma = \cT\trinfty$~occurs:
indeed,
the definition of~$\rho\trsigma$ as the rank of~$\pol\bL \cT\trsigma$
and the inequality $\rho^{[124]} = 6 > \rho^{[127]} = 5 = \rho\trinfty$
imply that $\cT^{[124]}$~is strictly larger than~$\cT\trinfty$,
whereas $\cA^{[124]} = \cA\trinfty$.
\end{ex}

\subsection{Ramified rational solutions}
\label{sec:red-fps}

As a consequence of Theorem~\ref{thm:structure-puiseux} (and Definition~\ref{def:Lambda'}),
\emph{all} the ramified rational solutions of~\eqref{eq:riccati} with a given leading coefficient~$\lambda$
are images by~${y\mapsto My/y}$ of the $\bL$-vector space of solutions~$y$
of~\eqref{eq:linear} in $\mylog{\lambda} \rfls\bL{q_\lambda}$
for the ramification bound~$q_\lambda$ given by~\eqref{eq:q-lambda},
and this set is nonempty if and only if~$\lambda \in \bL\cap\Lambda'$.
When nonempty,
this is~$\ricsol_{\bL,\lambda}$.

Let us fix  a $\lambda \in \bL\cap\Lambda'$.
Then, we first change the
operator $L = L(x, M)$ into $L(x,\lambda M)$ to reduce
the search in $\mylog{\lambda}\rfls\bL{q_\lambda}$ to a search in $\rfls\bL{q_\lambda}$,
and we next apply the transformation explained
in \parencite[\S2.7, particularly Lemma~2.21]{ChyzakDreyfusDumasMezzarobba-2018-CSL}
to reduce the search in~$\rfls\bL{q_\lambda}$ to a search in~$\fps\bL$.
To make the operator resulting from those combined transformations more explicit:
consider the rightmost $\lambda$-admissible edge
of the lower Newton polygon of~$L$ whose denominator is coprime with~$b$
(see Definition~\ref{def:lower-char-poly});
express its slope as~$-p_\lambda/q_\lambda$,
which is possible by the definition of~$q_\lambda$;
and denote by~$c_\lambda$ its intercept on the ordinate axis.
By defining
\begin{equation}\label{eq:operator-change}
L_\lambda(x, M) = x^{-q_\lambda c_\lambda} L(x^{q_\lambda},\lambda M) x^{p_\lambda} ,
\end{equation}
where the factor $x^{-q_\lambda c_\lambda}$ simply ensures
that the coefficients of~$L_\lambda$ are coprime polynomials,
we obtain a linear Mahler operator
whose solutions~$z \in \fps\bL$ parametrize
the solutions~$y$ of~$L$ in~$\mylog{\lambda}\rfls\bL{q_\lambda}$
by $y(x) = \mylog{\lambda} x^{p_\lambda/q_\lambda} z(x^{1/q_\lambda})$.
This parametrization is bijective,
owing to \parencite[Prop.~2.19]{ChyzakDreyfusDumasMezzarobba-2018-CSL}.
To obtain all the ramified rational solutions~$u$ of~\eqref{eq:riccati},
it is therefore sufficient
to apply the method of the previous subsections to each~$L_\lambda$
and to obtain~$u = My/y$ for~$y$ defined from~$z$ as above.

\begin{rem}\label{rem:sols-puisseux-dans-fps}
An important fact,
to be used when applying Theorem~\ref{thm:radical-as-linear-ideals}
in the proof of Theorem~\ref{thm:HP-correct},
is that the solution space of~$L_\lambda$ in~$\puiseux\bL$
is in fact included in~$\fps\bL$.
\end{rem}

\begin{rem}\label{rem:decide-Lambda'}
The set~$\Lambda$ is clearly computable (see~\eqref{eqn:lower-char-poly}).
The previous reasoning shows that
its subset~$\Lambda'$ (Definition~\ref{def:Lambda'}) is also computable:
determining the series solutions of~$L_\lambda$ for all~$\lambda \in \Lambda$
will decide which of the~$\lambda$ is in~$\Lambda'$.
However, the calculation of~$\Lambda'$ will remain implicit in Algorithm~\ref{algo:HP}:
it will loop over~$\lambda \in \Lambda$ and just continue with the next~$\lambda$
if no series solution of~$L_\lambda$ is found.
\end{rem}

\begin{ex}\label{ex:operator-change}
To show an example of calculation of~$L_\lambda$,
let us consider the radix $b \geq 2$ and the linear Mahler operator
$L = \lambda_0 x^\omega - M + M^r$, with
$\lambda_0$ an algebraic number,
$\omega$ a positive integer coprime to~$b-1$,
and $r$ an integer larger than~$1$.
The characteristic polynomial of the leftmost edge of the lower Newton polygon (Definition~\ref{def:lower-char-poly})
has a single root,~$\lambda_0$,
and this edge is the only $\lambda_0$-admissible edge.
The change described by~\eqref{eq:operator-change}
results in the linear operator
$L_{\lambda_0} = \lambda_0(1 - M + \lambda_0^{r - 1} x^{(b^r - b)\omega} M^r)$.
The power series solutions of~$L_{\lambda_0}$ are multiples of
\begin{equation}\label{eq:z-distinct-from-1}
  z = 1 - \lambda_0^{r - 1} x^{(b^r - b)\omega}
        - \lambda_0^{r - 1} x^{(b^r - b)b\omega} + \dotsb,
\end{equation}
from which we deduce a line of solutions of~$L$ in~$\omdr$ generated by
\[
  y = \mylog{\lambda_0} x^{\omega/(b - 1)}\left(1 - \lambda_0^{r - 1} x^{(b^r - b)\omega/(b - 1)} + \dotsb
  \right).
\]
\end{ex}

\subsection{A supplementary remark on the rank of syzygy module}
\label{sec:rank-rels}

Strikingly, the module of syzygies contains but is not limited to
the direct sum over similarity classes
of the module of syzygies contributed by each class.
This section describes the potential interactions
between the latter.
It is not used in the rest of the article.

The following lemma quantifies the syzygies that appear in each similarity class.

\begin{lem}\label{lem:s->2*s-1}
For an $\rat\bL$-similarity class~$\nonz{\fH}$ of\/ $\rat\bL$-hypergeometric elements,
let $(y_1,\dots,y_s)$ denote a basis of the $\bL$-vector space~$\fH$
and~$S$
the column vector $(y_1,\dots,y_s,My_1,\dots,My_s)^T$.
The module of the row vectors~$R \in \bL[x]^{2s}$ such that $RS = 0$ has rank~$2s - 1$.
\end{lem}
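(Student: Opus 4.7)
The plan is to reduce the computation to the rank of a single $\bL[x]$-linear map and invoke rank--nullity. First, I would exploit the similarity hypothesis: since all nonzero elements of $\fH$ are pairwise $\rat\bL$-similar, I can write $y_i = q_i\, y_1$ for nonzero $q_i \in \rat\bL$, with $q_1 = 1$. By Lemma~\ref{lem:invertible-hypergeom}, applied in the ambient ring $\omdr_\bL$, the nonzero hypergeometric element $y_1$ is invertible in $\omdr_\bL$; so is $My_1$, which is also nonzero and hypergeometric. Set $u = My_1/y_1 \in \nonz{\rat\bL}$ and write $u = p/q$ in lowest terms with $p, q \in \bL[x]$ both nonzero.

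Next, I would translate $RS = 0$ into a single scalar equation in $\rat\bL$. For $R = (r_1, \dots, r_{2s})$, substituting $y_i = q_i y_1$ and $My_i = (Mq_i)\, My_1 = u\,(Mq_i)\, y_1$ yields
\begin{equation*}
  RS = \Bigl(\sum_{i=1}^s r_i q_i + u \sum_{i=1}^s r_{s+i}\, Mq_i\Bigr) y_1.
\end{equation*}
Multiplying by $q\, y_1^{-1}$, the condition $RS = 0$ becomes equivalent to
\begin{equation*}
  \phi(R) := q \sum_{i=1}^s r_i q_i + p \sum_{i=1}^s r_{s+i}\, Mq_i = 0,
\end{equation*}
which identifies the syzygy module of the statement with the kernel of the $\bL[x]$-linear map $\phi\colon \bL[x]^{2s} \to \rat\bL$.

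The last step is a rank count via rank--nullity. The image of $\phi$ is the $\bL[x]$-submodule of $\rat\bL$ generated by the $2s$ elements $q q_1, \dots, q q_s, p\, Mq_1, \dots, p\, Mq_s$; it is nonzero, because $q q_1 = q \neq 0$. Being a nonzero finitely generated submodule of the fraction field $\rat\bL$, this image has $\bL[x]$-rank one. From the short exact sequence $0 \to \ker\phi \to \bL[x]^{2s} \to \mathrm{im}\,\phi \to 0$ and additivity of rank, I would conclude $\rk \ker\phi = 2s-1$, as claimed. The only subtle point is the appeal to Lemma~\ref{lem:invertible-hypergeom} in order to multiply by $y_1^{-1}$: without that invertibility, reducing a relation inside the ring $\omdr_\bL$ (which has zero divisors) to a scalar equation in $\rat\bL$ would fail, and the rank argument would not go through.
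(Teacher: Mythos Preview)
Your argument is correct and follows essentially the same approach as the paper: both reduce to the observation that $y_1,\dots,y_s,My_1,\dots,My_s$ all lie in the one-dimensional $\rat\bL$-space $\rat\bL\,y_1$, so their $\bL[x]$-syzygy module has rank $2s-1$. You are more explicit than the paper in two respects---you invoke Lemma~\ref{lem:invertible-hypergeom} to justify cancelling~$y_1$ (the paper silently uses that $\rat\bL\to\rat\bL\,y_1$ is injective), and you spell out the rank-nullity step via the map~$\phi$---but the underlying idea is identical.
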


\begin{proof}
Making $F = \rat\bL$ in Lemma~\ref{thm:F-sim-F-hyp-is-L-vec} implies that $\fH$~is an $\bL$-vector space.
Since the~$y_i$, $1 \leq i \leq s$, lie in the same $\rat\bL$-similarity class,
we get equalities $y_i = q_i y_1$ for some $q_i$ in~$\rat\bL$ and all~$i$.
As~$y_1$ is $\rat\bL$-hypergeometric, there exists $u_1$ in~$\rat\bL$ such that $My_1 = u_1 y_1$,
from which it results $My_i = (Mq_i) u_1 y_1$.
Therefore, $y_1$, \dots, $y_s$, $My_1$, \dots, $My_s$ are $2s$ elements of the line $\rat\bL y_1$.
The module of their syzygies thus has rank~$2s - 1$.
\end{proof}

However, other syzygies may be produced by the interaction between several similarity classes.
We cannot quantify the phenomenon, and merely give an example.

\begin{ex}\label{ex:puzzling-number-of-relations}
Let us consider, with~$b = 2$, the lclm (least common left multiple)
in~$\bQ[x]\langle M \rangle$
of the operators~$L_1 = (1 - 2x^2) M - (1 - 2x)$ and
$L_2 = (1 - 3x^2) M - (1 - 3x)$, which respectively annihilate the rational functions $y_1 = 1/(1 - 2x)$ and $y_2 = 1/(1 - 3x)$.
This is the operator
\begin{multline*}
    P_1 := \left( 6x^8-5x^4+1 \right) M^2- \left( 6x^6+6x^5+x^4+-x^3-4x^2+x+1 \right) M
\\ \mbox{} +(6x^4+x^3-4x^2+x).
\end{multline*}
We then slightly modify it by truncating its coefficients,
to obtain the operator
\[
  P_2 := \left( -5x^4+1 \right) M^2- \left( x^4-5x^3-4x^2+x+1 \right) M+ \left( x^3-4\,x^2+x \right),
\]
in such a way that $P_2$~has
a $2$-dimensional space of power series solutions, like the operator~$P_1$ we started with.
One of them is the infinite
 product
\[
  y_3 = \prod_{k\geq 0} M^k \frac{1 - 5x^2}{1 - 4x + x^2}
\]
and another linearly independent  one is
\[
  y_4 =
  x+5x^2+19x^3+71x^4+265x^5+983x^6+3667\,
  x^7+13661x^8+O \left( x^9 \right).
\]
At this point, we can observe that the lclm~$L$ of $P_1$ and~$P_2$ admits three right factors of order $1$, namely $M - u_i$, $1 \leq i \leq 3$, with
\[
  u_1 = \frac{1 - 2x}{1 - 2x^2},\qquad
  u_2 = \frac{1 - 3x}{1 - 3x^2},\qquad
  u_3 = \frac{1 - 4x + x^2}{1 - 5x^2}.
\]
As~$y_1$ and~$y_2$ are in the same $\bL(x)$-similarity class, but not~$y_3$,
we could (wrongly) expect a dimension $(2\cdot2-1) + (2\cdot1-1) = 4$
for the rank of the syzygy module
of the column vector $(y_1,\dots,y_4,My_1,\dots,My_4)^T$.
However the operator~$P_2$ factors~as
\[
  P_2 = (1 - 5 x^4) (M - v)(M - u_3),\qquad
  \text{with $ v = x \frac{1 - 5x^2}{1 - 5x^4}$}
\]
and the operator $M - v$ admits the rational solution $z = x/(1 - 5x^2)$.
As a consequence the series~$y_4$, which satisfies $P_2 y_4 = 0$,
also satisfies the equation $(M - u_3) y_4 = c z$ for some constant~$c$.
But~$z$, as a rational function, lies in the same similarity class as~$y_1$ and~$y_2$.
So we have an additional syzygy, between $y_1$, $y_4$, and $My_4$,
and the rank of the syzygy module proves to be~$5$.
\end{ex}

\begin{algo}
\inputs{
  A Riccati Mahler equation~\eqref{eq:riccati} with coefficients in~$\pol\bK$.
}
\outputs{
  The set of ramified rational functions $u \in \ramrat\bKbar$ that solve~\eqref{eq:riccati}.
}
\caption{\label{algo:HP}%
Ramified rational solutions to a Riccati Mahler equation by Hermite--Padé approximants
and prime decompositions.}
\begin{deepenum}[label=(\Alph*)]
\item
      Compute the lower Newton polygon~$\cN$ of~$L := \sum_{k=0}^r \ell_k(x) M^k$,
      then the set $\Lambda$ by~\eqref{eq:def-Lambda}.
\item
      Determine the leftmost edge of~$\cN$ and compute its slope~$-\nu$ and intercept~$\mu$.
\item\label{it:loop-over-lambda}
      For each~$\lambda$ in $\Lambda$,
      \begin{deepenum}[label=(\alph*)]
      \item
            compute the ramification bound~$q_\lambda$ as defined by~\eqref{eq:q-lambda},
      \item
            determine the rightmost $\lambda$-admissible edge of~$\cN$ and compute its slope $-p_\lambda/q_\lambda$ and intercept~$c_\lambda$,
      \item \label{it:HP-sigma-0}
            compute the operator~$L_\lambda$ by~\eqref{eq:operator-change}, the rationals $\nu_\lambda$ and~$\mu_\lambda$ given by~\eqref{eq:nu-lambda-def} in terms of $\mu$ and~$\nu$, and the integer $\sigma_0 := \lfloor\nu_\lambda\rfloor+1$,
      \item \label{it:HP-associated-basis-1}
            compute a basis $(z_1,\ldots, z_\zdim)$ of solutions in~$\fps{\bK[\lambda]}$ to the equation $L_\lambda z = 0$, truncated to order~$O(x^{\sigma_0})$,
      \item\label{it:t=0}
            if $\zdim = 0$, then continue to the next~$\lambda$,
      \item
            compute bounds $\Bnum, \Bden, B_\infty$ by \eqref{eq:bound-P}, \eqref{eq:bound-Q}, and~\eqref{eq:B-infty} applied to~$L_\lambda$,
      \item \label{it:HP-sigma}
            initialize the solution set by $R_\lambda := \varnothing$,
      \item \label{it:HP-iteration}
            for $\sigma := \bigl(\frac{1+\sqrt5}2\bigr)^k \sigma_0$ given by successive~$k=0,1,2,\dots$,
            execute Algorithm~\ref{algo:subtask-of-HP} in the current context,
      \item \label{it:HP-update-result}
            for each~$u$ in $R_\lambda$, change~$u(x)$ into $x^{(b-1)p_\lambda/q_\lambda} u(x^{1/q_\lambda})$.
      \end{deepenum} 
\item
      Return the union of the sets $R_\lambda$ over $\lambda\in \Lambda$.
\end{deepenum} 
\end{algo}

\begin{algo}
\emph{In the context of execution of
step~\ref{it:loop-over-lambda}\ref{it:HP-iteration}
in Algorithm~\ref{algo:HP}.}\\
  \rule[.5\baselineskip]{\textwidth}{.05em}%
\caption{\label{algo:subtask-of-HP}%
Body of step~\ref{it:loop-over-lambda}\ref{it:HP-iteration}
in Algorithm~\ref{algo:HP}.}
\begin{deepenum}[label=(\arabic*)]
\item\label{it:prolong-basis}
      extend the basis elements $z_1,\ldots, z_\zdim$ to solutions truncated to order~$O(x^{\sigma})$ by unrolling recurrences,
\item\label{it:minimal-basis}
      compute a minimal basis of the module of approximate syzygies of $f = (z_1,\ldots, z_\zdim, M z_1,\ldots, M z_\zdim)^T$ to order~$O(x^\sigma)$,
\item\label{it:W}
      build a matrix~$W$ of dimension $\rho \times 2\zdim$ by extracting from the minimal basis the (independent) rows of degree at most~$B_\infty$,
\item\label{it:rho=0}
      if $\rho = 0$, continue to the next~$\lambda$,
\item\label{it:rho=2t-1}
      if $\rho = 2\zdim -1$:
      \begin{deepenum}[label=(\greek*)]
      \item
            for $i=1,\dots,2\zdim$, compute the minor~$\Delta_i$ obtained after removing the $i$th column from~$W$,
      \item\label{it:make-C=u}
            define $\cC := \{u\}$ for the candidate~$u$ provided
            by~\eqref{eq:cand-from-Delta},
            with parameters $a_i$ replaced with~$g_i$,
      \end{deepenum} 
\item\label{it:rho<2t-1}
      if $1 \leq \rho \leq 2\zdim-2$:
      \begin{deepenum}[label=(\greek*)]
      \item
            let $W_{a}$ be the $(\rho+2)\times 2\zdim$ matrix obtained by appending the two-row matrix $(\begin{smallmatrix} a_1 & \dots & a_\zdim & 0 & \dots & 0 \\ 0 & \dots & 0 & a_1 & \dots & a_\zdim \end{smallmatrix})$ below~$W$,
      \item\label{it:absolute-constant-field}
            let~$\ideal{\Sigma}$ be the ideal of $\bKbar[a_1,\dots,a_\zdim]$ generated by the set~$\Sigma \subseteq \bK[\lambda][a_1,\dots,a_\zdim]$ of the coefficients with respect to~$x$ of the $\binom{2\zdim}{\rho+2}$ minors of order $\rho+2$ of~$W_{a}$,
      \item\label{it:prime-decomp}
            compute an irredundant prime decomposition $\bigcap_{j=1}^s\fp_j$ of~$\sqrt{\ideal{\Sigma}}$,
            given for each~$j$ by a Gröbner basis $(p_{j,1},\dots,p_{j,m(j)})$ of~$\fp_j$,
      \item
            initialize the candidate set by $\cC := \varnothing$,
      \item\label{it:loop-over-primes}
            for $j$ from~$1$ to~$s$ do
            \begin{deepenum}[label={[}\roman*{]}]
            \item\label{it:nonlinear-criterion}
                  if for some~$k$, the polynomial~$p_{j,k}$ is nonlinear, continue to the next~$\sigma$,
            \item
              solve the linear system $\{ p_{j,k} = 0 \}$ for the unknowns $a_1,\dots,a_\zdim$, so as to get a full rank matrix~$S$ of dimension~$v\times\zdim$, with ${0 \leq v \leq \zdim}$, and a parametrization $(a_1,\dots,a_\zdim) = (g_1,\dots,g_v)S$,
            \item
                  if~$v=0$, continue to the next~$j$,
            \item\label{it:left-kernel}
                   compute the left kernel of the  matrix~$W_{gS}$,
            \item\label{it:cV>cA}
                  if the kernel has dimension~$2$, or if it has dimension~$1$ and is generated by a row~$K$ with $K_{\rho+1} K_{\rho+2} = 0$, continue to the next~$\sigma$,
            \item
                  add the normalized form of~$u := -K_{\rho+1}/K_{\rho+2}$ to~$\cC$,
            \end{deepenum} 
      \end{deepenum} 
\item\label{it:validation}
      for each~$u = P/Q$ in $\cC \setminus R_\lambda$ do
      \begin{deepenum}[label=(\greek*)]
      \item\label{it:over-bound}
        if $\deg P > \Bnum$ or if $\deg Q > \Bden$, then continue to the next~$\sigma$,
      \item\label{it:failed-div}
        if $P/Q$~cancels the left-hand side of the Riccati equation~\eqref{eq:riccati}, then add~$u$ to~$R_\lambda$, else continue to the next~$\sigma$,
      \end{deepenum} 
\item\label{it:continue-with-next-sigma}
      continue to step~\ref{it:loop-over-lambda}\ref{it:HP-update-result} in Algorithm~\ref{algo:HP},
      thus quitting the loop over~$\sigma$.
\end{deepenum} 
\end{algo}

\section{Algorithms by Hermite--Padé approximation}
\label{sec:hermite-pade}

\subsection{Algorithm by sieving candidates for solving the Riccati equation}
\label{sec:algo-sieve}

By composing the results developed in~\S\ref{sec:syzygies},
we derive Algorithm~\ref{algo:HP}, in which
the body of the loop over the approximation order~$\sigma$
has been isolated as Algorithm~\ref{algo:subtask-of-HP}.
The termination and correctness of the algorithm will be proved
in Theorem~\ref{thm:HP-correct},
after having proved a structural property of the ideals it manipulates
in Theorem~\ref{thm:radical-as-linear-ideals}.
Before this,
we comment on choices and algorithms used at a few specific steps:

\begin{itemize}

\item For each~$\lambda$,
  the algorithm computes a basis of power series solutions
  truncated to an order~$\sigma_0$
  that is sufficient to distinguish the basis elements.
  This order is determined at step~\ref{it:loop-over-lambda}\ref{it:HP-sigma-0}.
  To explain the calculation,
  recall \parencite[Prop.~2.6 and Lemma~2.21]{ChyzakDreyfusDumasMezzarobba-2018-CSL}
  that the computation of a basis of solutions for the equation $L_\lambda z = 0$
  depends on the rational numbers
  \begin{equation}\label{eq:nu-lambda-def}
    \nu_\lambda = q_\lambda  \nu - p_\lambda , \qquad
    \mu_\lambda = q_\lambda (\mu  - c_\lambda) ,
  \end{equation}
  where $-\nu$ and~$\mu$ are respectively
  the slope and intercept of the \emph{leftmost} edge of the lower Newton polygon of~$L$,
  and, as was explained before~\eqref{eq:operator-change},
  $-p_\lambda/q_\lambda$ and~$c_\lambda$ are respectively
  the slope and intercept of the \emph{rightmost} $1$-admissible edge
  of the Newton polygon of~$L_\lambda$.
  The order~$\sigma_0$ is then defined to be~$\lfloor\nu_\lambda\rfloor+1$.

\item Solving for truncated solutions~$z_i$ at step~\ref{it:loop-over-lambda}\ref{it:HP-associated-basis-1}
  is done by solving a linear system of size $(\lfloor\nu_\lambda\rfloor+1) \times (\lfloor\mu_\lambda\rfloor+1)$
  as per \parencite[\S2.6, Algorithm~5]{ChyzakDreyfusDumasMezzarobba-2018-CSL}.

\item The initial value of~$\sigma$
  chosen at step~\ref{it:loop-over-lambda}\ref{it:HP-iteration}
  uses all data already available at this point.
  Next, choosing a geometric sequence to grow~$\sigma$,
  as opposed to, for example, an arithmetic sequence,
  is justified by the superlinear complexity with respect to~$\sigma$
  of the calculations in the loop body.
  The golden ratio used to increase~$\sigma$ is rather arbitrary,
  but reduces a majority of timings for the set of equations tested.

\item Prolonging the truncated solutions~$z_i$ at step~\ref{it:loop-over-lambda}\ref{it:HP-iteration}\ref{it:prolong-basis}
  is done by \parencite[\S2.3, Algorithm~3]{ChyzakDreyfusDumasMezzarobba-2018-CSL}.

\item Computing the minimal basis at step~\ref{it:loop-over-lambda}\ref{it:HP-iteration}\ref{it:minimal-basis}
  can be done by either
  the algorithm in \parencite{Derksen-1994-ACG}
  or the algorithm in \parencite{BeckermannLabahn-1994-UAF}.
  In both cases,
  the theoretical description admits algebraic coefficients in~$\bKbar$.

\item At step~\ref{it:loop-over-lambda}\ref{it:HP-iteration}\ref{it:rho<2t-1}\ref{it:prime-decomp},
  the algorithm uses
  the classical notion of an irredundant primary decomposition of an ideal,
  which applies in particular to radical ideals,
  in which case the primary ideals are even prime
  \parencite[p.~209, especially Theorem~5]{ZariskiSamuel-1958-CA1};
  we will speak of an irredundant prime decomposition in such a situation.
  The computation of such a prime decomposition for radical ideals in~$\bKbar[a_1,\dots,a_\zdim]$
  can be done by a variant of the algorithm in \parencite{GianniTragerZacharias-1988-GBP};
  see also
  \parencite[Algorithms RADICAL, p.~394, and PRIMDEC, p.~396]{BeckerWeispfenning-1993-GB}.

\end{itemize}

The proof of Theorem~\ref{thm:radical-as-linear-ideals} is based on the Nullstellensatz.
Consequently, for the rest of~\S\ref{sec:algo-sieve}
we specialize $\bL$ to~$\bKbar$
in the theory developed in~\S\ref{sec:syzygies}.

\begin{thm}\label{thm:radical-as-linear-ideals}
Consider an operator~$L \in \pol{\bKbar}\langle M \rangle$ and
a basis $z := (z_1,\dots,z_\zdim)$
of its space
of formal power series solutions in~$\fps\bKbar$.
Assume that $L$~has no solutions in~$\puiseux\bKbar$ that are not in~$\fps\bKbar$.
Consider as well
the family of modules\/~$\pol\bKbar\cT\trsigma$ of ranks~$\rho\trsigma$
and the family of characteristic cones $\cA\trsigma$
defined in Definitions \ref{def:truncated-module} and~\ref{def:cone-A-sigma} (see also Proposition~\ref{prop:cone-A-sigma-alt}).
Assume that there exists $\sigma$ satisfying
the hypothesis~$1\leq \rho\trsigma \leq 2\zdim-2$ of~\S\ref{sec:case-ii}
and large enough to have the relation~$\cT\trsigma = \cT\trinfty$.
The system~$\Sigma := \Sigma\trsigma \subseteq \polm{\bKbar} := \bKbar[a_1,\dots,a_\zdim]$
of Definition~\ref{def:Sigma-pol-sys}
generates an ideal~$\ideal{\Sigma}$
whose radical~$\sqrt{\ideal{\Sigma}}$
is a finite intersection of prime ideals,
each given by linear polynomials with coefficients in~$\bKbar$.
\end{thm}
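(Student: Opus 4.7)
My plan is to identify the variety of~$\ideal{\Sigma}$ with the relaxed cone~$\cV\trsigma$, invoke the structural results already proven to show this cone equals~$\cA\trinfty$, describe~$\cA\trinfty$ as a finite union of linear subspaces, and finish by the Nullstellensatz. The argument would proceed in four steps.

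First, I would show that the variety in~$\bKbar^\zdim$ of the ideal~$\ideal{\Sigma}$ coincides with~$\cV\trsigma$. Indeed, by Definition~\ref{def:Sigma-pol-sys}, an element~$a \in \bKbar^\zdim$ cancels every polynomial of~$\Sigma$ exactly when every $(\rho\trsigma + 2)$-minor of the augmented matrix~$W_a\trsigma$, viewed as a polynomial in~$x$, vanishes identically. This is equivalent to the matrix~$W_a\trsigma$ being rank-deficient over~$\rat\bKbar$, that is, admitting a nontrivial left kernel with entries in~$\pol\bKbar$ (after clearing denominators). By Definition~\ref{def:cone-V-sigma}, this is exactly the condition defining~$\cV\trsigma$.

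Next, I would invoke Lemma~\ref{lem:P-Q-nonzero} to identify~$\cV\trsigma$ with~$\cA\trinfty$. The hypotheses $\cT\trsigma = \cT\trinfty$ and $1 \leq \rho\trsigma \leq 2\zdim - 2 \leq 2\zdim - 1$ allow application of point~\ref{it:sigma-large-cV=cA} of that lemma, yielding $\cV\trsigma = \cA\trsigma = \cA\trinfty$. Then I would apply Lemma~\ref{lem:direct-sum} (with $\bL = \bKbar$) to describe~$\cA\trinfty$ as a finite union $\bigcup_i E_i$ of $\bKbar$-vector subspaces $E_i := \bKbar^{s_i} H_i$ that are in direct sum in~$\bKbar^\zdim$. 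The Puiseux hypothesis on~$L$ ensures that all Laurent (and a fortiori power) series solutions of~\eqref{eq:linear} are captured by the basis $(z_1,\dots,z_\zdim)$, so that the similarity classes arising in the proof of Lemma~\ref{lem:direct-sum} are faithfully encoded by the matrices~$H_i$.

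Finally, I would conclude by the Nullstellensatz. Since $\bKbar$~is algebraically closed,
\begin{equation*}
\sqrt{\ideal{\Sigma}} = I\bigl(V(\ideal{\Sigma})\bigr) = I(\cV\trsigma) = I\Bigl(\bigcup_i E_i\Bigr) = \bigcap_i I(E_i) ,
\end{equation*}
where $I(X)$ denotes the ideal of polynomials in $\polm\bKbar$ vanishing on~$X$. For each linear subspace~$E_i$, the ideal~$I(E_i)$ is generated by a finite family of linear forms in~$\polm\bKbar$ (namely, any set of linear equations cutting out~$E_i$), and the quotient $\polm\bKbar / I(E_i)$ is isomorphic to a polynomial ring over~$\bKbar$, hence an integral domain, so $I(E_i)$~is prime. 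This exhibits $\sqrt{\ideal{\Sigma}}$ as a finite intersection of prime ideals, each generated by linear polynomials with coefficients in~$\bKbar$, as desired.

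The only delicate point is step one, where I must verify that the coefficients with respect to~$x$ of the minors define the same variety as the rank-deficiency condition on~$W_a\trsigma$; but this is routine, since a polynomial in~$\bKbar[x]$ is zero exactly when all its coefficients vanish. The use of the Puiseux hypothesis in invoking Lemma~\ref{lem:direct-sum} is also worth mentioning explicitly, as it is what guarantees that no hypergeometric solution of~\eqref{eq:linear} with lower-than-power-series regularity is missed by our basis~$(z_1,\dots,z_\zdim)$.
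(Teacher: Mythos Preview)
Your proposal is correct and follows essentially the same route as the paper's proof: identify $V(\ideal{\Sigma})$ with~$\cV\trsigma$, invoke Lemma~\ref{lem:P-Q-nonzero} to equate this with~$\cA\trinfty$, use Lemma~\ref{lem:direct-sum} to write~$\cA\trinfty$ as a finite union of linear subspaces, and conclude via the Nullstellensatz over~$\bKbar$. The paper phrases the final step slightly differently---writing $\sqrt{\ideal{\Sigma}} = \sqrt{\bigcap_i \fq_i} = \bigcap_i \fq_i$ and arguing that the intersection of prime (hence radical) ideals is radical---whereas you use $\sqrt{\ideal{\Sigma}} = I(V(\ideal{\Sigma})) = \bigcap_i I(E_i)$ directly; these are equivalent formulations. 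One small slip: your parenthetical ``(and a fortiori power)'' is inverted, since power series are the more restrictive class; but this does not affect the argument.
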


\begin{proof}
Lemma~\ref{lem:direct-sum} proves
the existence of a finite family of ideals~$\fq_i \subseteq \polm{\bKbar}$ generated by $\bKbar$-linear polynomials,
and satisfying
\begin{equation}\label{eq:bar-cA-infty-as-a-union}
\cA\trinfty = \bigcup_{i\in I} V(\fq_i) ,
\end{equation}
where $V({\cdot})$ denotes the variety in~$\bKbar^\zdim$ of an ideal.
Every $\bKbar$-subspace $\fH$ of $\puiseux\bKbar$
associated with a $\ramrat\bKbar$-similarity class~$\nonz{\fH}$
of hypergeometric Puiseux series solutions of~$L$
is, by the assumption on the solutions of~$L$,
in fact included in~$\fps\bKbar$, and therefore
admits a finite basis $h = (h_1,\dots,h_s)$ with elements in $\puiseux{\bKbar}$
that are $\bKbar$-linear combinations of the series~$z_i$.
Therefore,
there is a matrix $H \in \bKbar^{s\times \zdim}$ such that $h^T = Hz^T$.

Each~$V(\fq_i)$ is in bijection with such an~$\fH$
by the $\bKbar$-linear map $a \mapsto az^T$,
and more specifically $V(\fq_i) = \bKbar^s H$.
Consequently, $\fq_i$~is generated by a system of
$\zdim-s$ linearly independent linear polynomials with coefficients in~$\bKbar$.

We now have the successive equalities
\begin{equation*}
V \left(\sqrt{\ideal{\Sigma}}\right) =
V \left(\ideal{\Sigma}\right) =
\cV\trsigma =
\cA\trinfty =
\bigcup_{i\in I} V(\fq_i) =
V \biggl( \bigcap_{i\in I} \fq_i \biggr) ,
\end{equation*}
where the first equality if by \parencite[Theorem~7(ii), p.~183]{CoxLittleOShea-2015-IVA},
the second equality is by definition,
the third equality results by Lemma~\ref{lem:P-Q-nonzero}
from the assumption $\cT\trsigma = \cT\trinfty$,
the fourth equality is~\eqref{eq:bar-cA-infty-as-a-union},
and the fifth equality is by \parencite[Theorem~15, p.~196]{CoxLittleOShea-2015-IVA}.
Retaining the equality between the first and last terms,
then passing to ideals,
Hilbert's Nullstellensatz \parencite[Theorem~2, p.~179]{CoxLittleOShea-2015-IVA} and the definition of a radical
provide the first equality in
\begin{equation*}
\sqrt{\ideal{\Sigma}} =
\sqrt{\bigcap_{i\in I} \fq_i} =
\bigcap_{i\in I} \fq_i ,
\end{equation*}
where, additionally, the second is
because the ideals~$\fq_i$ are prime, therefore radical,
and because an intersection of radical ideals is radical
(\parencite[Theorem~9, p.~147]{ZariskiSamuel-1958-CA1} or
\parencite[Proposition~16, p.~197]{CoxLittleOShea-2015-IVA}).
Finally, any irredundant primary decomposition of~$\sqrt{\ideal{\Sigma}}$
is obtained by retaining a subfamily of the~$\fq_i$,
because the latter are prime.
This ends the proof as the~$\fq_i$ are defined by linear polynomials.
\end{proof}

\begin{thm}\label{thm:HP-correct}
Algorithm~\ref{algo:HP} terminates and correctly computes all solutions of~\eqref{eq:riccati}
in~$\ramrat\bKbar$.
\end{thm}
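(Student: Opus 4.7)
The plan is to prove termination and correctness simultaneously, by tracing the algorithm against the structural results of \S\ref{sec:rational-ramified-sol}--\S\ref{sec:syzygies}. First, I would handle the outer structure: the set $\Lambda$ is finite, so it suffices to show termination of the inner loop over $\sigma$ at step~\ref{it:loop-over-lambda}\ref{it:HP-iteration} for each $\lambda$. The reduction of \S\ref{sec:red-fps} via~\eqref{eq:operator-change} gives a bijection between solutions of the original equation in $\mylog{\lambda}\rfls\bKbar{q_\lambda}$ and power series solutions of $L_\lambda z = 0$ in $\fps\bKbar$, and by Remark~\ref{rem:sols-puisseux-dans-fps} every Puiseux series solution of $L_\lambda$ already lies in $\fps\bKbar$, so the hypothesis of Theorem~\ref{thm:radical-as-linear-ideals} is fulfilled. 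Values $\lambda \in \Lambda \setminus \Lambda'$ are transparently discarded by the $\zdim = 0$ shortcut at step~\ref{it:t=0} or the $\rho = 0$ shortcut at step~\ref{it:rho=0}, in accordance with Remark~\ref{rem:decide-Lambda'}.

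Next I would prove that for each fixed $\lambda$ the inner loop exits after finitely many iterations. By Lemma~\ref{lem:nonincr-ultim-constant} applied to $L_\lambda$, there is a threshold $\sigma^\star$ with $\cT\trsigma = \cT\trinfty$ and $\rho\trsigma = \rho\trinfty$ for all $\sigma \geq \sigma^\star$. For such $\sigma$, Lemma~\ref{lem:P-Q-nonzero} yields $\cA\trsigma = \cV\trsigma = \cA\trinfty$. I would then split into three cases: if $\rho\trinfty = 0$ the algorithm exits at step~\ref{it:rho=0} (consistent with the absence of a nonzero hypergeometric series solution); if $\rho\trinfty = 2\zdim - 1$, Lemma~\ref{lem:formula-when-rho=2t-1} shows that~\eqref{eq:cand-from-Delta} parametrizes actual rational solutions; and if $1 \leq \rho\trinfty \leq 2\zdim - 2$, Theorem~\ref{thm:radical-as-linear-ideals} ensures that the prime decomposition of step~\ref{it:loop-over-lambda}\ref{it:HP-iteration}\ref{it:rho<2t-1}\ref{it:prime-decomp} consists of ideals generated by linear polynomials, so test~\ref{it:nonlinear-criterion} never triggers, and Lemma~\ref{lem:when-rho<2t-1} (together with Lemmas \ref{lem:P-Q-nonzero} and~\ref{lem:spec-rk-is-rho+1}) guarantees a one-dimensional left kernel with nonzero last two coordinates, so test~\ref{it:cV>cA} never triggers either. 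In both non-trivial cases the resulting candidates $P/Q$ meet the bounds of Proposition~\ref{prop:deg-bounds-for-ratfuns} because they come from genuine structured syzygies of bounded degree, so~\ref{it:over-bound} passes; being genuine solutions they also pass~\ref{it:failed-div}. The algorithm therefore reaches step~\ref{it:continue-with-next-sigma} and exits.

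For correctness, I would argue that upon exit the set $R_\lambda$ contains exactly the rational solutions of the Riccati equation attached to $L_\lambda$: the parametrizations harvested at step~\ref{it:rho=2t-1} or~\ref{it:rho<2t-1} enumerate the $\bKbar$-cones composing $\cA\trinfty$ which, by Lemma~\ref{lem:direct-sum}, are in bijection with the $\rat\bKbar$-similarity classes of hypergeometric power series solutions of $L_\lambda$, hence, by the parametrization of Theorem~\ref{thm:structure-main}, with the rational solutions of the associated Riccati equation. The update at step~\ref{it:HP-update-result} transports these back to $\ricsol_{\bKbar,\lambda}$ in the original variable, and Theorem~\ref{thm:structure-puiseux} identifies the disjoint union of the resulting sets over $\lambda$ with $\ricsol_{\ramrat\bKbar}$. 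The absence of spurious output at all $\sigma$ (not only $\sigma \geq \sigma^\star$) comes from the exact substitution performed at step~\ref{it:failed-div}, which rejects any non-solution regardless of why it appeared.

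The main obstacle is the intermediate-$\sigma$ analysis: for $\sigma < \sigma^\star$ one has only the inclusion $\cA\trsigma \subseteq \cV\trsigma$, and it is not a priori clear that the combined filters at steps \ref{it:nonlinear-criterion}, \ref{it:cV>cA}, \ref{it:over-bound}, and~\ref{it:failed-div} always suffice either to restart the loop or to confirm a genuine solution. The delicate point is to verify that if any of these filters fails to fire, the resulting candidate is truly a rational solution of~\eqref{eq:riccati}---which then ends up validated by~\ref{it:failed-div}---so that \emph{false positives never enter} $R_\lambda$, while also showing that \emph{all true solutions} eventually survive the filters at some $\sigma \geq \sigma^\star$. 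Proving this dichotomy rigorously, with careful attention to the dimension tests on the left kernel and the minimal-basis degree bound $B_\infty$ of~\eqref{eq:B-infty}, is the technical heart of the argument.
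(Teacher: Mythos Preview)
Your proposal is correct and follows essentially the same approach as the paper's proof: reduce to each $\lambda$, use the stabilization $\cT\trsigma = \cT\trinfty$ together with Lemmas~\ref{lem:P-Q-nonzero}, \ref{lem:formula-when-rho=2t-1}, \ref{lem:spec-rk-is-rho+1}, \ref{lem:when-rho<2t-1} and Theorem~\ref{thm:radical-as-linear-ideals} to show the inner loop eventually exits, and invoke the exact check at step~\ref{it:failed-div} to exclude false positives. The ``main obstacle'' you flag is resolved more simply than you suggest: once the loop exits, every candidate in~$\cC$ has passed step~\ref{it:failed-div}, so $\cV\trsigma \subseteq \cA\trinfty$, and the always-valid chain $\cA\trinfty \subseteq \cA\trsigma \subseteq \cV\trsigma$ forces equality regardless of whether the final~$\sigma$ has reached your threshold~$\sigma^\star$.
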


\begin{proof}
The general structure of the algorithm is
a loop over~$\lambda$ at step~\ref{it:loop-over-lambda},
with independent calculations for different~$\lambda$,
so it is sufficient to prove that for each~$\lambda \in \Lambda$
the calculation terminates
and computes all solutions~$u$ with leading coefficient~$\lambda$.

For each~$\lambda$, the algorithm introduces
the operator~$L_\lambda \in \pol{\bK[\lambda]}\langle M \rangle$
at step~\ref{it:loop-over-lambda}\ref{it:HP-sigma-0}
and a basis $(z_1,\ldots, z_\zdim)$ of its solutions in~$\fps{\bK[\lambda]}$
at step~\ref{it:loop-over-lambda}\ref{it:HP-associated-basis-1}.
Note that by Lemma~\ref{lem:puiseux-denominator-bound},
the dimension of formal power series solutions of~$L_\lambda$ in~$\fps\bKbar$
is also equal to~$\zdim$,
and that the $\bK[\lambda]$-basis~$z$ of solutions in~$\fps{\bK[\lambda]}$
is also a $\bKbar$-basis of the solutions in~$\fps\bKbar$.
It is therefore sufficient to prove that
step~\ref{it:loop-over-lambda}\ref{it:HP-iteration}
computes all rational solutions that can be written~$M(az^T)/(az^T)$ for $a \in \nonz{(\bKbar^\zdim)}$
to get that the algorithm determines
all ramified rational solutions~$u \in \ramrat{\bKbar}$ of~\eqref{eq:riccati}
with leading coefficient~$\lambda$
at step~\ref{it:loop-over-lambda}\ref{it:HP-update-result}.

With respect to termination,
step~\ref{it:loop-over-lambda} may quit early
at step~\ref{it:loop-over-lambda}\ref{it:t=0},
but if the run goes beyond step~\ref{it:loop-over-lambda}\ref{it:t=0},
then the basis~$z$ contains nonzero entries.
By the comment for the case~$\rho\trinfty=2\zdim$ after Lemma~\ref{lem:direct-sum},
this forces $\rho\trinfty < 2\zdim$
and the algorithm continues
with an unbounded inner loop over~$\sigma$
at step~\ref{it:loop-over-lambda}\ref{it:HP-iteration}.
This inner loop can quit early
at step~\ref{it:loop-over-lambda}\ref{it:HP-iteration}\ref{it:rho=0},
ending the calculation for the current~$\lambda$.
The inner loop can only be relaunched
from step~\ref{it:loop-over-lambda}\ref{it:HP-iteration}\ref{it:rho<2t-1}\ref{it:loop-over-primes}\ref{it:nonlinear-criterion} if a nonlinear polynomial if detected,
from step~\ref{it:loop-over-lambda}\ref{it:HP-iteration}\ref{it:rho<2t-1}\ref{it:loop-over-primes}\ref{it:cV>cA} if a nonzero rational function cannot be defined,
and from steps \ref{it:loop-over-lambda}\ref{it:HP-iteration}\ref{it:validation}\ref{it:over-bound} and~\ref{it:failed-div}
if a false solution is detected.

At the construction of~$z$
at step~\ref{it:loop-over-lambda}\ref{it:HP-associated-basis-1},
no~$z_i$ is in~$O(x^{\sigma_0})$,
and therefore no~$z_i$ is ever in~$O(x^\sigma)$
after the extension step~\ref{it:loop-over-lambda}\ref{it:HP-iteration}\ref{it:prolong-basis}.
Consequently, $\rho$~is never~$2\zdim$ in the inner loop,
which is why only the cases $0 \leq \rho \leq 2\zdim-1$ are considered in the algorithm.

The objects computed inside the loop
at step~\ref{it:loop-over-lambda}\ref{it:HP-iteration},
most of which have coefficients in~$\bK[\lambda]$,
relate to the theoretical objects
defined in~\S\ref{sec:syzygies}
for~$\bL = \bKbar$.
We already proved that $z$~is a basis for the solutions of~$L_\lambda$ in~$\fps{\bKbar}$.
At step~\ref{it:loop-over-lambda}\ref{it:HP-iteration}\ref{it:minimal-basis},
the computation of the minimal basis,
by Derksen's algorithm~\parencite*{Derksen-1994-ACG},
Beckermann and Labahn's algorithm~\parencite{BeckermannLabahn-1994-UAF}
or other known algorithms,
is independent of any extension of~$\bK[\lambda]$ in which
the coefficients of the input would be seen.
Consequently,
the matrix so obtained provides a minimal basis of the approximate syzygy module
both over $\pol{\bK[\lambda]}$ and over~$\pol{\bKbar}$.
By the second interpretation, the matrix~$W$ computed
at step~\ref{it:loop-over-lambda}\ref{it:HP-iteration}\ref{it:W}
is the matrix~$W\trsigma$ of Definition~\ref{def:truncated-module} for~$\bL = \bKbar$,
and its height, $\rho$~in the algorithm description,
is the rank~$\rho\trsigma$ of the theory, also for~$\bL = \bKbar$.
For the rest of the proof,
we consider the objects
$\cT\trsigma$ (vectorial truncation of the approximate syzygy module),
$\cA\trsigma$ (characteristic cone),
$\cV\trsigma$ (relaxed cone),
$\Sigma\trsigma$,
which are all obtained only from~$W$,
and the theory applied to~$\bL = \bKbar$
(see Definitions \ref{def:truncated-module}, \ref{def:cone-A-sigma}, and~\ref{def:Sigma-pol-sys}, and Proposition~\ref{prop:cone-A-sigma-alt}).

Suppose that the algorithm runs without exiting the inner loop,
thus making $\sigma$~grow indefinitely.
In particular, $\rho$~is never~$0$, which would cause early quitting,
so we indefinitely have $1 \leq \rho \leq 2\zdim-1$.
From some point on, we have ${\cT\trsigma = \cT\trinfty}$,
so that Lemma~\ref{lem:P-Q-nonzero} applies,
and $\cV\trsigma = \cA\trsigma = \cA\trinfty$ holds.
If the calculation enters step~\ref{it:loop-over-lambda}\ref{it:HP-iteration}\ref{it:rho=2t-1},
it continues to
step~\ref{it:loop-over-lambda}\ref{it:HP-iteration}\ref{it:validation}.
Otherwise, the calculation enters step~\ref{it:loop-over-lambda}\ref{it:HP-iteration}\ref{it:rho<2t-1}.
Then $\rho \leq 2\zdim-2$ and
the ideal $\sqrt{\ideal{\Sigma\trsigma}}$
computed at step~\ref{it:loop-over-lambda}\ref{it:HP-iteration}\ref{it:rho<2t-1}\ref{it:prime-decomp}
is the ideal of the variety~$\cV\trsigma$.
At this point, it is legitimate to apply Theorem~\ref{thm:radical-as-linear-ideals}
to the operator~$L_\lambda$,
which obviously has coefficients in~$\bKbar$,
all other hypotheses being fulfilled in terms of objects over~$\bL = \bKbar$
(see in particular Remark~\ref{rem:sols-puisseux-dans-fps}).
We conclude that the prime ideals~$\fp_j$ have linear generators,
so that the algorithm passes beyond
step~\ref{it:loop-over-lambda}\ref{it:HP-iteration}\ref{it:rho<2t-1}\ref{it:loop-over-primes}\ref{it:nonlinear-criterion}.
By Lemma~\ref{lem:spec-rk-is-rho+1}, the kernel computed
at step~\ref{it:loop-over-lambda}\ref{it:HP-iteration}\ref{it:rho<2t-1}\ref{it:loop-over-primes}\ref{it:left-kernel}
has dimension~$1$,
and by Lemma~\ref{lem:P-Q-nonzero}(\ref{it:P-Q-nonzero}), all~$K$ computed
satisfy $K_{\rho+1} K_{\rho+2} \neq 0$,
so that the algorithm also passes beyond
step~\ref{it:loop-over-lambda}\ref{it:HP-iteration}\ref{it:rho<2t-1}\ref{it:loop-over-primes}\ref{it:cV>cA}.
In this case again,
the calculation continues to
step~\ref{it:loop-over-lambda}\ref{it:HP-iteration}\ref{it:validation}.
At this point,
because $\cV\trsigma = \cA\trinfty$ can only describe true solutions,
owing to Lemma~\ref{lem:direct-sum},
the validation step~\ref{it:loop-over-lambda}\ref{it:HP-iteration}\ref{it:validation}
finds no false solution and the loop ends, a contradiction.

We have just seen that the loop over~$\sigma$ terminates,
and that it does so after proving that all candidates are true solutions.
Consider the final value of~$\sigma$
and exclude the trivial case of early termination
at step~\ref{it:loop-over-lambda}\ref{it:HP-iteration}\ref{it:rho=0},
that is, assume~$\rho\trsigma > 0$.
In the inner loop body, the algorithm runs
either step~\ref{it:loop-over-lambda}\ref{it:HP-iteration}\ref{it:rho=2t-1}
or~\ref{it:rho<2t-1}.
In the first case, $\cV\trsigma = \bKbar^\zdim$,
which, in view of Lemma~\ref{lem:formula-when-rho=2t-1},
leads to a rational candidate~$u$
parametrized by the free parameter~$(g_1,\dots,g_\zdim)$
at step~\ref{it:loop-over-lambda}\ref{it:HP-iteration}\ref{it:rho=2t-1}\ref{it:make-C=u}.
In the second case,
the prime decomposition computed
at step~\ref{it:loop-over-lambda}\ref{it:HP-iteration}\ref{it:rho<2t-1}\ref{it:prime-decomp}
represents~$\sqrt{\ideal{\Sigma\trsigma}}$,
so the union of the varieties of the~$\fp_j$ in~$\bKbar^\zdim$ is the relaxed cone~$\cV\trsigma$,
and the rational candidates are the fractions~$P/Q$ implied by Lemma~\ref{lem:when-rho<2t-1}.
By our hypothesis of a terminating~$\sigma$,
all the candidate rational solutions obtained previously
are verified to be true solutions
by step~\ref{it:loop-over-lambda}\ref{it:HP-iteration}\ref{it:validation},
that is $\cV\trsigma \subseteq \cA\trinfty$.

Because of the inclusions $\cA\trinfty \subseteq \cA\trsigma \subseteq \cV\trsigma$
that are valid independently of~$\sigma$,
we have thus proved $\cV\trsigma = \cA\trinfty$,
that is, step~\ref{it:loop-over-lambda}\ref{it:HP-iteration}\ref{it:continue-with-next-sigma}
returns exactly the set of all solutions~$u$ with leading coefficient~$\lambda$.
\end{proof}

\begin{rem}
Although solving by computing a prime decomposition
may exhibit a bad worst-case behavior in theory,
we have not investigated how to improve the theoretical complexity of the step
after observing experimentally that this is never the bottleneck of execution:
the computation of Hermite--Padé approximants always takes more time.
In addition, the special structure of a union of spaces
makes it plausible that one could develop a better method than
the general primary decomposition algorithm.
\end{rem}

\begin{ex}\label{ex:iterations-number-hermite-pade} 

Let us return to Example~\ref{ex:operator-change}
to show that the number of steps before obtaining the final value of~$\sigma$
can be made arbitrarily large
in Algorithm~\ref{algo:HP}.
The algorithm first determines the parameters
$\nu = \omega/(b-1)$ and~$\mu = \omega b/(b-1)$.
For~$\lambda = \lambda_0$,
it then gets the ramification bound~$q_{\lambda_0} = b - 1$,
and the rightmost $\lambda_0$-admissible edge (see \eqref{eq:q-lambda} and~Definition~\ref{def:lower-char-poly}) happens to be the leftmost,
with slope~$-\nu$, thus providing~$p_{\lambda_0} = \omega$
and $c_{\lambda_0} = \mu$.
This implies $\nu_{\lambda_0} = \mu_{\lambda_0} = 0$.
The order~$\sigma = \sigma_0$ defined at step~\ref{it:HP-sigma-0}
for the series expansion at the first iteration of the loop~\ref{it:HP-iteration}
is~$\omega + 1$,
and solving~$L_{\lambda_0}$ yields $\zdim = 1$
at step~\ref{it:HP-associated-basis-1},
with a basis~$z$ of the form~$z = (z_1)$
for~$z_1 = 1 + O(x^{\sigma_0})$.
The algorithm verifies that the candidate obtained from~$z_1 = 1$
is a false solution.
In view of~\eqref{eq:z-distinct-from-1},
to distinguish~$z_1$ from~$1$, the algorithm will need to increase the order
to approximately~$(b^r-b)\omega$.
For the sake of exposition,
consider a variant setting of the algorithm
that multiplies~$\sigma$ by~$b$ at each iteration
(instead of the golden ratio),
so that $\sigma = b^k \sigma_0$ (starting with~$k=0$).
The algorithm then needs to reach $k = r$ before $z$~can be distinguished from~$1$.
For~$k = r$, if $\sigma - e > \Bnum$
for the exponent $e = (b^r - b)\omega$ appearing in~\eqref{eq:z-distinct-from-1},
then the rank~$\rho\trsigma$ is zero and the algorithm immediately stops;
if $\sigma - e \geq \Bnum$, the algorithm continues to increase~$k$
until it decides to stop.
In all cases, the final value of~$\sigma$ is at least~$b^r (\omega + 1)$.
For a numerical example, set $b = 4$, $L = x^{10} - M + M^4$.
It takes five iterations and computation with accuracy~$O(x^{2816})$
to conclude that the Riccati equation has no solution:
this is a case where the algorithm stops at~$k = r$,
with $2816 = 4^4(10+1) > \Bnum = 157$
(for the bound~$\Bnum$ relative to~$L_{\lambda_0}$).
\end{ex}

\subsection{Rational solving of the linear Mahler equation}
\label{sec:back-to-rat}

In~\parencite{ChyzakDreyfusDumasMezzarobba-2018-CSL},
we developed an algorithm to compute the rational solutions
of the linear Mahler equation~\eqref{eq:linear}.
There, we derived degree bounds for the numerators and denominators
of rational solutions
(see Propositions 3.16, 3.17, and~3.21 in that reference,
or, for alternative bounds, \parencite{BellCoons-2017-TTM}).
Here%
\footnote{We are indebted to Alin Bostan for asking us a question that led to the present section.}
we sketch how
the approach by Hermite--Padé approximants of the present~\S\ref{sec:syzygies} adapts,
based on these bounds,
to give an alternative algorithm for computing rational solutions
in the linear case.
This adaptation is not necessary for the rest of the article.

To study the linear equation, we modify
the definition of the characteristic cone~$\cA\trsigma$
and that of the augmented matrix~$W_{a}\trsigma$
(see Definition~\ref{def:W-a} and Proposition~\ref{prop:cone-A-sigma-alt}).
We start with the vector of series~$(z_1,\dots,z_\zdim,1)^T = (z,1)^T$
and, for any given~$\sigma$, we compute a minimal basis
from which we extract the rows compatible with the degree bounds.
This yields a $\rho\trsigma \times (\zdim+1)$-matrix~$W\trsigma$ satisfying
\[ W\trsigma (z,1)^T = O(x^\sigma) . \]
We define $W_{a}\trsigma$~as the matrix~$W\trsigma$ augmented at its bottom
with the two-row matrix
\begin{equation}\label{eq:two-row-matrix-ratmod}
\begin{pmatrix}
a & 0 \\
0 & 1
\end{pmatrix} =
\begin{pmatrix}
a_1 & \dots & a_\zdim & 0 \\
0 & \dots & 0 & 1
\end{pmatrix} ,
\end{equation}
and we let
\begin{multline}\label{eq:def-cone-A-sigma-ratmod}
\cA\trsigma = \{ 0 \} \cup \{ a \in \nonz{(\bL^\zdim)} \mid
  \exists P \in \nonz{\pol\bL}, \
  \exists Q \in \nonz{\pol\bL}, \
  \exists \leftkern \in \pol\bL^{\rho\trsigma}, \\
  (\leftkern_1,\dots,\leftkern_{\rho\trsigma},P,-Q)W_{a}\trsigma = 0
\} .
\end{multline}
The motivation is that for~$a \in \cA\trsigma$ and corresponding~$(\leftkern,P,Q)$,
if $W\trsigma (z,1)^T$ is zero, we get
$(\leftkern_1,\dots,\leftkern_{\rho\trsigma},P,-Q)W_{a}\trsigma (z,1)^T = 0$,
implying $P a z^T - Q = 0$, that is, $Q/P = a z^T$.

In this context, the rank of the generating matrix~$W\trsigma$ satisfies~$\rho\trsigma \leq \zdim$.
Again, we distinguish two cases:
\begin{itemize}
\item If~$\rho\trsigma = \zdim$, we expect, provided $\sigma$ is large enough, that
\eqref{eq:linear}~admits a full basis of rational solutions.
We proceed as in~\S\ref{sec:case-i} to prove~${\cA\trsigma = \bL^\zdim}$
and to identify a candidate rational vector solution~$K^T \in \rat{\bL}^{\zdim+1}$.
If~$W\trsigma (z,1)^T = 0$, then $K$
must be proportional to~$(z,1)^T$ in~$\fls{\bL}^{\zdim+1}$,
hence each series~$z_i$ is equal to the rational series~$K_i/K_{\zdim+1}$.
We therefore consider a parametrized candidate in the form
$Q/P = \sum_{i=1}^\zdim g_i (K_i/K_{\zdim+1})$ for the parameter~$a = g$
in~$\bL^\zdim$.
\item Else, $\rho\trsigma \leq \zdim-1$, and we proceed as in~\S\ref{sec:case-ii}.
The characteristic cone~$\cA\trsigma$ is given by the vanishing
of all minors of rank~$\rho\trsigma+2$ in~$W_{a}\trsigma$.
(These are the minors of rank~$\rho\trsigma+1$ in the submatrix obtained
by removing the last row and last column.)
Extracting coefficients with regard to~$x$ yields a linear system in~$a$,
so the cone~$\cA\trsigma$ is a vector space
that can be parametrized in the form~$a = gS$.
Solving for a left kernel after this specialization delivers a basis
of rows of the form $(\leftkern_1,\dots,\leftkern_{\rho\trsigma},P,-Q)$,
each parametrized by~$g \in \bL^v$.
If $W\trsigma (z,1)^T = 0$, then for each row $P a z^T = Q$, so $P$ and~$Q$
can be zero only simultaneously:
but in this case, the rank of~$W\trsigma$ makes $\leftkern$~be zero,
so the kernel has dimension~$1$.
(In an actual calculation,
a failure of the computed kernel to have dimension~$1$
proves that $\sigma$~must be increased.)
From this, we obtain a parametrized candidate solution~$Q/P$.
Cramer's rules applied to
a selection of $\rho\trsigma+1$ columns of~$W_{a}\trsigma$ including the last one
show that $P$~is independent of~$a$ and $Q$~is linear in~$a$.
\end{itemize}
We continue as in the Riccati case:
all candidates have to be checked against degree bounds
and exact evaluation to~$0$ of the linear Mahler equation,
and false solutions require to increase~$\sigma$.

Experimentally, the algorithm above seems to behave better
than our Algorithm~9 in~\parencite{ChyzakDreyfusDumasMezzarobba-2018-CSL}.
It would thus be of interest to analyze its complexity.

\part{Implementation and application}
\label{part:implem}

\section{Benchmark}
\label{sec:impl-and-benchmark}

\subsection{Implementation}

To test the examples listed in~\S\ref{sec:examples},
we used Dumas's package~\verb+dcfun+\footnote{%
available from \url{https://mathexp.eu/dumas/dcfun/}}.
This contains an implementation of
Algorithms \ref{algo:BP}, \ref{algo:IP}, and \ref{algo:HP},
and the needed parts of~\parencite{ChyzakDreyfusDumasMezzarobba-2018-CSL}
in the computer-algebra system Maple.
It calls Singular's routine for primary decomposition (see below).

The implementation of Algorithms \ref{algo:BP} and~\ref{algo:IP}
corresponds to their specification,
but the implementation of Algorithm~\ref{algo:HP} has limitations,
which have however no impact on the validity
of the treatment of examples in~\S\ref{sec:discussion},
as we now explain.

Although algorithms for computing Hermite--Padé approximants
allow algebraic coefficients,
no implementation was available in Maple
beyond rational number coefficients.
For step~\ref{it:loop-over-lambda}\ref{it:HP-iteration}\ref{it:minimal-basis}
in Algorithm~\ref{algo:HP},
we have therefore used
the Maple command
\verb+MahlerSystem+ in the package \verb+MatrixPolynomialAlgebra+
that is restricted to coefficients in~$\bQ$,
thus forcing~$\bK = \bQ$,
and limiting the search for rational solutions of~\eqref{eq:riccati}
to series in~$\ramrat{\bar\bQ}$ with leading coefficient~$\lambda$ in~$\bQ$.
In principle, this allows the implementation  to find solutions
corresponding to~$\lambda \in \Lambda \cap \bQ$ (recall~\eqref{eq:def-Lambda}),
but in practice, all of our examples have $\Lambda \subseteq \bQ$.

Although algorithms for computing primary decompositions allow
the search for decompositions over the algebraic closure of~$\bK$,
implicitly making algebraic extensions as needed along their process,
Maple's command \verb+PrimeDecomposition+ in the package \verb+PolynomialIdeals+
limits the algebraic numbers used to an algebraic field
specified as part of the input.
To the best of our knowledge, the only general implementation
is available in the computer-algebra system Singular,
as the command \verb+absPrimdecGTZ+.
The Maple implementation of Algorithm~\ref{algo:HP} transparently
calls Singular at its step~\ref{it:loop-over-lambda}\ref{it:HP-iteration}\ref{it:rho<2t-1}\ref{it:prime-decomp}.

\subsection{Examples}
\label{sec:examples}

To validate and exemplify our theory,
we propose two kinds of examples.
The first consists of generating functions of automatic sequences or more generally of sequences satisfying a linear recurrence with constant coefficients of the divide-and-conquer type.
The second kind is provided by  lclms (least common left multiples) of operators vanishing on simple expressions,
like rational or power functions.
In the subsequent tables and discussions,
all examples are referred to by self-explanatory pseudonyms, like \verb+Baum_Sweet+ for the equation appearing in Example~\ref{ex:classical}(\ref{item:baum-sweet}) and \texttt{Rudin\_Shapiro} for the first one below.

\subsubsection{Generating functions of automatic sequences and like sequences}
\label{sec:gen-fun}

\,\!\!

\begin{ex}[]
\label{ex:rudin-shapiro}
The Rudin--Shapiro sequence $(a_n)_{n\in \bN}$ is the automatic sequence defined by
$a_n = (-1)^{e_n}$ where $e_n$ is the number of (possibly overlapping) blocks~$11$ in the binary
representation of $n$ \parencite[A020985]{oeis}.
It is characterized by the recurrence relations
\[
a_0 = 1,\quad a_{2n} = a_{n} ,\quad a_{2n+1} = (-1)^{n}a_n ,
\]
hence its generating function satisfies
the Mahler equation
\begin{equation}\tag{\text{\texttt{Rudin\_Shapiro}}}
2xM^2 y - (x - 1)M y - y = 0 \qquad (b=2).
\end{equation}
\end{ex}

\begin{ex}[]
\label{ex:stern-brocot}
Let us consider again the Stern--Brocot sequence that was defined in Example~\ref{ex:classical}(\ref{item:stern-brocot}).
We will re-obtain the well-known fact that
its generating function
is the Mahler hypergeometric function
\[
  y(x) = x \prod_{k\geq 0} (1 + x^{2^k} + x^{2^{k+1}}) = \sum_{n\geq1}a_nx^n
  \in \bZ[[x]] ,
\]
which is obviously a solution of $L_2 y(x) = x y(x) - (1 + x + x^2) y(x^2) = 0$
with $b = 2$.

To this end, we follow
a method implicit in \parencite{ChristolKamaeMendesFranceRauzy-1980-SAA}
that was detailed in \parencite{Allouche-1987-AFT}.
Write $y_1(x) = y(x)$ and introduce $y_2(x) = \sum_{n\in\bN}a_{2n+1}x^n$,
to obtain
\begin{equation*}
\vecty(x) = A(x) \vecty(x^2)
\qquad\text{for}\qquad
\vecty(x) =
\begin{pmatrix} y_1(x) \\ y_2(x) \end{pmatrix} ,
\quad
A(x) =
\begin{pmatrix}
1 & x \\
1-x & 1+2x
\end{pmatrix} .
\end{equation*}
After introducing the Mahler operator for the radix $b = 2$, we get
$\vecty = A (M \vecty)$, next $Y = A (M A) (M^2 \vecty)$,
hence, after setting $R = (1, 0)$,
\begin{equation*}
y_1 = R A (MA) (M^2 \vecty) , \quad
M y_1 = R (MA) (M^2 \vecty) , \quad
M^2 y_1 = R (M^2 \vecty) .
\end{equation*}
Because the row vectors $R A (M A)$, $R (M A)$, $R$ are linearly dependent over~$\rat\bQbar$,
there is an operator~$L'_2$ of order~$2$ canceling~$y_1$.
Performing the procedure using Maple, we readily obtain it in explicit form,
so that the generating series~$y(x)$ is a solution of
\begin{multline}\tag{\text{\texttt{Stern\_Brocot\_b2}}}
{} \\
L'_2 y(x) = x y(x) - (1 + x + 2x^2) y(x^2) + (1 + x^2 + x^4) y(x^4) = 0 .
\end{multline}
Both Algorithm~\ref{algo:IP} and Algorithm~\ref{algo:HP} compute
the hypergeometric solutions of~$L'_2$,
to prove that the only ones are multiples of~$y(x)$.
Correspondingly, one can also verify the relation $L'_2 = (1 - M) L_2$.

Because $y(x)$~is as solution of a linear Mahler equation for radix~$b = 2$,
for any~$k \geq 2$,
it is also a solution of a linear Mahler equation for radix~$b^k$
that one can make explicit.
This gives rise to related operators:
an annihilating operator of the generating function~$y(x)$, of order~$2$ for radix~$b = 4$, is
\begin{multline}\tag{\text{\texttt{Stern\_Brocot\_b4}}}
  L'_4 =
  (1+x +2 x^{2})
  (1+x +x^{2})^2
  (1-x +x^{2})^2
  (1-x^{2}+x^{4})^2
  (1-x^{4}+x^{8})
  M^2\\
  - c_1(x) M + x^3 (1 + x^4 + 2 x^8) \qquad (b=4)
\end{multline}
with
\begin{multline*}
  c_1(x) =  \sum_{n=0}^{14} a_{n+1} x^n  =
   1+x +2 x^{2}+x^{3}+3 x^{4}+2 x^{5}+3 x^{6}
   \\
   \mbox{}+x^{7}+4 x^{8}+3 x^{9}+5 x^{10}+2 x^{11}+5 x^{12}+3 x^{13}+4 x^{14}.
\end{multline*}
As well, $y(x)$~is Mahler hypergeometric with respect to radix~$b = 4$,
as reflected by the factorization $L'_4 = ((2x^2+x+1) M - (2x^8+x^4+1)) L_4$ for
\begin{equation*}
L_4 = (x^2+x+1)(x^4+x^2+1) M - x^3 \qquad (b = 4).
\end{equation*}

\end{ex}

\begin{ex}[Missing digit in ternary expansion]
The sequence
0, 1, 3, 4,~\dots
of nonnegative integers
whose ternary expansion does not contain the digit~2
\parencite[A005836]{oeis} has a generating function
$y(x) = x + 3x^2 + 4x^3 + 9x^4 + 10x^5 + \dotsb$
annihilated by the operator
\begin{equation}\tag{\text{\texttt{no\_2s\_in\_3\_exp}}}
  L = x - (1 + 3x + 4x^2) M + 3(1 +x^2)^2M^2\qquad(b = 2).
\end{equation}
Using either of our algorithms, we find that
$L$~admits the unique right-hand factor $M -  1/(3(1+x))$,
corresponding to hypergeometric series solutions of~$L$ that are
scalar multiples of $\mylogg{\log_2(1/3)} / (1-x)$.
This proves that the generating series~$y(x)$ is not hypergeometric.
\end{ex}

\begin{ex}[]
\label{ex:dilcher-stolarsky}
Inspired by \parencite[Prop.~5.1]{DilcherStolarsky-2007-PAS},
we consider the formal power series $F(x) \in \bZ[[x]]$ that is
a solution of
\begin{equation}\tag{\text{\texttt{Dilcher\_Stolarsky}}}
x^4 M^2y(x) - (1 + x + x^2) My(x) + y(x) = 0 \qquad (b=4), \qquad y(0) = 1 .
\end{equation}
\end{ex}

\begin{ex}[]
\textcite[\S3.1]{KatzLinden-2022-PSL} define a sequence $(A_t(x))_{t\in\bN}$ whose generating function
$y(w, x) = \sum_{t\geq 0} A_t(x) w^t$
satisfies an order~$4$ and degree~$14$ equation
$Ly = 0$ with respect to~$x$ with $b = 2$ and
\begin{equation}\tag{\text{\texttt{Katz\_Linden}}}
\begin{aligned}
L &= {} - x (x +1) (8 x^4 w^2+4 x^2 w -x^2+2 w -1) \\
&\phantom{{}={}} {} - x (8 w^3 x^7-8 w^3 x^6+8 w^3 x^5+8 w^3 x^4-4 w^2 x^5-4 x^4 w^2-w x^5 \\
& \qquad\qquad {} - 2 w^2 x^3-3 w x^4+2 w^2 x^2-4 w x^3+x^4+2 w^2 x -4 x^2 w \\
& \qquad\qquad {} + x^3+2 w^2-3 w x +x^2-w +x ) M \\
&\phantom{{}={}} {} + x^2 w (16 w^3 x^8+8 w^2 x^8+32 w^3 x^5+4 w x^7+16 w^3 x^4-x^7+12 x^4 w^2 \\
& \qquad\qquad {} - 4 w x^5-x^6+16 w^2 x^3-4 w x^4+x^5+16 w^2 x^2+x^4+8 w^2 x -x^3 \\
& \qquad\qquad {} + 4 w^2-x^2-x -1) M^2 \\
&\phantom{{}={}} {} + 2 x^4 w^2 (-8 w^2 x^{10}-4 w x^9+32 w^3 x^6-2 w x^8+x^9+x^8-4 w x^4-8 w x^3 \\
&\phantom{{}={}} {} - 8 x^2 w +2 x^3-4 w x +2 x^2-2 w +x +1) M^3 \\
&\phantom{{}={}} {} - 8 x^{12} w^3 (8 w^2 x^2+4 w x +2 w -x -1) M^4 .
\end{aligned}
\end{equation}

The theory we developed in the previous sections
made the hypothesis that
$\bK$~is a computable subfield of~$\bC$
mostly in order to reuse results from our previous article.
However,
the theory easily adapts to equations that depend rationally on auxiliary parameters,
and our implementation is able to deal with this example as well.

\end{ex}

\begin{ex}[Parities and ternary expansion]
\label{ex:adamczewski-faverjon}

The operator of order~$4$ and degree~$258$ of Example~\ref{ex:adamczewski-faverjon-intro}
has been our running example in Examples
\ref{ex:adamczewski-faverjon-case-no},
\ref{ex:adamczewski-faverjon-case-i},
\ref{ex:adamczewski-faverjon-case-ii},
and~\ref{ex:adamczewski-faverjon-case-conclusion}.
We call it \texttt{Adamczewski\_Faverjon} in our tables.
In roughly one second,  Algorithm~\ref{algo:HP}
finds that the only rational solutions to the Riccati Mahler equation
are
\begin{equation*}
\frac1{1-x-x^2} , \qquad
\frac1{1+x-x^2} , \qquad
\frac{g_1+g_2x^3}{g_1+g_2x}\frac1{1+x^2+x^4} .
\end{equation*}
(Algorithm~\ref{algo:IP} has an equivalent output in over $70$~minutes.)
This corresponds to the hypergeometric solutions of~$L$ that were given as~\eqref{eq:adamczewski-faverjon-hyp-sols}
to deduce that none of the~$y_i$ of Example~\ref{ex:classical}(\ref{item:adamczewski-faverjon}) is hypergeometric.
\end{ex}

\subsubsection{Operators related to criteria of differential transcendence}

The criteria to be discussed in~\S\ref{sec:transcendence}
will lead us to solve the auxiliary Riccati equation~\eqref{eq:r2}
beside the equations~\eqref{eq:riccati}, a.k.a.~\eqref{eq:r1},
for operators of order~${r = 2}$.
We thus introduced problems named \verb+dft_+$\langle p\rangle$
where $p$~ranges over relevant problems already listed in~\S\ref{sec:gen-fun}.

\subsubsection{Cooked-up examples}

\begin{ex}[Annihilator related to rational functions]
\label{ex:rational-function-annihilator}

We produced two examples as lclms of three first-order operators in~$\pol\bQ\langle M\rangle$,
thus forcing the Riccati equation to have known rational solutions.
In both cases, the Riccati equation has an isolated rational solution
plus a family parametrized by a projective line.
However,
by adjusting a coefficient, we forced the number~$m$ of distinct~$\lambda$
in the logarithmic parts of solutions of~\eqref{eq:linear}
to be either $1$ or~$2$.
These examples are named \verb+lclm_+\texttt{3}\verb+rat_+$\langle m\rangle$\verb+log+.

For more involved examples,
we started with a first operator~$L_1$
defined as the lclm of $q$ first-order operators,
which we then tweaked into an operator~$L_2$
by discarding monomials above some line of a given slope~$s$
in the lower Newton polygon of~$L_1$ (Definition~\ref{def:lower-newton}).
In this way, we predict the same dimensions of series solutions,
but we expect to lose their hypergeometric nature.
After taking the lclm of $L_1$ and~$L_2$,
we obtain the operators for our examples
\verb+lclm_+$\langle q\rangle$\verb+rat_trunc_sl+$\langle s\rangle$.
In particular,
the operator in Example~\ref{ex:puzzling-number-of-relations}
is the one for example \verb+lclm_2rat_trunc_sl1+ in the tables.
The corresponding~$L_1$ is of order~$2$ with all its solutions rational.
\end{ex}

\begin{ex}[Annihilator of power functions]
\label{ex:roots-annihilator}
For various pairs~$(b,q)$ given by
a radix~$b \leq 5$ and some positive integer~$q \leq 5$,
we considered some operator~$L_{b,q} \in \pol\bQ\langle M\rangle$ annihilating
the $q$~powers $x^{1/1}, x^{1/2}, \dots, x^{1/q}$.
The corresponding example is called \verb+lclm_+$\langle q\rangle$\verb+pow_b+$\langle b\rangle$
in the tables.
To obtain the operator~$L_{b,q}$,
we considered the lclm of binomial annihilators for the relevant~$x^{1/i}$,
which, to ensure no ramification in these operators,
 need not be of order~$1$.
As an example, for $(b, q) = (3, 4)$, we used the annihilators~$B_i$ of $x^{1/i}$
given as
\[
  B_1 = M - x^2,\quad
  B_2 = M - x,\quad
  B_3 = M^2 - x^2 M,\quad
  B_4 = M^2 - x^2 .
\]
Their lclm~$L_{3,4}$ has order $r = 6$ and degree $d = 727$.
\end{ex}

\begin{ex}[Random equations]\label{ex:random-eqns}
To test the robustness of the Hermite--Padé approach,
we considered random operators constructed as follows.
Given a radix $b \in \{2,3\}$ and some degree parameter~$\delta$,
we first draw random $A$, $B$, and~$C$,
each of the form $VM - U$
for dense polynomials $U$ and~$V$ of degree~$\delta$ in~$x$
having integer coefficients in the range~$[-1000,1000]$.
Let $\delta'$ be the degree in~$x$ of $C' := \operatorname{lclm}(A,B)$,
normalized to have no denominator.
Then, the operator $\tilde C := C' + x^{\delta'}(M^2+M+1)$ has the same dimension of series solutions
and series solutions with the same possible valuations
as~$C'$,
but no more hypergeometric solutions.
Then, our random operator is chosen to be $L := \operatorname{lclm}(C, \tilde C)$.
It has a hypergeometric solution,
and the Riccati Mahler equation admits
the rational function~$U/V$ corresponding to~$C$
as a rational solution.
These are the examples \verb+rmo_+$b$\verb+_+$\delta$ in the tables,
for $b = 2,3$ and~$\delta=1,\dots,5$.
\end{ex}

\subsection{Discussion of the timings}
\label{sec:discussion}

We have executed our algorithms on the operators of~\S\ref{sec:examples}
on a Dell Precision Mobile 7550 with i9~processor and 64~GB of RAM,
running under an up-to-date Archlinux system.
For reproducibility of the timings, we have put the processor in a state
where thermal status and number of concurrent jobs has no influence,%
\footnote{Using the “performance cpufreq governor” and
forbidding any adaptive cpu overclocking, a.k.a.~cpu “turbo mode”
avoids variations of timings up to a factor of~$2$.}
effectively fixing the cpu frequency to 2.4~GHz,
and run all examples one after another.
We killed any example above either 12~hours of calculation
or 60~GB of used memory.

\begin{table}
\begin{small}
\centerline{%
\begin{tabular}{|@{\,}r@{\,}|@{\,}r@{\,}r@{\,}r@{\,}|@{\,}r@{\,}r@{\,}r@{\,}r@{\,}r@{\,}r@{\,}r@{\,}r@{\,}|}
\hline
example & $b$ & $r$ & $d$ & var & tpl & anc & cfs & pol & $\myhash$ & $\check\myhash$ & tot \\
\hline
\verb+Baum_Sweet+ & $2$ & $2$ & $1$ & BP & $1$ & $0.04$ & $0.00$ & $0.01$ & $0$ & $0$ & $0.06$ \\
 & & & & IP & $1$ & $0.04$ & $0.00$ & $0.02$ & $0$ & $0$ & $0.07$ \\
\verb+Rudin_Shapiro+ & $2$ & $2$ & $1$ & BP & $2$ & $0.04$ & $0.00$ & $0.01$ & $0$ & $0$ & $0.07$ \\
 & & & & IP & $1$ & $0.00$ & $0.00$ & $0.02$ & $0$ & $0$ & $0.08$ \\
\verb+no_2s_in_3_exp+ & $2$ & $2$ & $4$ & BP & $6$ & $0.06$ & $0.01$ & $0.03$ & $2$ & $1$ & $0.11$ \\
 & & & & IP & $4$ & $0.01$ & $0.00$ & $0.04$ & $2$ & $1$ & $0.12$ \\
\verb+Stern_Brocot_b2+ & $2$ & $2$ & $4$ & BP & $4$ & $0.08$ & $0.01$ & $0.02$ & $2$ & $1$ & $0.13$ \\
 & & & & IP & $2$ & $0.01$ & $0.00$ & $0.03$ & $1$ & $1$ & $0.12$ \\
\verb+Stern_Brocot_b4+ & $4$ & $2$ & $26$ & BP & $57$ & $19$ & $2.2$ & $1.3$ & $2$ & $1$ & $23$ \\
 & & & & IP & $30$ & $4.7$ & $0.00$ & $0.53$ & $1$ & $1$ & $5.4$ \\
\verb+Dilcher_Stolarsky+ & $4$ & $2$ & $4$ & BP & $3$ & $0.06$ & $0.01$ & $0.02$ & $0$ & $0$ & $0.11$ \\
 & & & & IP & $1$ & $0.01$ & $0.00$ & $0.02$ & $0$ & $0$ & $0.09$ \\
\verb+Katz_Linden+ & $2$ & $4$ & $14$ & BP & $54$ & $4.2$ & $1.1$ & $12$ & $0$ & $0$ & $17$ \\
 & & & & IP & $8$ & $0.45$ & $0.00$ & $1.5$ & $0$ & $0$ & $2.1$ \\
\verb+Adamczewski_Faverjon+ & $3$ & $4$ & $258$ & BP & $168$ & $1307$ & $275$ & $20792$ & $12$ & $3$ & $22412$ \\
 & & & & IP & $92$ & $67$ & $0.00$ & $475$ & $11$ & $3$ & $543$ \\
\hline
\verb+lclm_3rat_1log+ & $3$ & $3$ & $121$ & BP & $1628$ & $1693$ & $222$ & $7754$ & $21$ & $2$ & $9702$ \\
 & & & & IP & $116$ & $62$ & $0.00$ & $140$ & $5$ & $2$ & $203$ \\
\verb+lclm_3rat_2log+ & $3$ & $3$ & $122$ & BP & $1598$ & $1438$ & $184$ & $7762$ & $21$ & $2$ & $9411$ \\
 & & & & IP & $116$ & $67$ & $0.00$ & $147$ & $5$ & $2$ & $215$ \\
\verb+lclm_2rat_trunc_sl0+ & $2$ & $4$ & $56$ & BP & $1580$ & $3339$ & $391$ & $1432$ & $19$ & $2$ & $5263$ \\
 & & & & IP & $653$ & $270$ & $0.00$ & $219$ & $14$ & $2$ & $490$ \\
\verb+lclm_2rat_trunc_sl1+ & $2$ & $4$ & $61$ & BP & $2069$ & $7463$ & $1052$ & $2595$ & $9$ & $2$ & $11346$ \\
 & & & & IP & $915$ & $468$ & $0.00$ & $358$ & $8$ & $2$ & $828$ \\
\hline
\verb+dft_Baum_Sweet+ & $4$ & $2$ & $6$ & BP & $5$ & $0.08$ & $0.02$ & $0.06$ & $0$ & $0$ & $0.18$ \\
 & & & & IP & $1$ & $0.01$ & $0.00$ & $0.03$ & $0$ & $0$ & $0.10$ \\
\verb+dft_Rudin_Shapiro+ & $4$ & $2$ & $7$ & BP & $941$ & $83$ & $13$ & $18$ & $0$ & $0$ & $152$ \\
 & & & & IP & $81$ & $3.0$ & $0.00$ & $2.3$ & $0$ & $0$ & $5.8$ \\
\verb+dft_Stern_Brocot_b2+ & $4$ & $2$ & $24$ & BP & $50$ & $12$ & $1.0$ & $0.87$ & $2$ & $1$ & $15$ \\
 & & & & IP & $21$ & $2.2$ & $0.00$ & $0.57$ & $1$ & $1$ & $3.0$ \\
\verb+dft_no_2s_in_3_exp+ & $4$ & $2$ & $20$ & BP & $98$ & $23$ & $2.6$ & $1.7$ & $2$ & $1$ & $29$ \\
 & & & & IP & $80$ & $8.1$ & $0.00$ & $1.3$ & $2$ & $1$ & $9.6$ \\
\verb+dft_Dilcher_Stolarsky+ & $16$ & $2$ & $50$ & BP & & & & & & & \timelimitexceeded
 \\
 & & & & IP & $2025$ & $1064$ & $0.00$ & $2158$ & $0$ & $0$ & $3382$ \\
\verb+dft_Stern_Brocot_b4+ & $16$ & $2$ & $348$ & BP & & & & & & & \timelimitexceeded
 \\
 & & & & IP & $1528$ & $26096$ & $0.00$ & $3481$ & $1$ & $1$ & $29670$ \\

\hline
\end{tabular}}

\bigskip

\caption{\label{tab:Basic-vs-Improved-GP}Comparison of both variants of our Mahlerian analogue of Petkovšek's method given by Algorithms~\ref{algo:BP} and~\ref{algo:IP} for the search of rational solutions, with~${\bL = \bQ}$. See Table~\ref{tab:Basic-vs-Improved-GP:legend} for the meaning of columns. All times measured in seconds.}
\end{small}
\end{table}

\begin{table}
\begin{small}
\rule{\textwidth}{.04em}
\begin{itemize}
\item `var'~stands for the used variant, `BP' (basic) or `IP' (improved).
\item `tpl'~counts the triples $(B,A,\zeta)$ considered (and dealt with) by the loops.
\item `anc'~is the sum over~$(B,A)$ of the times to compute the ancillary operators~$\tilde L$.
\item `cfs'~is, in the basic variant, the sum over~$(B,A)$ of the times to compute the sets~$Z(\tilde L)$ of coefficients~$\zeta$ in solutions~$u$; it reduces to the single calculation of~$Z(L)$ in the improved variant.
\item `pol'~is the sum over~$(B,A,\zeta)$ of the times to solve for the polynomials~$C$.
\item `$\myhash$'~is the cumulative count over~$(B,A,\zeta)$ of obtained parametrized solutions~$u$, with possible redundancy.
\item `$\check\myhash$'~counts the number of parametrizations of solutions~$u$ retained after removing repeated and embedded parametrizations.
\item `tot'~is the total time of the ramified rational solving.
\end{itemize}
\rule{\textwidth}{.04em}
\caption{\label{tab:Basic-vs-Improved-GP:legend}Meaning of the columns in Table~\ref{tab:Basic-vs-Improved-GP}.}
\end{small}
\end{table}

Table~\ref{tab:Basic-vs-Improved-GP} compares both variants of our Mahlerian analogue of Petkovšek's method.
It shows the speedup of the improved Algorithm~\ref{algo:IP},
which uses the various prunings and optimizations
discussed in~\S\ref{sec:efficiency-improvements},
over the basic Algorithm~\ref{algo:BP}.
All those calculations were obtained for the simpler case of the field~$\bL = \bQ$.
The reduction of the number of triples $(B,A,\zeta)$ considered by the algorithms
(typically by a factor in the range $2$--$10$ in our list of examples)
induces part of the speedup.
In addition, we observe that computing the ancillary operators~$\tilde L$ (see~\eqref{eq:mahler-necessary-2})
and the corresponding sets~$Z(\tilde L)$ (see Definition~\ref{def:upper-newton})
takes a significant part of the time in some of the runs of the basic variant
(more than half of the total time in cases like \verb+Stern_Brocot_b4+ and \verb+dft_Rudin_Shapiro+),
and this time is saved by the improved variant.
For computations of nonnegligible times,
the overall speedup is typically of a few units or dozens of units.
Because we do not expect any direct impact of the nature of the field~$\bL \subseteq \bKbar$
on the applicability of the pruning rules to avoid redundant pairs (see~\S\ref{sec:avoid-redundant-pairs}),
we speculate that similar significant improvements would also occur
for calculations with a more general number field~$\bL$.

An example will suggest that Algorithms \ref{algo:BP} and~\ref{algo:IP} are slower
if we consider~${\bL = \bQbar}$ instead of~$\bL = \bQ$,
because the introduction of absolute factorizations of the polynomials $\ell_0$ and~$\ell_r$
induces more factors and exponentially more divisors~$(A,B)$ to test.
To this end, we consider~\texttt{Stern\_Brocot\_b4}.
When we change
$\bL = \bQ$ to~$\bL = \bQbar$,
the ramification bound~\eqref{eq:q-L}, which generally could increase, is unchanged: $q_{\bQbar} = q_\bQ = 1$.
However,
the number of factors of $\ell_0 = (2x^8 + x^4 + 1) x^3$ increases from~$4$ to~$11$
(when counted with multiplicities),
and that of $\ell_2 = (2x^2 + x + 1) (x^8 - x^4 + 1) (x^2 + x + 1)^2 (x^2 - x + 1)^2 (x^4 - x^2 + 1)^2$ from~$8$ to~$26$,
so that the number of pairs~$(A,B)$ to be potentially tested is changed
from~$2\cdot4\cdot2\cdot2\cdot3\cdot3\cdot3 = 864$
to~$2^8\cdot4\cdot2^2\cdot2^8\cdot3^2\cdot3^2\cdot3^4 = 6879707136$:
even if the filtering coprimality condition makes
the count of pairs effectively leading to a calculation
be smaller (that is, the value~$57$ of~`tpl' is smaller than~$864$),
we expect that many more pairs are used over~$\bQbar$.

\begin{table}
\begin{small}
\centerline{%
\begin{tabular}{|@{\,}r@{\,}|@{\,}r@{\,}r@{\,}r@{\,}|@{\,}r@{\,}|@{\,}r@{\,}r@{\,}r@{\,}r@{\,}r@{\,}r@{\,}r@{\,}|}
\hline
\multicolumn{1}{|@{\,}c@{\,}|@{\,}}{} & \multicolumn{3}{@{\,}c@{\,}|@{\,}}{} & \multicolumn{1}{@{\,}c@{\,}|@{\,}}{IP} & \multicolumn{7}{@{\,}c@{\,}|}{HP} \\
example & $b$ & $r$ & $d$ & tot & fst & dim & $\sigma$ & rfn & syz & sng & tot \\
\hline
\verb+Baum_Sweet+ & $2$ & $2$ & $1$ & $0.07$ & $0.07$ & $(1, 1)$ & $(6, 6)$ & $0.03$ & $0.03$ & $\text{-}$ & $0.13$ \\
\verb+Rudin_Shapiro+ & $2$ & $2$ & $1$ & $0.08$ & $0.07$ & $(1, 0)$ & $(6, \text{-})$ & $0.02$ & $0.01$ & $\text{-}$ & $0.10$ \\
\verb+no_2s_in_3_exp+ & $2$ & $2$ & $4$ & $0.12$ & $0.08$ & $(1, 1)$ & $(33, 9)$ & $0.03$ & $0.08$ & $\text{-}$ & $0.21$ \\
\verb+Stern_Brocot_b2+ & $2$ & $2$ & $4$ & $0.12$ & $0.07$ & $(1)$ & $(21)$ & $0.01$ & $0.02$ & $\text{-}$ & $0.12$ \\
\verb+Stern_Brocot_b4+ & $4$ & $2$ & $26$ & $5.4$ & $0.08$ & $(1)$ & $(63)$ & $0.02$ & $0.11$ & $\text{-}$ & $0.22$ \\
\verb+Dilcher_Stolarsky+ & $4$ & $2$ & $4$ & $0.09$ & $0.07$ & $(2)$ & $(27)$ & $0.04$ & $0.08$ & $0.02$ & $0.23$ \\
\verb+Katz_Linden+ & $2$ & $4$ & $14$ & $2.1$ & $0.12$ & $(0, 1, 0, 0)$ & $(\text{-}, 69, \text{-}, \text{-})$ & $0.06$ & $0.39$ & $\text{-}$ & $0.57$ \\
\verb+Adamczewski_Faverjon+ & $3$ & $4$ & $258$ & $543$ & $0.16$ & $(4)$ & $(163)$ & $0.32$ & $1.8$ & $0.05$ & $2.4$ \\
\hline
\verb+lclm_3rat_1log+ & $3$ & $3$ & $121$ & $203$ & $0.08$ & $(3)$ & $(140)$ & $0.16$ & $2.5$ & $0.03$ & $2.9$ \\
\verb+lclm_3rat_2log+ & $3$ & $3$ & $122$ & $215$ & $0.09$ & $(2, 1)$ & $(88, 52)$ & $0.07$ & $0.51$ & $\text{-}$ & $0.71$ \\
\verb+lclm_2rat_trunc_sl0+ & $2$ & $4$ & $56$ & $490$ & $0.11$ & $(4)$ & $(294)$ & $2.6$ & $12$ & $0.05$ & $14$ \\
\verb+lclm_2rat_trunc_sl1+ & $2$ & $4$ & $61$ & $828$ & $0.12$ & $(4)$ & $(519)$ & $13$ & $104$ & $0.05$ & $117$ \\
\verb+lclm_3rat_trunc_sl1+ & $3$ & $5$ & $1260$ & \timelimitexceeded
 & $0.49$ & $(3, 2)$ & $(574, 268)$ & $11$ & $51$ & $0.07$ & $63$ \\
\verb+lclm_4pow_b2+ & $2$ & $7$ & $107$ & $25351$ & $0.20$ & $(1, 4)$ & $(429, 739)$ & $0.16$ & $2.4$ & $\text{-}$ & $2.8$ \\
\verb+lclm_4pow_b3+ & $3$ & $6$ & $727$ & \timelimitexceeded
 & $0.56$ & $(1, 4)$ & $(108, 174)$ & $0.47$ & $0.64$ & $\text{-}$ & $1.7$ \\
\verb+lclm_4pow_b4+ & $4$ & $5$ & $989$ & \timelimitexceeded
 & $0.23$ & $(4)$ & $(223)$ & $0.40$ & $0.59$ & $\text{-}$ & $1.4$ \\
\verb+lclm_4pow_b5+ & $5$ & $5$ & $3103$ & \timelimitexceeded
 & $2.0$ & $(1, 4)$ & $(44, 289)$ & $2.8$ & $0.94$ & $\text{-}$ & $5.9$ \\
\verb+lclm_5pow_b4+ & $4$ & $7$ & $17270$ & \memorylimitexceeded
 & $39$ & $(1, 5)$ & $(274, 1326)$ & $64$ & $6.5$ & $\text{-}$ & $115$ \\
\hline
\verb+dft_Baum_Sweet+ & $4$ & $2$ & $6$ & $0.10$ & $0.08$ & $(2)$ & $(77)$ & $0.06$ & $0.18$ & $0.02$ & $0.37$ \\
\verb+dft_Rudin_Shapiro+ & $4$ & $2$ & $7$ & $5.8$ & $0.06$ & $(1, 0)$ & $(88, \text{-})$ & $0.03$ & $0.15$ & $\text{-}$ & $0.25$ \\
\verb+dft_Stern_Brocot_b2+ & $4$ & $2$ & $24$ & $3.0$ & $0.09$ & $(1)$ & $(59)$ & $0.03$ & $0.10$ & $\text{-}$ & $0.22$ \\
\verb+dft_no_2s_in_3_exp+ & $4$ & $2$ & $20$ & $9.6$ & $0.09$ & $(1, 1)$ & $(85, 33)$ & $0.07$ & $0.84$ & $\text{-}$ & $1.0$ \\
\verb+dft_Dilcher_Stolarsky+ & $16$ & $2$ & $50$ & $3382$ & $0.10$ & $(2)$ & $(666)$ & $0.25$ & $3.7$ & $\text{-}$ & $4.1$ \\
\verb+dft_Stern_Brocot_b4+ & $16$ & $2$ & $348$ & $29670$ & $0.13$ & $(1)$ & $(239)$ & $0.14$ & $2.0$ & $\text{-}$ & $2.4$ \\
\hline
\verb+rmo_2_1+ & $2$ & $3$ & $19$ & $5.3$ & $0.07$ & $(3)$ & $(263)$ & $1.1$ & $23853$ & $0.03$ & $23854$ \\
\verb+rmo_3_1+ & $3$ & $3$ & $37$ & $14$ & $0.07$ & $(3)$ & $(133)$ & $0.22$ & $1166$ & $0.03$ & $1167$ \\
\verb+rmo_2_2+ & $2$ & $3$ & $44$ & $15$ & & & & & & & \timelimitexceeded
 \\
\verb+rmo_3_2+ & $3$ & $3$ & $82$ & $39$ & $0.08$ & $(3)$ & $(247)$ & $2.6$ & $11031$ & $0.03$ & $11034$ \\
\verb+rmo_2_3+ & $2$ & $3$ & $69$ & $26$ & & & & & & & \timelimitexceeded
 \\
\verb+rmo_3_3+ & $3$ & $3$ & $127$ & $70$ & & & & & & & \timelimitexceeded
 \\
\verb+rmo_2_4+ & $2$ & $3$ & $94$ & $41$ & & & & & & & \timelimitexceeded
 \\
\verb+rmo_3_4+ & $3$ & $3$ & $172$ & $109$ & & & & & & & \timelimitexceeded
 \\
\verb+rmo_2_5+ & $2$ & $3$ & $119$ & $58$ & & & & & & & \timelimitexceeded
 \\
\verb+rmo_3_5+ & $3$ & $3$ & $217$ & $166$ & & & & & & & \timelimitexceeded
 \\

\hline
\end{tabular}}

\bigskip

\caption{\label{tab:Improved-GP-vs-Iterated-HP}Comparison of the improved Mahler analogue of Petkovšek's method (Algorithm~\ref{algo:IP}, with~$\bL = \bQ$) and the Hermite--Padé approach (Algorithm~\ref{algo:HP}, over~$\bQbar$). See Table~\ref{tab:Improved-GP-vs-Iterated-HP:legend} for the meaning of columns. All times measured in seconds.}
\end{small}
\end{table}

\begin{table}
\begin{small}
\rule{\textwidth}{.04em}
\begin{itemize}
\item `tot'~is the total time for ramified rational solving using the improved Mahler analogue of Petkovšek's approach (IP) or the Hermite--Padé approach (HP).
\item `fst'~is the time for a first series computation, sufficient to determine the dimensions of series-solutions spaces behind the various logarithmic parts in solutions, provided in the column `dim'.
\item `dim'~is a list, indexed by the $\lambda \in \Lambda$, of the dimension of series appearing in front of~$\mylog\lambda$ in solutions.
\item `$\sigma$'~is a list with same indexing of the last value of~$\sigma$ used to find the hypergeometric series solutions of~$L_\lambda$ (or~`-' when the dimension for~$\lambda$ is~$0$).
\item `rfn'~is the cumulative time over~$\lambda$ for all refined series computations up to the corresponding final approximation orders in~`$\sigma$'.
\item `syz'~is the total time for computing minimal bases.
\item `sng'~is the cumulative time over~$\lambda$ for all prime decompositions computed by calling Singular, or `-'~if no prime decomposition was needed for the operator~$L$.
\end{itemize}
\rule{\textwidth}{.04em}
\caption{\label{tab:Improved-GP-vs-Iterated-HP:legend}Meaning of the columns in Table~\ref{tab:Improved-GP-vs-Iterated-HP}.}
\end{small}
\end{table}

Table~\ref{tab:Improved-GP-vs-Iterated-HP} compares
the improved Mahler analogue of Petkovšek's method by Algorithm~\ref{algo:IP} over~$\bL = \bQ$
with the Hermite--Padé approach by Algorithm~\ref{algo:HP}
(which by design is necessarily over~$\bQbar$).
Even though Algorithm~\ref{algo:HP} computes a more complete solution set,
it is by far the faster algorithm,
at least if we exclude the special examples \verb+rmo_+$b$\verb+_+$\delta$.
Recognizing and extracting the subset of solutions in~$\bQ$ from the complete solution set in~$\bQbar$
could be done easily,
so solving the Riccati equation over~$\bL = \bQ$
reduces to solving it over~$\bL = \bQbar$:
this makes Algorithm~\ref{algo:HP} be the better algorithm.
For longer calculations,
speedups of Algorithm~\ref{algo:HP} over Algorithm~\ref{algo:IP} can be very high
(e.g., \verb+Adamczewski_Faverjon+, \verb+dft_Dilcher_Stolarsky+, \verb+dft_Stern_Brocot_b4+).

Examples~\verb+rmo_+$b$\verb+_+$\delta$ were constructed in~Example~\ref{ex:random-eqns}
to have solutions in~$\bQ(x)$
and show a situation where Algorithm~\ref{algo:HP} fails:
for example, for~\verb+rmo_3_3+, the polynomials $\ell_0$ and~$\ell_3$
have few factors but factors of large degrees;
specifically, they have the following factorization patterns:
\begin{equation*}
\ell_0 = x^{10} (x^3+\dotsb) (x^{21}+\dotsb) (x^{93}+\dotsb) ,
\quad
\ell_3 = (x^{27}+\dotsb) (x^{31}+\dotsb) (x^{69}+\dotsb) .
\end{equation*}
This makes Algorithms \ref{algo:BP} and~\ref{algo:IP} iterate over few pairs~$(A,B)$.
In contrast,
even if the degree bound (equations \eqref{eq:bound-P} and~\eqref{eq:bound-Q})
and the induced~$\sigma$ needed in Algorithm~\ref{algo:HP}
are not too large, of the order of a few hundreds,
the sequences of coefficients of the series solutions
are not automatic (see~\S\ref{sec:gen-fun})
and involve very large numbers,
which dramatically slows down the calculation.

\section{Differential transcendence of Mahler functions}
\label{sec:transcendence}

This section shows an application of algorithms
for solving the Riccati Mahler equation for its solutions that are rational functions.
It uses such algorithms as a black box
and is otherwise completely independent from the rest of the text.

Mahler equations originate in number theory,
where they were introduced by Mahler
as he developed his eponymous method
to construct new transcendental numbers;
see \parencite{Adamczewski-2017-MM} for a recent survey.
Consider again the linear Mahler equation~\eqref{eq:linear},
this time with~$\bK = \bQbar$,
as well as some series solution $f\in \bQbar[[x]]$.
It is classical \parencite{Nishioka-1996-MFT}
that $f$~has a positive radius of convergence, that it can be extended
to a meromorphic function on the open unit disk,
and that except if it is rational,
it has a natural boundary on the unit circle
and is therefore transcendental.

Concerning values of~$f$,
\textcite{philippon2015groupes} proved that for all $\alpha\in\bQbar$
satisfying  $|\alpha|<1$
and such that $\alpha^{b^{\bN}}$ does not intersect the zero set of~$\ell_0 \ell_r$,
the algebraic relations
between $f(\alpha),\dots, f(\alpha^{b^{r-1}})$ over~$\bQbar$
are specialization to $x=\alpha$ of algebraic relations
between the functions $f,\dots, M^{r-1}f$ over $\bQbar (x)$.
Therefore, if the latter functions have no algebraic relations,
their values at most algebraic points are algebraically independent.
As similar statement is known as the Hermite--Lindermann theorem,
which states that for any nonzero algebraic number~$\alpha$,
the value~$\exp(\alpha)$ of the exponential function is transcendental,
and of its generalization to $E$-functions; see \parencite{Beukers-2006-RVS} for a modern treatment.

Let~$\partial$ denote the derivation with respect to~$x$.
The algebraic relations between $f(\alpha),\dots, \partial^{n}f (\alpha)$
can be studied by the same approach,
at least for convenient $\alpha \in \bQbar$
\parencite[Theorem~1.5]{AdamczewskiDreyfusHardouin-2021-HLD}.
The result is
that such relations come from specializations of algebraic relations between $f$ and its derivatives.

The question of algebraic relations between values therefore motivates
the question of determining whether $f$~is \emph{differentially transcendental} (Definition~\ref{def:diff-trans})
and more generally whether $f$ and its iterates under~$M$ are \emph{differentially algebraically independent}.
Several results based on difference Galois theory have been developed,
leading to effective criteria.
We summarize this now.

\begin{defi}\label{def:diff-trans}
Let $f$, $f_1$, \dots, $f_m$ be series in~$\fps\bQbar$.

We say that $f$~is \emph{differentially algebraic}
when there exist $n\in \bN$ and a nonzero $P\in \bQbar[x][X_{0},\dots,X_{n}]$
such that $P(f,\partial f, \dots, \partial^nf)=0$.
We say that $f$~is \emph{differentially transcendental} otherwise.

We say that $(f_1,\dots,f_m)$~is \emph{differentially algebraically dependent}
when there exist $n\in \bN$ and a nonzero $P\in \bQbar[x][(X_{i,j})_{1\leq i\leq m, \ 0\leq j\leq n}]$
such that
\[ P(f_1,\partial f_1, \dots, \partial^nf_1, \dots, f_m,\partial f_m, \dots, \partial^nf_m)=0 . \]
We say that $(f_1,\dots,f_m)$~is \emph{differentially algebraically independent} otherwise.
\end{defi}

Let $G$ denote the difference Galois group of~\eqref{eq:linear}.
This is an algebraic group.
In a way that reinforces the dichotomy between rational and transcendental solutions of~\eqref{eq:linear},
\textcite{DreyfusHardouinRoques-2018-HSM} proved that
if $G$~contains $\mathrm{SL}_r(\bQbar)$,
which implies that \eqref{eq:linear}~has no nonzero rational solutions,
then the nonzero series solutions are differentially transcendental%
\footnote{In \textcite{DreyfusHardouinRoques-2018-HSM}, the series have coefficients in~$\bC$
but everything remains correct
if we replace $\bC$ by the algebraically closed field~$\bQbar$.}
 (not just transcendental).
More recently,
\textcite{AdamczewskiDreyfusHardouin-2021-HLD} proved that
a solution~$f$ is differentially transcendental
unless it is a rational function $f\in \bQbar(x)$.
Thus,
to prove the differential transcendence of~$f$,
it is sufficient to check that it is not rational.
To this end, our Algorithm~9
in~\parencite{ChyzakDreyfusDumasMezzarobba-2018-CSL},
or for that matter our new development in~\S\ref{sec:back-to-rat},
can be used.

Concerning the differentially algebraic independence of the~$M^if$ for a Mahler function~$f$ that solves~\eqref{eq:linear},
the best we can expect is that $(f,\dots,M^{r-1}f)$~is differentially algebraically independent,
since by~\eqref{eq:linear},
the functions $f,\dots, M^rf$ are linearly dependent over~$\rat\bQbar$.
\textcite{DreyfusHardouinRoques-2018-HSM} also proved that
if \emph{(i)}~the difference Galois group of \eqref{eq:linear} contains $\mathrm{SL}_r(\bQbar)$and \emph{(ii)}~$\ell_0/\ell_r$, which up to sign is the determinant of the companion matrix of~$L$, is a monomial,
then
$f,\dots, M^{r-1} f$ are differentially algebraically independent.
\textcite[paragraph just after the proof of Theorem~5.2]{ArrecheSinger-2017-GGI}%
\footnote{We are indebted to an anonymous reviewer for pointing us to \parencite{ArrecheSinger-2017-GGI}.
This allowed us to prove the algebraic independence of $f$ and~$Mf$
for all six natural examples considered in the article,
instead of just three as in our earlier draft.}
later explained how the assumption on~$\ell_0/\ell_r$ can be avoided.

For general~$r$, the previous criteria based on testing the inclusion of~$\mathrm{SL}_r(\bQbar)$ into~$G$ are not practical:
although an algorithm exists for computing the difference Galois group
\parencite{Feng-2018-CGG},
it is too theoretical to work in practice.
When $r=2$, an efficient, specialized criterion for the inclusion can be formulated
in terms  of solutions of Riccati equations.
This is provided by the following theorem, whose proof is implicit in \parencite{Roques-2018-ARB}.

\begin{thm}[{\textcite[\S6]{Roques-2018-ARB}}]
For~$r=2$,
assume the existence of a nonzero solution\/ $f\in \bQbar[[x]]$ of~\eqref{eq:linear}.
Assume further the condition $\ell_1 \neq 0$.
Then, the difference Galois group of~\eqref{eq:linear} contains~$\mathrm{SL}_2(\bQbar)$
if and only if neither of the equations
\begin{align}
\label{eq:r1} \ell_2 u M u + \ell_1 u + \ell_0 &= 0 , \\
\label{eq:r2} u M^{2} u + \left(M^{2}\left(\frac{\ell_0}{\ell_1} \right)-M\left( \frac{\ell_1}{\ell_2}\right) + \frac{\ell_2}{\ell_1}M\left( \frac{\ell_0}{\ell_2}\right)\right)u
+ \frac{\ell_2 \ell_0 M \ell_0}{\ell_1^2 M \ell_1} &= 0
\end{align}
has any solution in~$\ramrat{\bQbar}$.

\end{thm}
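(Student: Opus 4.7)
My plan is to deduce this theorem from the classification of Zariski-closed subgroups of~$\mathrm{GL}_2(\bQbar)$ that arise as difference Galois groups of~\eqref{eq:linear}. Let~$R$ be a Picard--Vessiot ring for~$L = \ell_2 M^2 + \ell_1 M + \ell_0$ over~$(\bQbar(x), M)$, let $V \subseteq R$ be the 2-dimensional $\bQbar$-space of solutions, and let~$G \subseteq \mathrm{GL}(V) \cong \mathrm{GL}_2(\bQbar)$ be the associated Galois group. A Zariski-closed subgroup of~$\mathrm{GL}_2(\bQbar)$ fails to contain~$\mathrm{SL}_2(\bQbar)$ if and only if it is \emph{reducible} (contained in a Borel subgroup), \emph{imprimitive} (contained in the normalizer of a maximal torus but not in the torus), or primitive finite. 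The last possibility I would rule out for order-2 Mahler equations by invoking the standard fact that $G/G^0$ is cyclic in difference Galois theory, so only the reducible and imprimitive cases remain to be matched with the two Riccati equations.

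The reducible case corresponds to the existence of a $G$-invariant line in~$V$, equivalently a nonzero $f \in R$ solving~$L$ with~$Mf = \alpha f$ for some~$\alpha \in \ramrat\bQbar$. By Lemma~\ref{lem:linear-riccati-equiv}, such an~$\alpha$ is a solution of~\eqref{eq:r1}; conversely, any solution of~\eqref{eq:r1} in~$\ramrat\bQbar$ yields a first-order right factor~$M - \alpha$ of~$L$, hence a $G$-invariant line. To guarantee that a $G$-invariant line in the abstract Picard--Vessiot ring~$R$ actually gives a solution realized in~$\ramrat\bQbar$, I would appeal to Corollary~\ref{cor:DL-1-univ-U} and Theorem~\ref{thm:structure-rational-ramified}: the ring~$\omdr_\bQbar$ constructed in~\S\ref{sec:concrete} is rich enough to contain all required hypergeometric elements, and its $\ramrat\bQbar$-similarity classes bijectively parametrize the $\ramrat\bQbar$-solutions of~\eqref{eq:riccati} for the general equation~\eqref{eq:linear}.

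The imprimitive case should correspond to~\eqref{eq:r2}. My plan is to identify~\eqref{eq:r2} as the Riccati equation for right-hand factors~$M^2 - u$ of an auxiliary order-4 Mahler operator~$\tilde L$, which is of order 2 when viewed as a Mahler operator in~$M^2$ (equivalently, a Mahler operator of radix~$b^2$). I would construct $\tilde L$ by eliminating~$Mf$ and~$M^3 f$ from the relations $Lf = 0$ and $M(Lf) = 0$ to obtain a linear combination of $f$, $M^2 f$, and~$M^4 f$ with coefficients in~$\bQbar(x)$; the hypothesis $\ell_1 \neq 0$ is precisely what makes this elimination nondegenerate. An imprimitive~$G$ would then produce a $G$-invariant configuration in the solution space of~$\tilde L$ that corresponds, via the $M^2$-analogue of Lemma~\ref{lem:linear-riccati-equiv}, to a $\ramrat\bQbar$-solution of~\eqref{eq:r2}. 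Conversely, a $\ramrat\bQbar$-solution of~\eqref{eq:r2} furnishes an $M^2$-hypergeometric element~$g$ in an appropriate extension of the Picard--Vessiot ring, which under the compatibility between the $M$- and $M^2$-Galois structures forces~$G$ to be imprimitive (or reducible).

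The main obstacle I foresee is the explicit computation identifying~\eqref{eq:r2} with the Riccati of~$\tilde L$: the coefficient $M^2(\ell_0/\ell_1) - M(\ell_1/\ell_2) + (\ell_2/\ell_1) M(\ell_0/\ell_2)$ appearing in~\eqref{eq:r2} encodes a three-term identity whose derivation requires carefully tracking the non-commutation $Mx = x^b M$, and I would verify it by a direct symbolic computation on a generic~$L$. A secondary subtlety is that the two conditions in the theorem are not disjoint: a reducible~$G$, coming from a first-order $M$-factor $M - \alpha$ of~$L$, automatically furnishes a first-order $M^2$-factor $M^2 - \alpha M\alpha$ of~$\tilde L$, hence a solution $u = \alpha \cdot M\alpha$ of~\eqref{eq:r2}. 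The statement of the theorem is consistent with this overlap since it only asserts that $G \supseteq \mathrm{SL}_2(\bQbar)$ forces \emph{both} Riccatis to have no solution; nevertheless, for the converse direction, it must be checked that the imprimitive case genuinely produces a solution of~\eqref{eq:r2} that need not come from a solution of~\eqref{eq:r1}, so that~\eqref{eq:r2} actually detects a larger class of Galois groups than~\eqref{eq:r1} does. This latter verification is cleanest when performed at the level of $\omdr_\bQbar$, where the $M$- and $M^2$-hypergeometric classes are both explicitly parametrized.
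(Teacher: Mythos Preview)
Your overall strategy coincides with the paper's: reduce to the classification of closed subgroups of~$\mathrm{GL}_2(\bQbar)$, match reducibility with~\eqref{eq:r1}, imprimitivity with~\eqref{eq:r2}, and exclude the primitive finite case via cyclicity of~$G/G^0$. The paper's proof executes exactly this skeleton in three citations to \textcite{Roques-2018-ARB}: Lemma~40 for ``\eqref{eq:r1} unsolvable $\Leftrightarrow$ $G$ irreducible'', Theorem~42 for ``under irreducibility and~$\ell_1\neq 0$, \eqref{eq:r2} unsolvable $\Leftrightarrow$ $G$ not imprimitive'', and Theorem~4 (equivalently \parencite[Prop.~1.20]{PutSinger-1997-GTD}) for the cyclicity of~$G/G^0$. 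The paper does not rederive either Riccati correspondence.

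Where your plan diverges, and fails as written, is the claim that~\eqref{eq:r2} is literally the $M^2$-Riccati of the operator~$\tilde L$ obtained by eliminating $Mf$ and~$M^3 f$ from $Lf = M(Lf) = M^2(Lf) = 0$. It is not. With $b=2$ and $L = M^2 + xM + 1$, the elimination yields $\tilde L = M^4 + (1 + x^3 - x^6)M^2 + x^3$, whose Riccati is
\[
  u\,M^2u + (1 + x^3 - x^6)\,u + x^3 = 0,
\]
whereas~\eqref{eq:r2} reads
\[
  u\,M^2u + (x^{-4} + x^{-1} - x^2)\,u + x^{-4} = 0;
\]
matching the middle coefficients forces $M^2\gamma = x^{-4}$, i.e.\ $\gamma = x^{-1}$, but then the constant terms disagree ($x^3$ versus~$x$), so no substitution $u \mapsto \gamma u$ with $\gamma \in \ramrat\bQbar$ relates them. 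Roques's~\eqref{eq:r2} comes from a construction specific to the imprimitive structure, not from the naive $M^2$-iterate of~$L$; the ``direct symbolic computation'' you propose would expose the discrepancy rather than confirm the identity. Your $\tilde L$-Riccati may well also characterize imprimitivity under irreducibility, but that is a separate claim requiring its own Galois-theoretic argument, and proving it would establish a variant of the theorem rather than the theorem as stated.
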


\begin{proof}
Again, let $G$ denote the difference Galois group of~\eqref{eq:linear},
and let $G^0$ be the connected component of the identity in~$G$.
Lemma~40 in \parencite{Roques-2018-ARB} shows that \eqref{eq:r1}~has no solution if and only if $G$~is irreducible.
Theorem~42 in the same reference, where our~$\ell_1$ is denoted~$a$, then shows that,
if \eqref{eq:r1}~has no solution and~$\ell_1\neq0$,
then \eqref{eq:r2}~has no solution
if and only if $G$~is not imprimitive.
By \parencite[Theorem~4]{Roques-2018-ARB}, which is a direct transposition of \parencite[Prop.~1.20]{PutSinger-1997-GTD},
$G/G^0$~is finite and cyclic\footnote{In both references, “cyclic” means “finite and cyclic”.}.
The classical classification of algebraic groups then finishes the proof.
\end{proof}

\begin{rem}
The condition on~$\ell_1$ is no major restriction:
if~$\ell_1 = 0$, \eqref{eq:linear}~is first-order with respect to the Mahler operator
with respect to the radix~$b^2$,
and the rationality of~$f$ can be studied directly.
It is also worth mentioning that
\eqref{eq:r2}~can be viewed as a Riccati equation in radix~$b^2$.
\end{rem}

The Galoisian criterion in \parencite{ArrecheSinger-2017-GGI}
and the previous theorem
straightforwardly combine into the following corollary.
\begin{cor}\label{cor:criterion}
For~$r=2$,
assume the existence of a nonzero solution\/ $f\in \bQbar[[x]]$ of~\eqref{eq:linear}.
Assume further the condition $\ell_1 \neq 0$ and that \eqref{eq:r1} and~\eqref{eq:r2} have no solutions in~$\ramrat{\bQbar}$.
Then,
$f$ and~$Mf$ are differentially algebraically independent, and in particular $f$~is differentially transcendental.
\end{cor}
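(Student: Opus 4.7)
The plan is essentially to compose the two ingredients invoked just before the corollary is stated, so the proof will be a short chain of citations rather than an original argument. Let me outline how I would structure it.

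First, I would verify that the hypotheses of the Roques theorem stated immediately above the corollary are satisfied: we are in order $r = 2$, we have a nonzero series solution $f \in \fps{\bQbar}$, the middle coefficient~$\ell_1$ is nonzero, and by assumption neither \eqref{eq:r1} nor \eqref{eq:r2} has a solution in $\ramrat{\bQbar}$. Applying that theorem yields the conclusion that the difference Galois group $G$ of~\eqref{eq:linear} contains $\mathrm{SL}_2(\bQbar)$.

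Next, I would invoke the Galoisian criterion of \textcite{ArrecheSinger-2017-GGI} alluded to in the paragraph preceding the corollary, which refines \parencite{DreyfusHardouinRoques-2018-HSM} by removing the hypothesis that $\ell_0/\ell_r$ be a monomial. That criterion says that, once $G \supseteq \mathrm{SL}_2(\bQbar)$, the iterates $f, Mf, \dots, M^{r-1}f$ are differentially algebraically independent over $\rat\bQbar$; specialized to $r = 2$ this gives the differential algebraic independence of the pair $(f, Mf)$, which is the main statement of the corollary.

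Finally, the ``in particular'' clause is an immediate formal consequence: if $f$ were differentially algebraic, say annihilated by a nonzero $P \in \bQbar[x][X_0, \dots, X_n]$, then the polynomial $P(X_{1,0}, \dots, X_{1,n}) \in \bQbar[x][(X_{i,j})_{1 \leq i \leq 2, 0 \leq j \leq n}]$ would furnish a nontrivial differential algebraic relation satisfied by $(f, Mf)$, contradicting the independence just established. Since both steps are direct applications of already-published results, there is no real obstacle in the argument; the only point requiring mild care is to check that the Arreche--Singer refinement indeed applies verbatim in the Mahler setting under the inclusion $G \supseteq \mathrm{SL}_2(\bQbar)$ without the monomial hypothesis on $\ell_0/\ell_r$, which is precisely what the remark preceding the corollary asserts.
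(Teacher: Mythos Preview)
Your proposal is correct and follows exactly the approach the paper indicates: the corollary is stated as a straightforward combination of Roques's theorem (giving $G \supseteq \mathrm{SL}_2(\bQbar)$) with the Arreche--Singer refinement of the Dreyfus--Hardouin--Roques criterion, and the paper does not give a more detailed proof than that. Your unpacking of the ``in particular'' clause is a routine elaboration of something the paper leaves implicit.
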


We tested our implementation on the six equations of order~$r = 2$ appearing in Example~\ref{ex:classical} or in~\S\ref{sec:examples}:
\texttt{Baum\_Sweet}, \texttt{Rudin\_Shapiro},
\texttt{Stern\_Brocot\_b2}, \texttt{Stern\_Brocot\_b4},
\texttt{no\_2s\_in\_3\_exp}, and \texttt{Dilcher\_Stolarsky}.
Algorithm~\ref{algo:HP} established
that the hypothesis of Corollary~\ref{cor:criterion} is satisfied for all six in a total of 8.4~seconds:
a total of 1.0~seconds for~\eqref{eq:r1};
a total of 8.3~seconds for~\eqref{eq:r2}.
In the worst case, solving of~\eqref{eq:r2} takes 18~times as long as the solving of~\eqref{eq:r1}.
We also executed Algorithm~\ref{algo:IP},
which could also handle all six equations,
in much more time (9.4~hours in total).
The application of Corollary~\ref{cor:criterion} therefore
proves the differential algebraic independence of~$\{f,Mf\}$
in the six cases.

For the examples \texttt{Baum\_Sweet} and \texttt{Rudin\_Shapiro},
we recover the results of differential algebraic independence
obtained by \textcite{DreyfusHardouinRoques-2018-HSM}
when they combined their criterion
with the determination of the difference Galois groups of those examples
by \textcite{Roques-2018-ARB}.
To the best of our knowledge, the results for the other examples are new
and were not easily accessible by hand calculations:
for instance, the degree~$d = 50$ and the radix~$b = 16$
of the equation~\eqref{eq:r2} for the example \texttt{Dilcher\_Stolarsky}
would lead to arduous calculations.

\section*{Acknowledgements}

We would like to express our sincere gratitude to the reviewers
for their constructive feedback and valuable suggestions,
which have greatly contributed to improving the presentation and clarity of this article.

\printbibliography
\end{document}